\documentclass[journal=psy]{CUP-JNL-DTM} %cup-journal}

\usepackage{graphicx}
\usepackage{multicol,multirow}
\usepackage{amsmath,amssymb,amsfonts}
\usepackage{mathrsfs}
\usepackage{amsthm}
\usepackage{rotating}
\usepackage{appendix}
\usepackage{ifpdf}
\usepackage[T1]{fontenc}
\usepackage{newtxtext}
\usepackage{newtxmath}
\usepackage{textcomp}
\usepackage{xcolor}
\usepackage{lipsum}
\usepackage[colorlinks,allcolors=blue]{hyperref}
\usepackage{longtable}
\usepackage{booktabs}
\usepackage{longtable}
\usepackage{array}
\usepackage{caption}
\usepackage{csquotes}
\usepackage{mathtools}
\usepackage{tabularx,booktabs,adjustbox,makecell}
\newcolumntype{Y}{>{\raggedright\arraybackslash}X} % wrapped, left-aligned X column
\usepackage{enumitem}
\usepackage[english]{babel}
\usepackage{algorithm}
\usepackage{tabularray}
\usepackage{float}
\usepackage{graphicx}
\usepackage{wrapfig}
\usepackage{codehigh}
\usepackage{anyfontsize}
\usepackage[normalem]{ulem}
\UseTblrLibrary{booktabs}
\usepackage{setspace}
\usepackage{subcaption}
\usepackage{floatflt}

\usepackage{threeparttable} % For better table notes
\NewTableCommand{\tinytableDefineColor}[3]{\definecolor{#1}{#2}{#3}}
\usepackage{amsmath}

\theoremstyle{plain}
\newtheorem{theorem}{Theorem}[section]
\newtheorem{proposition}[theorem]{Proposition}
\newtheorem{lemma}[theorem]{Lemma}
\newtheorem{corollary}[theorem]{Corollary}
\theoremstyle{definition}

\theoremstyle{remark}

\jname{Data/Math}
\articletype{STATISTICAL METHODS [preprint]}
\jyear{2026}
\begin{document}

\begin{Frontmatter}

\title[Article Title]{Refactor Analysis: Powerful and Predictive Evaluations of Factor Models and Dimensionality}

% There is no need to include ORCID IDs in your .pdf; this information is captured by the submission portal when a manuscript is submitted. 
% \author[1]{Anonymous}
\author[1]{Michael Hardy}
% \author[1]{Benjamin Domingue}
% \author[2]{Author Name3}

% \authormark{Anonymous}

% \address[1]{ %\orgdiv{GSE}, 
% \orgname{Anonymous Organization}, \orgaddress{\city{Everytown}, 
% % \postcode{Pincode}, 
% \state{Earth},  \email{someone@something.com}
% % \country{Country}
% }}

\authormark{Hardy}

\address[1]{ %\orgdiv{GSE}, 
\orgname{Stanford University}, \orgaddress{\city{Stanford}, 
% \postcode{Pincode}, 
\state{CA},  \email{hardym [ at ] stanford [ dot ] edu}
% \country{Country}
}}

\abstract{Unidimensional factor models justify some of the most consequential summaries in science---single scores, single ranks, and single leaderboards---yet unidimensionality is usually assessed indirectly, by fitting and evaluating models on \emph{images} of the data (e.g., correlation matrices) rather than on the response matrix itself. We introduce \textbf{Refactor analysis}, a data-first evaluation paradigm that converts a one-factor solution into a rank--1 prediction %$\widehat X=\widehat u\,\widehat v^\top$ 
of the original matrix %$X$ 
by estimating both respondent- and item-side structure from dual association images. We further introduce \textbf{Verifactor analysis}, which evaluates the same construction under bi-cross-validated (BCV) row--column partitions, targeting the random-respondent random-item generalization regime. In simulations where the data-generating mechanism is truly rank--1 and correlational, Refactor metrics align with classical unidimensionality indices, validating the approach. However, across 200 public dichotomous datasets, traditional fit and unidimensionality measures---though highly intercorrelated---are weakly related to data recoverability, especially out of sample. This gap exposes a methodological vulnerability: excellent image-based fit can coexist with poor data-level explanatory power. Finally, treating the association measure itself as a testable hypothesis, we compare $\phi$, tetrachoric, and quadrant correlation. Quadrant correlation emerges as a simple, interpretable, and remarkably robust alternative, yielding consistently stronger reconstruction and more stable behavior under sample-size variation than commonly used correlations. Together, Refactor and Verifactor shift unidimensionality assessment from ``does a one-factor model fit the correlation matrix?'' to the question that matters for measurement and benchmarking: \emph{does a one-factor dependence structure recover and generalize the observed responses?}}

\end{Frontmatter}

% To make reconstruction-based evaluation practical and interpretable for binary and ordinal responses, we report isotonic $R^2$ (best monotone variance explained) alongside a diagnostic suite spanning discrimination (AUC), calibration (likelihood / cross-entropy), nonlinear dependence (bias-corrected distance correlation), and geometric fidelity (matrix cosine). 

% \section*{Impact Statement}
% lol
% Some Data journals (DAP, DCE) require an `Impact Statement' section. Comment out this section if it is not required.

% Some math journals (FLO) require a table of contents. Comment out this line if no ToC is needed.
% \localtableofcontents

% \input{sections/0x_ncme_temp_intro}

\section{Introduction}
\label{sec:intro}
Physicists have had great success reducing the world's messy complexity through powerful mathematical simplifications—creating so-called ‘spherical cows’—which represent testable underlying laws. In measurement, we have performed similar reductions in pursuit of science and understanding, but these simplifications are typically based on statistical theory rather than underlying “physical” laws. In the era of Big Data, we now have access to datasets and processes that enable interrogation of our most prized spherical cows at scale: this study tests the methods and assumptions with which we evaluate unidimensional factor analyses. 

% Classical factor model methods and metrics for dichotomous items have been in use for well over 120 years and have had the allure of being both interpretable and computationally tractable. %With more data and compute, we can explore methods that retain or improve statistical interpretability whilst also providing insights that more closely align with the reality we are seeking to study. In short, we are asking: is it possible that we don’t have to sacrifice predictive power when modeling the underlying factor dependence structures?

%This is not a study to show that unidimensionality or factor analyses are inadequate representations of the world: this has been discussed at length for more than a century. Instead, w
% We propose new methods for evaluating the assumptions required of factor models fit and compare these new predictive measures of fit against extant metrics and methods. 
This study presents approaches to evaluating assumed or hypothesized latent relationships by measuring the recoverability of the data through the low-rank representations implied by factor models, some of the oldest methods of psychometrics at nearly 125 years old. We introduce Refactor Analysis, a paradigm that directly evaluates whether interpretable low-rank models can reconstruct the important variation found in the original response matrix and its out-of-sample predictive extension Verifactor Analysis which uses bi-cross-validated (BCV) block prediction \citep{owen_bi-cross-validation_2009}. After demonstrating Refactor analyses’ expected behavior via simulations, we then compare traditional and Refactor metrics across hundreds of empirical datasets using the Item Response Warehouse, testing alternative hypotheses of underlying relationships without assuming fixed knowledge of the items.  Specifically, we ablate the role of the elliptical (cow) correlations in these relationships.

We show that traditional Pearson correlations—both product-moment and tetrachoric—produce misleadingly high estimates of unidimensionality, both in simulation and empirically, when using traditional measures of model fit. We also offer a highly interpretable, computationally simpler correlational alternative—nearly uniformly more performant in both simulation and across datasets: the quadrant correlation \citep{mosteller_useful_1946,blomqvist_measure_1950}. With new tools for evaluating assumptions and models and the reintroduction of a highly interpretable and powerful correlation, we both identify and solve a core challenge in determining the unidimensionality of an instrument via factor models.

\subsection{Unidimensionality, a rank--1 hypothesis, is rarely tested where it lives}
\label{sec:intro_rank1}
Many scientific questions reduce to clearly interpreting a matrix of observations. In psychometrics, $X_{ij}$ may record whether person $i$ endorsed item $j$; in education, whether a student solved a problem; in biomedicine, whether a specimen expresses a marker; and in modern AI benchmarking, whether a model succeeds on a task or prompt. In each case we seek a compact explanation of systematic variation in the response matrix
% \[
$X \in \mathbb{R}^{n\times p}$,
% \]
where $n$ indexes observational units (people, respondents, models, systems) and $p$ indexes variables (items, tasks, prompts, features). A common, interpretable simplification is that one latent attribute is enough: a single ability, severity, quality, or general factor explains most meaningful variation. This is the \emph{unidimensionality} premise. It is scientifically consequential because it justifies reporting a single score, ranking, or ordering.

At its core, a unidimensional claim is a statement about the most important signal of a \emph{data matrix}: %$X \approx \widehat X \quad \text{with} \quad \mathrm{rank}(\widehat X)=1$:
\begin{equation}
X \approx \widehat X \quad \text{with} \quad \mathrm{rank}(\widehat X)=1 \qquad 
\widehat X = u v^\top,\ u\in\mathbb{R}^{n},\ v\in\mathbb{R}^{p}.
\label{eq:rank1_intro}
\end{equation}
That is, the observed patterns in rows and columns are largely explainable by the singular relationship between them, e.g., items and the respondents. We can simply represent it as the outer product of a single row vector and a single column vector.

However, classical practice rarely evaluates Eq.~\eqref{eq:rank1_intro} directly. Instead, factor analysis and many unidimensionality diagnostics operate on an \emph{image} of the data--typically an association matrix such as a covariance or correlation matrix,
$A_c = \mathcal{A}_c(X)\in\mathbb{R}^{p\times p}
\; %\qquad 
\text{where } \mathcal{A}_c(X) \in \{X^\top X,\ \mathrm{cor}(X),\ \rho_{\text{tet}}(X),\dots \} \subset \mathfrak{A} $,
where $\mathfrak{A}$ is the set of all applicable candidate association relationships, and then assess whether a one-factor model fits that image well. This is sensible if the chosen association matrix $\mathcal{A}_c(X)$ truly captures the signal of interest.
This complex assumption needs to be tested as it can fail, especially for binary/ordinal data and in modern regimes with random items, random respondents, and irregular aspect ratios.

\subsection{Image-based fit can be self-confirming}
\label{sec:intro_selfconfirming}

Factor analysis is defined by a second-order dependence structure. Under a standard one-factor model,
\begin{equation}
X = \Lambda f^\top + E,\qquad f\in\mathbb{R}^{n},\ \Lambda\in\mathbb{R}^{p},\ E\in\mathbb{R}^{n\times p},
\label{eq:fa_intro}
\end{equation}
the implied association image takes the familiar form for covariance $\Sigma$:
\begin{equation}
\Sigma(X) = \Lambda\Phi\Lambda^\top + \Theta,
\label{eq:sigma_intro}
\end{equation}
with $\Phi=\mathrm{Var}(f)$ and (typically) diagonal $\Theta$. In practice, both estimation and evaluation are often performed in the same image space: we choose an association operator %, typically Pearson/$\phi$ or tetra/polychoric (others include, covariances, Yule's $Q$, Yule's $Y$, Spearman's $\rho$, etc.), 
estimate a one-factor structure on that image, and then judge adequacy by fit indices computed on the same kind of image (i.e., evaluation is based on $\Sigma(X)$ rather than $X$ itself).
\begin{wrapfigure}{R}{0.65\textwidth}
    \centering
    \includegraphics[width=0.64\textwidth]{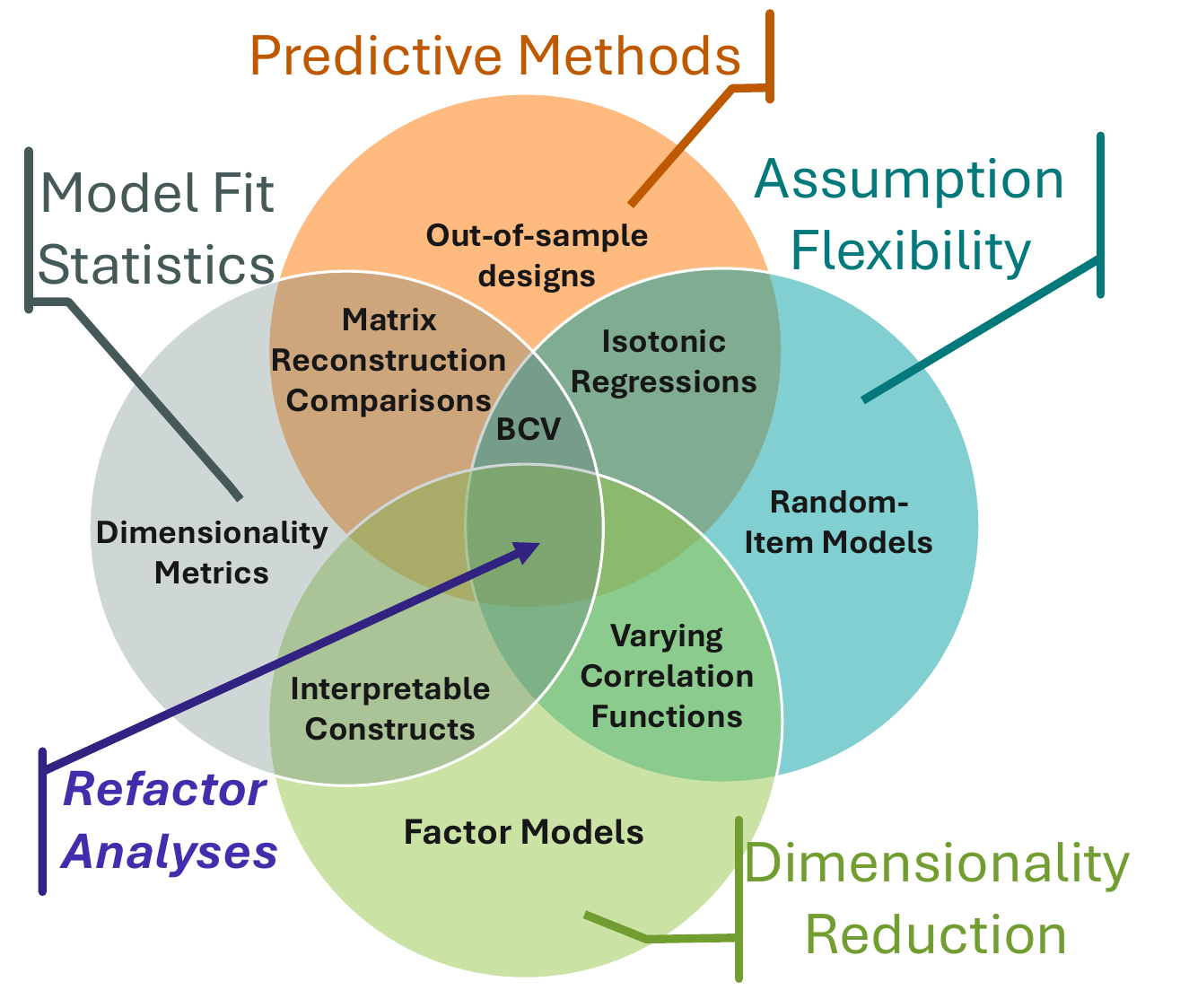}
    \caption{Refactor Analyses are useful for testing assumptions of factor models. Using a large number of datasets, we can test the general application of psychometric factor models.}
    \label{fig:refact_venn}
\end{wrapfigure}
For binary and ordinal data, ``correlation'' is a modeling choice rather than a fixed primitive. 
%Pearson/$\phi$ treats the $0\rightarrow 1$ step as a Euclidean distance; tetrachoric assumes thresholded latent normality; Yule's $Q$ emphasizes odds-ratio structure; Loevinger $H$ emphasizes monotone scalability; information-theoretic associations can emphasize uncertainty reduction. Each choice induces a different notion of rank--1 structure. 
Refactor/Verifactor make these choices empirically comparable by evaluating each induced rank--1 hypothesis on the same reconstruction task. Further, we illustrate that better signal recovery can be obtained by choice of correlational relationships. % \citep{hardy_star_corr_forthcoming}.

\paragraph{How  can this create a methodological vulnerability?} If the association operator itself imposes a particular notion of relationship, then high image-based fit can reflect internal coherence of the imposed relationship rather than faithful representation of the original response matrix. % 
To illustrate, let $\mathcal{A}\in\mathfrak{A}$ be a chosen association operator (e.g., $\phi$, tetrachoric, quadrant correlation) and let $A_c=\mathcal{A}_c(X)$ be the induced $p\times p$ item image. Standard image-based workflows (i) estimate a one-factor structure from $A_c$---for instance, by finding $\widehat v$ such that $A_c\approx \widehat v\,\widehat v^\top$---and then (ii) evaluate fit using statistics that are \emph{also functions of} $A_c$ (e.g., residual sums of squares on $A_c$, or indices derived from comparing $A_c$ to a model-implied image $\widehat\Sigma(\widehat v)$). Symbolically, this evaluates
% \[
$\text{fit}(\widehat v;\,X)\ \equiv\ \text{fit}\bigl(\widehat v;\,\mathcal{A}_c(X)\bigr)$,
% \]
rather than testing whether the rank--1 hypothesis $\widehat X=\widehat u\,\widehat v^\top$ actually recovers the response process $X$ in Eq.~\eqref{eq:rank1_intro} (formal treatment can be found in Appendix \ref{app:circularity}). When $\mathcal{A}$ is itself a modeling assumption---as it is for binary and ordinal data---this creates a risk of circularity: high apparent fit can reflect internal coherence of the \emph{imposed} association geometry (``does a one-factor model reproduce the image we constructed?'') rather than predictive fidelity to the original matrix (``does a one-factor representation recover the responses?''). Refactor breaks this loop by moving evaluation back to the data level: after estimating $\widehat u$ and $\widehat v$ from images, it judges adequacy by reconstruction quality $m(X,\widehat X)$ on the original matrix.

\begin{wrapfigure}{R}{0.56\textwidth}
    \centering
    \includegraphics[width=0.97\linewidth]{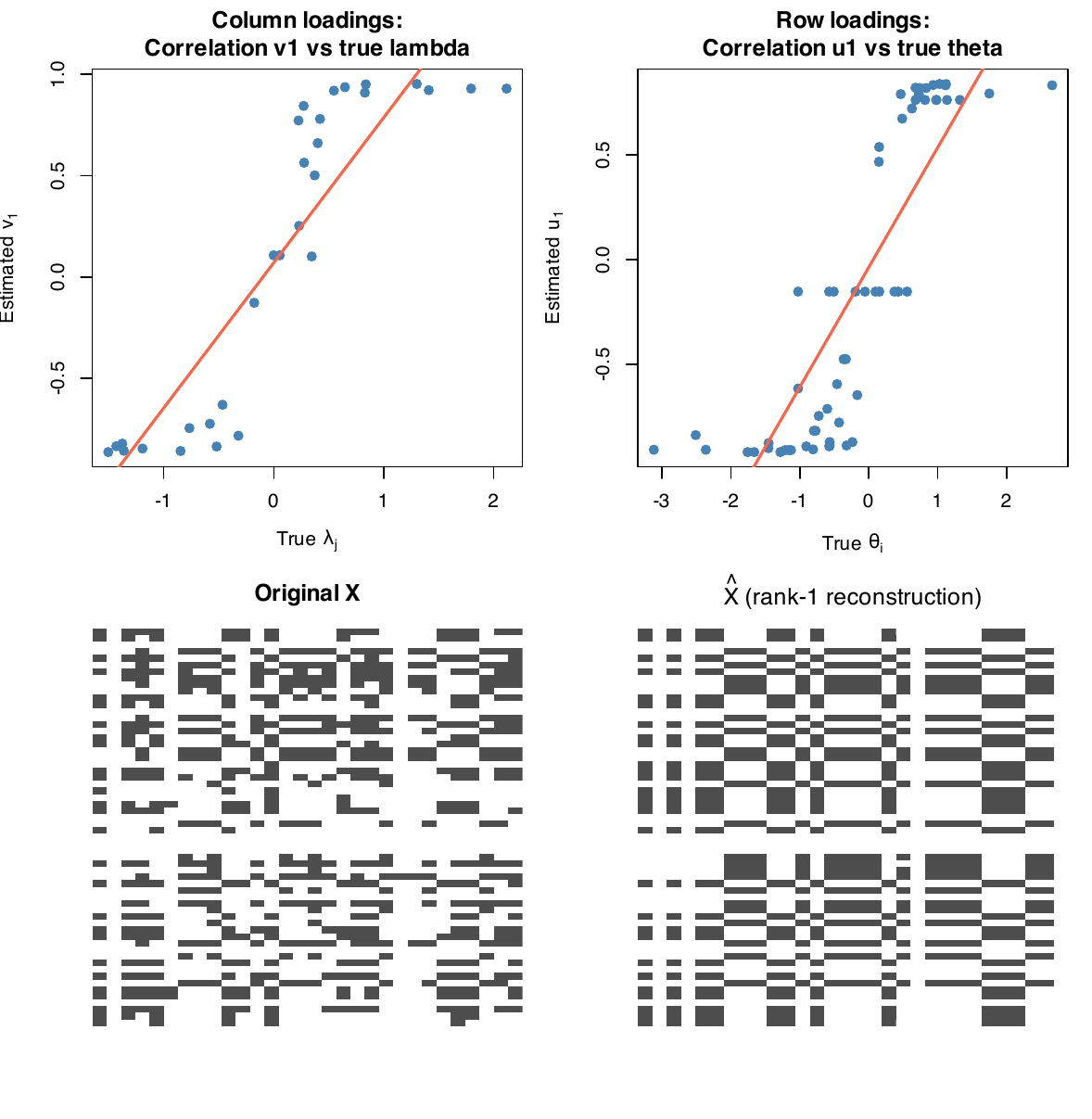}
    \caption{\textbf{Example Simulation}: \textbf{(top)} true unidimensional loadings for columns (left) and rows (right) vs their estimates, $\hat{v}$ and $\hat{u}$, respectively. \textbf{(bottom)} refactor rank--1 reconstruction where the data generating model reflects rank-1 tetrachoric correlations. The data and its reconstruction are compared $m(X,\hat{X})$ yielding a Refactor metric. While in this case the reconstruction has been transformed into binary for presentation, Refactor reconstructions are typically continuous: see Figure \ref{fig:reconstruction_diagram} for the continuous representations.}
    \label{fig:example_refactor_loadings}
\end{wrapfigure}

In short, we risk measuring how well a method reproduces its own assumptions. Solutions need to preserve the interpretive power of relationships between the rows and columns, which is why methods with uninterpretable factors, like Nonnegative Matrix Factorization \citep{lee_learning_1999}, that are optimized for specific ranks cannot fill this need in the measurement of latent constructs.\footnote{Methods like NMF and other less interpretable methods can, however, provide a helpful gage for more interpretable matrix reconstructions.} 

\subsection{The ``Refactor'' approach}
The {\bf Refactor} approach sits at the intersection of predictive model fit statistics and interpretable dimensionality reduction (see Fig. \ref{fig:refact_venn}). Refactor Analyses use the tools of random-item factor models to reconstruct response matrices and evaluate them based on the recoverability of the data. 

Refactor constructs a rank--$k$ representation from a chosen association image, but then evaluates it on the original response matrix. Concretely, as seen in Figure \ref{fig:example_refactor_loadings} for $k=1$ we obtain:
\begin{enumerate}\itemsep2pt
\item a column-side loading vector $\widehat v$ from an item-image $A_c=\mathcal{A}_c(X)$;
\item a row-side loading vector $\widehat u$ from an observation-image $A_r=\mathcal{A}_r(X)$;
\item a reconstructed matrix $\widehat X = \widehat u\,\widehat v^\top$;% (including any required calibration).
\item a measure of fit using metric that directly compares X and its reconstruction: $m(X,\hat{X})$
\end{enumerate}

\subsection{Refactor, Verifactor, and model evaluation}
\label{sec:intro_contributions}

A hallmark of a good scientific model is its ability to predict. This paper introduces two evaluation frameworks that shift dimensionality assessment from image fit to data recoverability and prediction via the above-described Refactor approach. Once we have constructed $\widehat X$, fit is then assessed by comparing $X$ and $\widehat X$. % using a \emph{suite} of reconstruction metrics that capture complementary notions of recovery: geometry (cosine), rank discrimination (AUC), ordinal agreement (Kendall $\tau$), calibration and information (cross-entropy / likelihood), and nonlinear dependence (distance correlation and partial distance correlation).

%\paragraph{(2) Verifactor analysis: refactoring under bi-cross-validation.}
Note that \textbf{Refactor} is in-sample as  it probes whether the hypothesized rank--1 structure can reproduce the data it was derived from. {\bf Verifactor} extends this to out-of-sample prediction using structured row and column held-out data, defined as bi-cross-validated (BCV) matrix partitions \citep{owen_bi-cross-validation_2009,owen_bi-cross-validation_2015}. By holding out blocks determined jointly by subsets of rows and columns, Verifactor targets the correct generalization question in crossed random designs: new respondents and new items simultaneously. This avoids the leakage and optimism that can occur when %factor scores or 
low-rank embeddings are computed using information from the same rows/columns being evaluated.
\begin{figure}[htb!]
    \centering
    \includegraphics[width=1\linewidth]{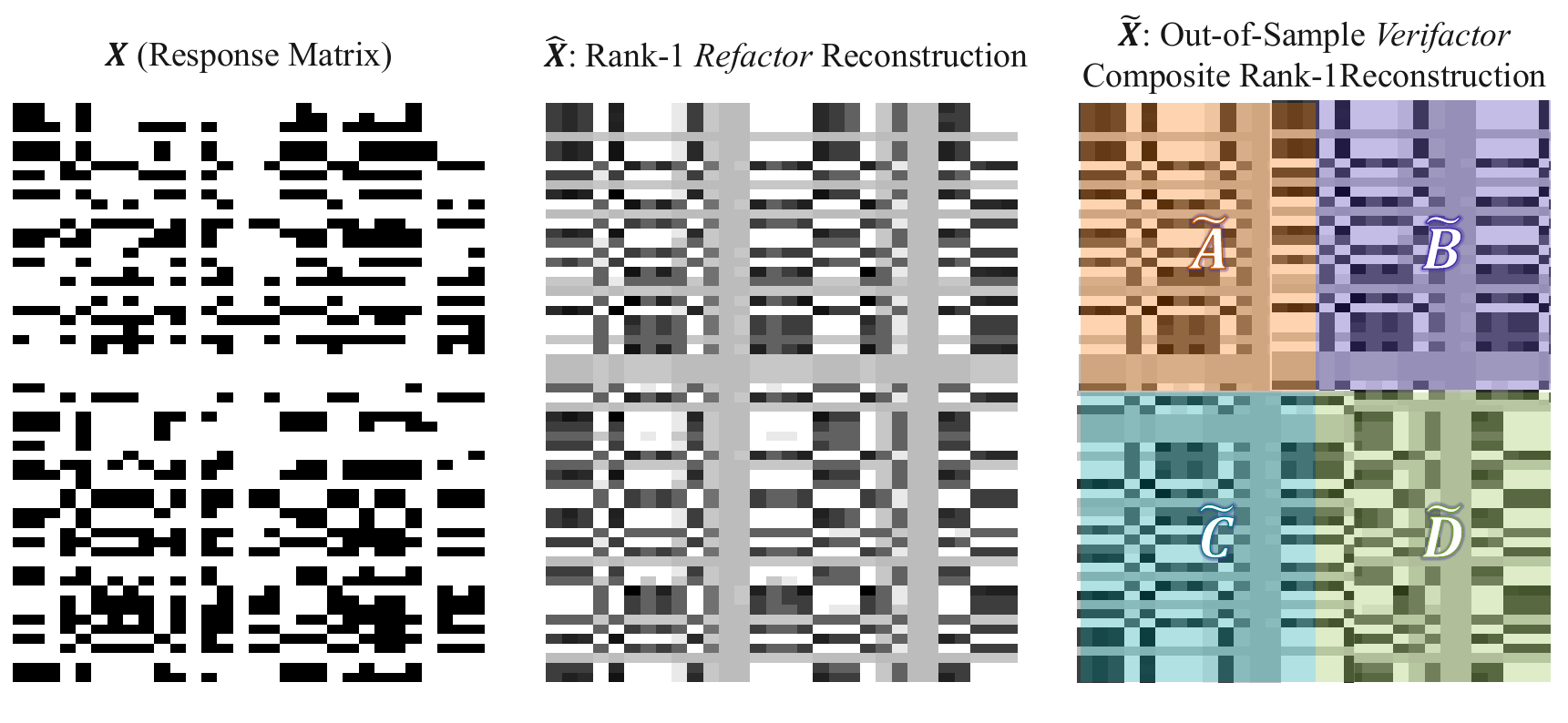}
    \caption{Example Response Matrix and Refactor and Verifactor Reconstructions}
    \label{fig:reconstruction_diagram}
\end{figure}
\vspace{-10pt}

\subsection{Implications for unidimensionality testing}
\label{sec:implications}

Refactor and Verifactor reframe unidimensionality as an empirical question of \emph{recoverability under a rank--1 representation}. This yields three concrete methodological implications that guide the empirical sections of the paper:

\begin{enumerate}\itemsep2pt
\item \textbf{From ``loading plausibility'' to ``data predictability'':} a one-factor model should be credited only insofar as it reconstructs $X$ (Refactor) and predicts held-out blocks of $X$ (Verifactor).
\item \textbf{Axis-respecting validation in crossed designs:} when both persons and items are random, Verifactor’s bi-cross-validation targets the correct generalization objective. %, avoiding entrywise CV artifacts.
\item \textbf{Correlational appropriateness is measurable:} by comparing association operators $\mathcal{A}$ through out-of-sample recoverability, we can diagnose when a correlational signal is present, absent, or when an alternative association better captures the intended construct.
\end{enumerate}

Refactor and Verifactor should not be seen as replacements for traditional factor-analytic diagnostics, but as complementary evidence: they directly test whether the hypothesized rank--1 structure is a faithful, predictive abstraction of the original data, and they do so in a way that is naturally compatible with modern random-by-random data regimes.

 \subsection{Outline}
 \label{sec:intro_roadmap}
First, we formalize the ``Refactor'' approach (Section \ref{sec:setup-refactor}) and establish Refactor and Verifactor as evaluation layers that sit atop standard factor-analytic estimators, respectively, in Sections \ref{sec:refactor_def} and \ref{sec:verifactor_def}, providing a more formal treatment in Appendices \ref{app:refactor-proof} and \ref{app:verifactor-proof}. In Section \ref{sec:test_corr}, we posit that these methods can compare correlational relationships used in factor analysis, and in Section \ref{sec:q}, we reintroduce a correlational coefficient $q^\prime$ that we will use as a comparison to Pearson and tetrachoric.  In Section \ref{sec:evalmeths}, we  introduce reconstruction metrics, emphasizing dual random-effects regimes. Multiple simulations ground the concepts by contrasting known data-generating mechanisms and association assumptions In the first simulations, our data generating model matches traditional correlations (Section \ref{sec:simple_sim}) and our second simulations we replicate a hierarchical factor structure from a recent study on unidimensionality (Section \ref{sec:unidim_reprod}). Finally, we evaluate over a large and diverse collection of public datasets, showing that traditional image-based unidimensionality measures often have weak relationship to the ability of the corresponding rank--1 model to reconstruct and predict the original response matrix (Section \ref{sec:empirical}). This motivates a re-interpretation of ``unidimensionality'' in modern scientific and benchmarking contexts: not merely as a property of an association image, but as a claim about recoverable structure in the data itself. Section \ref{sec:background}  acknowledges the long history of factor models and this study's position within it. 

% The long history of factor models (Section \ref{sec:background}) . We first establish Refactor and Verifactor as evaluation layers that sit atop standard factor-analytic estimators respectively in Sections \ref{sec:refactor_def} and \ref{sec:verifactor_def}, providing more formal treatment in Appendices \ref{sec:refactor-proof} and \ref{sec:verifactor-proof}. In Section \label{sec:test_corr} we posit that these methods can compare correlational relationships used in factor analysis and in Section \ref{sec:q} we reintroduce a correlational coefficient $q^\prime$ which we will use as a comparison to Pearson and tetrachoric.  We then introduce the reconstruction metric suite and its interpretation, emphasizing dual random-effects regimes. Simulations ground the concepts by contrasting known data-generating mechanisms and association assumptions. Finally, we evaluate over a large and diverse collection of public datasets, showing that traditional image-based unidimensionality measures often have weak relationship to the ability of the corresponding rank--1 model to reconstruct and predict the original response matrix. This motivates a re-interpretation of ``unidimensionality'' in modern scientific and benchmarking contexts: not merely as a property of an association image, but as a claim about recoverable structure in the data itself.
\begin{figure}[ht]
    \centering
    \includegraphics[width=1\linewidth]{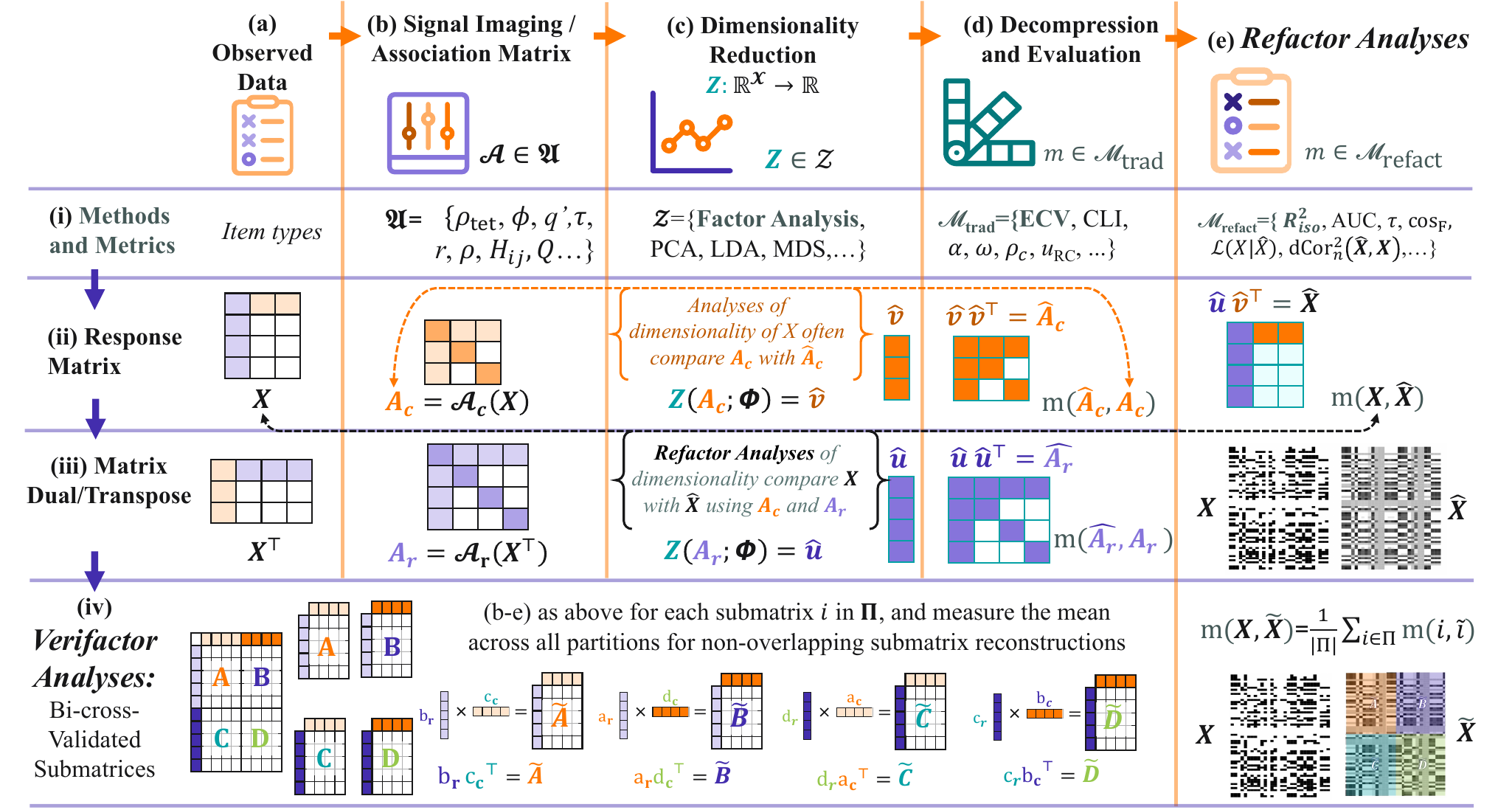}
    \caption{\textbf{Refactor and Verifactor Workflows}. \textbf{(a)} Starting from the observed response matrix $X \in \mathbb{R}^{n\times p}$, we \textbf{(b)} form an association matrix "image" $A$ (see Section \ref{sec:test_corr}) by capturing the signal of interest on both axes (see Section \ref{sec:setup-refactor}): \textbf{(ii)} $A_c$ (columns) and \textbf{(iii)} $A_r$ (rows). \textbf{(c)} Standard dimensionality reduction techniques $\textsf{Z}$ focus on this signal image derived from $\boldsymbol{X}$ (e.g., the covariance/correlation matrix $\boldsymbol{X}^T\boldsymbol{X}$) to produce column-space projection loadings, $\boldsymbol{\hat{v}}$. Refactoring extends this by performing a dual analysis on the matrix transpose to produce row-space projection loadings, $\boldsymbol{\hat{u}}$. \textbf{(d)} These two loading matrices are then used to reconstruct a prediction of the original data matrix, $\boldsymbol{\hat{u}}\boldsymbol{\hat{v}}^\top= \boldsymbol{\hat{X}}$ (see Section \ref{sec:refactor_def}. Finally, \textbf{(e)} Refactor Analyses evaluate the model by quantifying the correspondence between the observed data $\boldsymbol{X}$ and the refactored data $\boldsymbol{\hat{X}}$ using various \textbf{(i)} matrix comparison metrics, thereby assessing the model's ability to preserve the signal in the original data. Verifactor Analyses \textbf{(iv)} extends this paradigm to out-of-sample bi-cross-validated prediction by using limited information projections calculated from individual partitioned submatrices, $i \in \Pi$, of \textbf{X} to reconstruct low-rank approximations for held-out  submatrices (see Section \ref{sec:verifactor_def}).}
    \label{fig:refactor}
\end{figure}
\section{Methods}\label{sec:methods}

\subsection{Refactor and Verifactor leverage dual random effects and recoverability metrics}
\label{sec:setup-refactor}

Let $\mathcal{A}($X$)$ be an \emph{association operator} mapping $X$ to a symmetric image on rows or columns.\footnote{Two canonical choices are Gram matrices (square matrices that capture association by calculating an inner product of every possible pair)
\[
G_r(X)=XX^\top\in\mathbb{R}^{n\times n},\qquad G_c(X)=X^\top X\in\mathbb{R}^{p\times p},
\]
but in applications we also allow $\mathcal{A}$ to be constructed entrywise from ``Gram-functioning'' pairwise associations (e.g., a correlation matrix, a $\kappa$ matrix, etc.).} For concreteness we write the two images as
% \[
$A_r=\mathcal{A}_r(X)\in\mathbb{R}^{n\times n}, %\qquad 
A_c=\mathcal{A}_c(X)\in\mathbb{R}^{p\times p}$.
% \]
Let $\textsf{Z}($X$,k)$ be a dimensionality-reduction operator that returns the top-1\footnote{This can be extended to top-$k$ without loss of generality. Applying $\textsf{Z}$ to each image yields \emph{row-space} and \emph{column-space} loadings
\begin{equation}
B_k = \textsf{Z}(A_r,k)\in\mathbb{R}^{n\times k},\qquad
F_k = \textsf{Z}(A_c,k)\in\mathbb{R}^{p\times k}.
\label{eq:rowcol-loadings-footnote}
\end{equation} See \cite{owen_bi-cross-validation_2015} for further treatment of top-$k$ rank selection.} directions (e.g., eigendecomposition, minimum residual FA, robust variants). Applying $\textsf{Z}$ to each image yields \emph{row-space} and \emph{column-space} loadings
\begin{equation}
\hat{u} = B_1 =\textsf{Z}(A_r,1)\in\mathbb{R}^{n\times 1},\qquad
\hat{v} = F_1 = \textsf{Z}(A_c,1)\in\mathbb{R}^{p\times 1}.
\label{eq:rowcol-loadings}
\end{equation}
A \emph{refactoring map} is any rule $\mathcal{R}_k$ that produces a reconstructed matrix
\begin{equation}
\widehat{X_k} \;=\; \mathcal{R}_k(X; B_k,F_k)\in\mathbb{R}^{n\times p},
\label{eq:refactor-recon}
\end{equation}
intended to approximate $X$ if the hypothesized low-rank structure is correct, where $k = 1$ for rank-1.  

The construction of $\widehat{x_k}$ is the core of the Refactor approach. To summarize,
\begin{itemize} 
\item $X$ is first mapped to row- and column-based symmetric images via $\mathcal{A}$.
\item These images are then each reduced to a $1$-dimensional approximation via $\textsf{Z}$.
\item These approximations are combined to form $\widehat{X_k}$ using a refactoring map $\mathcal{R}_k$. 
\end{itemize}
After computation of $\widehat{X_k} $, Refactor analysis involves evaluation of the resemblance between $X$ and $\widehat{X_k}$; note that we are not using $\mathcal{A}_r$ or $\mathcal{A}_c$ for comparison but rather focusing on $X$ and its refactoring $\widehat{X_k}$ (see Figure \ref{fig:refactor}). To analyze resemblance, we consider some comparison metric $m_j(X,\widehat X_k)$ (Section~\ref{app:eval_metrics} and Appendix Table~\ref{tab:metrics}). 

% \subsubsection{Measuring Refactor Recovery}

% [what you now want to discuss is the choice of m\_j]. 

%In the special case where $B_k$ and $F_k$ are taken as the left and right singular vectors of $X$ (or equivalently, principal directions of $XX^\top$ and $X^\top X$), the natural refactoring map is the truncated SVD reconstruction $\widehat X_k = U_k\Sigma_k V_k^\top$. In more general settings (nonlinear associations, FA constraints, heteroscedasticity, sign/scale indeterminacy, and monotone response models), $\mathcal{R}_k$ is defined to respect the model class being evaluated (e.g., isotonic calibration of scores/predictions when only ordinal information is intended to be modeled).

% \subsection{Measuring Refactor Recovery}\label{sec:refact_measure}
The final step of Refactoring Analyses is to quantify the correspondence between the observed data $\boldsymbol{X}$ and its reconstruction $\boldsymbol{\hat{X}}$ (see Section \ref{sec:evalmeths}). There are many possible choices for $m_j$, and thus we compare them to extant metrics. This comparison provides a direct and interpretable measure of how well the low-rank structure captures the information in the original data. Because this task requires methods for comparing matrices that may be unfamiliar to many researchers, we introduce a suite of powerful metrics. Moving beyond simple element-wise correlations, these include global measures of matrix association, discussed at length\footnote{We would argue that a contribution of this work is the intentional selection of a diagnostic suite of metrics for evaluating the reconstruction. A single summary statistic may be insufficient to assess the correspondence between the observed data matrix $\boldsymbol{X}$ and its model-based reconstruction $\boldsymbol{\hat{X}}$. Therefore, in the appendix, we offer a multifaceted evaluation that provides a holistic profile of the model's performance. } in Appendix \ref{app:circularity}. In the main body of the paper we will show focus on one Refactor metric critical for evaluating the effects of varying association matrices on measures of unidimensionality: isotonic $R^2$ (see Section \ref{sec:methods_isotonic_r2}) which is the maximum proportion of variance explained, assuming a monotone unidimensional construct. We illustrate the evaluation processes below.

\subsection{Refactor Analyses}\label{sec:refactor_def}
% This section explains the core contributions of 
Refactor Analysis is a \emph{test of recoverability of $X$ from a rank--$k$ representation learned from an image}, rather than a test of fit on the image itself. A mathematically formal treatment of this section can be found in Appendix \ref{app:refactor-proof}. 

Let $\boldsymbol{X} \in \mathbb{R}^{n \times p}$ be the data matrix. A rank-1 approximation of $\boldsymbol{X}$ takes the form $\boldsymbol{\hat{X}} = \boldsymbol{u}\boldsymbol{v}^T$, where $\boldsymbol{u} \in \mathbb{R}^n$ and $\boldsymbol{v} \in \mathbb{R}^p$.  In this case, the vector $\boldsymbol{u}_1$ represents the first factor loadings based on the rows (e.g., persons) from the fitted factor model of $\boldsymbol{X}^T$. The vector $\boldsymbol{v}_1$ represents the first factor loadings based on the columns (e.g., items)from the fitted factor model of $\boldsymbol{X}$. The Refactor reconstruction, $\boldsymbol{\hat{X}}$, is the rank-1 matrix formed by the outer product of these two vectors.

This process allows for the use of various association matrices. Instead of $\boldsymbol{X}^T\boldsymbol{X}$, one can construct a matrix $\boldsymbol{A}$ whose entries $A_{ij}$ represent a chosen measure of association (e.g., Pearson, tetrachoric, quadrant correlation) between columns $i$ and $j$ of $\boldsymbol{X}$. The first factor loadings of $\boldsymbol{A}$ serves as the loading vector $\boldsymbol{v}$. $\boldsymbol{u}$ follows similarly.

The claim of unidimensionality, that a rank-1 matrix fits the data, is simultaneously a claim about \emph{row structure} (persons/observations align along one latent direction) and \emph{column structure} (items/features align along one latent direction). Image-only diagnostics often quantify only the column-side claim (e.g., via $X^\top X$), \textbf{\textit{while Refactor insists on reconstructing $X$ using both}}. 

%\subsubsection{Refactor fit as recoverability}
Let $\mathcal{M}_k$ denote a class of rank--$k$ reconstructions induced by a chosen image, estimator, and refactoring map. Define a \emph{Refactor functional} by
\begin{equation}
\mathrm{RF}_m(k;X)\;=\; m(X,\widehat X_k),
\qquad \widehat X_k\in\mathcal{M}_k,
\label{eq:refactor-functional}
\end{equation}
where $m(\cdot,\cdot)$ may be a loss (smaller is better) or an association (larger is better). % The paper instantiates $m$ with predictive (AUC, cross-entropy/likelihood), order-preserving (Kendall $\tau$), geometric (matrix cosine), and dependence-sensitive (bias-corrected squared distance correlation, independence-partialized squared distance correlation).
When the data are truly well described by a rank--$1$ signal, Refactor scores approach their optimal values. Conversely, when Refactor scores remain poor even as image-based fit looks adequate, the evidence points to a mismatch between the \emph{associational image} and the \emph{data-level signal} (e.g., nonlinear structure not captured by the chosen association, violations of conditional independence, mixtures, or strong idiosyncratic effects).

\subsection{Verifactor Analyses}\label{sec:verifactor_def}
% This section explains the core contributions of 
Verifactor Analysis is a \emph{test of recoverability of $X$ from a held-out rank--$k$ representations learned from an image}, rather than a test of fit on the image itself. The formal treatment of this section can be found in Appendix \ref{app:verifactor-proof}. 

Response matrices operate as 2-D samples indexed by observations (respondents) and variables (items), sampled where the scientific target concerns generalization over \emph{both} axes. Many data imputation methods exist for matrix completion, but each may require different assumptions on the underlying relationship we hope to investigate. In such matrix designs, resampling individual entries (or naive elementwise masking CV) breaks the true dependence structure induced by shared rows/columns and yields optimistic or otherwise misleading error estimates.\footnote{Solutions such as using the expected log predictive likelihood for missing respondents or items can be calculated exactly using the marginal maximum likelihood estimation \citep{casabianca_irt_2015} \emph{if the data generating model is known} and at least one set of variables are considered fixed \citep{stenhaug_predictive_2022}.} 
When the generating model is \textit{not known}, or when such assumed relationships are being tested in dual random effects contexts, resampling rows and columns is the appropriate asymptotic approximation \citep{mccullagh_resampling_2000}. BCV \citep{owen_bi-cross-validation_2009,owen_bi-cross-validation_2015} accomplishes this matrix prediction by holding out submatrices determined jointly by subsets of rows and columns.

%\subsubsection{Verifactor Analysis as BCV}
Partition (after row/column permutation) a matrix $X$ into blocks
\begin{equation}
X=
\begin{pmatrix}
A & B\\
C & D
\end{pmatrix} \qquad \text{and} \qquad \widetilde{X}=
\begin{pmatrix}
\widetilde{A} & \widetilde{B}\\
\widetilde{C} & \widetilde{D}
\end{pmatrix},
\label{eq:block-partition}
\end{equation}
where $\widetilde{X}$ is the Verifactor reconstruction of $X$ and where each submatrix is reconstructed based on the factor loadings of the submatrices belonging to the other diagonal. 

 In the case of reconstructing $\widetilde{A}$, 
 $\boldsymbol{\widetilde{A}} = \boldsymbol{u}_{B,1}\boldsymbol{v}_{C,1}^\top$. In this case, the vector $\boldsymbol{u}_{B,1}$ represents the first factor loadings based on the rows (e.g., persons) from the fitted factor model of $\boldsymbol{B}^\top$. The vector $\boldsymbol{v}_{C,1}$ represents the first factor loadings based on the columns (e.g., items)from the fitted factor model of $\boldsymbol{C}$. All submatrix reconstructions are created similarly.

%\subsubsection{Verifactor as out-of-sample recoverability}

Let $\Pi$ denote a random partition of rows and columns into folds. For each fold $(i,j)$, let $(A_{ij},B_{ij},C_{ij},D_{ij})$ be the corresponding block decomposition. Verifactor computes loadings (or low-rank structure) on held-in blocks, constructs a refactored predictor $\widehat A_{ij}$ for the held-out block, and aggregates fit:
\begin{equation}
\mathrm{VF}_m(k;X)
=
\frac{1}{|\Pi|}
\sum_{(i,j)\in\Pi}
m\!\left(A_{ij},\ \widetilde A_{ij}^{(k)}\right),
\qquad
\widetilde A_{ij}^{(k)} := \mathcal{V}_k(B_{ij},C_{ij},D_{ij}),
\label{eq:verifactor-functional}
\end{equation}
where $\mathcal{V}_k$ is a BCV-compatible predictor or a model-specific analog using estimated row/column loadings from $D$ via $B$ and $C$). In our analysis, rank is fixed at $k=1$ when the inferential target is unidimensionality, which avoids known monotonicity pathologies of one-way deletion CV for matrix factorization while preserving the interpretability of the hypothesis test.

Verifactor is especially well aligned with the random-rows/random-columns inferential target: it evaluates the \emph{generalizable} part of the rank--$k$ structure while cleanly separating irreducible noise. This stands in contrast to in-sample reconstruction, where overfitting and axis-specific dependence can inflate fit. Importantly for unidimensionality, Verifactor analysis can illustrate the extent to which the hypothesized relationship is truly unidimensional. Interestingly, unidimensionality as measured by traditional metrics, based on a covariance or correlation matrix, has no significant empirical relationship with this more honest measure of dimensionality (see Figure \ref{fig:fafit_vs_refact}).

\subsubsection{Verifactor prediction for two-way random-effects}
\label{sec:why_matrix_prediction_main}

In random observations $\times$ random variables regimes \citep{de_boeck_random_2008}, the scientific question concerns generalization to new rows and new columns. This makes the \emph{unit of resampling} (and thus the unit of evaluation) crucial. Entrywise resampling or prediction is generally misaligned because each entry shares its row and column with observed entries, enabling optimistic leakage.
% \subsubsection{Axis-respecting generalization and leakage control}
% \label{sec:axis_respecting}

Let $\Pi=(\mathcal{I},\mathcal{J})$ denote a fold, with held-out rows $\mathcal{I}\subset[n]$ and held-out columns $\mathcal{J}\subset[p]$. The held-out block is
$A=X_{\mathcal{I},\mathcal{J}}$. Verifactor constructs $\widehat A$ from
$D=X_{\mathcal{I}^c,\mathcal{J}^c}$ (and, depending on the completion rule, also $B=X_{\mathcal{I},\mathcal{J}^c}$ and $C=X_{\mathcal{I}^c,\mathcal{J}}$) but \emph{never} from rows in $\mathcal{I}$ together with columns in $\mathcal{J}$. This is the one sense in which Verifactor is an improvement over standard reconstruction assessments: it aligns evaluation with the intended generalization. 

\subsection{Evaluation Methods}\label{sec:evalmeths}
In this section, we discuss Refactor reconstruction recovery methods and metrics. Formal treatment of the content can be found in Appendix \ref{app:evaluation_motivation}.
\subsubsection{Refactor Analysis and Recoverability of Data}\label{sec:recoverability}

Refactoring introduces a new paradigm in factor model evaluation by refocusing the assessment from the model's abstract image to its direct explanatory power. The central premise, outlined in Figure \ref{fig:refactor}, is to leverage the full structure of the low-rank model to generate a prediction of the original data. 

The final step of Refactoring Analyses is to quantify the correspondence between the observed data $\boldsymbol{X}$ and the Refactored and Verifactored reconstructions, $\boldsymbol{\hat{X}}$ and $\boldsymbol{\widetilde{X}}$, respectively. 

This comparison provides a direct and interpretable measure of how well the low-rank structure captures the information in the original data, illustrated with an example in Figure \ref{fig:example_refactor_loadings}. Because this task requires methods for comparing matrices that may be unfamiliar to many researchers, we introduce a suite of powerful metrics. % Moving beyond simple element-wise correlations, these include global measures of matrix association, such as distance correlation and the RV coefficient, as well as more granular diagnostics that assess structural fidelity, such as the preservation of vector orientation as measured by cosine similarity. Together, these tools provide a richer, more diagnostically informative framework for evaluating model adequacy than is possible by inspecting the factor loadings alone.

\subsection{Spotlighted comparative evaluation metrics: ECV and isotonic $R^2$}
\label{sec:methods_primary_metrics}

The main figures are designed to make a single comparison transparent: \emph{when do traditional, image-based claims of unidimensionality agree with data-level recoverability under Refactor, and when do they diverge?} To keep this comparison interpretable across audiences and across datasets, we focus on metrics that share a common conceptual target: \textbf{how much of the systematic signal can be explained by a single dimension}. We therefore report (i) a standard unidimensionality index computed on the association image, \emph{explained common variance} (ECV), and (ii) a data-level reconstruction index computed on $X$ and $\widehat X$, \emph{isotonic $R^2$}. Both can be understood as proportion-of-variance style summaries, but they answer different questions.

\subsubsection{Traditional image-based unidimensionality: explained common variance (ECV)}
\label{sec:methods_ecv}

% \paragraph{What ECV measures.}
ECV quantifies the degree to which a \emph{common-factor} representation of the association image is dominated by the first factor. Intuitively, if the shared covariance among items is essentially one-dimensional, then the first common factor should account for most of the shared variance, and remaining common factors should be comparatively small.

% \paragraph{How ECV is computed.}
Let $A_c=\mathcal{A}_c(X)\in\mathbb{R}^{p\times p}$ be a chosen item association matrix (e.g., $\phi$ or tetrachoric correlations), and let a factor-analytic method produce an $m$-factor common-variance decomposition of the form
% \begin{equation}
$A_c \approx \Lambda\Lambda^\top + \Psi$,
% \label{eq:common_model}
% \end{equation}
where $\Lambda\in\mathbb{R}^{p\times m}$ contains factor loadings and $\Psi$ is a diagonal uniqueness matrix. ECV is then computed from the eigenvalues of the \emph{common variance matrix} $\Lambda\Lambda^\top$.\footnote{In practice we use minimum rank factor analysis (MRFA), which is designed to estimate the eigenvalues of common variance while minimizing rank under appropriate constraints \citep{ten_berge_numerical_1991,ten_berge_greatest_2004}.}
Let $\lambda_1\geq \lambda_2\geq \dots\geq \lambda_m>0$ denote the eigenvalues of $\Lambda\Lambda^\top$ (equivalently, the squared singular values of $\Lambda$). The explained common variance is
\begin{equation}
\mathrm{ECV}
\;:=\;
\frac{\lambda_1}{\sum_{r=1}^{m}\lambda_r}.
\label{eq:ecv}
\end{equation}
% \paragraph{Interpretation.}
$\mathrm{ECV}\in(0,1]$ is large when the first common factor dominates the remaining common factors. ECV is widely used as a unidimensionality indicator because it is relatively robust to the total number of items compared to older reliability heuristics, and it aligns with the conceptual claim that one latent dimension explains the shared signal \citep{rodriguez_evaluating_2016,sijtsma_use_2009, ten_berge_greatest_2004}. However, ECV remains an \emph{image-based} statistic: it is computed from a transformation $\mathcal{A}_c(X)$ and reflects the one-dimensionality of that image’s common variance, not necessarily the recoverability of the original response matrix.

\subsubsection{Refactor data-level unidimensionality and need for monotone metric: isotonic $R^2$}
\label{sec:methods_isotonic_r2}

% \paragraph{Need for monotone metric}
For binary and ordinal response matrices, a one-dimensional latent trait typically induces a \emph{monotone} relationship between the latent score and response probability, but not a linear one (e.g., thresholding, logistic/probit links, saturation). Therefore, evaluating $\widehat X$ by linear correlation or mean squared error can unfairly penalize a correct unidimensional structure when the link from latent signal to observed responses is nonlinear. To make the reconstruction evaluation compatible with the general measurement assumption of monotonicity, we use isotonic regression to compute the best possible monotone transformation of $\widehat X$ before measuring explained variance.

% \paragraph{Isotonic $R^2$.}
Here we define Isotonic $R^2$. Let $\widehat X\in\mathbb{R}^{n\times p}$ be a Refactor rank--1 reconstruction from some association operator and estimation procedure. Write $x=\mathrm{vec}(X)\in\mathbb{R}^{np}$ and $\hat{x}=\mathrm{vec}(\widehat X)\in\mathbb{R}^{np}$. Consider the class $\mathcal{G}$ of nondecreasing functions $g:\mathbb{R}\to\mathbb{R}$. Define the isotonic (best monotone) fit
\begin{equation}
g^\star
\;\in\;
\arg\min_{g\in\mathcal{G}}
\sum_{t=1}^{np}\bigl(x_t - g(\hat{x}_t)\bigr)^2,
\label{eq:isotonic_fit}
\end{equation}
and set $\tilde{x} = g^\star(\hat{x})$. The isotonic coefficient of determination is then
\begin{equation}
R^2_{\mathrm{iso}}(X,\widehat X)
\;:=\;
1-
\frac{\sum_{t=1}^{np}\bigl(x_t-\tilde{x}_t\bigr)^2}
{\sum_{t=1}^{np}\bigl(x_t-\bar{x}\bigr)^2},
\qquad \bar{x}=\frac{1}{np}\sum_{t=1}^{np}x_t.
\label{eq:isotonic_r2}
\end{equation}

% \paragraph{Interpretation.}
$R^2_{\mathrm{iso}}$ can be read as: \emph{how much variance in the observed responses can be explained by the reconstruction after allowing the best monotone calibration of its scale}. This makes it a natural ``recoverable signal'' index for unidimensional latent structure in binary/ordinal data: if $\widehat X$ correctly orders response propensities, isotonic regression will map that ordering to an optimal monotone approximation of $X$.

\subsubsection{Isotonic $R^2$ is the optimal monotone variance-explained score}
\label{sec:methods_optimality}

\begin{proposition}[Optimality of isotonic calibration for monotone fit]
\label{prop:isotonic_optimality}
Among all monotone transformations $g\in\mathcal{G}$ applied entrywise to $\widehat X$, isotonic regression achieves the minimal residual sum of squares in \eqref{eq:isotonic_fit}. Consequently, $R^2_{\mathrm{iso}}(X,\widehat X)$ is the \emph{largest} achievable $R^2$ obtainable from $\widehat X$ under the sole assumption that the relationship between latent signal and observed responses is monotone.
\end{proposition}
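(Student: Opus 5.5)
The plan is to reduce the infinite-dimensional problem over $\mathcal{G}$ to a finite-dimensional convex projection, and then read off the $R^2$ claim from the fact that the denominator in \eqref{eq:isotonic_r2} does not depend on $g$.

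\textbf{Step 1 (reduction to a finite problem).} Let $z_{(1)}<\dots<z_{(K)}$ be the distinct values among the entries $\hat x_t$, and set $w_k=\#\{t:\hat x_t=z_{(k)}\}$. The objective in \eqref{eq:isotonic_fit} depends on $g$ only through the values $\theta_k=g(z_{(k)})$. Since entries sharing the same $\hat x_t$ must receive the same fitted value, the objective becomes
\[
S(\theta)=\sum_{k=1}^K\sum_{t:\hat x_t=z_{(k)}}(x_t-\theta_k)^2 .
\]
Nondecreasing $g$ force $\theta_1\le\dots\le\theta_K$. Conversely, any such vector extends to a nondecreasing $g$ on $\mathbb{R}$, for instance by piecewise-constant or linear interpolation and constant extension beyond the endpoints. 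So minimizing over $\mathcal{G}$ is equivalent to minimizing $S$ over the closed convex cone
\[
\mathcal{K}=\{\theta\in\mathbb{R}^K:\theta_1\le\dots\le\theta_K\}.
\]

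\textbf{Step 2 (existence and uniqueness).} Expanding $S$ gives
\[
S(\theta)=\sum_k w_k(\bar x_k-\theta_k)^2+\text{const},
\]
where $\bar x_k$ is the mean of $x_t$ over the entries with $\hat x_t=z_{(k)}$. This is a strictly convex quadratic: it is a weighted squared distance from $\bar x$ with all weights $w_k>0$. Its minimizer over the nonempty closed convex set $\mathcal{K}$ therefore exists and is unique, namely the weighted projection of $\bar x$ onto $\mathcal{K}$.

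This projection is exactly what isotonic regression computes. I would cite the standard characterization (Barlow--Brunk, or correctness of the pool-adjacent-violators algorithm) rather than reprove it. Alternatively, I would verify the KKT conditions for the projection directly. This identification is the only nontrivial step: one must check that the algorithm's output satisfies the projection optimality conditions. I would do that by noting that within each pooled block the fitted value is the block mean, and that adjacent block means are ordered.

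So $g^\star$ attains the minimal residual sum of squares over all of $\mathcal{G}$. It is unique on the observed values $\hat x_t$, though not off them, and this is harmless because only those values enter the fit.

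\textbf{Step 3 (the $R^2$ claim).} For any $g\in\mathcal{G}$ define
\[
R^2(g)=1-\mathrm{RSS}(g)/\mathrm{TSS},
\qquad
\mathrm{TSS}=\sum_t(x_t-\bar x)^2 .
\]
$\mathrm{TSS}$ depends only on $X$, not on $g$. Assuming $\mathrm{TSS}>0$, which rules out a constant $X$ (a degenerate case to note separately), $R^2(g)$ is a strictly decreasing function of $\mathrm{RSS}(g)$. Hence
\[
R^2_{\mathrm{iso}}=R^2(g^\star)\ge R^2(g)\quad\text{for every }g\in\mathcal{G}.
\]
This is the claimed maximality among all monotone calibrations of $\widehat X$.

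As a side remark, constant functions lie in $\mathcal{G}$, and the constant $\bar x$ gives $R^2=0$. Therefore $R^2_{\mathrm{iso}}\ge 0$, and $R^2_{\mathrm{iso}}\le 1$ trivially.
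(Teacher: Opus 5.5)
Your proposal is correct and follows essentially the same route as the paper's proof: the least-squares fit in \eqref{eq:isotonic_fit} is a projection onto the closed convex cone of monotone fits, so a minimizer exists (with unique fitted values), and since the denominator of \eqref{eq:isotonic_r2} does not depend on $g$, minimizing the residual sum of squares maximizes $R^2$. Your tie-pooled reduction to $\mathcal{K}$, the strict-convexity argument and the $\mathrm{TSS}>0$ caveat supply details the paper only cites as standard; and since $g^\star$ is \emph{defined} as the argmin in \eqref{eq:isotonic_fit}, the identification with the PAVA output that you flag as the only nontrivial step is not actually needed, because existence of the minimizer is all the proposition requires.
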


\begin{proof}
Equation \eqref{eq:isotonic_fit} is the least-squares projection of $x$ onto the closed convex cone\footnote{Isotonic regression is considered a cone regression because the set of all possible non-decreasing functions (the constraint space) forms a convex cone in vector space, represented here. Isotonic regression is a specific instance of a broader class of cone regression problems, which minimizes the distance (least squares) between observed data and a closed convex cone. In other words, finding the isotonic fit is equivalent to taking raw data points and ``projecting'' them onto the nearest point within the isotonic cone. See \cite{yang_contraction_2019}.} $\{g(\hat{x}): g\in\mathcal{G}\}$. Standard results for isotonic regression imply existence and (up to ties) uniqueness of the minimizer and its optimality in squared error. Since $R^2_{\mathrm{iso}}$ is a monotone transformation of the residual sum of squares, minimizing the residual maximizes $R^2_{\mathrm{iso}}$.
\end{proof}

\subsubsection{ECV and isotonic $R^2$ together clarify convergence versus divergence}
\label{sec:methods_why_these}

ECV and isotonic $R^2$ are intentionally paired because they operationalize two different, commonly conflated notions of unidimensionality:

\begin{itemize}\itemsep2pt
\item \textbf{ECV (image-based):} ``Is the \emph{common variance} of the association image essentially one-dimensional?''
\item \textbf{Isotonic $R^2$ (data-based):} ``Does the induced rank--1 structure \emph{recover the response matrix} up to an unknown monotone link?''
\end{itemize}

When the association operator matches the data-generating dependence (e.g., a genuinely correlational latent mechanism), we expect these measures to agree: high ECV should coincide with high isotonic $R^2$. When they diverge, the difference is informative: it indicates that a one-factor structure may fit the chosen image well while failing to recover the original responses (or vice versa), motivating further investigation of (i) the appropriateness of the association operator, (ii) marginal confounds, or (iii) non-correlational structure not captured by rank--1 images.

\subsection{Correlational Relationships as Testable}\label{sec:test_corr}
One benefit of the Refactor/Verifactor framing is that it treats the \emph{choice of association matrix} as a scientifically meaningful modeling decision. \emph{``Is a correlational relationship the right signal/abstraction for this dataset? Which correlational notion or relationship is most appropriate for explaining shared signal?''} These questions are central in applied psychometrics (binary/ordinal items; rater-like features) and in machine learning (implicit feedback, pairwise preferences, weak supervision), yet is not addressed by classical image-only fit statistics because they condition on a single, often default, association choice of standard correlation. As an example, a Pearson correlation with for binary data treats the distance between 0 and 1 as a meaningful distance; we ask whether that is a meaningful assumption when representing the latent relationship. 

\subsubsection{Recoverability of the response matrix under a rank--1 associative hypothesis}
% \label{sec:evaluation_target}

Let $X\in\mathbb{R}^{n\times p}$ be observed responses (possibly binary/ordinal). For a fixed association operator $\mathcal{A}$ and a fixed rank $k=1$, Refactor/Verifactor define an estimand that is more primitive than ``fit to correlations'': the extent to which a \emph{rank--1 associative signal} supports \emph{matrix prediction} of $X$.

Formally, an association operator $\mathcal{A}$ induces a hypothesis class of reconstructions
% \[
$\mathcal{H}_{\mathcal{A}}
=
\Bigl\{
\widehat X = \mathcal{R}_1(X;B_1,F_1):
(B_1,F_1) = \textsf{Z}(\mathcal{A}_r(X),\mathcal{A}_c(X))
\Bigr\}$,
% \]
and Verifactor evaluates the \emph{out-of-sample risk}
\begin{equation}
\mathcal{R}_{\mathcal{A}}(m)
\;:=\;
\mathbb{E}_{\Pi}\,\mathbb{E}\Bigl[
m\bigl(A_{\Pi},\widehat A_{\Pi}^{(\mathcal{A})}\bigr)
\Bigr],
\label{eq:assoc-risk-main}
\end{equation}
where $A_{\Pi}$ is the held-out block under a random row/column partition $\Pi$ and
$\widehat A_{\Pi}^{(\mathcal{A})}$ is predicted from held-in blocks using the rank--1 structure implied by $\mathcal{A}$. This creates a common evaluation currency across (i) association choices, (ii) estimators $\textsf{Z}$, and (iii) reconstruction metrics $m$.

% \paragraph{Interpretation.}
Equation~\eqref{eq:assoc-risk-main} turns the assumption of correlation as the implied relationship into a falsifiable claim: the correct associative abstraction is the one that supports stable prediction of held-out entries when both rows and columns are treated as random. This target is distinct from, and not implied by, image fit indices because those indices condition on the association image as sufficient statistics.

% In applied terms, Proposition~\ref{prop:non-equivalence} formalizes the practical warning suggested by Figures~\ref{fig:fafit_vs_verifact}--\ref{fig:fafit_vs_verifact}: \emph{``good''} CFI/TLI/$u$/$\rho_c$ does not imply that a rank--1 correlational mechanism usefully explains (or predicts) the original responses.

\subsection{Recommended Interpretable Correlational Relationships: $q^\prime$}\label{sec:q}
In this paper, we will test the relationships used in standard factor analysis against their ability to reconstruct the signal being represented. We also offer another correlational relationship that preserves interpretability while improving reconstructive power in the unidimensional setting: $q^\prime$ \citep{mosteller_useful_1946,blomqvist_measure_1950}.\footnote{We use $q^\prime$ as the variable symbol, originating with \cite{blomqvist_measure_1950}, who did the original work to understand the statistical properties of this relationship.} To simplify the many potential notational approaches to this relationship across many sources and re-discoveries, we define $q^\prime$ for two binary variables, $i$ and $j$, $q^\prime$ is the difference in the probabilities of agreement and disagreement: 
\begin{equation}
    q^\prime = \Pr(i=j) - \Pr(i\ne j) \;.
\end{equation}
In the dichotomous and unidimensional setting $q^\prime$--also known variously as \textit{quadrant correlation}, \textit{quadrant count ratio} (QCR), and Hamann similarity \citep{hamann_merkmalsbestand_1961,cheetham_binary_1969}--has additional appealing properties in addition to its simplicity.  $q^\prime$ is mathematically identical to propensity-adjusted bias-adjusted kappa (PABAK, \cite{byrt_bias_1993}), a version of Cohen's $\kappa$ \citep{cohen_coefficient_1960}. This leads to one interpretation of $q^\prime$ that may be more intuitive: each item or variable or observation is considered a "rater" trying to rate some underlying capability, with certain propensities and biases. The correlation between two items would represent their PABAK for the unobserved latent trait. Similarly, the correlation between two subjects across items would represent their bias and prevalence corrected understanding of a latent construct. Indeed, \cite{holley_note_1964}, discuss $q^\prime$ as an option in factor analysis, leaving the question of ``suitability'' open. The present study will demonstrate its suitability: across over 200 empirical and publicly available datasets, Figure \ref{fig:fafit_vs_refact} illustrates the relationships between several classical metrics of factor model unidimensionality with refactor recoverability metrics (see Section \ref{sec:evalmeths}). 

We conjecture that $q^\prime$ has been misunderstood in its simplicity\footnote{As an example, see how \cite{fuxman_bass_using_2013} dismiss the metric as inappropriate by naively classifying the metric as only a ``matching'' coefficient} and underutilized.\footnote{We also take modest delight in utilizing a correlational relationship deemed by \href{https://en.wikipedia.org/wiki/Quadrant_count_ratio}{Wikipedia} (and, anecdotally and relatedly, by modern AI models, cursorily checked) as ``not commonly used'' except as a ``tool in statistics education'' in order to understand Pearson correlations.} The main body of the present study will illustrate differences in factor analyses by comparing $q^\prime$ with Pearson $\phi$ and tetrachoric $\rho_t$ correlations, and we include reporting of additional measures of association in the appendix.

% \begin{figure*}
%     \centering
%     \includegraphics[width=1\linewidth]{figs/verifact_panels_tet.pdf}
%     \caption{\textbf{Traditional Measures of Unidimensionality vs. Verifactor Reconstruction Metrics using Tetrachoric Correlations}. There is no meaningful relationship between the traditional measures of unidimensionality explored in this paper and the ability of the relationship to reconstruct the imaged signal. Each point represents a separate publicly available dataset. Each column represents a traditional metric of unidimensionality based on factor analyses. Each row represents Refactor Reconstruction Metrics. X-axes represent the strength of the traditional metric for the dataset. Y-axes are the fit of the out-of-sample reconstructions via Verifactor Analyses. Fig \ref{fig:loess_uni_vs_verifact} is a nonlinear version of this same figure based on LOESS regression.}
%     \label{fig:fafit_vs_verifact}
% \end{figure*}

\section{Experiments}\label{sec:sims}
In this section we discuss simulated and empirical experiments conducted. The summary comparison of the results from all three experiments can be found in Figure \ref{fig:ecv_isor2}.

\subsection{Set-up and Variables}

\begin{figure*}
    \centering
    \includegraphics[width=1\linewidth]{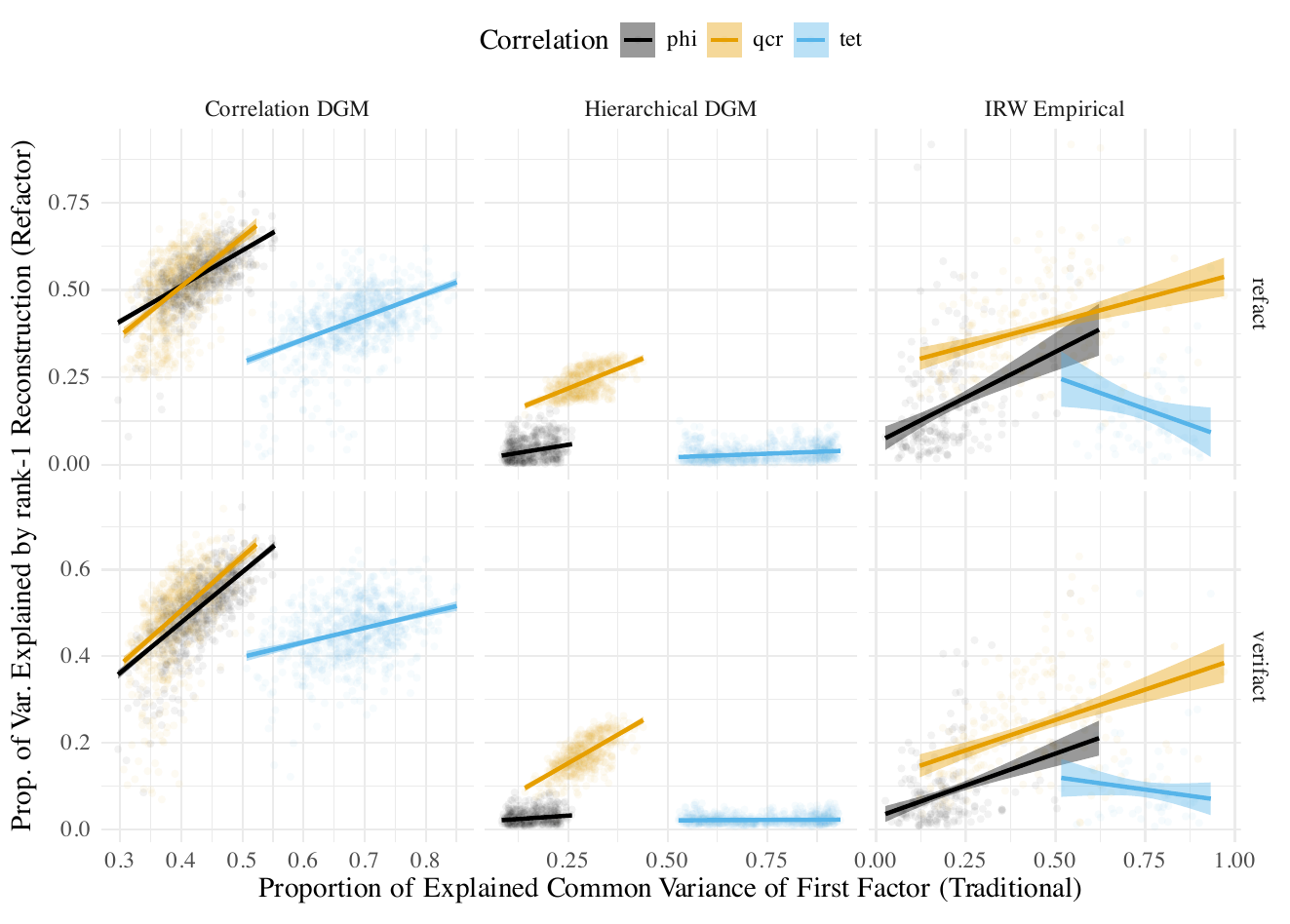}
    \caption{\textbf{Proportion of Explained Common Variance (ECV) vs Proportion of Variance Explained by best monotone rank-1 Reconstruction}: given the reconstruction, $\widehat{X}$ across three different conditions. \textbf{(left)} Simulation Study I: 1000 simulations where the underlying data generating models (DGM) are unidimensional tetrachoric correlations (see Section \ref{sec:simple_sim}. \textbf{(middle)} Simulation Study II: a hierarchical DGM with minor noise factors, following \cite{revelle_unidim_2025} (see Section \ref{sec:unidim_reprod}. \textbf{(right)} Empirical Study: 200 publicly available empirical datasets using the Item Response Warehouse (see Section \ref{sec:empirical}). \textbf{(top)} Refactor Analysis and \textbf{(bottom)} Verifactor out-of-sample bi-cross validated prediction. \textbf{(color)}represents different correlational relationships.}
    \label{fig:ecv_isor2}
\end{figure*}

\subsubsection{Association hypotheses and Refactor evaluation: Demonstration of Recoverability}
\label{sec:sim1_assoc_eval}

For each experiment, we fit rank--1 structures under three association operators used for dichotomous data:
\begin{enumerate}\itemsep2pt
\item \textbf{$\phi$ (Pearson on binary)}: treats $(0,1)$ as numeric and measures linear co-movement;
\item \textbf{Tetrachoric correlation}: assumes each binary item is a thresholded latent normal variable and estimates the latent correlation;
\item \textbf{Quadrant correlation (Mosteller--Blomqvist)}: a robust, highly interpretable measure based on concordance of signs around medians. % in the non-binary case).
\end{enumerate}
Each operator $\mathcal{A}^{(m)}$ induces item- and respondent-images, yields loadings $(\widehat u^{(m)},\widehat v^{(m)})$, and produces a rank--1 reconstruction $\widehat X^{(m)}=\widehat u^{(m)}\widehat v^{(m)\top}$ which is evaluated using Refactor metrics (e.g., AUC, Kendall $\tau$, cosine similarity, isotonic likelihood).

\begin{wrapfigure}{R}{0.39\textwidth}
    \centering
    \includegraphics[width=0.95\linewidth]{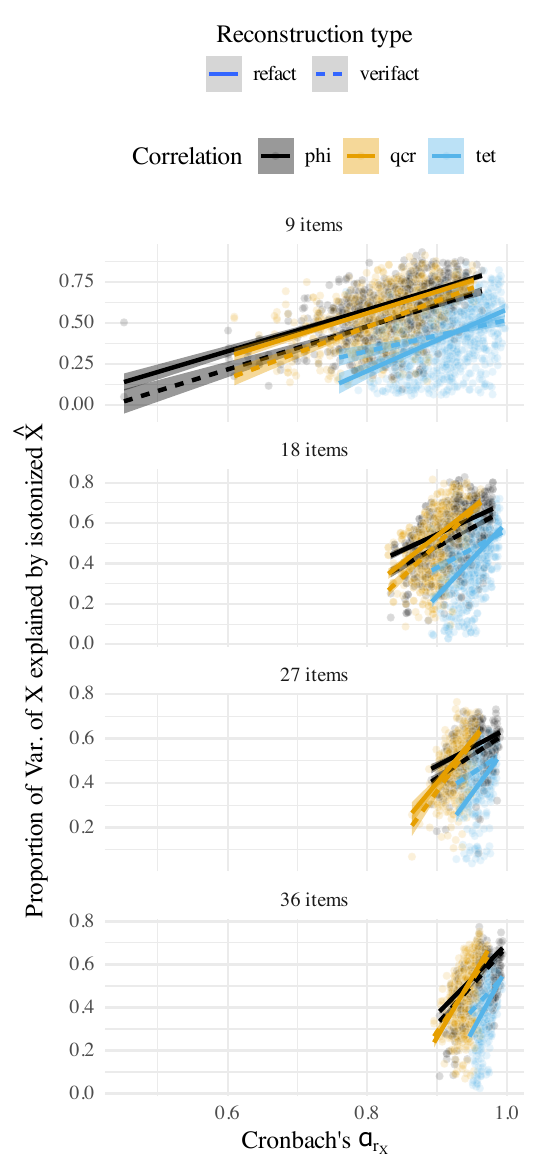}
    \caption{\textbf{Simulation I}: Correlation between Traditional Metrics and rank-1 recoverability} 
        \caption*{\footnotesize \textbf{Classical Unidimensionality (Cronbach's $\alpha_{\mathcal{A}_{X}}$) vs $\boldsymbol{R^2_{iso(\widehat{X})}}$ Proportion of Variance explained isotonized Refactor reconstruction}: %matrix cosine between $\mathbf{X}$ and $\widehat{\mathbf{X}}$: $\cos(\mathbf{X}, \mathbf{\widehat{X}}) = \langle \mathbf{X}, \mathbf{\widehat{X}} \rangle_F / \|\mathbf{X}\|_F \|\mathbf{\widehat{X}}\|_F$ 
    across three correlations. Robust $M$-estimator linear fit and confidence intervals are shown for 200 replications at each item set size. Cronbach's $\alpha$ displays its sensitivity to number of items. Response matrix sizes match those of Simulation II. Fully crossed linear relationships between classical measures of unidimensionality and refactoring measures are in Figure \ref{fig:basesimpanels}} 
    \label{fig:sim1_alpha_lik}
\end{wrapfigure}

\subsection{Simulation Study I: Refactor metrics align with classical indices under a correlational DGM}
\label{sec:simple_sim}

\subsubsection{Purpose and design}
\label{sec:sim1_purpose}
Our first simulation is intentionally a \emph{positive control}: the data-generating mechanism (DGM) is constructed so that a rank--1 latent correlational signal is the primary source of dependence. The purpose is not to challenge classical unidimensionality diagnostics, but to establish that when their assumptions are approximately correct, Refactor reconstruction metrics behave as expected and provide aligned evidence of unidimensional structure. In particular, we examine whether datasets with higher classical unidimensionality (e.g., $\omega$, CFI/TLI, or related one-factor fit indices computed from an association image) also yield higher Refactor reconstruction fidelity $m(X,\widehat X)$. This study illustrates the alignment of the traditional metrics of unidimensionality and Refactoring rank-1 metrics when the data generating mechanism matches $\mathcal{A}(X)$--the association relationship chosen for factor analysis.
% \subsubsection{Demonstration of Recoverability}\label{sec:simple_sim}
% First, 
We demonstrate that Refactoring is strongly correlated with classical metrics of unidimensionality, by simulating datasets where the data generating mechanism corresponds to a unidimensional, related via a tetrachoric correlational latent signal, where we expect to see that increasing traditional metrics of unidimensionality correspond with increasing Refactor rank-1 metrics of reconstruction.

% \subsubsection{Purpose and design}
% \label{sec:sim1_purpose}

% Our first simulation is intentionally a \emph{positive control}: the data-generating mechanism (DGM) is constructed so that a rank--1 latent correlational signal is the primary source of dependence. The purpose is not to challenge classical unidimensionality diagnostics, but to establish that when their assumptions are approximately correct, Refactor reconstruction metrics behave as expected and provide aligned evidence of unidimensional structure. In particular, we examine whether datasets with higher classical unidimensionality (e.g., $\omega$, CFI/TLI, or related one-factor fit indices computed from an association image) also yield higher Refactor reconstruction fidelity $m(X,\widehat X)$.

\subsubsection{Data-generating mechanism: a latent outer-product signal with heterogeneous thresholds}
\label{sec:sim1_dgm}

Let $N$ denote the number of respondents (rows) and $P$ the number of items (columns). We generate independent latent vectors and a rank--1 continuous signal matrix $Z$:
\[
\theta \in \mathbb{R}^{N},\quad \theta_k \sim \mathcal{N}(0,1),
\qquad
\lambda \in \mathbb{R}^{P},\quad \lambda_j \sim \mathcal{N}(0,1),
\qquad
Z = \theta \lambda^\top \in \mathbb{R}^{N\times P}
\]
Binary responses are formed by applying both item-specific and person-specific thresholds:
% \[
$X_{kj} = \mathbb{I}\{Z_{kj} > \tau_j\}\cdot \mathbb{I}\{Z_{kj} > \eta_k\}$,
% \]
where $\tau_j\sim\mathcal{N}(0,0.5)$ controls item difficulty (base rate) and $\eta_k\sim\mathcal{N}(0,1)$ controls respondent propensity. This DGM yields a dataset whose dependence structure is driven by a single latent outer-product term $Z$, but with realistic heterogeneity in marginal distributions and sparsity induced by thresholds.

\subsubsection{Expected outcome and interpretation}
\label{sec:sim1_expected}
We simulate 1000 iterations, using 36 items and 200 respondents (for direct comparability with the simulations in Section \ref{sec:unidim_reprod}). The reconstructions and results for a single iteration are shown in Figures \ref{fig:phi_dgm_vs_auc} and \ref{fig:phi_corr_sim}, respectively. 
Because the DGM is explicitly rank--1 in its latent signal, both traditional image-based indices and Refactor reconstruction metrics are expected to move together: datasets that are more clearly dominated by the rank--1 component (e.g., less threshold-induced attenuation or less marginal confounding) should show stronger one-factor fit on the association image and higher data-level recoverability.

Consistent with this expectation, we observe positive relationships between classical unidimensionality indices and Refactor reconstruction metrics across replications. This alignment establishes that Refactor is not a contrarian evaluation that ``penalizes'' correct factor structure; rather, it is a direct test of the same substantive claim (rank--1 adequacy), expressed at the level of the original response matrix.

At the same time, the simulation illustrates a subtle but important distinction between association choices. Under this correlational DGM, tetrachoric correlation tends to yield slightly higher values on traditional image-based indices yet slightly lower Refactor reconstruction values than $\phi$ and quadrant correlation, while preserving positive concordance overall. This pattern foreshadows our empirical findings: image-based fit can be inflated by how an association operator encodes dependence (especially under small samples or strong thresholding), whereas reconstruction metrics remain anchored to the predictive recoverability of $X$. The summary statistics using ECV and $R^2_{iso}$ can be found in Figure \ref{fig:ecv_isor2}, left column. A full panel of crossed traditional and Refactor metrics can be found in Figure \ref{fig:sim1full}.

\begin{wrapfigure}{R}{0.5\textwidth}
    \centering
    \includegraphics[width=0.49\textwidth]{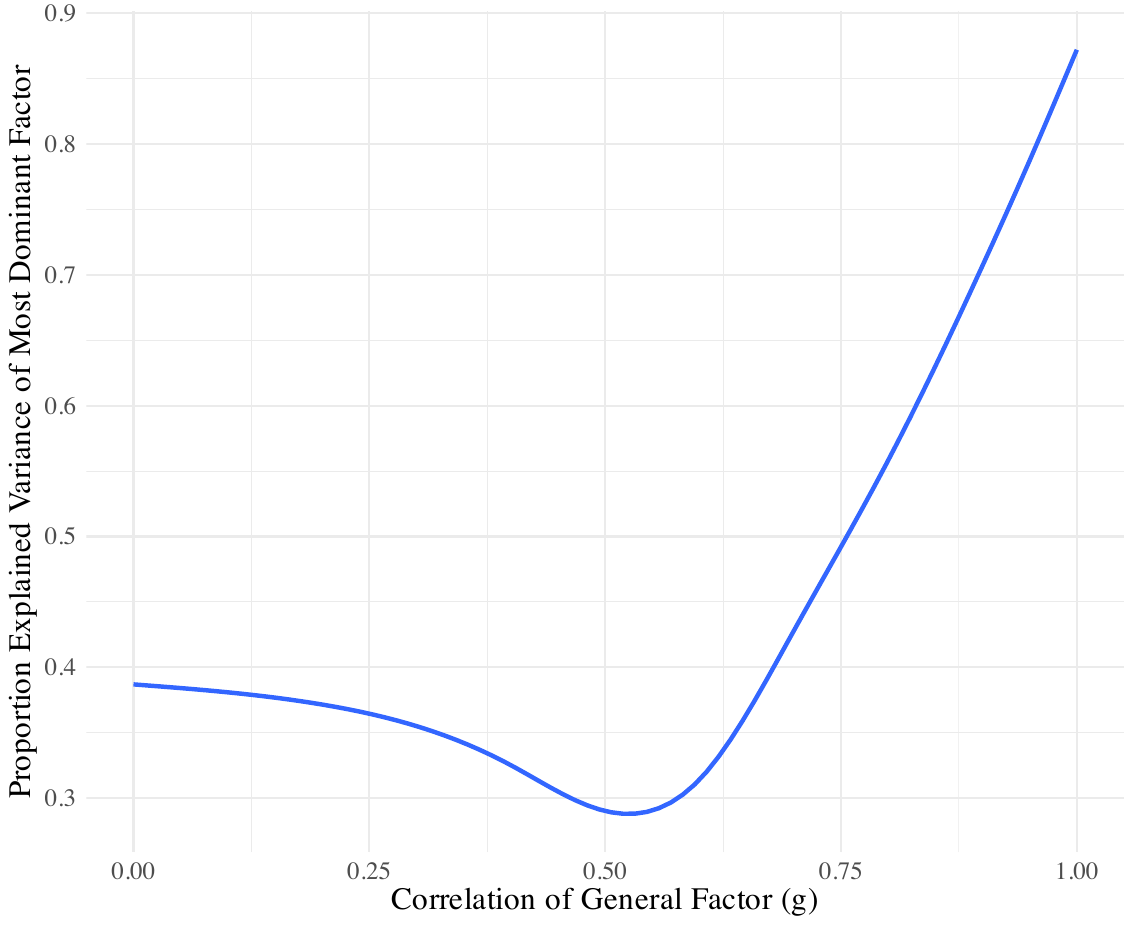}
    \caption{$g$ vs proportion of total variance explained by most dominant factor for measures of unidimensionality (this is distinct from ECV which is proportion of \textit{explained variance}, which ignores the proportion of signal due to the residual variance.)}
    \label{fig:g_vs_domf}
\end{wrapfigure}

\subsection{Simulation Study II: Replication of Unidimensionality Study}\label{sec:unidim_reprod}
Centering our simulations on comparability with extant measures of unidimensionality, we replicate the culminating simulations of \cite{revelle_unidim_2025}, providing a brief overview of the methods and new representations of the same findings. We use their \texttt{psych} package (\cite{revelle_psych_2024}) in \texttt{R} (\cite{r_core_team_r_nodate}) to calculate it and other measures of unidimensionality from data images. (Other tests of unidimensionality, not dependent on factor analysis, such as Cronbach's alpha, can be found in \cite{sijtsma_recognize_2024,mair_unidimensional_2015,van_abswoude_comparative_2004,sijtsma_use_2009, cronbach_my_2004}). In Section \ref{sec:unidim_reprod}, we replicate Revelle and Condon's culminating simulations,  with those based on tetrachoric and $\phi$ associations can be seen in Fig. \ref{fig:sim_phitetq}. The summary statistics using ECV and $R^2_{iso}$ can be found in Figure \ref{fig:ecv_isor2}, center column.

\subsubsection{Purpose and design}
\label{sec:sim2_purpose}
Our second simulation investigates further the concept of unidimensionality, replicating a recent study of unidimensionality \citep{revelle_unidim_2025}. We compare several traditional and rank-1 Refactor metrics for their ability to measure unidimensionality. We show that Refactor measures are sensitive to changes in unidimensionality as found in prior literature.

\subsubsection{Unidimensionality Simulation Studies}
Revelle and Condon demonstrate care and sophistication when representing noise within their simulations \citep{maccallum_representing_1991,maccallum_factor_2007}. They simulate data where a higher level ($g$) factor arises from the composite of three lower level factors \citep{jensen_what_1994} in order to demonstrate how unidimensional measures evolve as underlying factors increasingly correlate with $g$, and thus shift from being more multidimensional to more unidimensional.

% \begin{figure*}[p]
%     \centering
%     \includegraphics[width=1\linewidth]{figs/phi_tet_refact_trad.pdf}
%     \caption{Comparison of contemporary and Refactor measures of unidimensionality across factor model representations estimated with (\textbf{\textit{left}}) $\phi$ correlation and (\textbf{\textit{right}}) tetrachoric correlations. We note that many of these relationships should look concerning: tetrachoric has higher traditional/contemporary metric estimates (\textbf{\textit{bottom, y-axis}}) but lower predictive Refactor metric estimates (\textbf{\textit{top, y-axis}}) . $\phi$ shows insensitivity to sample size, but also insensitivity to a strengthened general factor (x-axis)  The data are simulated following the procedure used by \cite{revelle_unidim_2025} except using dichotomous data rather than categorical and do so using the \texttt{psych} package in \texttt{R} \cite{revelle_psych_2024, r_core_team_r_nodate}.}
%     \label{fig:phi_tet}
% \end{figure*}

Desiderata identified by the authors include finding metrics that, unlike Cronbach's $\alpha$, are not as sensitive to number of items or sample sizes. They demonstrate that $u_{RC}$ is far less sensitive to number of items than extant metrics, but is still sensitive to overall dataset size. % Thus, we add to their desiderata

We replicate their examples for dichotomous items, using both tetrachoric correlation and $\phi$/MCC,\footnote{the Pearson correlation on binary data is also known as the Matthews Correlation Coefficient or MCC} displayed in the bottom row of Fig. \ref{fig:sim_phitetq}. 

\subsubsection{Reevaluating the X axis}
For seeing how metrics evolve as unidimensionality increases, we need a slight modification of the x-axis. In the original study, it represented the correlation of a second-order $g$ with three lower level factors, each loading onto one third of the items. $g$ is distinct from the unidimensionality of the data simulated, as the strength of unidimensionality would be a measure of the strength of the most dominant signal, relative to the rest of the noise in the real-world usage where the data generation is not known perfectly. In the case of this simulation, the underlying unidimensionality simulated is nonmonotonic due to minor noise factors and imbalanced lower level factor magnitudes, as shown in Figure \ref{fig:g_vs_domf} where the coefficient of determination of the single most dominant signal is measured by the sum of the squared standardized Schmid Leiman loadings $\boldsymbol{\hat{\lambda}}$ used in the hierarchical data generating model \citep{schmid_development_1957,wolff_exploring_2005}: 
\begin{equation}
    \widetilde{R}^2 = f(\boldsymbol{\hat{\lambda}}) = \underset{i}{\operatorname{argmax}}  \sum_j \hat{\lambda}^2_{ij}\;.
\end{equation}

Thus, we replace the $g$ factor x-axis, which could bias our results in favor of relationships defined by correlation, for the proportion of variance explained by the most dominant factor, $\widetilde{R}^2$. In this case, until the correlation between $g$ and the first-order factors reaches 0.5, the dominant signal is not found across all items, but only across three of the items. Therefore we would expect reconstructions below this point to be poor and/or noisy. In terms of our Refactor evaluation metrics, we would expect to see clear differentiation between the signal recovery above and below that point, which in this case $\widetilde{R}^2 \approx 0.4$ and represented by a red dashed vertical line in Figure \ref{fig:sim_phitetq}. A full panel of crossed traditional and Refactor metrics can be found in Figure \ref{fig:sim2full}.

\begin{figure*}[h!]
    \centering
    \includegraphics[width=1\linewidth]{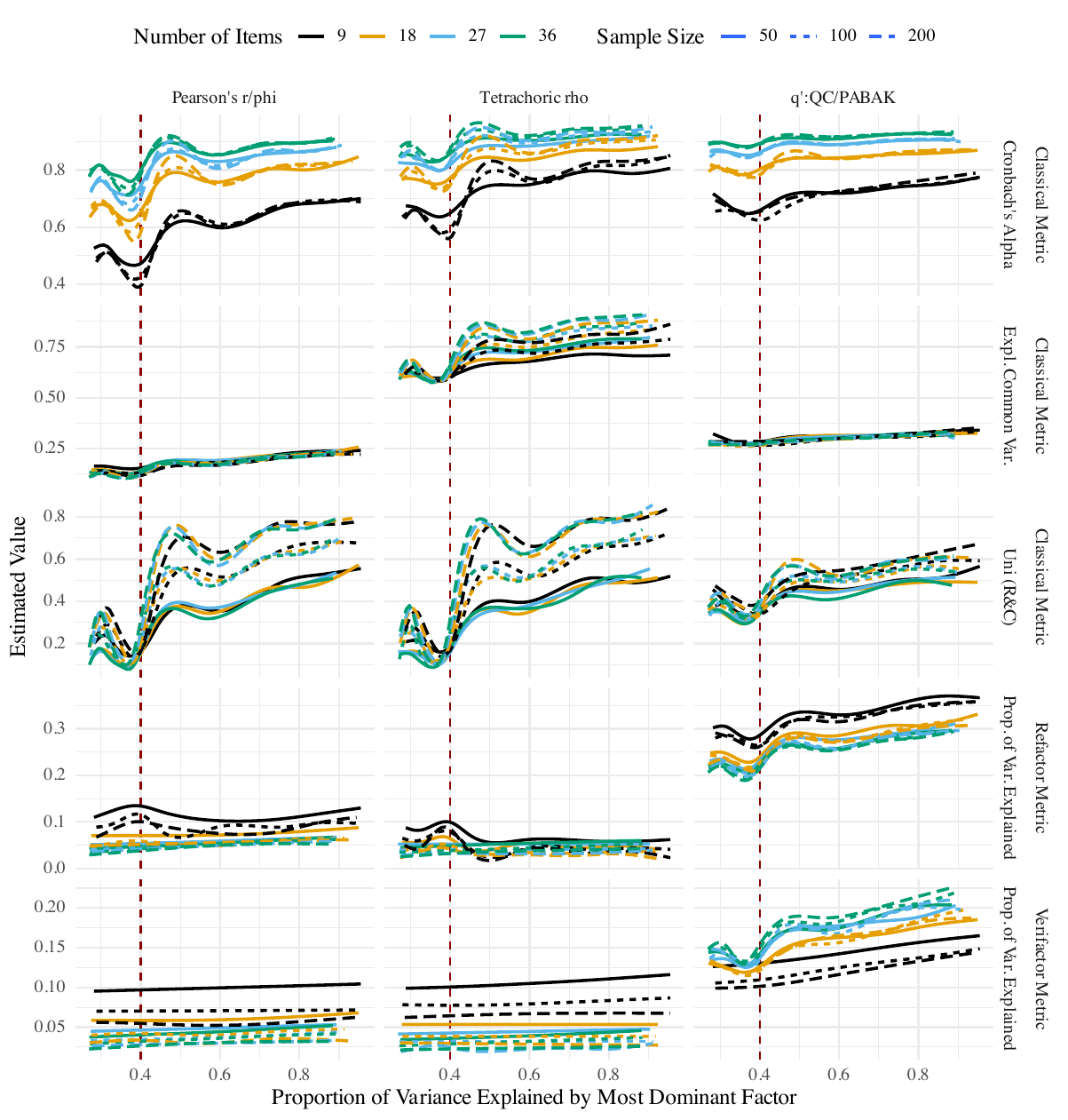}
    \caption{\textbf{Simulation II}: Comparison of Classical and Refactor Metrics of Unidimensionality on Simulated Data (Replication of \cite{revelle_unidim_2025}). X-axis represents the proportion of simulated signal represented by the most dominant dimension: sum of the squared standardized Schmid Leiman loadings $\boldsymbol{\hat{\lambda}}$ used in the hierarchical data generating modelhigher values indicate stronger dominance of a single latent factor. Y-axis represents the value of each respective metric. The bottom two rows show $R^2_{iso}$ for both Refactor and Verifactor reconstructions. All three correlation types show positive associations with the strength of unidimensionality along the x axis. However,  $q^\prime$ is the only relationship that shows strong positive relationship with simulated unidimensionality in reconstruction.  Additional metrics and comparisons are in Appendix \ref{apx:supp_figures}}.
    \label{fig:sim_phitetq}
\end{figure*}

\subsection{Empirical Study: 200 Datasets}\label{sec:empirical}
In the empirical study we calculate and contrast the various classical measures of unidimensional fit for factor models with refactor models across 200 empirical datasets. We would expect that these data to vary in their degree of unidimensionality and in the relational abstraction that unites any underlying signal. We measure the extent to which these data are unidimensional according to traditional measures of unidimensionality and contrast them with the rank-1 recoverability Refactor metrics. In Summative Figure \ref{fig:ecv_isor2}, these datasets represent the rightmost column. A full panel of crossed traditional and Refactor metrics can be found in Figure \ref{fig:empfull}.

\begin{wrapfigure}{R}{0.75\textwidth}
    \centering
    \includegraphics[width=0.74\textwidth]{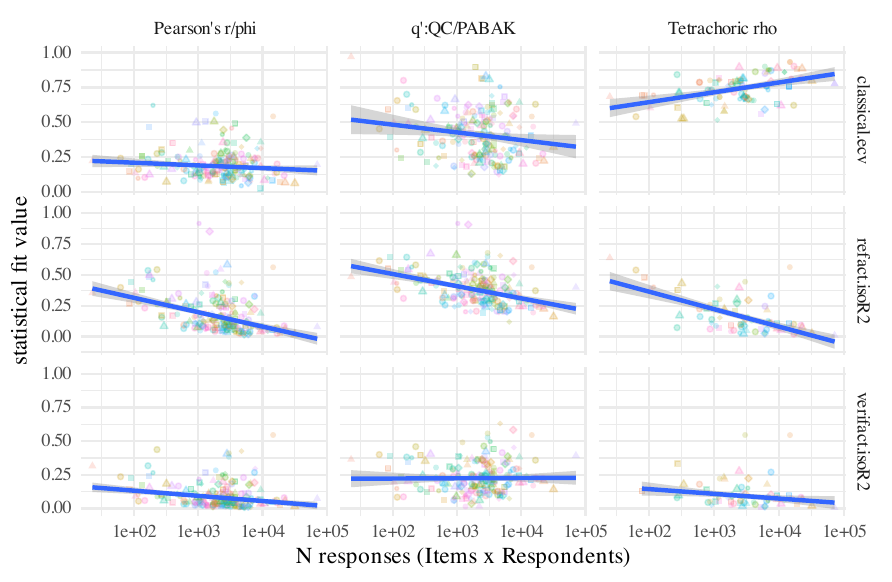}
    \caption{Metric Stability under increasing observations. Log-scale x-axis. Highlighting the importance of Verifactor analyses for larger samples, we see that it is relatively invariant. A shift from Refactor to Verifactor estimations across various reconstruction metrics for tetrachoric associations are in Figure \ref{fig:tet_all_metrics}.}
    \label{fig:emp_verifact_stability}
    
\end{wrapfigure}

\subsubsection{Purpose and design}
\label{sec:emp_purpose}

Our empirical experiment replaces bootstrapping of individual datasets with bootstrapping across real-world datasets, in order to capture the real variation and noise found in each dataset. Metrics are reported to show the relationship between traditional metrics and Refactor across three different hypothesized correlational relationships. Utilizing real-world data avoids the researcher-biases that accompany simulation construction. In fact, this difference can be seen visually by contrasting the partial squared distance correlation metric rows between this dataset, in Figure \ref{fig:empfull}, and the dataset from the previous simulation, in Figure \ref{fig:sim2full}. What is striking about the contrast is that it shows that even a more complicated simulation design, such as that in Simulation II with hierarchical and minor noise facets, fail to produce the kinds of dependence structures found in real-world data.

Having a more authentic representation of real-world noise allows for an objective measurement of metric stability. Under the assumption that the underlying relationship measured across these 200 datasets is orthogonal to the number of observations in each dataset, we would expect to see no relationship between the number of data points and each metric. Figure \ref{fig:emp_verifact_stability} shows how Verifactor estimates show the least sensitivity to the size of the sample, with $q^\prime$ having no statistical difference between the largest and smallest.

\section{Discussion}
\label{sec:discussion}

This work introduced \textbf{Refactor} and \textbf{Verifactor} analyses as data-first complements to traditional, image-based evaluation of factor models. The central message is that \emph{unidimensionality is a rank--1 claim about the response matrix $X$}, yet it is rarely evaluated where it lives. Our results show that classical unidimensionality indices can behave as expected under idealized correlational data-generating mechanisms, but that they can also become weakly informative---and sometimes systematically misleading---when the operative dependence in real data deviates from the correlation geometry assumed by the analyst. Refactor and Verifactor make these deviations observable by turning a factor model into a matrix prediction problem and evaluating the resulting reconstruction on $X$ (in-sample) and on held-out blocks (out-of-sample).

\subsection{Two notions of ``fit'': image coherence versus data recoverability}
\label{sec:discussion_two_fits}

Factor-analytic practice often treats fit as a property of a correlation matrix. Refactor/Verifactor separate two evaluands:
\begin{enumerate}\itemsep2pt
\item \textbf{Image coherence:} does a one-factor model explain the association image $A=\mathcal{A}(X)$ well?
\item \textbf{Data recoverability:} does the induced rank--1 representation yield a faithful reconstruction $\widehat X$ of the original matrix (Refactor), and does it predict held-out blocks (Verifactor)?
\end{enumerate}
These are related but not equivalent questions. Image coherence is necessary for recoverability when $\mathcal{A}$ is appropriate and sufficiently informative, but it is not sufficient when (i) the association operator discards information relevant to $X$, (ii) the response process is nonlinearly related to the latent structure, or (iii) the dominant structure is not well represented by the assumed correlation geometry. The summary pattern in Figure~\ref{fig:ecv_isor2} makes this distinction concrete: across simulations and across empirical datasets, the relationship between explained common variance (ECV) and isotonic reconstruction $R^2$ can tighten, loosen, or invert depending on whether the chosen associative abstraction matches the data.

\subsection{What the simulations establish (and why both are needed)}
\label{sec:discussion_sims}

\subsubsection{Simulation I: validity under a correlational DGM}
\label{sec:discussion_sim1}

Simulation~I is a positive control: the DGM is constructed so that a rank--1 correlational signal governs dependence. In this regime, traditional indices and Refactor recoverability \emph{should} agree, and they do. This finding is important for interpretability: Refactor does not ``punish'' unidimensionality; rather, it measures the same claim at the level of $X$ by asking whether a rank--1 model can recover the response matrix up to a monotone link.

At the same time, Simulation~I already exposes a subtle methodological issue that foreshadows the empirical results. Tetrachoric correlation yields systematically higher \emph{image-based} unidimensionality (e.g., higher ECV and related indices), yet does not yield uniformly stronger \emph{data-level} recoverability than $\phi$ or $q'$. In other words, tetrachoric can make the \emph{image} look more one-dimensional without producing correspondingly better reconstruction. This is precisely the kind of discrepancy that Refactor is designed to surface: it distinguishes ``one-dimensional in the chosen image space'' from ``one-dimensional in the response process.''

\subsubsection{Simulation II: hierarchical structure and the diagnostic value of $q'$}
\label{sec:discussion_sim2}

Simulation~II (replicating \citealp{revelle_unidim_2025} under dichotomization) introduces a more realistic structure: a dominant factor plus minor factors and noise. In this setting, the most informative outcome is not that all methods correlate with the underlying strength of a dominant dimension---many do---but that \textbf{quadrant correlation $q'$ produces reconstructions that respond to increases in unidimensional dominance in the way a measurement scientist would expect}. In particular, as the dominant dimension becomes stronger (and, in our reparameterization, becomes more globally expressed across items), Verifactor isotonic $R^2$ under $q'$ increases cleanly and monotonically, while reconstructions under $\phi$ and tetrachoric are comparatively less diagnostic.

This matters because Simulation~II is closer to the applied case: real instruments and benchmarks are rarely ``pure factor''; they include mixtures of nuisance dimensions, local dependence, and heterogeneity in marginals. The simulation therefore positions $q'$ not merely as an alternative correlation coefficient, but as a \emph{more stable associative geometry} for rank--1 recovery under plausible deviations from idealized latent-normal assumptions.

\subsection{Empirical study: stability, sample size, and why Verifactor is essential}
\label{sec:discussion_empirical}

The empirical study replaces researcher-controlled noise with naturally occurring heterogeneity across 200 public datasets. Two findings are especially consequential.

\paragraph{(1) Recoverability metrics are less entangled with sample size than image-fit indices.}
Traditional indices and some image-derived measures vary systematically with the number of observations and items, even when the scientific question (whether the instrument is effectively unidimensional) should not depend on sample size per se. In contrast, Refactor metrics are more stable, and Verifactor metrics are the most stable (Figure~\ref{fig:emp_verifact_stability}). This behavior is desirable under a random-respondent, random-item perspective: metrics should reflect \emph{structure} rather than opportunistic properties of estimation in a particular sample.

\paragraph{(2) The ``best'' association is data-dependent, and $q'$ is consistently competitive.}
Across datasets, the association operator that yields the strongest recoverability varies. This is a feature, not a bug: it implies that ``correlation'' is an empirical hypothesis about the signal of interest, and different measures encode meaningfully different notions of dependence for binary data. The empirical results show that quadrant correlation is nearly uniformly more desirable than $\phi$ and tetrachoric across the reconstruction metrics emphasized in the main body, supporting its reintroduction as a practical default for dichotomous factor modeling when robustness and interpretability are required. Tetrachoric correlations are particularly vulnerable to blindspots in the absence of Refactor metrics, as there is no statistical justification for assuming a tetrachoric relationship between items when the relationship is not already known \citep{gronneberg_partial_2020}. 

\subsection{Why quadrant correlation works well here (interpretability and robustness)}
\label{sec:discussion_qprime}

Quadrant correlation $q'$ \citep{mosteller_useful_1946,blomqvist_measure_1950} has two properties that match the present problem.

\paragraph{Robustness to marginal distortions and heavy tails.}
Binary response matrices commonly exhibit strong base-rate variation (item difficulty; respondent propensity) and skew. Measures that are highly sensitive to marginal imbalance or rely on strong latent-normal assumptions can overstate or understate dependence in ways that inflate image-based unidimensionality. Quadrant correlation depends on concordance relative to central tendencies and therefore tends to be less driven by extreme marginals and less brittle under dichotomization.

\paragraph{Compatibility with monotone latent structure.}
Many measurement models imply that the probability of a positive response is monotone in a latent score; they do not imply linearity in the observed scale. By pairing $q'$ with isotonic reconstruction evaluation, the analysis emphasizes what is scientifically stable across a broad class of monotone links: ordinal structure and recoverable dependence, rather than the specific geometry induced by a particular correlation estimator.

These features provide a principled explanation for why $q'$ yields reconstructions that are both (i) sensitive to strengthening unidimensional signal in simulation and (ii) comparatively stable across heterogeneous empirical datasets.

\subsection{Practical implications: recommended workflow}
\label{sec:discussion_workflow}

The results suggest a concrete workflow for applied users.

\begin{enumerate}\itemsep2pt
\item \textbf{Start with an explicit rank--1 target.} Treat unidimensionality as the hypothesis that $X$ is approximately recoverable by $\widehat X = \widehat u\,\widehat v^\top$ up to an appropriate link.
\item \textbf{Evaluate multiple association hypotheses.} For dichotomous data, compare at least $\phi$, tetrachoric, and $q'$ rather than defaulting to a single image.
\item \textbf{Use ECV and isotonic $R^2$ together.} ECV quantifies one-dimensionality of common variance in the image; isotonic $R^2$ quantifies one-dimensional recoverability of the response process. Divergence between them is diagnostic.
\item \textbf{Prefer Verifactor for generalization claims.} When rows and columns are both stochastic (random respondents and random items), out-of-sample block prediction is the appropriate evidentiary standard.
\end{enumerate}

\subsection{Limitations and scope}
\label{sec:discussion_limitations}

Three limitations bound the interpretation of our findings.

\paragraph{First, Refactor/Verifactor do not replace substantive validity.}
A strong rank--1 reconstruction can indicate coherent shared structure, but it does not by itself establish that the dimension corresponds to the intended construct. Conversely, weak reconstruction can indicate multidimensionality, local dependence, or a mismatch between the chosen association and the signal.

\paragraph{Second, reconstructions depend on the estimator and calibration.}
We focused on rank--1 structures and monotone calibration (isotonic regression) to keep the claim interpretable and robust. Other estimators and links may be more appropriate in particular domains, and extending the framework to explicitly model different links is a promising direction.

\paragraph{Third, the empirical study aggregates across datasets.}
While this provides breadth and reduces simulation bias, it does not replace careful, domain-specific case studies where item content, scoring rules, and known substructures are available.

\section{Conclusion and Future Directions}

\paragraph{Rethinking Generalization: From Static Items to Stochastic Systems}
The Refactor and Verifactor frameworks offer more than just a new set of metrics; they offer a new lens through which to view model validity. By shifting the focus from the abstract fit of a model's image to the concrete, predictive fidelity of the model for the data itself, these methods provide a more stringent, interpretable, and diagnostically rich evaluation of unidimensionality and correlational appropriateness. For psychometricians, this provides a powerful tool to move beyond checking fit indices to validating that a test's purported structure truly explains subject response patterns. For computer scientists and machine learning practitioners, it provides a robust methodology for assessing latent structure in high-dimensional, sparse, or irregular data where traditional statistical assumptions may not hold. This framework opens the door to a more critical, empirical, and ultimately more useful science of latent variable modeling.

This is achieved through a dual analysis of the data matrix, deriving not only the column-space loadings ($\boldsymbol{A}$) from the covariance matrix ($\boldsymbol{X}^T\boldsymbol{X}$) but also the row-space loadings ($\boldsymbol{B}$) from the Gram matrix ($\boldsymbol{X}\boldsymbol{X}^T$). These two matrices, representing the principal axes of the data from both orientations, are then used to "decompress" or refactor a complete, model-based prediction of the original data matrix, $\boldsymbol{\hat{X}}$. This refactoring step transforms the dimensionality reduction model into a generative, predictive one.

Standard factor analysis implicitly assumes a fixed-effects model for items; the items are what they are, and we sample respondents. This view is increasingly challenged by random effects frameworks that treat both items and persons as sampled from larger populations \citep{de_boeck_random_2008}. Verifactor analysis operationalizes this modern perspective by treating both rows (observations) and columns (variables) as stochastic units for resampling. This provides a far more rigorous test of model generalization than traditional approaches.

Furthermore, Verifactor's construction avoids a subtle but critical form of information leakage common in other cross-validation schemes. Many methods might use factor scores estimated from the full dataset to predict held-out responses. These scores, however, have been "exposed" to the very data they are meant to predict. Verifactor's prediction of block $\boldsymbol{A}$ via $\boldsymbol{\hat{A}} = \boldsymbol{B}(\boldsymbol{\hat{D}}^{(1)})^\dagger\boldsymbol{C}$ is constructed using *only* information from the disjoint "training" blocks. It is a true out-of-sample test of the latent structure's ability to generalize, not just the model's ability to interpolate.

\paragraph{Rethinking Subjects: From Psychometrics to Cyber-metrics}
This is particularly critical in modern applications, such as evaluating large language models on benchmarks where the number of test examples ($n$) can be much smaller than the number of features or tasks ($p$). In such $n \ll p$ regimes, the empirical covariance matrix $\boldsymbol{X}^T\boldsymbol{X}$ is rank-deficient and unstable, rendering traditional fit indices meaningless or misleading. Verifactor's bi-cross-validation on submatrices is more robust to this setting, as the prediction of $\boldsymbol{A}$ relies on the structure learned from $\boldsymbol{D}$, which can remain well-conditioned.

% But as we have access to more data and compute, we have the opportunity to explore methods that retain the ease of interpretability whilst also providing insights that more closely align with the reality we are seeking to study. Is it possible that when trying to understand human measurement, the latent constructs are consistently more accurately measured by even simpler relationships than Pearson and tetrachoric correlations? Is it possible that we don’t have to sacrifice predictive power when modeling the underlying dependence structure?

The implications of possible improvements to our most foundational models may reverberate beyond the scope of a single study. If there is any echo to be heard, in the dawn of the AI Age, we need more than ever to have our most sensitive microphones and best measurement scientists scrutinizing the signal. We conjecture that there will likely differences between AI and human behaviors that psychometric tools are best positioned to detect and that these tools may need some updates to detect them.

Indeed, we offer initial findings in this direction: in Table \ref{tab:verifeols}, we add 19 AI benchmarks to the 200 empirical datasets from IRW, and estimate the difference between the Refactor and Verifactor estimates for both humans and AI. Greater discrepancy between these values, across 17 different hypothesized measures of association, would suggest that Refactor would overfit to the data relative to Verifactor, even after controlling for size and submatrix and bootstrap composition. We see this trend: Verifactor is consistently lower. However, we see across all recoverability metrics, that human datasets show significantly more discrepancy than AI datasets, AI datasets show less unpredictable variation in a first dimension than the human datasets. Explorations such as this warrant future studies. The above findings highlight another important future direction: developing the theory and interpretation for allowing $\mathcal{A}_c \ne \mathcal{A}_r$, equipping researchers with guidance for factorially more flexibility with which to specify underlying latent relationships. 

\paragraph{Popping Spherical Cows}
Unidimensionality is an empirical claim about a matrix of responses, but conventional evaluation primarily inspects a transformed image of that matrix. Refactor and Verifactor re-center evaluation on \emph{recoverability}: a one-dimensional factor model is supported to the extent that it reconstructs $X$ (Refactor) and predicts held-out submatrices (Verifactor) under a random-respondent random-item perspective. Across simulations and 200 real datasets, this shift reveals a consistent and actionable insight: classical image-based indices can overstate unidimensionality for dichotomous data, while quadrant correlation paired with reconstruction-based evaluation provides a robust, interpretable alternative that better tracks recoverable signal.

This study proposes \textit{Refactor} methods for understanding and testing one of our most prized “spherical cows”: the methods with which we test and fit unidimensional factor analyses of dichotomous item responses. Classical methods and metrics have been in use for well over 120 years and have the allure of both being interpretable and computationally tractable. Without sacrificing interpretability nor computability, we can increase predictive power and test the imposed assumptions of the relationships.

\section{Related Work}\label{sec:background}
\subsection{Historical Note about Factor Analyses}\label{sec:historical}

The history of unidimensional factor analysis is not neutral, and commentary is warranted \citep{maccallum_factor_2007}. Mathematically, it was developed ahead of its time, but practically Spearman constructed it in pursuit of a harmful eugenicist theory of ``general intelligence'' \citep{spearman_general_1904, oconnor_spearman_2021}. %Its history with assessment is not neutral, and commentary is warranted .

In this paper we contrast the traditional methods and metrics of unidimensional factor analysis with data-centric methods. We present evidence that challenges conventional wisdom, by fitting models with reduced assumptions to identify the nature of underlying relationships. This study offers evidence from simulations and over 200 human datasets that the uses of unidimensional factor analysis via Pearson's correlations lead to overconfident estimates that are further obscured when diagnosed with traditional metrics of fit. We also present solutions, based on fewer assumptions, that meet the expectations of both traditional and predictive measures of model fit.

 We are not the first to present evidence against these original models. Even within 10 years, others had noted the discrepancy between his theory and the actual data \citep{thomson_hierarchy_1916}, but the power of the method to unify and interpret correlational relationships ensured its use and attention for centuries. The many measurement scientists that have worked on \citep{thurstone_multiple_1931,ll_thurstone_vectors_1935}: these models have done more to acknowledge that ``no model is completely faithful'' \citep{cudeck_model_1991}. In attempts to address some of the problems we discuss, some recommended Pearson's tetrachoric correlations, over product-moment relationships \citep{pearson_i_1900,wherry_factor_1944}, a popular choice that is becoming increasingly difficult to justify \citep{gronneberg_partial_2020}. Others sought methods to improve and test for model appropriateness \citep{muthen_testing_1988, maccallum_representing_1991}. 

While not our initial objective, our results challenge 120 years of practice. Or rather, they are challenging the notion that one must accept the statistical assumptions of the previous 120 years to benefit from interpretable dimensionality reduction.

\subsection{Traditional Factor Model Methods and Metrics}\label{sec:traditionalmeth}
Traditional methods for assessing the latent structure of a data matrix, such as factor analysis, have historically focused on the properties of the model's \textit{image}. For an observed data matrix $\boldsymbol{X} \in \mathbb{R}^{n \times p}$, a factor model posits a lower-rank structure $\boldsymbol{X} \approx \boldsymbol{\textsf{Z}}\boldsymbol{F}^T + \boldsymbol{E}$, where $\boldsymbol{\textsf{Z}} \in \mathbb{R}^{n \times k}$ are latent scores, $\boldsymbol{F} \in \mathbb{R}^{p \times k}$ are factor loadings, and $\boldsymbol{E}$ is an error term. Model evaluation, particularly the test for unidimensionality ($k=1$), then proceeds by analyzing the properties of the estimated loadings $\boldsymbol{\hat{F}}$ or the associated covariance matrix image (e.g., $\boldsymbol{\hat{F}}\boldsymbol{\hat{F}}^T$). While this approach is foundational, it provides an indirect and abstract view of model adequacy. It assesses the characteristics of the model's components rather than the model's fundamental ability to account for the signal in the observed data from which it was derived, consistent with findings from recent model comparison studies \citep{%domingue_intermodel_2021 ,
zhang_intermodel_2023}.

This work introduces a paradigm that shifts the evidentiary basis for model fit from the abstract image to the data itself. We formalize and motivate two novel evaluation frameworks: \textbf{Refactor} analysis, which assesses a model's fidelity in reconstructing the original data, and \textbf{Verifactor} analysis, which evaluates its predictive power on held-out data partitions (see Figure \ref{fig:refactor}). These methods provide a direct and rigorous test of the hypothesis that the signal of interest in a data matrix is well-represented by a low-rank dependence structure. In binary factor models, Pearson's $r$/$\phi$/Kendall's $\tau$ (equivalent with dichotomous items), tetrachoric correlation, and Yule's $Q$ are often imposed as the assumed structure \citep{revelle_psych_2024}.

% \subsubsection{Predictive Metrics for Factor Analysis}\label{sec:relatedwork}
% Despite its long history, the area of data-level predictions for factor models is relatively under-explored.  Nevertheless, good research has been done, inclu

% % \subsection{Related Work}
% The original extension from Nonnegative Matrix Factorization (NMF) to Factor analysis was motivated for the removal of
% unwanted variation from microarray experiments \citep{owen_bi-cross-validation_2015}. Factor analyses themselves originally stem from the (problematically motivated \cite{oconnor_spearman_2021}) statistical methods from \cite{spearman_general_1904}. Suppose that the ideal data for estimating a model consists of a sample of (X, Z), but the researcher only observes X. Our objective is to identify the latent variable(s) Z under the most general conditions.

% We identify latent variables in nonparametric models with nonlinear generating processes based on the so-called Hu-Schennach Theorem \citep{zheng_nonparametric_2025,hu_instrumental_2008,hu_econometrics_2017}, even when confronted with non-negligible noise. 

% \subsection{Covariance, Gram-functioning and \textit{Image} Matrices}
% We report results arising from Pearson's $r$/$\phi$/Kendall's $\tau$ (equal with dichotomous items), tetrachoric correlation, Yule's $Q$, Cramér's $V$, Loevinger's $H_{ij}$, and Cohen's $\kappa$, where we treat individual features as ``raters'' under the assumption that they rate the same construct. However, findings are robust to this choice and we illustrate this in sensitivity analyses below.

\subsubsection{Extant Metrics of Unidimensionality}
\cite{revelle_unidim_2025} provide a clear overview of current image-based dimensionality analysis, including a compelling case for their metric of unidimensionality, which we denote $\text{Uni} =u_{RC}$, which is the product of a measure of congeneric fit $\rho_c$ and a tau-equivalence-like measure of correlational homogeneity, $\tau_{RC}$. Other metrics they discuss include Cronbach's $\alpha$, McDonald's $\omega$, CFI, and several versions of Gutmann's $\lambda_i$-related metrics \citep{revelle_psych_2024}. %and we would add to that some of the more common indices such as RMSE, TLI, SRMR.

\subsection{Theoretical Motivations and Constraints}
The preference for low-rank linear models in psychometrics is driven by theory and interpretability, but it represents a strong assumption about the nature of latent traits. In practice, this assumption is never perfectly theoretically met; it just needs to be ``good enough'' for many applications \citep{domingue_intermodel_2021,domingue_intermodel_2024}. The challenge of dimensionality reduction is profound, as it involves mapping a high-dimensional space of observations to a low-dimensional latent space. Topological theorems reveal the inherent limitations of this process. For narrative flow we refer to the appendices for formal treatment and relevant mathematical proofs.

A one-to-one function $g: \mathbb{R}^p \to \mathbb{R}^k$ with $k < p$ cannot be continuous and highlights a key tension: if a latent trait truly reduces the dimensionality of the observed data, the mapping from the high-dimensional observation space to the low-dimensional latent space must be discontinuous. Factor analysis resolves this by imposing a linear generative model, $\boldsymbol{X} \approx \boldsymbol{\textsf{Z}}\boldsymbol{F}^T$, which is continuous but not one-to-one. This linearity is a powerful simplification, but its validity is an empirical question. The presence of complex, nonlinear relationships between items and the latent trait will violate this assumption, leading to poor reconstruction in a Refactor analysis.

While the identification of latent variables is possible even in nonparametric settings under certain conditional independence assumptions \citep{hu_instrumental_2008,hu_econometrics_2017,yalcin_nonlinear_2001,amemiya_model_1997}, such theorems do not guarantee that the resulting structure is linear or low-rank. \textbf{Refactor} and \textbf{Verifactor} analyses thus serve as essential empirical tools. They test the practical adequacy of the linear, low-rank simplification that is implicitly assumed in factor-analytic tests of unidimensionality, providing a grounded, data-centric verdict on whether a correlational relationship truly captures the signal of interest.

\begin{Backmatter}

\paragraph{Acknowledgments}
We are grateful for the assistance of Benjamin Domingue.

\printbibliography %[title={References}]

@article{revelle_unidim_2025,
    title = {Unidim: {An} index of scale homogeneity and unidimensionality},
    issn = {1939-1463},
    shorttitle = {Unidim},
    doi = {10.1037/met0000729},
    journal = {Psychological Methods},
    author = {Revelle, William and Condon, David},
    year = {2025},
    note = {Place: US
Publisher: American Psychological Association},
}

@misc{revelle_psych_2024,
    title = {psych: {Procedures} for {Psychological}, {Psychometric}, and {Personality} {Research}},
    copyright = {GPL-2 {\textbar} GPL-3 [expanded from: GPL (≥ 2)]},
    shorttitle = {psych},
    url = {https://cran.r-project.org/web/packages/psych/index.html},
    urldate = {2024-07-09},
    author = {Revelle, William},
    month = jun,
    year = {2024},
}

@misc{r_core_team_r_nodate,
    address = {Vienna, Austria},
    title = {R: {A} {Language} and {Environment} for {Statistical} {Computing}},
    url = {https://www.r-project.org/},
    urldate = {2024-07-09},
    publisher = {R Foundation for Statistical Computing},
    author = {R Core Team},
}

@article{ramsay_matrix_1984,
    title = {Matrix correlation},
    volume = {49},
    copyright = {1984 The Psychometric Society},
    issn = {1860-0980},
    url = {https://link.springer.com/article/10.1007/BF02306029},
    doi = {10.1007/BF02306029},
    language = {en},
    number = {3},
    urldate = {2025-08-22},
    journal = {Psychometrika},
    author = {Ramsay, J. O. and ten Berge, Jos and Styan, G. P. H.},
    month = sep,
    year = {1984},
    note = {Company: Springer
Distributor: Springer
Institution: Springer
Label: Springer
Publisher: Springer-Verlag},
    pages = {403--423},
}

@article{yanai_proposition_1980,
    title = {A {PROPOSITION} {OF} {GENERALIZED} {METHOD} {FOR} {FORWARD} {SELECTION} {OF} {VARIABLES}},
    volume = {7},
    issn = {0385-7417},
    url = {https://link-springer-com.stanford.idm.oclc.org/article/10.2333/bhmk.7.7_95},
    doi = {10.2333/bhmk.7.7_95},
    number = {7},
    urldate = {2025-09-11},
    journal = {Behaviormetrika},
    author = {Yanai, Haruo},
    year = {1980},
    note = {Publisher: The Behaviormetric Society},
    pages = {95--107},
}

@misc{szekely_partial_2014,
    title = {Partial {Distance} {Correlation} with {Methods} for {Dissimilarities}},
    url = {http://arxiv.org/abs/1310.2926},
    doi = {10.48550/arXiv.1310.2926},
    urldate = {2025-08-11},
    publisher = {arXiv},
    author = {Szekely, Gabor J. and Rizzo, Maria L.},
    month = jul,
    year = {2014},
    note = {arXiv:1310.2926 [stat]},
}

@article{szekely_energy_2017,
    title = {The {Energy} of {Data}},
    volume = {4},
    issn = {2326-8298, 2326-831X},
    url = {https://www.annualreviews.org/doi/10.1146/annurev-statistics-060116-054026},
    doi = {10.1146/annurev-statistics-060116-054026},
    language = {en},
    number = {1},
    urldate = {2025-08-14},
    journal = {Annual Review of Statistics and Its Application},
    author = {Székely, Gábor J. and Rizzo, Maria L.},
    month = mar,
    year = {2017},
    pages = {447--479},
}

@book{kendall_rank_1962,
    title = {Rank {Correlation} {Methods}},
    language = {en},
    publisher = {Hafner Publishing Company},
    author = {Kendall, Maurice George},
    year = {1962},
    note = {Google-Books-ID: 9ni4AAAAIAAJ},
}

@article{sijtsma_recognize_2024,
    title = {Recognize the {Value} of the {Sum} {Score}, {Psychometrics}’ {Greatest} {Accomplishment}},
    volume = {89},
    issn = {0033-3123},
    url = {https://www.ncbi.nlm.nih.gov/pmc/articles/PMC11588849/},
    doi = {10.1007/s11336-024-09964-7},
    number = {1},
    urldate = {2025-08-11},
    journal = {Psychometrika},
    author = {Sijtsma, Klaas and Ellis, Jules L. and Borsboom, Denny},
    year = {2024},
    pmid = {38627311},
    pmcid = {PMC11588849},
    pages = {84--117},
}

@article{van_abswoude_comparative_2004,
    title = {A {Comparative} {Study} of {Test} {Data} {Dimensionality} {Assessment} {Procedures} {Under} {Nonparametric} {IRT} {Models}},
    volume = {28},
    issn = {0146-6216},
    url = {https://doi.org/10.1177/0146621603259277},
    doi = {10.1177/0146621603259277},
    language = {EN},
    number = {1},
    urldate = {2025-06-25},
    journal = {Applied Psychological Measurement},
    author = {van Abswoude, Alexandra A. H. and van der Ark, L. Andries and Sijtsma, Klaas},
    month = jan,
    year = {2004},
    note = {Publisher: SAGE Publications Inc},
    pages = {3--24},
}

@article{sijtsma_use_2009,
    title = {On the {Use}, the {Misuse}, and the {Very} {Limited} {Usefulness} of {Cronbach}’s {Alpha}},
    volume = {74},
    issn = {0033-3123, 1860-0980},
    url = {https://www.cambridge.org/core/journals/psychometrika/article/on-the-use-the-misuse-and-the-very-limited-usefulness-of-cronbachs-alpha/72E9A648D5324412AF5506701B6BE325},
    doi = {10.1007/s11336-008-9101-0},
    language = {en},
    number = {1},
    urldate = {2025-07-29},
    journal = {Psychometrika},
    author = {Sijtsma, Klaas},
    month = mar,
    year = {2009},
    pages = {107--120},
}

@misc{filzmoser_pcapp_2024,
    title = {{pcaPP}: {Robust} {PCA} by {Projection} {Pursuit}},
    copyright = {GPL (≥ 3)},
    shorttitle = {{pcaPP}},
    url = {https://cran.r-project.org/web/packages/pcaPP/index.html},
    urldate = {2025-06-25},
    author = {Filzmoser, Peter and Fritz, Heinrich and Kalcher, Klaudius and Todorov, Valentin},
    month = aug,
    year = {2024},
}

@article{szekely_energy_2013,
    title = {Energy statistics: {A} class of statistics based on distances},
    volume = {143},
    issn = {0378-3758},
    shorttitle = {Energy statistics},
    url = {https://www.sciencedirect.com/science/article/pii/S0378375813000633},
    doi = {10.1016/j.jspi.2013.03.018},
    number = {8},
    urldate = {2025-07-21},
    journal = {Journal of Statistical Planning and Inference},
    author = {Székely, Gábor J. and Rizzo, Maria L.},
    month = aug,
    year = {2013},
    pages = {1249--1272},
}

@article{indahl_similarity_2018,
    title = {A similarity index for comparing coupled matrices},
    volume = {32},
    copyright = {Copyright © 2018 John Wiley \& Sons, Ltd.},
    issn = {1099-128X},
    url = {https://onlinelibrary.wiley.com/doi/abs/10.1002/cem.3049},
    doi = {10.1002/cem.3049},
    language = {en},
    number = {10},
    urldate = {2025-09-11},
    journal = {Journal of Chemometrics},
    author = {Indahl, Ulf G. and Næs, Tormod and Liland, Kristian Hovde},
    year = {2018},
    note = {\_eprint: https://analyticalsciencejournals.onlinelibrary.wiley.com/doi/pdf/10.1002/cem.3049},
    pages = {e3049},
}

@article{robin_proc_2011,
    title = {{pROC}: an open-source package for {R} and {S}+ to analyze and compare {ROC} curves},
    volume = {12},
    issn = {1471-2105},
    shorttitle = {{pROC}},
    url = {https://doi.org/10.1186/1471-2105-12-77},
    doi = {10.1186/1471-2105-12-77},
    number = {1},
    urldate = {2025-06-25},
    journal = {BMC Bioinformatics},
    author = {Robin, Xavier and Turck, Natacha and Hainard, Alexandre and Tiberti, Natalia and Lisacek, Frédérique and Sanchez, Jean-Charles and Müller, Markus},
    month = mar,
    year = {2011},
    pages = {77},
}

@incollection{mair_unidimensional_2015,
    title = {Unidimensional {Scaling}},
    copyright = {Copyright © 2015 John Wiley \& Sons, Ltd. All rights reserved.},
    isbn = {978-1-118-44511-2},
    url = {https://onlinelibrary.wiley.com/doi/abs/10.1002/9781118445112.stat06462.pub2},
    language = {en},
    urldate = {2025-06-16},
    booktitle = {Wiley {StatsRef}: {Statistics} {Reference} {Online}},
    publisher = {John Wiley \& Sons, Ltd},
    author = {Mair, Patrick and Leeuw, Jan De},
    year = {2015},
    doi = {10.1002/9781118445112.stat06462.pub2},
    note = {\_eprint: https://onlinelibrary.wiley.com/doi/pdf/10.1002/9781118445112.stat06462.pub2},
    pages = {1--3},
}

@article{ten_berge_greatest_2004,
    title = {The {Greatest} {Lower} {Bound} to the {Reliability} of a {Test} and the {Hypothesis} of {Unidimensionality}},
    volume = {69},
    copyright = {https://www.cambridge.org/core/terms},
    issn = {0033-3123, 1860-0980},
    url = {https://www.cambridge.org/core/product/identifier/S003331230002353X/type/journal_article},
    doi = {10.1007/BF02289858},
    language = {en},
    number = {4},
    urldate = {2025-09-12},
    journal = {Psychometrika},
    author = {Ten Berge, Jos M. F. and Sočan, Gregor},
    month = dec,
    year = {2004},
    pages = {613--625},
}

@article{de_boeck_random_2008,
    title = {Random {Item} {IRT} {Models}},
    volume = {73},
    issn = {1860-0980},
    url = {https://doi.org/10.1007/s11336-008-9092-x},
    doi = {10.1007/s11336-008-9092-x},
    language = {en},
    number = {4},
    urldate = {2025-01-28},
    journal = {Psychometrika},
    author = {De Boeck, Paul},
    month = dec,
    year = {2008},
    pages = {533--559},
}

@article{owen_bi-cross-validation_2009,
    title = {Bi-{Cross}-{Validation} of the {SVD} and the {Nonnegative} {Matrix} {Factorization}},
    volume = {3},
    issn = {1932-6157},
    url = {https://www.jstor.org/stable/30244256},
    number = {2},
    urldate = {2026-02-02},
    journal = {The Annals of Applied Statistics},
    author = {Owen, Art B. and Perry, Patrick O.},
    year = {2009},
    note = {Publisher: Institute of Mathematical Statistics},
    pages = {564--594},
}

@article{mccullagh_resampling_2000,
    title = {Resampling and {Exchangeable} {Arrays}},
    volume = {6},
    issn = {13507265},
    url = {https://www.jstor.org/stable/3318577?origin=crossref},
    doi = {10.2307/3318577},
    language = {en},
    number = {2},
    urldate = {2026-02-03},
    journal = {Bernoulli},
    author = {McCullagh, Peter},
    month = apr,
    year = {2000},
    pages = {285},
}

@misc{owen_bi-cross-validation_2015,
    title = {Bi-cross-validation for factor analysis},
    url = {http://arxiv.org/abs/1503.03515},
    doi = {10.48550/arXiv.1503.03515},
    urldate = {2026-02-04},
    publisher = {arXiv},
    author = {Owen, A. B. and Wang, J.},
    month = nov,
    year = {2015},
    note = {arXiv:1503.03515 [stat]},
}

@article{spearman_general_1904,
    title = {"{General} {Intelligence}," {Objectively} {Determined} and {Measured}},
    volume = {15},
    issn = {0002-9556},
    url = {https://www.jstor.org/stable/1412107},
    doi = {10.2307/1412107},
    number = {2},
    urldate = {2026-02-04},
    journal = {The American Journal of Psychology},
    publisher = {University of Illinois Press},
    author = {Spearman, C.},
    year = {1904},
    pages = {201--292},
}

@misc{oconnor_spearman_2021,
    type = {The {British} {Psychological} {Society}},
    title = {Spearman {Medal} is retired},
    url = {https://www.bps.org.uk/psychologist/spearman-medal-retired},
    language = {en},
    urldate = {2026-02-04},
    journal = {BPS},
    author = {O'Connor, Daryl},
    month = mar,
    year = {2021},
}

@misc{zheng_nonparametric_2025,
    title = {Nonparametric {Factor} {Analysis} and {Beyond}},
    url = {http://arxiv.org/abs/2503.16865},
    doi = {10.48550/arXiv.2503.16865},
    urldate = {2025-08-14},
    publisher = {arXiv},
    author = {Zheng, Yujia and Liu, Yang and Yao, Jiaxiong and Hu, Yingyao and Zhang, Kun},
    month = mar,
    year = {2025},
    note = {arXiv:2503.16865 [cs]},
}

@article{hu_instrumental_2008,
    title = {Instrumental {Variable} {Treatment} of {Nonclassical} {Measurement} {Error} {Models}},
    volume = {76},
    issn = {0012-9682},
    url = {https://www.jstor.org/stable/4502059},
    number = {1},
    urldate = {2026-02-04},
    journal = {Econometrica},
    publisher = {[Wiley, Econometric Society]},
    author = {Hu, Yingyao and Schennach, Susanne M.},
    year = {2008},
    pages = {195--216},
}

@article{hu_econometrics_2017,
    series = {Measurement {Error} {Models}},
    title = {The econometrics of unobservables: {Applications} of measurement error models in empirical industrial organization and labor economics},
    volume = {200},
    issn = {0304-4076},
    shorttitle = {The econometrics of unobservables},
    url = {https://www.sciencedirect.com/science/article/pii/S0304407617300830},
    doi = {10.1016/j.jeconom.2017.06.002},
    number = {2},
    urldate = {2026-02-04},
    journal = {Journal of Econometrics},
    author = {Hu, Yingyao},
    month = oct,
    year = {2017},
    pages = {154--168},
}

@article{ten_berge_numerical_1991,
    title = {A numerical approach to the approximate and the exact minimum rank of a covariance matrix},
    volume = {56},
    issn = {1860-0980},
    url = {https://doi.org/10.1007/BF02294464},
    doi = {10.1007/BF02294464},
    language = {en},
    number = {2},
    urldate = {2026-02-04},
    journal = {Psychometrika},
    author = {ten Berge, Jos M. F. and Kiers, Henk A. L.},
    month = jun,
    year = {1991},
    pages = {309--315},
}

@misc{domingue_intermodel_2021,
    title = {The {InterModel} {Vigorish} ({IMV}) as a flexible and portable approach for quantifying predictive accuracy with binary outcomes},
    url = {https://osf.io/gu3ap},
    doi = {10.31235/osf.io/gu3ap},
    language = {en-us},
    urldate = {2025-01-14},
    publisher = {OSF},
    author = {Domingue, Ben and Rahal, Charles and Faul, Jessica and Freese, Jeremy and Kanopka, Klint and Rigos, Alexandros and Stenhaug, Ben and Tripathi, Ajay},
    month = jun,
    year = {2021},
}

@article{domingue_intermodel_2024,
    title = {The {InterModel} {Vigorish} as a {Lens} for {Understanding} (and {Quantifying}) the {Value} of {Item} {Response} {Models} for {Dichotomously} {Coded} {Items}},
    volume = {89},
    issn = {0033-3123, 1860-0980},
    url = {https://www.cambridge.org/core/journals/psychometrika/article/abs/intermodel-vigorish-as-a-lens-for-understanding-and-quantifying-the-value-of-item-response-models-for-dichotomously-coded-items/F61C75F6F945A5B13F73C6128EB83998},
    doi = {10.1007/s11336-024-09977-2},
    language = {en},
    number = {3},
    urldate = {2026-02-10},
    journal = {Psychometrika},
    author = {Domingue, Benjamin W. and Kanopka, Klint and Kapoor, Radhika and Pohl, Steffi and Chalmers, R. Philip and Rahal, Charles and Rhemtulla, Mijke},
    month = sep,
    year = {2024},
    pages = {1034--1054},
}

@article{stenhaug_predictive_2022,
    title = {Predictive {Fit} {Metrics} for {Item} {Response} {Models}},
    volume = {46},
    issn = {0146-6216},
    url = {https://doi.org/10.1177/01466216211066603},
    doi = {10.1177/01466216211066603},
    language = {EN},
    number = {2},
    urldate = {2026-02-10},
    journal = {Applied Psychological Measurement},
    publisher = {SAGE Publications Inc},
    author = {Stenhaug, Benjamin A. and Domingue, Benjamin W.},
    month = mar,
    year = {2022},
    pages = {136--155},
}

@article{steiger_structural_1990,
    title = {Structural {Model} {Evaluation} and {Modification}: {An} {Interval} {Estimation} {Approach}},
    volume = {25},
    issn = {0027-3171},
    shorttitle = {Structural {Model} {Evaluation} and {Modification}},
    url = {https://doi.org/10.1207/s15327906mbr2502_4},
    doi = {10.1207/s15327906mbr2502_4},
    number = {2},
    urldate = {2026-02-10},
    journal = {Multivariate Behavioral Research},
    publisher = {Routledge},
    author = {Steiger, James H.},
    month = apr,
    year = {1990},
    note = {\_eprint: https://doi.org/10.1207/s15327906mbr2502\_4},
    pages = {173--180},
}

@article{casabianca_irt_2015,
    title = {{IRT} {Item} {Parameter} {Recovery} {With} {Marginal} {Maximum} {Likelihood} {Estimation} {Using} {Loglinear} {Smoothing} {Models}},
    volume = {40},
    issn = {1076-9986, 1935-1054},
    url = {https://journals.sagepub.com/doi/10.3102/1076998615606112},
    doi = {10.3102/1076998615606112},
    language = {en},
    number = {6},
    urldate = {2026-02-10},
    journal = {Journal of Educational and Behavioral Statistics},
    author = {Casabianca, Jodi M. and Lewis, Charles},
    month = dec,
    year = {2015},
    pages = {547--578},
}

@article{cronbach_my_2004,
    title = {My {Current} {Thoughts} on {Coefficient} {Alpha} and {Successor} {Procedures}},
    volume = {64},
    issn = {0013-1644},
    url = {https://doi.org/10.1177/0013164404266386},
    doi = {10.1177/0013164404266386},
    language = {EN},
    number = {3},
    urldate = {2026-01-19},
    journal = {Educational and Psychological Measurement},
    publisher = {SAGE Publications Inc},
    author = {Cronbach, Lee J. and Shavelson, Richard J.},
    month = jun,
    year = {2004},
    pages = {391--418},
}

@article{heller_consistent_2013,
    title = {A consistent multivariate test of association based on ranks of distances},
    volume = {100},
    issn = {0006-3444},
    url = {https://doi.org/10.1093/biomet/ass070},
    doi = {10.1093/biomet/ass070},
    number = {2},
    urldate = {2025-01-16},
    journal = {Biometrika},
    author = {Heller, Ruth and Heller, Yair and Gorfine, Malka},
    month = jun,
    year = {2013},
    pages = {503--510},
}

@article{busing_monotone_2022,
    title = {Monotone {Regression}: {A} {Simple} and {Fast} {O}(n) {PAVA} {Implementation}},
    volume = {102},
    copyright = {Copyright (c) 2022 Frank M. T. A. Busing},
    issn = {1548-7660},
    shorttitle = {Monotone {Regression}},
    url = {https://doi.org/10.18637/jss.v102.c01},
    doi = {10.18637/jss.v102.c01},
    language = {en},
    urldate = {2025-07-03},
    journal = {Journal of Statistical Software},
    author = {Busing, Frank M. T. A.},
    month = may,
    year = {2022},
    pages = {1--25},
}

@article{josse_measuring_2016,
    title = {Measuring multivariate association and beyond},
    volume = {10},
    issn = {1935-7516},
    url = {https://pmc.ncbi.nlm.nih.gov/articles/PMC5658146/},
    doi = {10.1214/16-SS116},
    urldate = {2025-10-17},
    journal = {Statistics surveys},
    author = {Josse, Julie and Holmes, Susan},
    year = {2016},
    pages = {132--167},
}

@misc{josse_measures_2014,
    title = {Measures of dependence between random vectors and tests of independence. {Literature} review},
    url = {http://arxiv.org/abs/1307.7383},
    doi = {10.48550/arXiv.1307.7383},
    urldate = {2025-08-30},
    publisher = {arXiv},
    author = {Josse, Julie and Holmes, Susan},
    month = aug,
    year = {2014},
    note = {arXiv:1307.7383 [stat]},
}

@article{jensen_what_1994,
    title = {What is a good g?},
    volume = {18},
    issn = {0160-2896},
    url = {https://www.sciencedirect.com/science/article/pii/0160289694900299},
    doi = {10.1016/0160-2896(94)90029-9},
    number = {3},
    urldate = {2026-02-14},
    journal = {Intelligence},
    author = {Jensen, Arthur R. and Weng, Li-Jen},
    month = may,
    year = {1994},
    pages = {231--258},
}

@article{maccallum_representing_1991,
    address = {US},
    title = {Representing sources of error in the common-factor model: {Implications} for theory and practice},
    volume = {109},
    issn = {1939-1455},
    shorttitle = {Representing sources of error in the common-factor model},
    doi = {10.1037/0033-2909.109.3.502},
    number = {3},
    journal = {Psychological Bulletin},
    publisher = {American Psychological Association},
    author = {MacCallum, Robert C. and Tucker, Ledyard R.},
    year = {1991},
    pages = {502--511},
}

@incollection{maccallum_factor_2007,
    title = {Factor {Analysis} {Models} as {Approximations}},
    booktitle = {Factor {Analysis} at 100},
    publisher = {Routledge},
    author = {MacCallum, Robert C. and Browne, Michael W. and Cai, Li},
    year = {2007},
    note = {Num Pages: 24},
}

@article{wherry_factor_1944,
    title = {Factor {Pattern} of {Test} {Items} and {Tests} as a {Function} of the {Correlation} {Coefficient}: {Content}, {Difficulty}, and {Constant} {Error} {Factors}},
    volume = {9},
    copyright = {https://www.cambridge.org/core/terms},
    issn = {0033-3123, 1860-0980},
    shorttitle = {Factor {Pattern} of {Test} {Items} and {Tests} as a {Function} of the {Correlation} {Coefficient}},
    url = {https://www.cambridge.org/core/product/identifier/S0033312300044513/type/journal_article},
    doi = {10.1007/BF02288734},
    language = {en},
    number = {4},
    urldate = {2026-02-16},
    journal = {Psychometrika},
    author = {Wherry, Robert J. and Gaylord, Richard H.},
    month = dec,
    year = {1944},
    pages = {237--244},
}

@article{gronneberg_partial_2020,
    title = {Partial {Identification} of {Latent} {Correlations} with {Binary} {Data}},
    volume = {85},
    issn = {0033-3123, 1860-0980},
    url = {https://www.cambridge.org/core/journals/psychometrika/article/abs/partial-identification-of-latent-correlations-with-binary-data/B3AF024BA0F5689BFB3224D261218480#article},
    doi = {10.1007/s11336-020-09737-y},
    language = {en},
    number = {4},
    urldate = {2025-08-20},
    journal = {Psychometrika},
    author = {Grønneberg, Steffen and Moss, Jonas and Foldnes, Njål},
    month = dec,
    year = {2020},
    pages = {1028--1051},
}

@article{muthen_testing_1988,
    title = {Testing the {Assumptions} {Underlying} {Tetrachoric} {Correlations}},
    volume = {53},
    issn = {0033-3123, 1860-0980},
    url = {https://www.cambridge.org/core/journals/psychometrika/article/abs/testing-the-assumptions-underlying-tetrachoric-correlations/08A71979113F986F1164B359E6D99020},
    doi = {10.1007/BF02294408},
    language = {en},
    number = {4},
    urldate = {2025-07-30},
    journal = {Psychometrika},
    author = {Muthén, Bengt and Hofacker, Charles},
    month = dec,
    year = {1988},
    pages = {563--577},
}

@article{cudeck_model_1991,
    address = {US},
    title = {Model selection in covariance structures analysis and the "problem" of sample size: {A} clarification},
    volume = {109},
    issn = {1939-1455},
    shorttitle = {Model selection in covariance structures analysis and the "problem" of sample size},
    doi = {10.1037/0033-2909.109.3.512},
    number = {3},
    journal = {Psychological Bulletin},
    publisher = {American Psychological Association},
    author = {Cudeck, Robert and Henly, Susan J.},
    year = {1991},
    pages = {512--519},
}

@article{thurstone_multiple_1931,
    title = {Multiple factor analysis.},
    volume = {38},
    number = {5},
    journal = {Psychological review},
    publisher = {Psychological Review Company},
    author = {Thurstone, Louis Leon},
    year = {1931},
    pages = {406},
}

@article{yalcin_nonlinear_2001,
    title = {Nonlinear {Factor} {Analysis} as a {Statistical} {Method}},
    volume = {16},
    issn = {0883-4237},
    url = {https://www.jstor.org/stable/2676693},
    number = {3},
    urldate = {2026-02-16},
    journal = {Statistical Science},
    publisher = {Institute of Mathematical Statistics},
    author = {Yalcin, Ilker and Amemiya, Yasuo},
    year = {2001},
    pages = {275--294},
}

@inproceedings{amemiya_model_1997,
    address = {New York, NY},
    title = {Model fitting procedures for nonlinear factor analysis using the errors-in-variables parameterization},
    isbn = {978-1-4612-1842-5},
    doi = {10.1007/978-1-4612-1842-5_10},
    language = {en},
    booktitle = {Latent {Variable} {Modeling} and {Applications} to {Causality}},
    publisher = {Springer},
    author = {Amemiya, Yasuo and Yalcin, Ilker},
    editor = {Berkane, Maia},
    year = {1997},
    pages = {195--210},
}

@article{lee_learning_1999,
    title = {Learning the parts of objects by non-negative matrix factorization},
    volume = {401},
    number = {6755},
    journal = {nature},
    publisher = {Nature Publishing Group UK London},
    author = {Lee, Daniel D and Seung, H Sebastian},
    year = {1999},
    pages = {788--791},
}

@article{mosteller_useful_1946,
    title = {On {Some} {Useful} "{Inefficient}" {Statistics}},
    volume = {17},
    issn = {0003-4851},
    url = {https://www.jstor.org/stable/2236081},
    number = {4},
    urldate = {2025-10-09},
    journal = {The Annals of Mathematical Statistics},
    publisher = {Institute of Mathematical Statistics},
    author = {Mosteller, Frederick},
    year = {1946},
    pages = {377--408},
}

@article{blomqvist_measure_1950,
    title = {On a {Measure} of {Dependence} {Between} two {Random} {Variables}},
    volume = {21},
    issn = {0003-4851, 2168-8990},
    url = {https://projecteuclid.org/journals/annals-of-mathematical-statistics/volume-21/issue-4/On-a-Measure-of-Dependence-Between-two-Random-Variables/10.1214/aoms/1177729754.full},
    doi = {10.1214/aoms/1177729754},
    number = {4},
    urldate = {2025-10-08},
    journal = {The Annals of Mathematical Statistics},
    publisher = {Institute of Mathematical Statistics},
    author = {Blomqvist, Nils},
    month = dec,
    year = {1950},
    pages = {593--600},
}

@article{schmid_development_1957,
    title = {The development of hierarchical factor solutions},
    volume = {22},
    issn = {1860-0980},
    url = {https://doi.org/10.1007/BF02289209},
    doi = {10.1007/BF02289209},
    language = {en},
    number = {1},
    urldate = {2026-03-06},
    journal = {Psychometrika},
    author = {Schmid, John and Leiman, John M.},
    month = mar,
    year = {1957},
    pages = {53--61},
}

@article{wolff_exploring_2005,
    title = {Exploring item and higher order factor structure with the {Schmid}-{Leiman} solution: {Syntax} codes for {SPSS} and {SAS}},
    volume = {37},
    issn = {1554-3528},
    shorttitle = {Exploring item and higher order factor structure with the {Schmid}-{Leiman} solution},
    url = {https://doi.org/10.3758/BF03206397},
    doi = {10.3758/BF03206397},
    language = {en},
    number = {1},
    urldate = {2026-03-06},
    journal = {Behavior Research Methods},
    author = {Wolff, Hans -Georg and Preising, Katja},
    month = feb,
    year = {2005},
    pages = {48--58},
}

@article{cohen_coefficient_1960,
    title = {A {Coefficient} of {Agreement} for {Nominal} {Scales}},
    volume = {20},
    issn = {0013-1644},
    url = {https://doi.org/10.1177/001316446002000104},
    doi = {10.1177/001316446002000104},
    language = {EN},
    number = {1},
    urldate = {2026-03-06},
    journal = {Educational and Psychological Measurement},
    publisher = {SAGE Publications Inc},
    author = {Cohen, Jacob},
    month = apr,
    year = {1960},
    pages = {37--46},
}

@article{byrt_bias_1993,
    title = {Bias, prevalence and kappa},
    volume = {46},
    issn = {0895-4356},
    url = {https://www.sciencedirect.com/science/article/pii/089543569390018V},
    doi = {10.1016/0895-4356(93)90018-V},
    number = {5},
    urldate = {2026-03-05},
    journal = {Journal of Clinical Epidemiology},
    author = {Byrt, Ted and Bishop, Janet and Carlin, John B.},
    month = may,
    year = {1993},
    pages = {423--429},
}

@article{holley_note_1964,
    title = {A {Note} on the {G} {Index} of {Agreement}},
    volume = {24},
    issn = {0013-1644},
    url = {https://doi.org/10.1177/001316446402400402},
    doi = {10.1177/001316446402400402},
    language = {EN},
    number = {4},
    urldate = {2026-03-06},
    journal = {Educational and Psychological Measurement},
    publisher = {SAGE Publications Inc},
    author = {Holley, J.W. and Guilford, J.P.},
    month = dec,
    year = {1964},
    pages = {749--753},
}

@article{hamann_merkmalsbestand_1961,
    title = {Merkmalsbestand und {Verwandtschaftsbeziehungen} der {Farinosae}: {Ein} {Beitrag} zum {System} der {Monokotyledonen}},
    volume = {2},
    issn = {0511-9618},
    shorttitle = {Merkmalsbestand und {Verwandtschaftsbeziehungen} der {Farinosae}},
    url = {https://www.jstor.org/stable/3995266},
    number = {5},
    urldate = {2026-03-06},
    journal = {Willdenowia},
    publisher = {Botanischer Garten und Botanisches Museum, Berlin-Dahlem},
    author = {Hamann, Ulrich},
    year = {1961},
    pages = {639--768},
}

@article{cheetham_binary_1969,
    title = {Binary ({Presence}-{Absence}) {Similarity} {Coefficients}},
    volume = {43},
    issn = {0022-3360},
    url = {https://www.jstor.org/stable/1302424},
    number = {5},
    urldate = {2026-03-06},
    journal = {Journal of Paleontology},
    publisher = {Paleontological Society},
    author = {Cheetham, Alan H. and Hazel, Joseph E.},
    year = {1969},
    pages = {1130--1136},
}

@article{fuxman_bass_using_2013,
    title = {Using networks to measure similarity between genes: association index selection},
    volume = {10},
    issn = {1548-7091},
    shorttitle = {Using networks to measure similarity between genes},
    url = {https://pmc.ncbi.nlm.nih.gov/articles/PMC3959882/},
    doi = {10.1038/nmeth.2728},
    number = {12},
    urldate = {2026-03-06},
    journal = {Nature methods},
    author = {Fuxman Bass, Juan I and Diallo, Alos and Nelson, Justin and Soto, Juan M and Myers, Chad L and Walhout, Albertha J M},
    month = dec,
    year = {2013},
    pages = {1169--1176},
}

@misc{zhang_intermodel_2023,
    title = {The {InterModel} {Vigorish} for {Model} {Comparison} in {Confirmatory} {Factor} {Analysis} with {Binary} {Outcomes}},
    url = {https://osf.io/preprints/psyarxiv/tv9bd_v1/},
    doi = {10.31234/osf.io/tv9bd_v1},
    urldate = {2026-03-10},
    publisher = {PsyArXiv},
    author = {Zhang, Lijin and Rahal, Charles and Kanopka, Klint and Ulitzsch, Esther and Zhang, Zhiyong and Domingue, Benjamin},
    month = sep,
    year = {2023},
}

@article{pearson_i_1900,
    title = {I. {Mathematical} contributions to the theory of evolution. —{VII}. {On} the correlation of characters not quantitatively measurable},
    volume = {195},
    issn = {0264-3952},
    url = {https://doi.org/10.1098/rsta.1900.0022},
    doi = {10.1098/rsta.1900.0022},
    number = {262-273},
    urldate = {2026-03-16},
    journal = {Philosophical Transactions of the Royal Society of London, Series A: Containing Papers of a Mathematical or Physical Character},
    author = {Pearson, Karl},
    month = jan,
    year = {1900},
    pages = {1--47},
}

@article{thomson_hierarchy_1916,
    title = {A {Hierarchy} {Without} a {General} {Factor}},
    volume = {8},
    copyright = {1916 The British Psychological Society},
    issn = {2044-8295},
    url = {https://onlinelibrary.wiley.com/doi/abs/10.1111/j.2044-8295.1916.tb00133.x},
    doi = {10.1111/j.2044-8295.1916.tb00133.x},
    language = {en},
    number = {3},
    urldate = {2026-03-16},
    journal = {British Journal of Psychology, 1904-1920},
    author = {Thomson, Godfrey H.},
    year = {1916},
    note = {\_eprint: https://bpspsychub.onlinelibrary.wiley.com/doi/pdf/10.1111/j.2044-8295.1916.tb00133.x},
    pages = {271--281},
}

@book{ll_thurstone_vectors_1935,
    title = {The {Vectors} {Of} {Mind} {Multiple} {Factor} {Analysis} {For} {The} {Isolation} {Of} {Primary} {Traits}},
    url = {http://archive.org/details/vectorsofmindmul010122mbp},
    language = {eng},
    urldate = {2026-02-22},
    publisher = {The University Of Chicago Press},
    author = {{L.L. Thurstone}},
    collaborator = {{Universal Digital Library}},
    year = {1935},
}

@article{rodriguez_evaluating_2016,
    address = {US},
    title = {Evaluating bifactor models: {Calculating} and interpreting statistical indices},
    volume = {21},
    issn = {1939-1463},
    shorttitle = {Evaluating bifactor models},
    doi = {10.1037/met0000045},
    number = {2},
    journal = {Psychological Methods},
    publisher = {American Psychological Association},
    author = {Rodriguez, Anthony and Reise, Steven P. and Haviland, Mark G.},
    year = {2016},
    pages = {137--150},
}

@article{yang_contraction_2019,
    title = {Contraction and uniform convergence of isotonic regression},
    volume = {13},
    issn = {1935-7524, 1935-7524},
    url = {https://projecteuclid.org/journals/electronic-journal-of-statistics/volume-13/issue-1/Contraction-and-uniform-convergence-of-isotonic-regression/10.1214/18-EJS1520.full},
    doi = {10.1214/18-EJS1520},
    language = {en},
    number = {1},
    urldate = {2026-03-18},
    journal = {Electronic Journal of Statistics},
    publisher = {Institute of Mathematical Statistics and Bernoulli Society},
    author = {Yang, Fan and Barber, Rina Foygel},
    month = jan,
    year = {2019},
    pages = {646--677},
}

% \begin{refsection}[secondary]
%     \nocite{*} 
%     \printbibliography[title={Dataset References}]
% \end{refsection}

\appendix

\begin{figure}
    \centering
    \includegraphics[width=1\linewidth]{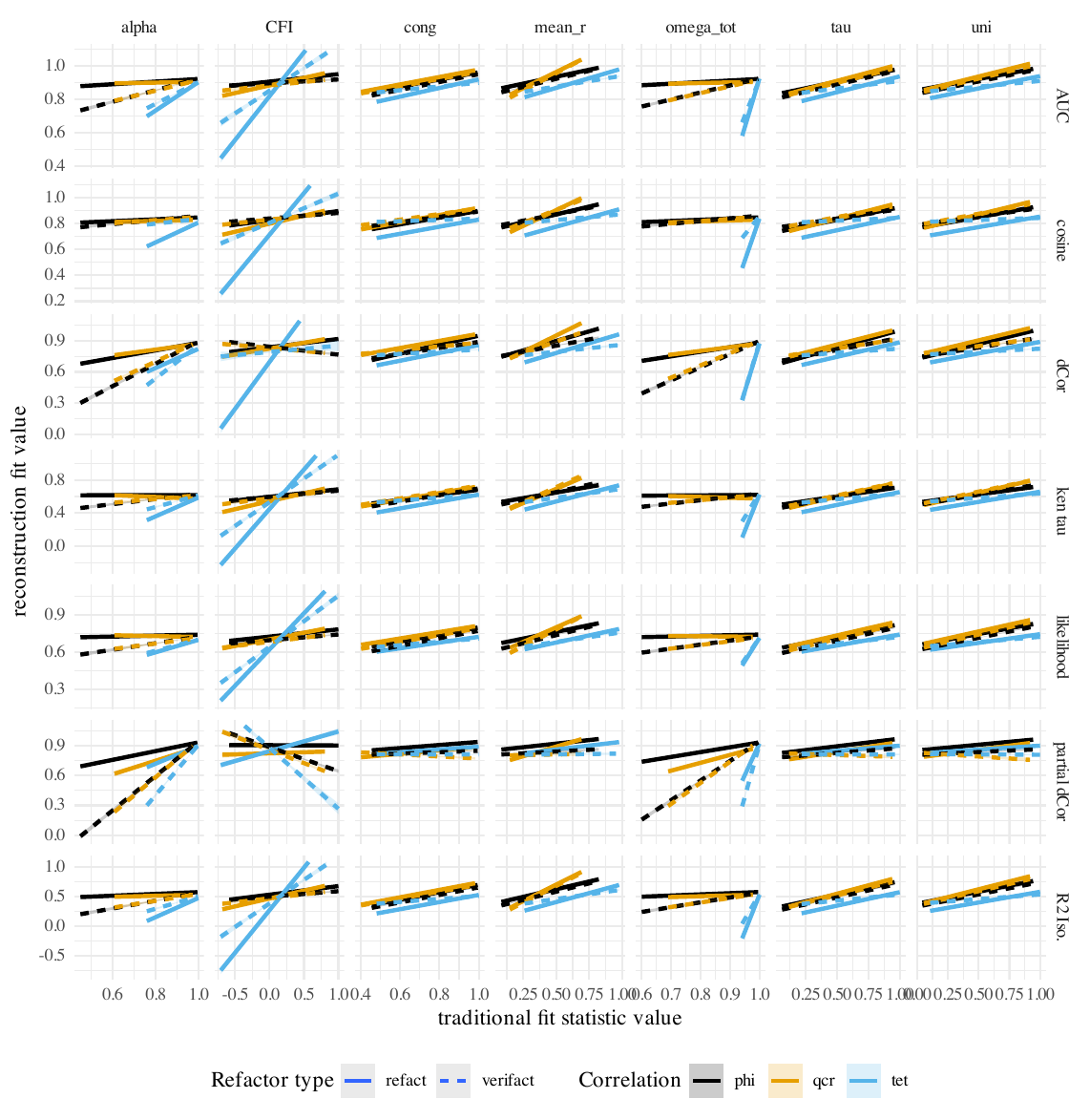}
    \caption{\textbf{Simulation Study I: Panel of Crossed Traditional and Refactor metrics} Traditional Measures of Unidimensionality vs. Refactor and Verifactor Reconstruction Metrics using Tetrachoric, Pearson, and Quadrant Correlations. \footnotesize{Aside from average interitem correlation and proportion of variance accounted for by the factor model, there is no meaningful relationship between the traditional measures of unidimensionality explored in this paper and the ability of the relationship to reconstruct the imaged signal across over 200 publicly available datasets. %Each point represents a separate publicly available dataset. 
    Each column represents a traditional metric of unidimensionality based on factor analyses: alpha = Cronbach's $\alpha$, av\_r = average interitem correlation $\bar{r}_{ij}$; CFI = Comparative Fit Index; rho\_c = a measure of the fit of a congeneric model to the observed correlations $\rho_c$; $\tau_{RC}$ = compares the observed correlations $r_{ij}$ to the mean correlation $\bar{r}_{ij}$ and considers 1 - the ratio of the sum of the squared residuals to the sum of the squared correlations; TLI = Tucker Lewis Index (tli); and u = Revelle and Condon's measure of congeneric unidimensionality, $u_{RC}$ (see \cite{revelle_unidim_2025}. Each row represents Refactor Reconstruction Metrics. X-axes represent the strength of the traditional metric for the dataset. Y-axes are the fit of the out-of-sample reconstructions via Refactor Analyses. Slopes are estimated using robust regression. Given the item-subject random effects used, weak relationships may also be capturing the sensitivity of the scale of the instrument being used to presumed fixed effects of items. }
    }
    \label{fig:sim1full}
\end{figure}

\begin{figure}
    \centering
    \includegraphics[width=1\linewidth]{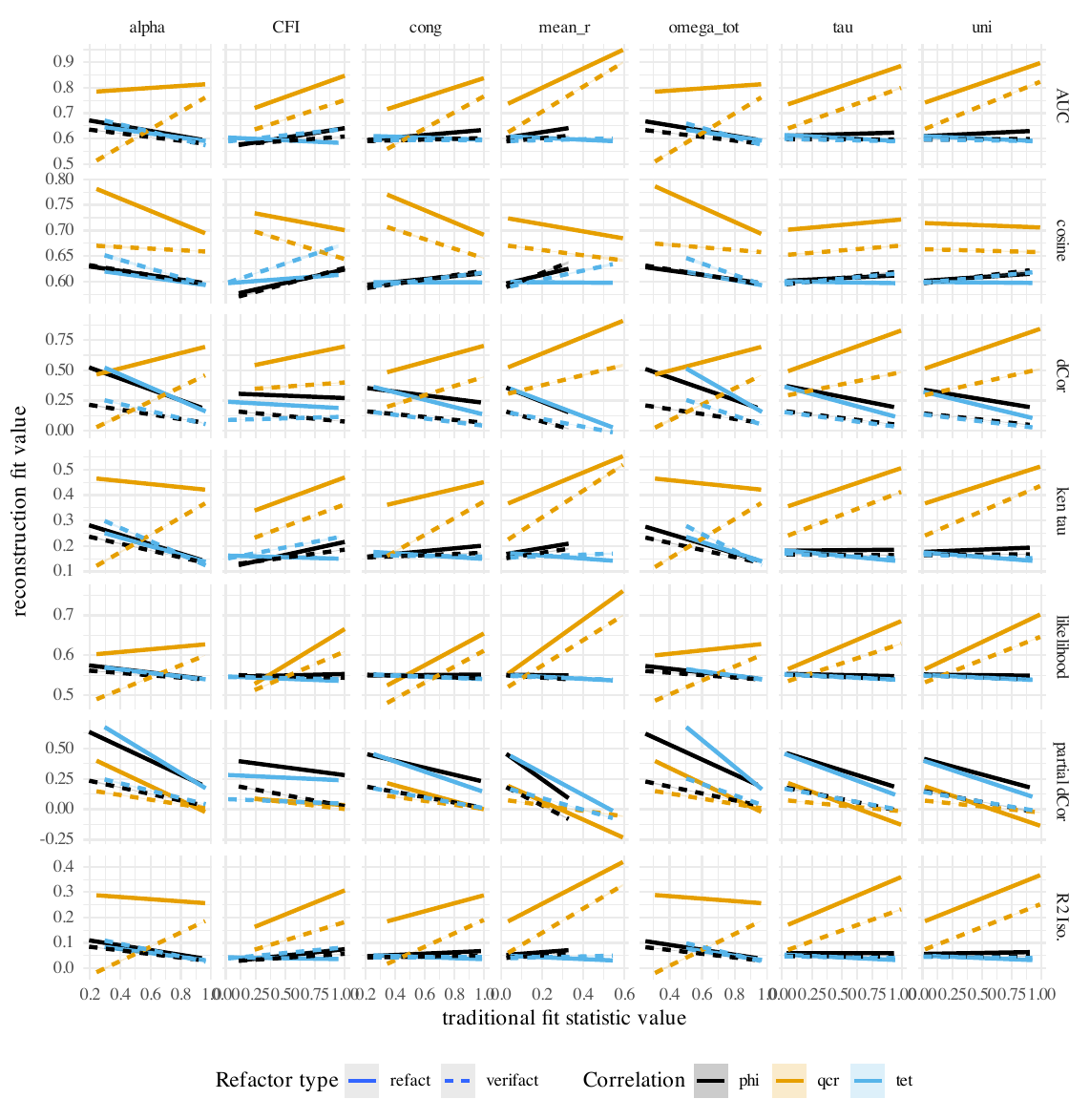}
    \caption{\textbf{Simulation Study II: Panel of Crossed Traditional and Refactor metrics} 
    Traditional Measures of Unidimensionality vs. Refactor and Verifactor Reconstruction Metrics using Tetrachoric, Pearson, and Quadrant Correlations. \footnotesize{Aside from average interitem correlation and proportion of variance accounted for by the factor model, there is no meaningful relationship between the traditional measures of unidimensionality explored in this paper and the ability of the relationship to reconstruct the imaged signal across over 200 publicly available datasets. %Each point represents a separate publicly available dataset. 
    Each column represents a traditional metric of unidimensionality based on factor analyses: alpha = Cronbach's $\alpha$, av\_r = average interitem correlation $\bar{r}_{ij}$; CFI = Comparative Fit Index; rho\_c = a measure of the fit of a congeneric model to the observed correlations $\rho_c$; $\tau_{RC}$ = compares the observed correlations $r_{ij}$ to the mean correlation $\bar{r}_{ij}$ and considers 1 - the ratio of the sum of the squared residuals to the sum of the squared correlations; TLI = Tucker Lewis Index (tli); and u = Revelle and Condon's measure of congeneric unidimensionality, $u_{RC}$ (see \cite{revelle_unidim_2025}. Each row represents Refactor Reconstruction Metrics. X-axes represent the strength of the traditional metric for the dataset. Y-axes are the fit of the out-of-sample reconstructions via Refactor Analyses. Slopes are estimated using robust regression. Given the item-subject random effects used, weak relationships may also be capturing the sensitivity of the scale of the instrument being used to presumed fixed effects of items. 
    }}
    \label{fig:sim2full}
\end{figure}

\begin{figure}
    \centering
    \includegraphics[width=1\linewidth]{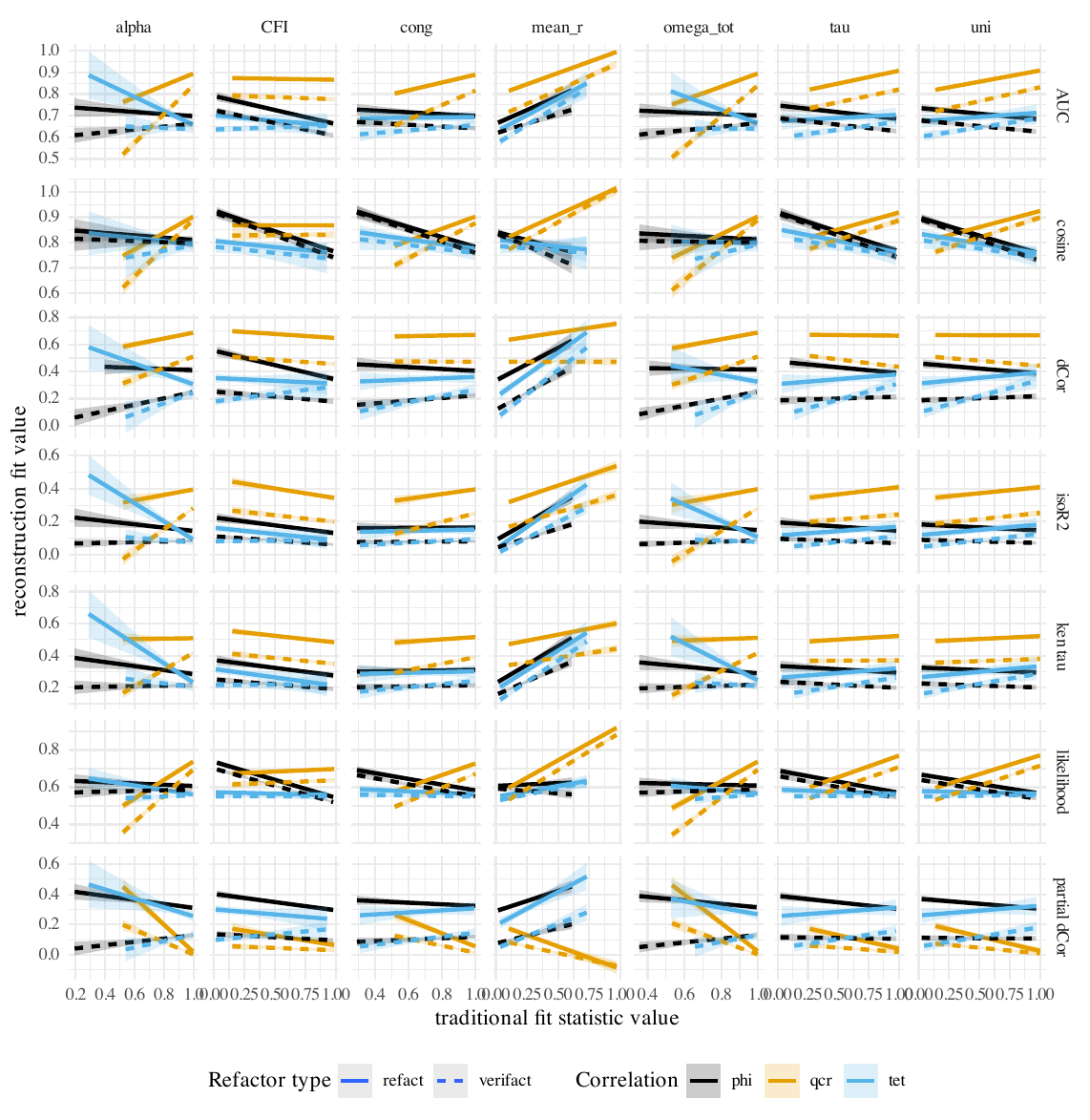}
    \caption{\textbf{Empirical Study: Panel of Crossed Traditional and Refactor metrics} 
    Traditional Measures of Unidimensionality vs. Refactor and Verifactor Reconstruction Metrics using Tetrachoric, Pearson, and Quadrant Correlations. \footnotesize{Aside from average interitem correlation and proportion of variance accounted for by the factor model, there is no meaningful relationship between the traditional measures of unidimensionality explored in this paper and the ability of the relationship to reconstruct the imaged signal across over 200 publicly available datasets. %Each point represents a separate publicly available dataset. 
    Each column represents a traditional metric of unidimensionality based on factor analyses: alpha = Cronbach's $\alpha$, av\_r = average interitem correlation $\bar{r}_{ij}$; CFI = Comparative Fit Index; rho\_c = a measure of the fit of a congeneric model to the observed correlations $\rho_c$; $\tau_{RC}$ = compares the observed correlations $r_{ij}$ to the mean correlation $\bar{r}_{ij}$ and considers 1 - the ratio of the sum of the squared residuals to the sum of the squared correlations; TLI = Tucker Lewis Index (tli); and u = Revelle and Condon's measure of congeneric unidimensionality, $u_{RC}$ (see \cite{revelle_unidim_2025}. Each row represents Refactor Reconstruction Metrics. X-axes represent the strength of the traditional metric for the dataset. Y-axes are the fit of the out-of-sample reconstructions via Refactor Analyses. Slopes are estimated using robust regression. Given the item-subject random effects used, weak relationships may also be capturing the sensitivity of the scale of the instrument being used to presumed fixed effects of items. 
    }}
    \label{fig:empfull}
\end{figure}

\section{Formal Section for Contributions}

\subsection{Refactor: a direct data-level fit criterion}
\label{app:refactor-proof}

This section formalizes the core contribution: Refactor evaluation is a \emph{test of recoverability of $X$ from a rank--$k$ representation learned from an image}, rather than a test of fit on the image itself.

\begin{proposition}[Impossibility of Continuous, Injective Dimensionality Reduction]
A one-to-one function $g: \mathbb{R}^p \to \mathbb{R}^k$ with $k < p$ cannot be continuous. [Proof \ref{prf:inject}]
\end{proposition}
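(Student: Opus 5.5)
The plan is to reduce the statement to the case $k = p-1$ and then use an elementary connectedness argument on a compact set, so that no heavy machinery is needed. Suppose, for contradiction, that $g:\mathbb{R}^p\to\mathbb{R}^k$ is continuous and injective with $k<p$. Composing with the linear inclusion $\iota:\mathbb{R}^k\hookrightarrow\mathbb{R}^{p-1}$, $y\mapsto(y,0,\dots,0)$, gives a continuous injective map $\iota\circ g:\mathbb{R}^p\to\mathbb{R}^{p-1}$. It therefore suffices to rule out continuous injections $\mathbb{R}^p\to\mathbb{R}^{p-1}$. The case $p=1$ (so $k=0$) is trivial, since $\mathbb{R}^0$ is a point and $\mathbb{R}$ has more than one point. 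We may thus assume $p\ge 2$.

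The cleanest route is to invoke Borsuk--Ulam. Restrict $h=\iota\circ g$ to the unit sphere $S^{p-1}\subset\mathbb{R}^p$; this restriction is a continuous map $S^{p-1}\to\mathbb{R}^{p-1}$. By the Borsuk--Ulam theorem there is a point $x\in S^{p-1}$ with $h(x)=h(-x)$. Since $x\neq -x$, this contradicts injectivity.

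If we want to avoid Borsuk--Ulam, I would instead restrict to a $2$-dimensional subspace. The restriction of $g$ to $\mathbb{R}^2\times\{0\}$ is still injective, so it is enough to show that no continuous injection $\mathbb{R}^2\to\mathbb{R}^m$ lands in a line. That alone does not suffice for general $k$, which is why the reduction to dimension $p-1$ together with Borsuk--Ulam (or, equivalently, invariance of domain) is the natural tool. A second fallback is invariance of domain itself. Embed $\mathbb{R}^{p-1}\subset\mathbb{R}^p$ as a hyperplane; then $h$, viewed as a map $\mathbb{R}^p\to\mathbb{R}^p$, is continuous and injective, so by invariance of domain its image is open in $\mathbb{R}^p$. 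But that image lies in a hyperplane, which has empty interior, and the image is nonempty, giving the contradiction.

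The main obstacle is purely which deep topological theorem we are permitted to cite. The reduction steps are routine, while the essential content is Borsuk--Ulam or invariance of domain, both of which rest on degree or homology theory. I would cite one of them as a standard result rather than reprove it. The only case that can be handled fully elementarily is $k=1$: for $p\ge2$, removing a point from $\mathbb{R}^p$ leaves a connected set, whereas removing its image from the interval $g(\mathbb{R}^p)$ would disconnect it.
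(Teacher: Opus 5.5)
Your proposal is correct, and its main argument is the paper's: restrict the map to a sphere inside $\mathbb{R}^p$ and use Borsuk--Ulam to get distinct antipodal points with equal images, contradicting injectivity. The only difference is bookkeeping: the paper restricts $g$ directly to an embedded $S^k\subset\mathbb{R}^{k+1}\subseteq\mathbb{R}^p$ and applies Borsuk--Ulam to $S^k\to\mathbb{R}^k$, whereas you pad the codomain to $\mathbb{R}^{p-1}$ and use $S^{p-1}$, which is equally valid (your invariance-of-domain fallback is also sound).
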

% \begin{toappendix}
\begin{proof}\label{prf:inject}
This is a direct consequence of the Borsuk-Ulam theorem, which states that for any continuous function $g: S^k \to \mathbb{R}^k$, there exists a point $\boldsymbol{x} \in S^k$ such that $g(\boldsymbol{x}) = g(-\boldsymbol{x})$. Consider the sphere $S^k \subset \mathbb{R}^{k+1}$. Since $k < p$, we can embed $S^k$ in the domain $\mathbb{R}^p$. If we assume $g$ is a continuous function from $\mathbb{R}^p \to \mathbb{R}^k$, its restriction to $S^k$ is also continuous. By the Borsuk-Ulam theorem, there must exist antipodal points $\boldsymbol{x}$ and $-\boldsymbol{x}$ on $S^k$ such that $g(\boldsymbol{x}) = g(-\boldsymbol{x})$. As $\boldsymbol{x} \neq -\boldsymbol{x}$, this contradicts the assumption that $g$ is one-to-one. Therefore, no such continuous, injective, one-to-one mapping function can exist.
\end{proof}
% \end{toappendix}

\subsubsection{Rank-1 SVD Formulation}
Let $\boldsymbol{X} \in \mathbb{R}^{n \times p}$ be the data matrix. A rank-1 approximation of $\boldsymbol{X}$ takes the form $\boldsymbol{\hat{X}} = \boldsymbol{u}\boldsymbol{v}^T$, where $\boldsymbol{u} \in \mathbb{R}^n$ and $\boldsymbol{v} \in \mathbb{R}^p$. The Eckart-Young-Mirsky theorem states that the best rank-$k$ approximation to $\boldsymbol{X}$ in the Frobenius norm is given by its truncated Singular Value Decomposition (SVD). For the rank-1 case, this is:
\begin{equation}
\boldsymbol{\hat{X}}^{(1)} = \underset{\text{rank}(\boldsymbol{M})=1}{\arg\min} \|\boldsymbol{X} - \boldsymbol{M}\|_F^2 = \sigma_1 \boldsymbol{u}_1 \boldsymbol{v}_1^T
\end{equation}
where $\sigma_1$ is the largest singular value of $\boldsymbol{X}$, and $\boldsymbol{u}_1$ and $\boldsymbol{v}_1$ are the corresponding left and right singular vectors.

A core insight of Refactor analysis is to recognize that $\boldsymbol{u}_1$ and $\boldsymbol{v}_1$ are also the principal eigenvectors of the Gram-functioning matrix $\boldsymbol{X}\boldsymbol{X}^T$ and the covariance-like matrix $\boldsymbol{X}^T\boldsymbol{X}$, respectively. This establishes a direct link between the conventional "image-based" analysis (which examines $\boldsymbol{X}^T\boldsymbol{X}$) and a full data reconstruction.

\begin{proposition}[The Refactor Reconstruction]
Given a data matrix $\boldsymbol{X}$, let its rank-1 SVD be $\sigma_1 \boldsymbol{u}_1 \boldsymbol{v}_1^T$. The vector $\boldsymbol{u}_1$ represents the component loadings based on the rows (e.g., persons), derived from the factorized eigensystem of the covariance-like matrix $\boldsymbol{X}\boldsymbol{X}^T$. The vector $\boldsymbol{v}_1$ represents the component loadings of the columns (e.g., items), derived from the factorized eigensystem of the covariance-like matrix $\boldsymbol{X}^T\boldsymbol{X}$. The Refactor reconstruction, $\boldsymbol{\hat{X}}$, is the rank-1 matrix formed by the outer product of these two principal vectors, scaled by $\sigma_1$.
\end{proposition}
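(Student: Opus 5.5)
The statement makes three claims: $\boldsymbol{u}_1$ is the principal eigenvector of $\boldsymbol{X}\boldsymbol{X}^T$, $\boldsymbol{v}_1$ is the principal eigenvector of $\boldsymbol{X}^T\boldsymbol{X}$, and $\sigma_1\boldsymbol{u}_1\boldsymbol{v}_1^T$ is the (Eckart--Young--Mirsky optimal) rank-1 reconstruction. I will derive each directly from the SVD $\boldsymbol{X}=\boldsymbol{U}\boldsymbol{\Sigma}\boldsymbol{V}^T$, with $\boldsymbol{U}\in\mathbb{R}^{n\times r}$ and $\boldsymbol{V}\in\mathbb{R}^{p\times r}$ having orthonormal columns and $\sigma_1\ge\sigma_2\ge\dots\ge\sigma_r>0$.

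\textbf{Step 1 (the two images diagonalize).} Substituting the SVD gives
\[
\boldsymbol{X}\boldsymbol{X}^T=\boldsymbol{U}\boldsymbol{\Sigma}^2\boldsymbol{U}^T,\qquad
\boldsymbol{X}^T\boldsymbol{X}=\boldsymbol{V}\boldsymbol{\Sigma}^2\boldsymbol{V}^T.
\]
So $\boldsymbol{u}_1$ and $\boldsymbol{v}_1$ are eigenvectors of the row and column images with the common top eigenvalue $\sigma_1^2$. Hence applying $\textsf{Z}(\cdot,1)$ as an eigendecomposition to $A_r=\boldsymbol{X}\boldsymbol{X}^T$ and $A_c=\boldsymbol{X}^T\boldsymbol{X}$ returns $\hat u=\boldsymbol{u}_1$ and $\hat v=\boldsymbol{v}_1$, each up to sign and scale.

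\textbf{Step 2 (recovering the scale).} Because unit eigenvectors carry no magnitude, I recover it from $\boldsymbol{X}$ itself: $\boldsymbol{u}_1^T\boldsymbol{X}\boldsymbol{v}_1=\sigma_1$, and equivalently $\boldsymbol{X}\boldsymbol{v}_1=\sigma_1\boldsymbol{u}_1$. This shows that the refactoring map $\mathcal{R}_1(\boldsymbol{X};\hat u,\hat v)=(\hat u^T\boldsymbol{X}\hat v)\,\hat u\hat v^T$ equals $\sigma_1\boldsymbol{u}_1\boldsymbol{v}_1^T$, which by the Eckart--Young--Mirsky statement preceding the proposition is the Frobenius-optimal rank-1 approximation.

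\textbf{Main obstacle: identifiability of the pair $(\hat u,\hat v)$.} The two eigen-problems are solved separately, so each eigenvector is defined only up to sign, and it must be uniquely defined at all. I handle uniqueness by assuming $\sigma_1>\sigma_2$; without that gap the leading eigenspace is not one-dimensional and the statement must be read as holding for some choice within it. I handle the sign by fixing $\hat v=\boldsymbol{v}_1$ and choosing the sign of $\hat u$ so that $\hat u^T\boldsymbol{X}\hat v>0$. With either sign the scale formula in Step 2 absorbs the flip, since $(\hat u^T\boldsymbol{X}\hat v)\,\hat u\hat v^T$ is invariant when $\hat u\mapsto-\hat u$. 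The reconstruction therefore does not depend on the arbitrary signs returned by the separate eigen-solvers.
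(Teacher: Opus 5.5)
Your proof is correct and uses the same fact the paper relies on, both in the paragraph preceding the proposition and in the proof of the row/column duality proposition: substituting the SVD gives $\boldsymbol{X}\boldsymbol{X}^T=\boldsymbol{U}\boldsymbol{\Sigma}^2\boldsymbol{U}^T$ and $\boldsymbol{X}^T\boldsymbol{X}=\boldsymbol{V}\boldsymbol{\Sigma}^2\boldsymbol{V}^T$, and Eckart--Young--Mirsky then identifies $\sigma_1\boldsymbol{u}_1\boldsymbol{v}_1^T$ as the optimal rank-1 reconstruction. Your Step~2 and your sign/gap analysis make explicit what the paper leaves implicit: because $\boldsymbol{X}$ and $-\boldsymbol{X}$ share both images, the two eigenproblems alone cannot fix the scale and sign of $\hat{\boldsymbol{X}}$, and your $\boldsymbol{X}$-dependent map $\hat u\hat u^T\boldsymbol{X}\hat v\hat v^T$ resolves this, provided $\hat u,\hat v$ are normalized to unit length as your Step~1 ``up to scale'' caveat requires.
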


This process allows for the use of various association matrices. Instead of $\boldsymbol{X}^T\boldsymbol{X}$, one can construct a matrix $\boldsymbol{A}$ whose entries $A_{ij}$ represent a chosen measure of association (e.g., Pearson, tetrachoric, distance correlation) between columns $i$ and $j$ of $\boldsymbol{X}$. The principal eigenvector of $\boldsymbol{A}$ serves as the loading vector $\boldsymbol{v}$, which could also then be used to generate scores $\boldsymbol{u}$ (e.g., via regression) and subsequently reconstruct $\boldsymbol{\hat{X}}$.

\subsubsection{Rank--1 recovery and the necessity of dual (row/column) structure}

\begin{proposition}[Row/column duality for exact rank--$k$ matrices]
\label{prop:rowcol-duality}
Let $S\in\mathbb{R}^{n\times p}$ have rank $k$ with compact SVD $S=U\Sigma V^\top$, where $U\in\mathbb{R}^{n\times k}$ and $V\in\mathbb{R}^{p\times k}$ have orthonormal columns and $\Sigma\in\mathbb{R}^{k\times k}$ is diagonal with positive entries. Then:
\begin{enumerate}\itemsep2pt
\item The rank--$k$ eigenspaces of $SS^\top$ and $S^\top S$ are $\mathrm{span}(U)$ and $\mathrm{span}(V)$, respectively.
\item $S$ is uniquely determined by $(SS^\top,\, S^\top S)$ up to the usual orthogonal indeterminacy: if $S=\widetilde U \Sigma \widetilde V^\top$ is another SVD, then $\widetilde U=UQ$, $\widetilde V=VQ$ for some orthogonal $Q\in\mathbb{R}^{k\times k}$. [Proof \ref{prf:rcdual}]
\end{enumerate}
\end{proposition}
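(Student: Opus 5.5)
The plan is to work entirely from the compact SVD $S=U\Sigma V^\top$, using $U^\top U=V^\top V=I_k$ and $\Sigma\succ 0$ diagonal.

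\textbf{Part 1.} First compute
\[
SS^\top=U\Sigma V^\top V\Sigma U^\top=U\Sigma^2U^\top,\qquad S^\top S=V\Sigma^2V^\top.
\]
For $x\in\mathrm{span}(U)$, write $x=Ua$. Then $SS^\top x=U\Sigma^2 a$, so $SS^\top$ maps $\mathrm{span}(U)$ into itself with eigenvalues $\sigma_i^2>0$. For $x\perp\mathrm{span}(U)$ we have $U^\top x=0$, hence $SS^\top x=0$. Therefore $\mathrm{span}(U)$ is exactly the span of the eigenvectors of $SS^\top$ with nonzero eigenvalues, which is the rank--$k$ eigenspace. The same argument applied to $S^\top S$ gives $\mathrm{span}(V)$. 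This part is routine.

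\textbf{Part 2.} Suppose $S=\widetilde U\Sigma\widetilde V^\top$ is a second compact SVD.
\begin{enumerate}
\item Since $\Sigma$ is invertible, $\mathrm{col}(S)=\mathrm{span}(\widetilde U)=\mathrm{span}(U)$. Both matrices have orthonormal columns, so $\widetilde U=UQ_1$ with $Q_1=U^\top\widetilde U$ orthogonal.
\item By the same reasoning on the row space, $\widetilde V=VQ_2$ with $Q_2$ orthogonal.
\item Substituting into $U\Sigma V^\top=UQ_1\Sigma Q_2^\top V^\top$ and multiplying by $U^\top$ on the left and $V$ on the right gives $Q_1\Sigma Q_2^\top=\Sigma$.
\item It remains to show $Q_1=Q_2$. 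Compute
\[
\Sigma^2=(Q_1\Sigma Q_2^\top)(Q_2\Sigma Q_1^\top)=Q_1\Sigma^2Q_1^\top,
\]
so $Q_1$ commutes with $\Sigma^2$.
\item Because $\Sigma$ is the unique positive square root of $\Sigma^2$, it is a polynomial in $\Sigma^2$. Hence $Q_1$ also commutes with $\Sigma$.
\item The relation $Q_1\Sigma Q_2^\top=\Sigma$ then becomes $\Sigma Q_1Q_2^\top=\Sigma$. Invertibility of $\Sigma$ yields $Q_1=Q_2=:Q$.
\end{enumerate}
Note that $Q$ commutes with $\Sigma$. Thus $Q$ is block-diagonal across repeated singular values, and reduces to a diagonal sign matrix when the singular values are distinct.

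\textbf{Main obstacle.} The delicate point is the phrase ``$S$ is uniquely determined by $(SS^\top,S^\top S)$.'' The two Gram matrices determine $\Sigma$ (as square roots of the nonzero eigenvalues), $\mathrm{span}(U)$ and $\mathrm{span}(V)$. They do \emph{not} determine the coupling between the left and right bases. For example, $S$ and $-S$, or more generally $UQ_1\Sigma Q_2^\top V^\top$ with $Q_1\neq Q_2$ commuting with $\Sigma$, have identical Gram matrices. In the rank--1 case this residual ambiguity is exactly a global sign.

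I would therefore state the result precisely as follows. The pair of images determines $S$ up to replacing $S$ by $U\Sigma Q V^\top$, where $Q$ is orthogonal and commutes with $\Sigma$. Given $S$ itself, the SVD factors are unique up to the common $Q$ established in Part 2. To fix the remaining ambiguity I would add a single alignment statistic, such as the sign of $\widehat u^\top X\widehat v$. This mirrors how the Refactor reconstruction $\widehat u\widehat v^\top$ must resolve sign indeterminacy in practice.
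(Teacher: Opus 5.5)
Your proof is correct, and for Part~2 it is more complete than the paper's.

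\textbf{Part 1.} This is the same computation the paper gives: $SS^\top=U\Sigma^2U^\top$ and $S^\top S=V\Sigma^2V^\top$.

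\textbf{Part 2.} The paper only asserts that the Gram eigenspaces are $U$ and $V$ ``up to a shared orthogonal transform $Q$.'' That sharing cannot come from the Gram matrices, because $(SS^\top,S^\top S)$ is invariant under $S\mapsto U\Sigma WV^\top$ for any orthogonal $W$ commuting with $\Sigma$ (e.g.\ $S\mapsto -S$).

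You instead obtain separate $Q_1=U^\top\widetilde U$ and $Q_2=V^\top\widetilde V$ from the column and row spaces. You then use the identity $U\Sigma V^\top=\widetilde U\Sigma\widetilde V^\top$, which is exactly the information beyond the Gram pair that the paper's sketch never invokes. From it you get $Q_1\Sigma Q_2^\top=\Sigma$, then $Q_1\Sigma^2=\Sigma^2Q_1$, then $Q_1\Sigma=\Sigma Q_1$, and hence $Q_1=Q_2$. As a bonus, $Q$ commutes with $\Sigma$: it is block-diagonal over tied singular values and a sign matrix when they are distinct. The paper's version buys only brevity, at the cost of leaving its key step unjustified.

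\textbf{The headline clause.} Your caveat is also correct. The pair $(SS^\top,S^\top S)$ determines $S$ only up to $U\Sigma WV^\top$ with $W$ orthogonal and commuting with $\Sigma$, which is a global sign when $k=1$. So the provable content is the ``another SVD'' clause, and ``uniquely determined by $(SS^\top,S^\top S)$'' should be weakened as you propose.

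This is not cosmetic here. Refactor forms $\widehat u$ and $\widehat v$ from two separate images, and Theorem~\ref{thm:refactor-consistency} permits distinct $Q_r,Q_c$ while assuming exactness only for a common $Q$. An alignment step, such as fixing the sign of $\widehat u^\top X\widehat v$, is therefore genuinely required.
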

% \begin{toappendix}
\begin{proof}\label{prf:rcdual}
(1) is standard: $SS^\top = U\Sigma^2 U^\top$ and $S^\top S = V\Sigma^2 V^\top$.
(2) follows because the nonzero eigenvalues of $SS^\top$ and $S^\top S$ coincide and equal the squared singular values; their associated eigenspaces are $U$ and $V$ up to a shared orthogonal transform $Q$. Substituting gives the stated indeterminacy class.
\end{proof}
% \end{toappendix}
Proposition~\ref{prop:rowcol-duality} motivates Refactor’s defining move: any claim that ``a single factor explains the data'' is simultaneously a claim about \emph{row structure} (persons/observations align along one latent direction) and \emph{column structure} (items/features align along one latent direction). Image-only diagnostics often quantify only the column-side claim (e.g., via $X^\top X$), \textbf{\textit{while Refactor insists on reconstructing $X$ using both.}}

\subsubsection{Refactor fit as a recoverability functional}

Let $\mathcal{M}_k$ denote a class of rank--$k$ reconstructions induced by a chosen image, estimator, and refactoring map. Define a \emph{Refactor functional} by
\begin{equation}
\mathrm{RF}_m(k;X)\;=\; m(X,\widehat X_k),
\qquad \widehat X_k\in\mathcal{M}_k,
\label{eq:app-refactor-functional}
\end{equation}
where $m(\cdot,\cdot)$ may be a loss (smaller is better) or an association (larger is better). The paper instantiates $m$ with predictive (AUC, cross-entropy), order-preserving (Kendall $\tau$, isotonic $R^2$), geometric (matrix cosine), and dependence-sensitive (bias-corrected distance correlation, generalized coefficient of determination, Ramsay/Yanai coefficients) metrics.

\begin{theorem}[Refactor consistency under correct low-rank structure]
\label{thm:refactor-consistency}
Assume $X=S+E$ where $S$ has rank $k$ and $\|E\|_F$ is small. Suppose the estimator $\textsf{Z}$ applied to the chosen images yields loadings whose column spaces converge to the true row/column spaces of $S$ in the sense that there exist orthogonal $Q_r,Q_c\in\mathbb{R}^{k\times k}$ such that
\[
\|B_k - U Q_r\|_F \to 0,\qquad \|F_k - V Q_c\|_F \to 0,
\]
and suppose the refactoring map $\mathcal{R}_k$ is continuous in $(B_k,F_k)$ and exact on the noiseless target $S$ up to the same orthogonal indeterminacy (i.e., $\mathcal{R}_k(S;UQ,VQ)=S$ for all orthogonal $Q$). Then $\widehat X_k \to S$ in Frobenius norm as $\|E\|_F\to 0$, hence any continuous metric $m$ satisfies $m(X,\widehat X_k)\to m(S,S)$. [Proof \ref{prf:refcons}]
\end{theorem}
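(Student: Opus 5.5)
The plan is a perturbation and continuity argument organised around the orthogonal indeterminacy of Proposition~\ref{prop:rowcol-duality}.

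\textbf{Step 1: reduce to a common rotation.} Write $\widehat X_k=\mathcal{R}_k(X;B_k,F_k)$. The hypotheses provide possibly different rotations $Q_r$ and $Q_c$, but exactness is stated only for a shared rotation $Q$. For an exact rank-$k$ matrix, Proposition~\ref{prop:rowcol-duality}(2) says the admissible SVD factor pairs are $(UQ,VQ)$ with the same $Q$. I will therefore interpret the convergence hypothesis as holding with a common $Q$, or else assume $\mathcal{R}_k$ is invariant under separate rotations of its two arguments. If $Q_r\neq Q_c$, the hypothesis as literally stated does not force $\widehat X_k\to S$; for example, a sign flip on one side when $k=1$ yields $-S$. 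I would make this caveat explicit in the proof rather than hide it.

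\textbf{Step 2: continuity at the noiseless point.} The map $\mathcal{R}_k$ depends on $(X,B_k,F_k)$, so I will assume joint continuity in $(X,B_k,F_k)$ at $(S,UQ,VQ)$; in the natural instances, such as $\widehat X_k=B_k\Sigma F_k^\top$ with $\Sigma$ estimated from $X$, this holds. Note that $\|X-S\|_F=\|E\|_F\to0$ and $\|B_k-UQ\|_F+\|F_k-VQ\|_F\to0$. If the rotations $Q$ vary along the sequence, I will pass to subsequences using compactness of $O(k)$, which gives $Q_n\to Q^*$. Joint continuity then gives $\widehat X_k\to\mathcal{R}_k(S;UQ^*,VQ^*)=S$ along every subsequence, and hence along the full sequence. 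This subsequence-and-compactness argument is the main technical point, since it is what lets us avoid needing uniform continuity over the rotations.

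\textbf{Step 3: transfer to the metric.} By the triangle inequality, $\|X-S\|_F\to0$ and $\|\widehat X_k-S\|_F\to0$, so $(X,\widehat X_k)\to(S,S)$ in the product Frobenius topology. Continuity of $m$ at $(S,S)$ then gives $m(X,\widehat X_k)\to m(S,S)$.

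I would close with a remark on scope. Metrics such as isotonic $R^2$ or AUC are only piecewise continuous, because of ties and rank dependence. For these, the conclusion applies exactly when $m$ is continuous at $(S,S)$, for instance when $S$ has no ties among the relevant entries.
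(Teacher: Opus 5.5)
Your proposal is correct and follows the same route as the paper's proof: continuity of $\mathcal{R}_k$ at the noiseless point gives $\widehat X_k\to S$, and continuity of $m$ at $(S,S)$ then gives the metric limit. The paper's two-line argument silently assumes three things. It uses a common rotation, writing $\mathcal{R}_k(S;UQ,VQ)=S$ even though the hypothesis only supplies separate $Q_r,Q_c$; your sign-flip example shows the statement as literally written can fail without this. It uses continuity in $X$ as well as in $(B_k,F_k)$. And it ignores rotations that vary along the sequence. Your shared-rotation repair, your joint-continuity assumption, and your compactness-of-$O(k)$ subsequence argument supply exactly these missing details, so they are a more rigorous version of the same proof rather than a different approach.
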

% \begin{toappendix}
\begin{proof}\label{prf:refcons}
Write $\widehat X_k=\mathcal{R}_k(X;B_k,F_k)$. By continuity of $\mathcal{R}_k$ and convergence of $(B_k,F_k)$ to $(U,V)$ up to orthogonal transforms, $\widehat X_k$ converges to $\mathcal{R}_k(S;UQ,VQ)=S$ as noise vanishes. Continuity of $m$ completes the claim.
\end{proof}
% \end{toappendix}
Theorem~\ref{thm:refactor-consistency} provides the basic inferential reading: when the data are truly well described by a rank--$k$ correlational signal, Refactor scores approach their optimal values. Conversely, when Refactor scores remain poor even as image-based fit looks adequate, the evidence points to a mismatch between the \emph{associational image} and the \emph{data-level signal} (e.g., nonlinear structure not captured by the chosen association, violations of conditional independence, mixtures, or strong idiosyncratic effects).

\subsection{Verifactor: bi-cross-validated refactoring for random rows and random columns}
\label{app:verifactor-proof}

% \subsection{Related Work}
The original extension from Nonnegative Matrix Factorization (NMF) to Factor analysis was motivated for the removal of
unwanted variation from microarray experiments \citep{owen_bi-cross-validation_2015}. Factor analyses themselves originally stem from the (problematically motivated \cite{oconnor_spearman_2021}) statistical methods from \cite{spearman_general_1904}. Suppose that the ideal data for estimating a model consists of a sample of (X, Z), but the researcher only observes X. Our objective is to identify the latent variable(s) Z under the most general conditions.

We identify latent variables in nonparametric models with nonlinear generating processes based on the so-called Hu-Schennach Theorem \citep{zheng_nonparametric_2025,hu_instrumental_2008,hu_econometrics_2017}, even when confronted with non-negligible noise.

\subsubsection{Self-consistency of rank--$k$ matrices under block holdout}

Partition (after row/column permutation) a matrix $X$ into blocks
\begin{equation}
X=
\begin{pmatrix}
A & B\\
C & D
\end{pmatrix},
\label{eq:app-block-partition}
\end{equation}
where $A\in\mathbb{R}^{r\times s}$ is held out, $D\in\mathbb{R}^{(n-r)\times(p-s)}$ is held in, and $B,C$ are the off-diagonal blocks.

\begin{lemma}[Self-consistency of exact rank--$k$ structure {\citep{owen_bi-cross-validation_2009}}]
\label{lem:self-consistency}
If $X$ has rank $k$ and $D$ has rank $k$, then
\begin{equation}
A \;=\; B\,D^\dagger\,C,
\label{eq:app-self-consistency}
\end{equation}
where $D^\dagger$ denotes the Moore--Penrose pseudoinverse. [Proof \ref{prf:mpinv}]
\end{lemma}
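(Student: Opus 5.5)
The plan is to exploit the fact that, since $D$ already has the full rank $k$ of $X$, the block $D$ ``sees'' every direction of the row space and column space of $X$. Concretely, write a rank factorization $X = L R^\top$ with $L\in\mathbb{R}^{n\times k}$ and $R\in\mathbb{R}^{p\times k}$, both of full column rank $k$. Partition conformably with \eqref{eq:app-block-partition} as $L = \begin{pmatrix} L_1 \\ L_2\end{pmatrix}$ and $R = \begin{pmatrix} R_1 \\ R_2\end{pmatrix}$, with $L_1\in\mathbb{R}^{r\times k}$ and $R_1\in\mathbb{R}^{s\times k}$. This gives
\[
A = L_1R_1^\top,\quad B = L_1R_2^\top,\quad C = L_2R_1^\top,\quad D = L_2R_2^\top .
\]

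The first step is to deduce from $\mathrm{rank}(D)=k$ that both $L_2\in\mathbb{R}^{(n-r)\times k}$ and $R_2\in\mathbb{R}^{(p-s)\times k}$ have full column rank $k$. This holds because $\mathrm{rank}(L_2R_2^\top)\le\min(\mathrm{rank} L_2,\mathrm{rank} R_2)\le k$, so equality forces both ranks to equal $k$. Consequently $L_2^\top L_2$ and $R_2^\top R_2$ are invertible, and the pseudoinverses take the explicit forms $L_2^\dagger=(L_2^\top L_2)^{-1}L_2^\top$ and $(R_2^\top)^\dagger = R_2(R_2^\top R_2)^{-1}$. These satisfy $L_2^\dagger L_2 = I_k$ and $R_2^\top (R_2^\top)^\dagger = I_k$.

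The second step, which I expect to be the only real obstacle, is establishing the product rule $D^\dagger = (L_2R_2^\top)^\dagger = (R_2^\top)^\dagger L_2^\dagger$. This rule fails for general products, but it holds for a full-column-rank times full-row-rank factorization. I would verify it directly by checking the four Penrose conditions for the candidate $G = R_2(R_2^\top R_2)^{-1}(L_2^\top L_2)^{-1}L_2^\top$. One computes $DG = L_2(L_2^\top L_2)^{-1}L_2^\top$ and $GD = R_2(R_2^\top R_2)^{-1}R_2^\top$, which are symmetric orthogonal projectors. From these, $DGD = D$ and $GDG = G$ follow immediately.

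The final step is substitution:
\[
BD^\dagger C = L_1R_2^\top\,(R_2^\top)^\dagger L_2^\dagger\,L_2R_1^\top = L_1 I_k I_k R_1^\top = A ,
\]
which proves \eqref{eq:app-self-consistency}.

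Alternatively, one could avoid the product rule entirely. Since the rows of $C$ lie in the row space of $X$ restricted appropriately, one can show that $\mathrm{col}(C)\subseteq\mathrm{col}(D)$ and $\mathrm{row}(B)\subseteq\mathrm{row}(D)$. Then $DD^\dagger C = C$ and $BD^\dagger D = B$. Writing $A = L_1R_1^\top$ with $L_1 = B(R_2^\top)^{\dagger}$, this again yields the identity. I would keep this as a fallback in case the Penrose check becomes messy.
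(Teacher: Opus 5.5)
Your proposal is correct and follows the paper's proof essentially step for step. Both take a rank factorization of $X$ partitioned conformably with the blocks, use full rank of the $D$-factors and the reverse-order law $D^\dagger=(R_2^\top)^\dagger L_2^\dagger$, and finish by direct substitution; you additionally verify that law via the four Penrose conditions (the paper just cites the MacDuffee identity) and deduce from $\mathrm{rank}(D)=k$ that $L_2$ and $R_2$ have full column rank, which the paper only asserts.
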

% \begin{toappendix}
\begin{proof}\label{prf:mpinv}
Since $\mathrm{rank}(X)=k$ and $\mathrm{rank}(D)=k$, there exist factorizations
$X=LR$ with $L\in\mathbb{R}^{n\times k}$, $R\in\mathbb{R}^{k\times p}$ such that
$D=L_2R_2$ where $L_2$ and $R_2$ are the submatrices corresponding to the rows/cols of $D$; both are full column/row rank $k$. Then $D^\dagger = R_2^\dagger L_2^\dagger$ (MacDuffee reverse-order identity under full-rank conditions), so
\[B D^\dagger C = (L_1R_2)(R_2^\dagger L_2^\dagger)(L_2R_1)=L_1R_1=A.\]
\end{proof}
% \end{toappendix}
Lemma~\ref{lem:self-consistency} is the foundational guarantee that makes Verifactor interpretable: under an exact rank--$k$ signal model, the held-out block is \emph{determined} by the held-in blocks via a purely algebraic relation. Thus, prediction error in Verifactor is a direct diagnostic for departures from the rank--$k$ hypothesis (and/or instability due to noise).

\subsubsection{Verifactor as out-of-sample recoverability for factor structure}

Let $\Pi$ denote a random partition of rows and columns into folds. For each fold $(i,j)$, let $(A_{ij},B_{ij},C_{ij},D_{ij})$ be the corresponding block decomposition. Verifactor computes loadings (or low-rank structure) on held-in blocks, constructs a refactored predictor $\widehat A_{ij}$ for the held-out block, and aggregates fit:
\begin{equation}
\mathrm{VF}_m(k;X)
=
\frac{1}{|\Pi|}
\sum_{(i,j)\in\Pi}
m\!\left(A_{ij},\ \widehat A_{ij}^{(k)}\right),
\qquad
\widehat A_{ij}^{(k)} := \mathcal{V}_k(B_{ij},C_{ij},D_{ij}),
\label{eq:app-verifactor-functional}
\end{equation}
where $\mathcal{V}_k$ is a BCV-compatible predictor (e.g., $\widehat A = B(\widehat D^{(k)})^\dagger C$, or a model-specific analog using estimated row/column loadings from $D$). In our analysis, rank is fixed at $k=1$ when the inferential target is unidimensionality, which avoids known monotonicity pathologies of one-way deletion CV for matrix factorization while preserving the interpretability of the hypothesis test.

\begin{theorem}[Unbiasedness of BCV residual energy under random row/column sampling]
\label{thm:bcv-unbiased}
Consider a two-way random effects regime where rows and columns are sampled, and let $X=S+E$ with $\mathbb{E}[E\mid S]=0$ and entrywise variances possibly depending on row/column (heteroscedastic). Let $\widehat A_{ij}^{(k)}$ be constructed from held-in blocks only. Then for squared-error loss $m(A,\widehat A)=\|A-\widehat A\|_F^2$,
\begin{equation}
\mathbb{E}_{\Pi,E}\Big[\|A_{ij}-\widehat A_{ij}^{(k)}\|_F^2\Big]
=
\mathbb{E}_{\Pi}\Big[\|S_{A,ij}-\widehat A_{ij}^{(k)}\|_F^2\Big]
+
\mathbb{E}_{\Pi}\Big[\|E_{A,ij}\|_F^2\Big],
\label{eq:app-bcv-decomp}
\end{equation}
i.e., the expected Verifactor error decomposes into approximation error for the signal plus irreducible held-out noise. [Proof \ref{prf:bcv}]
\end{theorem}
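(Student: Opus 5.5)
Write $A_{ij}=S_{A,ij}+E_{A,ij}$, where $S_{A,ij}$ and $E_{A,ij}$ are the restrictions of $S$ and $E$ to the held-out rows $\mathcal I$ and columns $\mathcal J$ of fold $(i,j)$. Expand the squared Frobenius loss as
\[
\|A_{ij}-\widehat A_{ij}^{(k)}\|_F^2=\|S_{A,ij}-\widehat A_{ij}^{(k)}\|_F^2+\|E_{A,ij}\|_F^2+2\,\langle S_{A,ij}-\widehat A_{ij}^{(k)},\,E_{A,ij}\rangle_F .
\]
The statement then amounts to showing that the cross term has zero expectation. After that, take $\mathbb{E}_{\Pi,E}$ of the remaining two terms and use the tower property over $\Pi$. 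The partition is drawn independently of $E$ given $S$, so the conditional expectation given $\Pi$ is legitimate.

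The cross term is handled by conditioning on $(\Pi,S)$ and on the held-in noise $E_{\mathcal I^c\cdot}\cup E_{\cdot\mathcal J^c}$, that is, on everything except the held-out block. By the definition of $\mathcal V_k$ in Eq.~\eqref{eq:app-verifactor-functional}, $\widehat A_{ij}^{(k)}$ is a measurable function of $(B_{ij},C_{ij},D_{ij})$ only. This is exactly the leakage-control property stressed in Section~\ref{sec:why_matrix_prediction_main}. Under this conditioning, $S_{A,ij}-\widehat A_{ij}^{(k)}$ is fixed, so the cross term's conditional expectation equals $2\langle S_{A,ij}-\widehat A_{ij}^{(k)},\mathbb{E}[E_{A,ij}\mid S,\Pi,E_{\text{held-in}}]\rangle_F$. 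Heteroscedasticity is harmless here, since only first moments enter the cross term. The entrywise variances simply remain inside $\mathbb{E}_\Pi\|E_{A,ij}\|_F^2$.

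The main obstacle is that the stated hypothesis $\mathbb{E}[E\mid S]=0$ does not by itself give $\mathbb{E}[E_{A}\mid S,E_{\text{held-in}}]=0$. For example, noise correlated within a row links $E_A$ to $E_B$. I would make the needed condition explicit: the held-out noise must be mean-independent of the held-in noise given $S$. Independence of the entries of $E$ given $S$ (independent but possibly heteroscedastic noise) suffices, and so does the weaker row/column-block exchangeable condition under which rows and columns are sampled. I would state this as the reading of ``two-way random effects regime'', perhaps invoking \citet{mccullagh_resampling_2000} for the resampling justification. I would also remark that if the noise has row or column random effects, those effects belong in $S$, not in $E$. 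With this condition in place, the conditional mean of the cross term vanishes and integrating over $\Pi$ yields Eq.~\eqref{eq:app-bcv-decomp}.

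Finally, I would note that $\mathbb{E}_\Pi[\|S_{A,ij}-\widehat A_{ij}^{(k)}\|_F^2]$ still carries an expectation over the held-in noise through $\widehat A$; the statement's notation suppresses this. Lemma~\ref{lem:self-consistency} identifies when the approximation term vanishes: if $\mathrm{rank}(S)=k=\mathrm{rank}(S_D)$ and $E=0$ off the held-out block, then $\widehat A=BD^\dagger C=S_A$. In that case the Verifactor error reduces to pure irreducible noise, which supports the interpretation of the decomposition.
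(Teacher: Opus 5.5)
Your proof is the same as the paper's: expand $\|A_{ij}-\widehat A_{ij}^{(k)}\|_F^2$ into signal error, held-out noise energy and a cross term, then kill the cross term by conditioning on $(S,\Pi)$ and the held-in blocks, using that $\widehat A_{ij}^{(k)}$ depends only on $(B_{ij},C_{ij},D_{ij})$. The obstacle you flag is genuine and your fix is the right one: the paper sets $\mathbb{E}[E_{A,ij}\mid S,\Pi,\text{held-in}]=0$ citing only $\mathbb{E}[E\mid S]=0$, which does not imply it (noise shared along a row makes $E_{A,ij}$ predictable from $E_{B,ij}$), so the condition you state explicitly, that held-out noise is mean-independent of held-in noise given $S$, is exactly the hidden hypothesis the paper's one-line argument needs.
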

% \begin{toappendix}
\begin{proof}\label{prf:bcv}
Because $\widehat A_{ij}^{(k)}$ is measurable w.r.t.\ the $\sigma$-field generated by held-in blocks and $E$ has mean zero conditional on $S$, the cross-term has zero expectation:
\[
\mathbb{E}\langle E_{A,ij},\, S_{A,ij}-\widehat A_{ij}^{(k)}\rangle
=
\mathbb{E}\Big[\mathbb{E}[\langle E_{A,ij},\, S_{A,ij}-\widehat A_{ij}^{(k)}\rangle\mid S,\Pi,\text{held-in}]\Big]=0.
\]
Expanding $\|A-\widehat A\|_F^2=\|S_A-\widehat A\|_F^2+\|E_A\|_F^2+2\langle E_A,S_A-\widehat A\rangle$ yields \eqref{eq:app-bcv-decomp}.
\end{proof}
% \end{toappendix}
Theorem~\ref{thm:bcv-unbiased} explains why Verifactor is especially well aligned with the random-rows/random-columns inferential target: it evaluates the \emph{generalizable} part of the rank--$k$ structure while cleanly separating irreducible noise. This stands in contrast to in-sample reconstruction, where overfitting and axis-specific dependence can inflate fit. Importantly for unidimensionality, Verifactor analysis can illustrate the extent to which the hypothesized relationship is truly unidimensional. Interestingly, unidimensionality as measured by traditional metrics, based on a covariance or correlation matrix, has no significant empirical relationship with this more honest measure of dimenionality (see Figure \ref{fig:fafit_vs_refact}).

\subsection{Correlational appropriateness as a testable modeling choice}
\label{app:correlational-appropriateness}

A distinctive benefit of the Refactor/Verifactor framing is that it treats the \emph{choice of association matrix} as a scientifically meaningful modeling decision. Let $\mathcal{A}^{(1)},\ldots,\mathcal{A}^{(M)}$ denote candidate association operators (Pearson/$\phi$, Kendall $\tau$, tetrachoric, Yule's $Q$, Cram\'er’s $V$, $H_{ij}$, $\kappa$, \dots). Each operator induces a different notion of ``one-factor'' structure because it defines what it means for two variables to be associated.

\begin{proposition}[Association selection via predictive refactoring]
\label{prop:assoc-selection}
Fix $k=1$ (unidimensionality). For each association operator $\mathcal{A}^{(m)}$, construct $\widehat X^{(m)}$ (Refactor) or $\widehat A^{(m)}_{ij}$ (Verifactor) and compute a common set of metrics $\{m_j\}$. If an operator $\mathcal{A}^{(m^\star)}$ better captures the signal-of-interest in $X$ than alternatives, then it will simultaneously (i) increase out-of-sample Verifactor association metrics (e.g., dCor, Kendall $\tau$, AUC) and/or (ii) reduce out-of-sample loss (e.g., cross-entropy, Frobenius loss), relative to other $\mathcal{A}^{(m)}$. [Proof \ref{prf:assoc}]
\end{proposition}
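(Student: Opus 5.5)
The statement is informal: ``better captures the signal-of-interest'' is not yet defined. The first step is therefore to fix a definition under which the claim becomes close to a tautology plus one real lemma.

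\textbf{Definition.} Say $\mathcal{A}^{(m^\star)}$ better captures the signal than $\mathcal{A}^{(m)}$ if its expected held-out signal approximation error is smaller:
\[
\mathbb{E}_{\Pi}\bigl\|S_{A,\Pi}-\widehat A^{(m^\star)}_{\Pi}\bigr\|_F^2 \;\le\; \mathbb{E}_{\Pi}\bigl\|S_{A,\Pi}-\widehat A^{(m)}_{\Pi}\bigr\|_F^2 .
\]
Here $X=S+E$ as in Theorem~\ref{thm:bcv-unbiased}. For the Refactor (in-sample) version, the analogous definition uses $\|S-\widehat X^{(m)}\|_F$.

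\textbf{Step 1: Frobenius loss.} Apply Theorem~\ref{thm:bcv-unbiased} to each operator separately. The expected Verifactor Frobenius loss equals the signal approximation term plus $\mathbb{E}_\Pi\|E_{A,\Pi}\|_F^2$. The noise term does not depend on the association operator, because $\widehat A^{(m)}$ is built only from held-in blocks and the partitions $\Pi$ are shared across operators. Subtracting the two decompositions gives the ordering of expected out-of-sample loss directly from the definition. This establishes clause (ii) for the Frobenius loss.

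For cross-entropy, I would argue along the same lines. Its excess risk over the Bayes predictor $S$ (after calibration) is a Bregman divergence. That divergence is minimized, and is monotone along the relevant comparisons, as the predictor approaches $S$. I would state this only under a calibration map common to all operators, for example the isotonic map of Proposition~\ref{prop:isotonic_optimality}.

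\textbf{Step 2: association metrics.} For clause (i), I would transfer the loss ordering to association metrics through continuity and monotonicity. The key tool is Theorem~\ref{thm:refactor-consistency}: if $\widehat A^{(m^\star)}$ is closer to $S_A$ and the metric $m_j$ is continuous and maximized at $\widehat A = S_A$, then in the regime of small $\|E\|_F$ the metric value for $m^\star$ is closer to its optimum. For rank-based metrics (Kendall $\tau$, AUC), I would instead use that they depend only on the ordering of entries of $\widehat A$. An operator whose reconstruction has fewer discordant pairs relative to $S_A$ then scores higher in expectation.

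This is exactly where the statement's ``and/or'' is essential. Loss ordering does not imply association-metric ordering in general. I would therefore prove the disjunction: clause (ii) holds always under the definition, and clause (i) holds under the stated regularity conditions.

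\textbf{Main obstacle.} The hard step is clause (i) for nonlinear metrics such as dCor, AUC and $\tau$ away from the vanishing-noise limit. There, a Frobenius-closer reconstruction can misorder entries.

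My first attempt would be to restrict to reconstructions that agree up to a monotone transformation plus a perturbation. In that setting, a Frobenius bound controls the number of discordant pairs through a Markov-type counting argument, namely pairs whose signal gap exceeds twice the perturbation.

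If that attempt fails, I would make clause (i) explicitly conditional. It would then be an implication holding whenever the chosen metric is Frobenius-monotone on the class of reconstructions being compared. In that form the proposition reads as a consistency property of the selection procedure, not an unconditional theorem.
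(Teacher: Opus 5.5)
Your argument is correct for the proposition as stated, but it takes a genuinely different route from the paper. The paper's proof is purely definitional. It effectively reads ``better captures the signal'' as ``does better on the common held-out evaluation,'' and it only observes that the comparison is fair: every operator is scored on the same held-out blocks under the same partitions, and nothing is credited to in-sample image fit.

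You instead give ``better'' an independent, signal-level meaning: smaller expected held-out error against $S_A$. You then invoke the decomposition of Theorem~\ref{thm:bcv-unbiased}, in which the noise term $\mathbb{E}\|E_{A}\|_F^2$ is common to all operators precisely because the partitions are shared. This makes the paper's fairness observation do actual work. You obtain an equivalence: expected Verifactor Frobenius loss ranks operators exactly as expected signal-approximation error does. That discharges clause (ii), and hence the ``and/or.'' The paper itself later uses the theorem this way, in Proposition~\ref{prop:assoc-hypotheses}.

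The price is that your route exposes assumptions the one-line proof hides:
\begin{itemize}
\item What makes the cross term vanish is $\mathbb{E}[E_A\mid S,\Pi,\text{held-in blocks}]=0$, e.g.\ independent noise entries with $\Pi$ drawn independently of $E$. This is stronger than the $\mathbb{E}[E\mid S]=0$ written in the theorem.
\item The ordering holds only in expectation.
\item Your definition needs a strict inequality to give ``reduce.''
\item The argument is strictly out-of-sample. For $\widehat X^{(m)}$ the cross term $\langle E, S-\widehat X^{(m)}\rangle$ generally has nonzero mean, so your in-sample definition would not yield an analogous ordering.
\end{itemize}

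Two auxiliary steps do not go through and should be dropped or turned into explicit hypotheses.

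\emph{Cross-entropy.} The excess risk over the Bernoulli mean $S$ is a KL divergence, which Frobenius ordering does not control. Predicting $0$ where the mean is $0.1$ costs $0.01$ in squared error but infinite cross-entropy. Moreover, the isotonic map is fitted separately to each $\widehat A^{(m)}$, so it is not common to all operators. Fitting it on the held-out block would also break the out-of-sample property.

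\emph{Step 2.} Continuity plus a maximum at $S_A$ does not make a metric monotone in Frobenius distance. Theorem~\ref{thm:refactor-consistency} is a limit along one sequence, not a comparison of two reconstructions.
\begin{itemize}
\item Scale-invariant metrics give a clean counterexample. Let $W$ lift one entry of $S_A$ past another. For large $c$, $cS_A$ is farther from $S_A$ in Frobenius norm than $S_A+W$ is. Yet $cS_A$ has cosine $1$ and perfect rank agreement with $S_A$, while $S_A+W$ misorders a pair. The paper's loadings carry exactly this arbitrary scale.
\item Raw discordant-pair counts also do not order expected AUC or $\tau$. Under independent Bernoulli entries, a pair ranked $s$ above $t$ contributes $S_s-S_t$ in expectation to concordance minus discordance. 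Discordance must therefore be weighted by the signal gap.
\end{itemize}
If you want clause (i) unconditionally for rank metrics, define ``better'' modulo monotone reparameterization (e.g.\ via gap-weighted discordance) rather than by raw Frobenius distance. Otherwise your conditional fallback is the honest statement.
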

% \begin{toappendix}
\begin{proof}\label{prf:assoc}
Immediate from definitions: all candidates are evaluated on the same held-out data under the same partitioning scheme; differences arise only through the induced low-rank structure and its predictive consequences. Since Verifactor uses held-out blocks, improvements cannot be attributed to in-sample fit to the image alone.
\end{proof}
% \end{toappendix}
Practically, Proposition~\ref{prop:assoc-selection} yields a principled answer to: \emph{``Is a correlational relationship the right signal/abstraction for this dataset? Which correlational notion or relationship is most appropriate for explaining shared signal?''} These questions are central in applied psychometrics (binary/ordinal items; rater-like features) and in machine learning (implicit feedback, pairwise preferences, weak supervision), yet is not addressed by classical image-only fit statistics because they condition on a single, often default, association choice of standard correlation. As an example, a Pearson correlation with for binary data treats the distance between 0 and 1 as a meaningful distance; we ask whether that is a meaningful assumption when representing the latent relationship.

\section{Evaluation Methods}
\label{app:evaluation_motivation}
\subsection{Measures of Predictive Fit}
The first class of metrics assesses predictive accuracy and rank-order fidelity. These include the Area Under the ROC Curve (AUC) for evaluating rank discrimination, an isotonic $R^2$ to quantify the proportion of variance explained under a monotonic transformation, and Kendall's $\tau$ (\cite{kendall_rank_1962}) to measure the preservation of rank-order agreement between observed and predicted values. We also assess the geometric alignment of the overall response patterns via the matrix cosine similarity, and from an information-theoretic perspective, we use binary cross-entropy to quantify information loss, a measure particularly sensitive to the model's ability to predict rare events.

\subsubsection{Other Measures of Structural and Ordinal Similarity}
These metrics assess the overall correspondence between the continuous-valued reconstruction and the binary ground truth.
\begin{itemize}
    \item \textbf{Matrix Cosine Similarity:} We measure the geometric alignment of the two matrices, treated as vectors in $\mathbb{R}^{N \times J}$. The cosine similarity, $\cos(\mathbf{X}, \mathbf{\widehat{X}}) = \frac{\langle \mathbf{X}, \mathbf{\widehat{X}} \rangle_F}{\|\mathbf{X}\|_F \|\mathbf{\widehat{X}}\|_F}$, is insensitive to global scaling and reflects the preservation of the data's overall geometry.
    \item \textbf{Area Under the ROC Curve (AUC):} By treating the elements of $\mathbf{\widehat{X}}$ as a continuous score to predict the binary elements of $\mathbf{X}$, the AUC quantifies the model's ability to correctly rank the order of 1s and 0s. An AUC of 1.0 signifies a perfect ordinal separation.
    \item \textbf{Kendall's Rank Correlation ($\tau$):} We compute Kendall's $\tau$ between the vectorized elements, $\text{vec}(\mathbf{X})$ and $\text{vec}(\mathbf{\widehat{X}})$, to provide a scale-robust measure of the ordinal agreement between the matrices. % by estimating whether lower reconstruction values also lower in .
\end{itemize}

\subsubsection{Measures of Nonlinear Dependence}
To specifically test for the strength of nonlinear relationships captured by our model, we use bias-corrected distance correlation, which detects any functional dependence between the isotonized percentile rankings \citep{heller_consistent_2013} of the variables. The isotonized rankings that minimize residual error in transformation are found by isotonic regression \citep{busing_monotone_2022} $\widehat{X} \rightarrow \widehat{X}^*\in [0,1]$, a transformation that simultaneously results in values that represent probabilities at each monotonic step.
\begin{itemize}
    \item \textbf{Mean Bias-corrected Squared Distance Correlation ($\operatorname{dCor}^2_n$):} We calculate the bias-corrected squared distance correlation \citep{szekely_energy_2013,szekely_energy_2017} between the matrices, $\operatorname{dCor}^2_n(
    \mathbf{X},\mathbf{\widehat{X}^*})$. This metric is zero if and only if the matrices are independent, providing a powerful test of association that is not limited to linear relationships. The bias-corrected measure is orientation-dependent, so we report the mean of this value and of the transposed estimate, $\operatorname{dCor}^2_n(
    \mathbf{X}^\top,{\mathbf{\widehat{X}^{*\top}}})$
    \item \textbf{Mean Bias-corrected (Squared) Partial Distance Correlation ($\operatorname{pdCor}$):} To determine if the reconstruction provides information beyond simple marginal effects, we compute the bias-corrected partial distance correlation \citep{szekely_partial_2014}. We control for a baseline rank-1 matrix $\mathbf{E}$ constructed from the outer product of the row and column marginal probabilities. We refer to this as the ``independence structure'' The resulting metric, $\operatorname{pdCor}(\mathbf{X},\mathbf{\widehat{X}^*};\mathbf{E})$, isolates the explanatory power of the dependence structure captured by the factor loadings.
\end{itemize}

\subsubsection{Probabilistic Goodness-of-Fit}
Finally, we evaluate the reconstruction from a probabilistic standpoint.
\begin{itemize}
    \item \textbf{Likelihood:} The entries of $\mathbf{\widehat{X}}$ are not probabilities.  Using the same isotonic regression used to find the non-decreasing function that best maps the values of $\mathbf{\widehat{X}}$ to a matrix of probabilities $\mathbf{\widehat{X}^*} \in [0, 1]^{N \times J}$. We then calculate the log-likelihood of the observed data $\mathbf{X}$ given these probabilities. To facilitate comparison across datasets of different sizes, we report the geometric mean of the likelihoods, which is equivalent to the exponentiated average per-cell log-likelihood and serves as a principled measure of model fit. This metric can help identify whether a rank-1 representation has a greater than chance probability to result in the original response matrix.
\end{itemize}

\subsection{Recoverability of the response matrix under a rank--1 associative hypothesis}
\label{app:evaluation_target}

Let $X\in\mathbb{R}^{n\times p}$ be observed responses (possibly binary/ordinal). For a fixed association operator $\mathcal{A}$ and a fixed rank $k=1$, Refactor/Verifactor define an estimand that is more primitive than ``fit to correlations'': the extent to which a \emph{rank--1 associative signal} supports \emph{matrix prediction} of $X$.

Formally, an association operator $\mathcal{A}$ induces a hypothesis class of reconstructions
\[
\mathcal{H}_{\mathcal{A}}
=
\Bigl\{
\widehat X = \mathcal{R}_1(X;B_1,F_1):
(B_1,F_1) = \textsf{Z}(\mathcal{A}_r(X),\mathcal{A}_c(X))
\Bigr\},
\]
and Verifactor evaluates the \emph{out-of-sample risk}
\begin{equation}
\mathcal{R}_{\mathcal{A}}(m)
\;:=\;
\mathbb{E}_{\Pi}\,\mathbb{E}\Bigl[
m\bigl(A_{\Pi},\widehat A_{\Pi}^{(\mathcal{A})}\bigr)
\Bigr],
\label{eq:app-assoc-risk}
\end{equation}
where $A_{\Pi}$ is the held-out block under a random row/column partition $\Pi$ and
$\widehat A_{\Pi}^{(\mathcal{A})}$ is predicted from held-in blocks using the rank--1 structure implied by $\mathcal{A}$. This creates a common evaluation currency across (i) association choices, (ii) estimators $\textsf{Z}$, and (iii) reconstruction metrics $m$.

\paragraph{Interpretation.}
Equation~\eqref{eq:app-assoc-risk} turns ``correlational appropriateness'' into a falsifiable claim: the correct associative abstraction is the one that supports stable prediction of held-out entries when both rows and columns are treated as random. This target is distinct from, and not implied by, image fit indices because those indices condition on the association image as sufficient statistics.

\begin{proposition}[Non-equivalence of image fit and recoverability]
\label{prop:non-equivalence}
There exist sequences of data matrices $X^{(t)}$ and rank--1 estimators $\widehat S^{(t)}$ such that image residuals vanish,
\[
\| \mathcal{A}(X^{(t)}) - \mathcal{A}(\widehat S^{(t)})\|_F \to 0,
\]
while out-of-sample recoverability does not improve, e.g.,
\[
\mathbb{E}\big[\|A_{ij}^{(t)}-\widehat A_{ij}^{(t)}\|_F^2\big]\not\to \mathbb{E}\big[\|E_{A,ij}^{(t)}\|_F^2\big],
\]
for BCV-held-out blocks $A_{ij}^{(t)}$ (definitions as in \S\ref{app:verifactor-proof}).
\end{proposition}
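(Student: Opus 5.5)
The plan is to build an explicit counterexample. It rests on one fact: the chosen association operator $\mathcal{A}$, Pearson/$\phi$ correlation, discards column location information. For every $t$ I would take the same signal plus vanishing noise:
\[
X^{(t)} = z c^\top + \mathbf{1}_n d^\top + E^{(t)},\qquad \|E^{(t)}\|_F\to 0 .
\]
Here $z\in\mathbb{R}^n$ is centered and nonconstant (so $z\not\parallel \mathbf{1}_n$), $c\in\mathbb{R}^p$ has nonzero entries, and $d\in\mathbb{R}^p$ is a vector of column offsets (``item difficulties'') that I fix with large magnitude and generic direction. The rank--1 estimator is $\widehat S^{(t)} = z c^\top$, the centered rank--1 part.

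The first step is the image claim. Column-wise Pearson correlation is invariant to adding $\mathbf{1}_n d^\top$ and to positive rescaling of columns. Hence $\mathcal{A}_c(zc^\top+\mathbf{1}_n d^\top)=\mathcal{A}_c(zc^\top)$, which is the exactly rank--1 matrix $\mathrm{sgn}(c)\mathrm{sgn}(c)^\top$. Correlation is continuous at matrices with nonconstant columns, so $\|\mathcal{A}_c(X^{(t)})-\mathcal{A}_c(\widehat S^{(t)})\|_F\to 0$. This gives the first display of Proposition~\ref{prop:non-equivalence}. A one-factor model also fits the image perfectly in the limit, so every image-based index reports ideal unidimensionality.

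The second step is the recoverability claim. Take any BCV fold with held-out rows $\mathcal{I}$ and columns $\mathcal{J}$, with $|\mathcal{I}|=r\ge 2$ and $|\mathcal{J}|=s$. The rank--1 predictor built from the correlation-derived loadings is $\widehat A = \alpha\, z_{\mathcal{I}} c_{\mathcal{J}}^\top$ for some scalar $\alpha$. It lives in $\{z_{\mathcal{I}} w^\top : w\in\mathbb{R}^s\}$ because the loadings carry no offset information. By Theorem~\ref{thm:bcv-unbiased}, the expected held-out error equals $\|S_A-\widehat A\|_F^2$ plus the noise energy $\mathbb{E}\|E_A\|_F^2\to 0$. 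Minimizing over $w$ column by column gives
\[
\|S_A-\widehat A\|_F^2 \;\ge\; \|P_{z_{\mathcal{I}}^\perp}\mathbf{1}_r\|_2^2\,\|d_{\mathcal{J}}\|_2^2 .
\]
This bound is strictly positive whenever $z_{\mathcal{I}}\not\parallel\mathbf{1}_r$ and $d_{\mathcal{J}}\neq 0$. Both conditions hold with positive probability over random partitions $\Pi$, and for all folds if $z$ has, say, distinct entries of both signs in every block. Averaging over $\Pi$ then gives a limit bounded away from the noise energy, which is the second display. Scaling $d$ makes the gap arbitrarily large while the image residual is unchanged.

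The main obstacle is the second step's premise. I must pin down that the Verifactor predictor $\widehat A^{(t)}$ really is a function of the image-derived loadings only, of the form $\alpha z_{\mathcal{I}}c_{\mathcal{J}}^\top$. If instead the completion rule $B D^\dagger C$ of Lemma~\ref{lem:self-consistency} were applied to the raw blocks, the offsets would be recovered and the counterexample would fail. I would handle this by stating the refactoring map explicitly as in \S\ref{sec:verifactor_def}: $\widetilde A=\hat u_{B}\hat v_{C}^\top$, with $\hat v_C$ the leading eigenvector of $\mathcal{A}_c(C)$, which is $\propto\mathrm{sgn}(c_{\mathcal{J}})$ in the limit. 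This is in the same $\{z_{\mathcal{I}}w^\top\}$ class provided $\hat u_B$ is taken as the centered row score. If the row image reintroduces a mean component, I would first try choosing $d$ orthogonal to $\mathrm{sgn}(c)$ and to $c$ restricted to each fold. That choice should keep the offset term outside the span of any rank--1 product built from the loadings.
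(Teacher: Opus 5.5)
Your counterexample takes a genuinely different route from the paper. The paper only sketches a ``pairwise-matched but non-factorable'' process: a two-class mixture, or a thresholded model whose association image matches a one-factor model while its higher-order dependence does not factorize. It then asserts that block completion fails. You instead give an explicit linear instance: Pearson's invariance to column offsets makes $\mathcal A_c(X)$ exactly rank~1, while the signal $S=zc^\top+\mathbf 1_n d^\top$ has rank~2. That is elementary and quantitative, although the mechanism it exploits is the missing intercept in $\widehat u\,\widehat v^\top$ rather than higher-order dependence. Step~1 is correct for $\mathcal A=\mathcal A_c$.

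In Step~2, the premise you flagged is indeed false for the paper's map $\widetilde A=\hat u_B\hat v_C^\top$ with a Pearson row image. The rows of $B$ centered over $\mathcal J^c$ are $z_i\tilde c+\tilde d$, with $\tilde c,\tilde d$ the centered $c_{\mathcal J^c},d_{\mathcal J^c}$. Hence $\hat u_B$ lies in $N^{-1}\mathrm{span}\{z_{\mathcal I},\mathbf 1_r\}$, where $N$ is the diagonal of row norms, and it approaches a constant vector when $\tilde d$ dominates. Your orthogonality patch does rescue that particular predictor on folds with $d_{\mathcal J}\perp\mathrm{sgn}(c_{\mathcal J})$, because then $\widehat A d_{\mathcal J}=0$ while $S_A d_{\mathcal J}\neq 0$. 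However, it needs fold-specific choices of $d$. It also says nothing about the canonical completion listed in \S\ref{app:verifactor-proof}, $\widehat A=B(\widehat D^{(1)})^\dagger C$, whose column factor $C^\top u_1$ generally contains a $d_{\mathcal J}$ component.

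The missing observation is that the form of $\widehat A$ does not matter, and your worry that raw-block completion would defeat the example is unfounded once $k=1$. The held-out signal block factors as $S_A=[z_{\mathcal I}\ \mathbf 1_r]\,[c_{\mathcal J}\ d_{\mathcal J}]^\top$. It has rank~2 whenever $r,s\ge 2$, $z_{\mathcal I}$ is nonconstant, and $c_{\mathcal J},d_{\mathcal J}$ are linearly independent; for generic $d$ (distinct ratios $d_j/c_j$) the last condition holds on every fold. By Eckart--Young, every rank-$\le 1$ predictor, however it is built from the held-in blocks, satisfies $\|S_A-\widehat A\|_F\ge\sigma_2(S_A)>0$. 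The triangle inequality then gives $\|A-\widehat A\|_F\ge\sigma_2(S_A)-\|E_A^{(t)}\|_F$, and since $\|E^{(t)}\|_F\to 0$ this yields the second display directly.

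This argument needs no appeal to Theorem~\ref{thm:bcv-unbiased}, whose cross-term step would also require the held-out noise to be conditionally mean-zero given the held-in blocks. It also needs no eigenvector continuity and no constraints on $d$. Only the exact rank-2 pseudoinverse of Lemma~\ref{lem:self-consistency} recovers $A$, and that is not a rank-1 Verifactor predictor.

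If $\mathcal A$ is meant to include the row image too, replace $\mathbf 1_n d^\top$ by $\gamma\mathbf 1_n\mathbf 1_p^\top$ with $\gamma\neq 0$. With all $z_i\neq 0$ and $c$ nonconstant with nonzero entries, both Pearson images of $X$ then coincide exactly with those of $zc^\top$. The same bound applies, because $S_A$ has rank~2 whenever $z_{\mathcal I}$ and $c_{\mathcal J}$ are nonconstant.
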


% \begin{toappendix}
\begin{proof}[Proof sketch]
Take $X^{(t)}$ generated by a non-factorable but pairwise-matched process: e.g., a mixture of two latent classes with opposite signed loadings, or a thresholded nonlinear model whose induced pairwise association matrix matches that of a one-factor model but whose higher-order dependencies do not factorize. Image-based estimators can reproduce $\mathcal{A}(X^{(t)})$ arbitrarily well, while prediction of held-out blocks fails because the conditional structure needed for completion is absent. BCV isolates this failure by construction because held-out entries are not used to estimate the completion operator.
\end{proof}
% \end{toappendix}

\subsubsection{Matrix prediction as the right evaluation primitive in two-way random-effects regimes}
\label{app:why_matrix_prediction}

In random observations $\times$ random variables regimes \citep{de_boeck_random_2008}, the scientific question concerns generalization to new rows and new columns. This makes the \emph{unit of resampling} (and thus the unit of evaluation) crucial. Entrywise resampling or prediction is generally misaligned because each entry shares its row and column with observed entries, enabling optimistic leakage.

\subsubsection{Axis-respecting generalization and leakage control}
\label{app:axis_respecting}

Let $\Pi=(\mathcal{I},\mathcal{J})$ denote a fold, with held-out rows $\mathcal{I}\subset[n]$ and held-out columns $\mathcal{J}\subset[p]$. The held-out block is
$A=X_{\mathcal{I},\mathcal{J}}$. Verifactor constructs $\widehat A$ from
$D=X_{\mathcal{I}^c,\mathcal{J}^c}$ (and, depending on the completion rule, also $B=X_{\mathcal{I},\mathcal{J}^c}$ and $C=X_{\mathcal{I}^c,\mathcal{J}}$) but \emph{never} from rows in $\mathcal{I}$ together with columns in $\mathcal{J}$.

\begin{proposition}[Two-way holdout targets new-row/new-column prediction]
\label{prop:two_way_targets}
Suppose $X$ is generated under a crossed design with random rows and random columns.
Then the Verifactor risk \eqref{eq:app-assoc-risk} estimates prediction performance for simultaneously new persons and new items, whereas entrywise CV targets prediction for new entries conditional on previously seen persons and items.
\end{proposition}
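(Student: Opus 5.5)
The plan is to formalize ``targets'' as identifying the estimand. Each cross-validation scheme's expected loss equals a prediction risk under a specific conditioning, and the two conditionings differ. I would model the crossed design as a separately exchangeable array in the sense of \citet{mccullagh_resampling_2000}. Concretely, $X_{ij}=h(\alpha_i,\beta_j,\epsilon_{ij})$ with row effects $\alpha_i$, column effects $\beta_j$ and cell noise $\epsilon_{ij}$ all mutually independent and i.i.d.\ within type. This is the Aldous--Hoover representation, which I take as given. A predictor $\widehat A$ is built by a fixed rule (the rank--1 refactoring map under operator $\mathcal{A}$) from whatever entries are held in.

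The first step is the Verifactor case. For a fold $\Pi=(\mathcal{I},\mathcal{J})$, the held-in data $(B,C,D)$ are a measurable function of $\{\alpha_i\}_{i\in\mathcal{I}^c}$, $\{\beta_j\}_{j\in\mathcal{J}^c}$, the $\epsilon$'s outside $A$, and also $\{\alpha_i\}_{i\in\mathcal{I}}$ through $B$ and $\{\beta_j\}_{j\in\mathcal{J}}$ through $C$. The key point is that the noise $\epsilon_{ij}$ for $(i,j)\in\mathcal{I}\times\mathcal{J}$ never enters. Because rows and columns are exchangeable and $\Pi$ is drawn independently of $X$, the distribution of $(A,B,C,D)$ matches that of a sample of $n-r$ training persons and $p-s$ training items, augmented with $r$ fresh persons and $s$ fresh items. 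For the fresh persons we observe responses only to training items, and for the fresh items we observe responses only from training persons. So $\mathbb{E}_\Pi\mathbb{E}[m(A_\Pi,\widehat A_\Pi)]$ in \eqref{eq:app-assoc-risk} equals the risk of predicting responses of new persons to new items, at training size $(n-r,p-s)$. For squared loss, Theorem~\ref{thm:bcv-unbiased} then gives the decomposition into signal error plus irreducible noise. That confirms the cell noise of the target block is independent of the predictor.

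The second step is the entrywise case. Masking a random set $\Omega$ of cells leaves, for each held-out $(i,j)$, other entries in row $i$ and column $j$ in the training data. The predictor is then measurable with respect to a $\sigma$-field containing information on $\alpha_i$ and $\beta_j$, beyond what the rest of the array provides. Its expected loss therefore equals $\mathbb{E}[\,m(X_{ij},\widehat X_{ij})\mid \alpha_i,\beta_j \text{ partially observed}]$. This is the risk for a new cell of an already-seen person and item, namely the conditional-on-$(\alpha,\beta)$ estimand.

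The third step is to show that the two estimands genuinely differ. I would construct a simple example, such as $h=\alpha_i\beta_j+\epsilon_{ij}$ with the rank--1 SVD predictor. Here entrywise risk is about $\sigma^2+O(1/n+1/p)$, while the new-row/new-column risk carries an extra term from estimating $\alpha_i,\beta_j$ out of only $p-s$ or $n-r$ off-block entries.

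I expect the main obstacle to be the first step. BCV uses $B$ and $C$, which do involve the held-out $\alpha_i$ and $\beta_j$. I must argue this matches the deployment scenario (a new person is scored on calibrated items) rather than leakage. The argument I plan is that the held-out quantity is the cell-level response, and its noise $\epsilon_{\mathcal{I}\times\mathcal{J}}$ is independent of everything used. I would state this precise conditioning explicitly as the meaning of ``targets.''
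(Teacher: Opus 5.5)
Your route differs from the paper's and is more careful. The paper's proof is a three-line sketch. Entrywise CV ``conditions on both'' held-out latents because the row and column are observed elsewhere. Two-way holdout ``conditions on neither'' because entire row and column subsets are excluded. Hence the conditioning sets, and so the estimands, differ. You instead fix a separately exchangeable model and identify the Verifactor estimand exactly: $r$ fresh persons seen only on the $p-s$ training items (through $B$), $s$ fresh items seen only on the $n-r$ training persons (through $C$), and a target block none of whose cells is ever observed. Your resolution of the ``main obstacle'' is the right one, and it corrects the sketch. The completion rules the paper uses, $\widetilde A=u_{B,1}v_{C,1}^\top$ or $BD^\dagger C$, do partially condition on $\alpha_{\mathcal{I}}$ and $\beta_{\mathcal{J}}$, and they must. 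A predictor built from $D$ alone carries no information about the fresh latents. Under exchangeability its Bayes version is the same value $\mathbb{E}[X_{ij}\mid D]$ for every cell of the block, so it cannot test rank--1 structure at all. The paper's version buys brevity; yours buys a well-defined estimand.

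The weak spot is Step 2. Its criterion is that other entries of row $i$ and column $j$ remain in training, so the predictor is measurable with respect to a $\sigma$-field informative about $\alpha_i,\beta_j$. By your own Step 1 this holds verbatim for BCV. So does the noise independence you draw from Theorem~\ref{thm:bcv-unbiased}, since $\epsilon_{ij}$ is masked under entrywise CV too. In fact, with $r=s=1$, BCV and leave-one-cell-out CV condition on exactly the same data (the matrix minus one cell) and differ only in the prediction rule. The separation therefore has to be stated through the observation pattern. Under BCV no held-out row is observed against any held-out column, and the held-out units play no part in the structure learned from $D$. A scattered mask instead leaves each target row and column observed against almost all columns and rows, and inside the joint fit.

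For the same reason, ``conditional on $(\alpha,\beta)$'' overstates the entrywise estimand. In Step 3 the BCV penalty is also not an extra term of a different order. In your additive example both excess risks are $O(1/n+1/p)$, and they differ only in the constant. Each row effect is informed by $p-s$ entries under BCV versus roughly $(1-\pi)p$ under masking at rate $\pi=|\Omega|/(np)$, and each column effect by $n-r$ versus roughly $(1-\pi)n$. That suffices to show the two risks are different functionals at finite $(n,p)$, which is all the proposition can claim. It also shows they coincide in the limit for a consistent rank--1 predictor. So phrase the conclusion as a finite-sample difference between two deployment scenarios, not as conditioning versus not conditioning on the latents.
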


% \begin{toappendix}
\begin{proof}[Proof sketch]
Under crossed random sampling, generalization requires conditioning on neither the held-out row latent nor the held-out column latent. Entrywise CV conditions on both (since the row and column are observed elsewhere), while two-way holdout removes both simultaneously by excluding entire row/column subsets. Therefore, the induced conditioning sets differ, leading to different estimands.
\end{proof}
% \end{toappendix}

This is the primary sense in which Verifactor is an \emph{improvement} over standard reconstruction assessments: it aligns evaluation with the intended generalization regime, not merely with computational convenience.

\subsubsection{Self-consistency explains why completion is a direct test of rank--1 structure}
\label{app:self_consistency_discussion}

BCV completion is the algebraic identity that must hold for exact low-rank matrices (Lemma~\ref{lem:self-consistency}). Consequently, failure to predict held-out blocks indicates that the rank--1 associative hypothesis is not stable across the matrix.

\begin{corollary}[Verifactor error is a direct diagnostic for departures from rank--1 structure]
\label{cor:direct_diagnostic}
If $S$ is rank--1 and the held-in block has the same rank, then $A=B D^\dagger C$ exactly in the noiseless case. Therefore, persistent Verifactor error beyond irreducible noise indicates at least one of: (i) $S$ is not approximately rank--1, (ii) the chosen association $\mathcal{A}$ does not linearize the signal, (iii) the signal is not shared across folds (instability / nonstationarity across rows or columns).
\end{corollary}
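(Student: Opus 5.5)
The corollary follows by combining Lemma~\ref{lem:self-consistency} with Theorem~\ref{thm:bcv-unbiased}, then arguing by contraposition.

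\textbf{Exact completion in the noiseless case.} Write $X=S+E$ and partition as in \eqref{eq:app-block-partition}. If $E=0$, $\mathrm{rank}(S)=1$ and $\mathrm{rank}(D)=1$, then Lemma~\ref{lem:self-consistency} with $k=1$ gives $A=BD^\dagger C$ directly. Concretely, $S=\ell r^\top$ and $D=\ell_2 r_2^\top$ with $\ell_2\neq 0$ and $r_2\neq 0$. This yields $D^\dagger=r_2\ell_2^\top/(\|\ell_2\|^2\|r_2\|^2)$, and $BD^\dagger C=\ell_1 r_2^\top r_2\ell_2^\top\ell_2 r_1^\top/(\|\ell_2\|^2\|r_2\|^2)=\ell_1r_1^\top=A$. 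In particular, the BCV predictor $\widehat A=B(\widehat D^{(1)})^\dagger C$ has zero signal approximation error whenever $\widehat D^{(1)}=D$. This covers both predictor forms named in \eqref{eq:app-verifactor-functional}.

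\textbf{Separating excess error from noise.} Apply Theorem~\ref{thm:bcv-unbiased}. The expected Verifactor squared error equals $\mathbb{E}_\Pi\|S_{A}-\widehat A\|_F^2+\mathbb{E}_\Pi\|E_A\|_F^2$. The second term is the irreducible noise, so ``persistent error beyond irreducible noise'' means precisely that the approximation term $\mathbb{E}_\Pi\|S_A-\widehat A\|_F^2$ is bounded away from zero. I will interpret the statement this way, rather than claim something about a single realization.

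\textbf{Contrapositive.} Suppose (i), (ii) and (iii) all fail. Then:
\begin{itemize}
\item Failure of (i): $S$ is (approximately) rank--1.
\item Failure of (ii): the rank--1 structure produced through $\mathcal{A}$ coincides with the linear rank--1 structure of $S$. Concretely, the loadings from $\textsf{Z}(\mathcal{A}(\cdot),1)$ span the true row/column spaces, in the sense of the hypothesis of Theorem~\ref{thm:refactor-consistency}.
\item Failure of (iii): the held-in block $D$ inherits the same rank--1 structure in every fold, so $\mathrm{rank}(D_{ij})=1$ and the same factors generate every block.
\end{itemize}
Under these conditions, the exact-completion step shows $\mathcal{V}_1(B,C,D)$ reproduces $S_A$ when applied to the signal. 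Continuity of the predictor in $(B,C,D)$, as in the argument of Theorem~\ref{thm:refactor-consistency}, then drives the approximation term to zero. This holds as the noise in the held-in blocks vanishes, or in expectation as the noise becomes small. That contradicts persistent excess error, so at least one of (i)--(iii) must hold.

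\textbf{Main obstacle.} The difficulty is continuity of the pseudoinverse. $D\mapsto D^\dagger$ is continuous only along sequences of constant rank, and a noisy $D$ is generically full rank. I would handle this by using the rank-truncated $\widehat D^{(1)}$. Truncation is continuous at $D$ provided its leading singular value is separated from the rest, which a Weyl/Wedin perturbation bound guarantees. Near-degenerate folds, where $\sigma_1(D)$ is near $0$, fall under case (iii) by definition.
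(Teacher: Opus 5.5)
Your proposal is correct and follows the paper's route. The paper's only justification is that Lemma~\ref{lem:self-consistency} forces $A=BD^\dagger C$ when the signal is exactly rank--1 and $D$ has rank $1$, with the diagnostic claim read as the contrapositive. Your additions are not in the paper and supply the step its one-line argument leaves implicit: you make ``beyond irreducible noise'' precise through the decomposition of Theorem~\ref{thm:bcv-unbiased}, and you pass from the noiseless identity to the noisy case through the rank-1 truncation $B(\widehat D^{(1)})^\dagger C$ and a Weyl/Wedin gap argument. The price is committing to a squared-error, vanishing-noise reading of ``persistent''; note also that with the pseudoinverse predictor the failure of (ii) is never actually used, since $\mathcal{A}$ does not enter it.
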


This corollary clarifies the interpretability advantage over many global fit statistics: Verifactor failure is not merely ``misfit'' but misfit in a way that prevents the model from performing the implied generative task of reconstructing unseen portions of the response matrix.

\subsection{Improving matrix reconstruction}
\label{app:why_metric_suite-main}

Matrix reconstruction introduces a second challenge: \emph{how to measure similarity between $X$ and $\widehat X$ in a way that is robust across datasets, link functions, and marginal imbalances.} There is no single universally appropriate scalar because different failure modes of a rank--1 associative model manifest in different ways.

\subsubsection{Pitfall 1: Frobenius error rewards marginal matching and can mask independence}
\label{app:frob_pitfall}

For binary $X$, a predictor can achieve deceptively good squared error by learning row and column marginals alone. Let
% \[
$E := \mathbb{E}(X)\ \approx\ \pi_r \pi_c^\top$,
% \]
where $(\pi_r)_i = \frac{1}{p}\sum_{j}X_{ij}$ and $(\pi_c)_j=\frac{1}{n}\sum_i X_{ij}$.
Then $\widehat X=E$ may have acceptable $\|X-\widehat X\|_F^2$ while capturing no interaction structure. If $X_{ij}\sim\mathrm{Bern}((\pi_r)_i(\pi_c)_j)$ independently conditional on $(\pi_r,\pi_c)$, then $\widehat X=E$ is Bayes optimal under squared error, even though $X$ contains no latent-factor dependence beyond marginals.

Thus, we include dependence-sensitive metrics (bias-corrected squared distance correlation and partial distance correlation), explicitly designed to be $0$ under independence and to remain small when only marginals are matched.

% \subsection{Avoiding common pitfalls in matrix reconstruction}
% \label{app:why_metric_suite}

Matrix reconstruction introduces a second challenge: \emph{how to measure similarity between $X$ and $\widehat X$ in a way that is robust across datasets, link functions, and marginal imbalances.} There is no single universally appropriate scalar because different failure modes of a rank--1 associative model manifest in different geometric and probabilistic ways.

% \subsubsection{Pitfall 1: Frobenius error rewards marginal matching and can mask independence}
% \label{app:frob_pitfall}

For binary $X$, a predictor can achieve deceptively good squared error by learning row and column marginals alone. Let
\[
E := \mathbb{E}(X)\ \approx\ \pi_r \pi_c^\top,
\]
where $(\pi_r)_i = \frac{1}{p}\sum_{j}X_{ij}$ and $(\pi_c)_j=\frac{1}{n}\sum_i X_{ij}$.
Then $\widehat X=E$ may have acceptable $\|X-\widehat X\|_F^2$ while capturing no interaction structure.

\begin{proposition}[Marginal-only predictors can score well under MSE]
\label{prop:marginal_mse}
If $X_{ij}\sim\mathrm{Bern}((\pi_r)_i(\pi_c)_j)$ independently conditional on $(\pi_r,\pi_c)$, then $\widehat X=E$ is Bayes optimal under squared error, even though $X$ contains no latent-factor dependence beyond marginals.
\end{proposition}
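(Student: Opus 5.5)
The plan is to reduce the claim to the standard fact that, under squared-error loss, the Bayes-optimal predictor is the conditional mean. I treat $(\pi_r,\pi_c)$ as the conditioning information. I also read $E=\pi_r\pi_c^\top$ as the matrix of true Bernoulli means, i.e.\ $E_{ij}=\mathbb{E}[X_{ij}\mid \pi_r,\pi_c]=(\pi_r)_i(\pi_c)_j$, rather than as the plug-in empirical marginals. Because the loss $\|X-\widehat X\|_F^2=\sum_{i,j}(X_{ij}-\widehat X_{ij})^2$ is additive over entries, it suffices to minimize each term separately. The minimization is over any predictor measurable with respect to $(\pi_r,\pi_c)$.

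For a fixed entry I use the bias--variance identity. For any candidate $c$ measurable in $(\pi_r,\pi_c)$,
\[
\mathbb{E}\bigl[(X_{ij}-c)^2\mid \pi_r,\pi_c\bigr]=\mathrm{Var}(X_{ij}\mid\pi_r,\pi_c)+\bigl(E_{ij}-c\bigr)^2 .
\]
This holds because the cross term $2(E_{ij}-c)\,\mathbb{E}[X_{ij}-E_{ij}\mid\pi_r,\pi_c]$ vanishes. The right side is minimized uniquely at $c=E_{ij}$, with minimal value $E_{ij}(1-E_{ij})$, the Bernoulli variance. Summing over entries and taking outer expectations (tower property) shows that $\widehat X=E$ attains the minimal expected Frobenius risk $\sum_{ij}E_{ij}(1-E_{ij})$.

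To finish the \emph{interpretive} half of the statement, I note that conditional independence of the entries given $(\pi_r,\pi_c)$ means the data carry no latent-factor dependence beyond the marginals. No predictor can beat $E$ in squared error, so low MSE cannot certify interaction structure. The same setup gives a short remark: $\widehat X=E$ and $X$ are conditionally independent given the marginals. Hence a dependence-sensitive metric that partials out $E$ (the pdCor of Section \ref{app:evaluation_motivation}) has population value zero. This motivates the metric suite.

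The main obstacle is not computational but definitional. If $E$ is instead the \emph{empirical} product of row and column means, it depends on $X$ itself. It is then not exactly Bayes optimal, only asymptotically so. In that reading I would state the result for the population $E$ and add that the plug-in version converges to it as $n,p\to\infty$. The convergence follows from laws of large numbers for the row and column means under the independent-Bernoulli model, with the product map continuous. So its excess risk per entry vanishes.
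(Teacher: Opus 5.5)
Your core argument is the paper's proof. The Frobenius risk splits over entries, and under the stated model the conditional mean $\mathbb{E}[X_{ij}\mid\pi_r,\pi_c]=(\pi_r)_i(\pi_c)_j$ minimizes each squared-error term. Your bias--variance expansion just writes out the standard fact the paper cites. Reading $\pi_r,\pi_c$ as parameters, rather than as the empirical marginals defined just before the statement, is what makes the paper's one-line proof coherent.

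The claim you should repair is the fallback about the plug-in.

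\textbf{Plug-in consistency.} Write $a=\pi_r$, $b=\pi_c$ for the parameters.
\begin{itemize}
\item The laws of large numbers make $\frac1p\sum_j X_{ij}$ concentrate around $a_i\bar b$, and $\frac1n\sum_i X_{ij}$ around $\bar a\, b_j$. They do not converge to $a_i$ and $b_j$.
\item So the product of empirical row and column means is close to $\bar a\bar b\,a_ib_j$. This is the cell probability shrunk by the average success probability $\bar a\bar b$.
\item Example: with all $a_i=b_j=1/2$, every entry is $\mathrm{Bern}(1/4)$. Both empirical marginals tend to $1/4$, so their product tends to $1/16$, not $1/4$.
\item The per-entry excess risk therefore tends to $(a_ib_j)^2(1-\bar a\bar b)^2$. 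This is positive whenever $a_ib_j>0$ and $\bar a\bar b<1$.
\item Continuity of the product map cannot rescue this, because the factors converge to the wrong limits. In any case $(a,b)$ is identified only up to $(ca,b/c)$.
\end{itemize}
The consistent plug-in is the independence-model fit $\widehat E_{ij}=\hat r_i\hat c_j/\bar x$, with $\bar x$ the grand mean (row total times column total over grand total). It converges to $a_ib_j$ when $\bar a\bar b$ stays bounded away from $0$.

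\textbf{The pdCor remark.} Your pdCor remark does not follow as stated.
\begin{itemize}
\item Conditional independence given $E$ does not by itself force a Sz\'ekely--Rizzo partial distance correlation to vanish.
\item The paper's metric uses the isotonized $\widetilde X=g^\star(E)$. Its distance matrix is not a multiple of that of $E$, so its residual after partialling out $E$ is generally nonzero.
\item Only the degenerate $\operatorname{pdCor}(X,E;E)$ is zero, and for a trivial reason.
\end{itemize}

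Neither issue affects the proposition itself.
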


% \begin{toappendix}
\begin{proof}
For Bernoulli observations, the conditional mean minimizes squared loss entrywise. Under the stated model, $\mathbb{E}[X_{ij}\mid \pi_r,\pi_c]=(\pi_r)_i(\pi_c)_j$, giving $\widehat X=E$.
\end{proof}
% \end{toappendix}

Hence we include dependence-sensitive metrics (bias-corrected dCor and partial dCor), explicitly designed to be $0$ under independence and to remain small when only marginals are matched.

\subsubsection{Pitfall 2: elementwise correlations conflate calibration with discrimination}
\label{app:corr_pitfall}

Entrywise Pearson correlation between $\mathrm{vec}(X)$ and $\mathrm{vec}(\widehat X)$ can increase even when the predictor is poorly calibrated (e.g., overly confident probabilities) and can be sensitive to monotone nonlinearities. For binary data, two rankers with identical AUC may have very different likelihood and vice versa. Consequently, we separate:

\begin{itemize}\itemsep2pt
\item \textbf{Ordinal/discriminative fit} (AUC, Kendall $\tau_b$): evaluates whether the reconstruction respects rank structure.
\item \textbf{Calibration/probabilistic fit} (geometric mean likelihood / cross-entropy after isotonic calibration): evaluates whether predicted scores can be made probabilistically meaningful.
\end{itemize}

\subsubsection{Pitfall 3: orientation and sign/scale indeterminacy}
\label{app:orientation_pitfall}

Low-rank representations are invariant to sign flips and rotations (for $k>1$). Even for $k=1$, sign is arbitrary. Metrics sensitive to linear scale (e.g., MSE) can therefore reflect arbitrary conventions. We address this by (i) using scale-invariant metrics (cosine similarity, rank metrics), and (ii) applying isotonic regression $\widehat X\mapsto\widetilde X\in[0,1]$ to evaluate probabilistic fit and nonlinear dependence on a common monotone scale.

\begin{proposition}[Isotonic calibration yields a canonical monotone link for evaluation]
\label{prop:isotonic}
Let $\widehat X$ be any real-valued reconstruction and $X\in\{0,1\}^{n\times p}$.
Define $\widetilde X = g^\star(\widehat X)$ where $g^\star$ is the isotonic regression fit minimizing $\|X-g(\widehat X)\|_F^2$ over nondecreasing $g$ applied entrywise.
Then $\widetilde X$ is invariant to strictly monotone reparameterizations of $\widehat X$ and can be interpreted as a calibrated score in $[0,1]$.
\end{proposition}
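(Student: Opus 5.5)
The natural route is through the pool-adjacent-violators (PAV) characterization of isotonic regression rather than the cone-projection picture used in the proof of Proposition~\ref{prop:isotonic_optimality}. Write $\hat x=\mathrm{vec}(\widehat X)$ and $x=\mathrm{vec}(X)\in\{0,1\}^{np}$. The first step is to note that the minimization over nondecreasing $g$ only involves the values $g(\hat x_t)$. Those values are constrained only by the \emph{weak order} that $\hat x$ induces on the indices $t$: entries with $\hat x_s<\hat x_t$ must satisfy $g(\hat x_s)\le g(\hat x_t)$, and entries with tied $\hat x$ values must receive a common value. So the feasible set $\{g(\hat x):g\in\mathcal G\}$ depends on $\hat x$ only through the partition of indices into tie classes and the linear order on those classes. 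Call this the order type of $\hat x$.

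The second step handles invariance. Let $h:\mathbb R\to\mathbb R$ be strictly increasing. Then $h(\hat x_s)<h(\hat x_t)$ iff $\hat x_s<\hat x_t$, and $h(\hat x_s)=h(\hat x_t)$ iff $\hat x_s=\hat x_t$, so $h(\hat x)$ has the same order type as $\hat x$. More concretely, any nondecreasing $g$ gives a nondecreasing $g\circ h^{-1}$ on the range of $h$, which extends to a nondecreasing function on $\mathbb R$. Conversely, for any nondecreasing $g'$, the composition $g'\circ h$ is nondecreasing. Hence the feasible vectors for $h(\hat x)$ and for $\hat x$ coincide. The projection onto this closed convex set is unique; this is the uniqueness invoked in Proposition~\ref{prop:isotonic_optimality}, and it holds for the fitted \emph{vector} even though $g^\star$ itself is not unique off the data points. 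Therefore $g^\star(\hat x)=g'^\star(h(\hat x))$, that is, $\widetilde X$ is unchanged. A caveat should be stated explicitly: the invariance is for strictly increasing reparameterizations. A strictly decreasing $h$ reverses the order, and invariance then fails unless one also allows antitone fits. I would add a remark that "monotone" in the statement should be read as "increasing", in keeping with the sign convention of Pitfall~3.

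The third step shows that $\widetilde X$ takes values in $[0,1]$, and this is the step needing care. I would use the block-average structure of the isotonic solution. The fitted vector is constant on consecutive blocks of the sorted order, and it equals the mean of $x_t$ over each block, with tie classes always kept inside a single block. This follows from PAV, or from the KKT conditions of the cone projection. Since each $x_t\in\{0,1\}$, every block mean lies in $[0,1]$. Moreover, each value equals the empirical frequency of ones among entries whose reconstruction falls in that block. This justifies reading $\widetilde X$ as a calibrated score: conditional on the level of $\widehat X$, the fitted value is the observed success rate.

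The main obstacle is making this block-mean characterization rigorous with ties, without simply citing PAV. If needed, I would argue directly as follows. Suppose some fitted value exceeded $1$. Clipping all values above $1$ down to $1$ preserves monotonicity and strictly lowers the squared error, since every $x_t\le 1$. The same argument applies below $0$. This clipping argument gives the $[0,1]$ range without invoking PAV at all. The calibration interpretation then follows from the first-order condition that, within each constant block, the residuals sum to zero.
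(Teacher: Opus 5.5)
Your proof is correct, and its core is the paper's argument even though you frame it as a PAV route. The paper's proof has three steps: isotonic regression projects onto a cone of monotone fits; a strictly monotone $\varphi$ leaves the order of the entries of $\widehat X$ unchanged; so the feasible set, and hence its unique projection, is unchanged. Your Steps 1--2 are a careful version of exactly this.

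Three of your additions go beyond the paper and are worth keeping.
\begin{enumerate}
\item Restricting to strictly \emph{increasing} reparameterizations is a genuine correction. The paper's proof says the order is ``identical'' under any strictly monotone $\varphi$, which is false for decreasing $\varphi$. For example, $\hat x=(1,2)$ and $x=(0,1)$ give the fitted vector $(0,1)$, but after replacing $\hat x$ by $-\hat x$ the fit becomes $(1/2,1/2)$.
\item You note that tied entries of $\widehat X$ must share a fitted value because $g$ is a function. So the fitted vector is exactly unique, and the paper's hedge ``up to tie-handling conventions'' is unnecessary.
\item The paper's proof does not address the second half of the statement at all (values in $[0,1]$, calibrated-score interpretation). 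You supply it in two ways. The clipping argument gives the range. The level-set first-order condition does more: shifting the common value on a maximal constant block by a small amount stays feasible, so that value must equal the mean of the $x_t$ in the block. This gives both the range and the ``observed frequency of ones'' reading of calibration at once.
\end{enumerate}

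One small slip. Your ``more concretely'' sentence claims that $g\circ h^{-1}$ extends from the range of $h$ to a real-valued nondecreasing function on $\mathbb{R}$. This can fail: with $h=\exp$ and $g=\mathrm{id}$ you get $\log$ on $(0,\infty)$, which has no such extension. The slip is harmless, because only the finitely many values $g(\hat x_t)$ matter. Any vector respecting the weak order is realized by a piecewise-linear nondecreasing interpolant, which is already your Step 1. Drop that sentence or replace it with the finite-interpolation remark.
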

% \begin{toappendix}
\begin{proof}
Isotonic regression returns the projection onto the cone of monotone functions. If $\varphi$ is strictly monotone, the order of entries of $\widehat X$ and $\varphi(\widehat X)$ is identical, yielding the same feasible set of monotone step functions after pooling ties; thus the projected values coincide up to tie-handling conventions.
\end{proof}
% \end{toappendix}

This justifies using $\widetilde X$ for likelihood and (partial) distance correlation: it isolates whether there exists \emph{any} monotone link under which a rank--1 reconstruction becomes probabilistically and structurally meaningful.

\subsubsection{Pitfall 4: redundancy of classical fit indices and the need for orthogonal diagnostics}
\label{app:redundancy_classical}

Many classical unidimensionality and FA fit indices are strongly correlated in practice because they are functions of residuals on the same association image. This creates the appearance of convergent evidence even when the underlying question (recoverability of $X$) is unanswered. In contrast, the reconstruction metric suite is intentionally \emph{non-collinear}: each metric targets a different failure mode.

\begin{center}
\begin{tabular}{p{0.30\linewidth}p{0.62\linewidth}}
\toprule
Metric family & Detects failure mode \\
\midrule
$\cos_F$ & wrong global geometry / orientation, even if ranking is partly correct \\
AUC, $\tau_b$ & wrong ordinal structure (cannot separate 1s from 0s) \\
$\operatorname{dCor}_n^2$ & dependence mismatch beyond linear association \\
$R^*_{X,\widehat X;\mathbb{E}(X)}$ & ``marginal-only'' success without interaction signal \\
Likelihood / cross-entropy & poor calibration; overconfident wrong predictions; rare-event failure \\
\bottomrule
\end{tabular}
\end{center}

This is why comparing and contrasting metrics is informative rather than redundant: disagreement between metrics is itself a diagnosis of \emph{what kind} of structure the rank--1 associative model is (or is not) capturing.

\subsection{Why these evaluations are especially informative when the data-generating model is unknown}
\label{app:unknown_dgm}

In many applications (psychological instruments across populations; AI benchmark responses across tasks/models), the data-generating mechanism is not specified, and the analyst is deciding whether a rank--1 correlational abstraction is a reasonable simplification. In that setting, Refactor/Verifactor provide two key advantages:

\paragraph{(i) Model criticism without committing to a parametric likelihood}
Image-fit indices often depend on distributional assumptions (e.g., Gaussian latent variables for tetrachorics, asymptotic $\chi^2$ calibrations for fit). Refactor/Verifactor instead rely on prediction, which remains well-defined under heteroscedasticity, non-Gaussianity, and monotone distortions. When paired with metrics like dCor and isotonic likelihood, the evaluation becomes robust to unknown link functions.

\paragraph{(ii) Comparability across association choices}
When the signal-of-interest is ambiguous, multiple associations correspond to multiple scientific hypotheses (e.g., ``binary distances are meaningful'' for $\phi$; ``latent bivariate normal thresholds'' for tetrachoric; ``odds-ratio monotonicity'' for Yule's $Q$; ``monotone scalability'' for Loevinger $H$). Verifactor allows these hypotheses to be compared on equal footing via the same held-out prediction task.

\subsection{Re-evaluating correlational assumptions for dichotomous and sparse matrices}
\label{app:binary_sparse}

Binary matrices are ubiquitous (item responses; benchmark pass/fail; preference clicks). They also create two ubiquitous confounds: \emph{prevalence} (marginal imbalance) and \emph{sparsity} (few ones). Correlation-based images can be dominated by these confounds, and different association measures respond differently.

\begin{proposition}[Association choice changes the implied geometry of rank--1 structure]
\label{prop:assoc_geometry}
Let $X\in\{0,1\}^{n\times p}$. Different association operators $\mathcal{A}$ correspond to different embeddings of rows/columns into Euclidean (or pseudo-Euclidean) geometries. Consequently, the ``best'' rank--1 approximation depends on $\mathcal{A}$, and agreement on image fit does not imply agreement on recoverability of $X$.
\end{proposition}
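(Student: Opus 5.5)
Because the statement is partly definitional, I will first fix a precise reading and then prove it with an explicit example. Each association operator $\mathcal{A}$ in this paper (Gram $X^\top X$, $\phi$, tetrachoric, $q'$) is a symmetric matrix. Whenever it is positive semidefinite it is a Gram matrix $A_c=W^\top W$ for some embedding $W$ of the columns into Euclidean space. When it is indefinite, as tetrachoric matrices can be, the spectral decomposition $A_c=Q\Lambda Q^\top$ gives an embedding $Q|\Lambda|^{1/2}$ with signature $\operatorname{sign}(\Lambda)$, which is a pseudo-Euclidean geometry. The first claim then follows at once: the leading eigenvector $\widehat v=\textsf{Z}(A_c,1)$ is the best rank-1 direction in the geometry induced by $\mathcal{A}$, by Eckart--Young applied to $A_c$, in the same way as the Rank-1 SVD formulation above. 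The same holds for $\widehat u$ from $A_r$.

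For the dependence on $\mathcal{A}$, I will use explicit formulas. Let $p_j$ be the column marginals and $p_{jk}$ the joint frequencies. For columns,
\[
\phi_{jk}=\frac{p_{jk}-p_jp_k}{\sqrt{p_j(1-p_j)p_k(1-p_k)}},\qquad q'_{jk}=1-2(p_j+p_k-2p_{jk}).
\]
The two differ by a marginal-dependent, non-affine map. Take three columns with very unequal prevalences, for example $p=(0.1,0.5,0.9)$, and choose a joint distribution, for instance a Guttman-type pattern. Computing the $3\times3$ images under $\phi$ and $q'$ should give leading eigenvectors that are not proportional, and possibly even a different ordering of loadings. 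That shows $\widehat X^{(\phi)}\neq c\,\widehat X^{(q')}$ and, if the orderings differ, distinct isotonic fits $\widetilde X$.

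For the last claim, that agreement on image fit does not imply agreement on recoverability, I will build $X$ with $A_c^{(1)}$ and $A_c^{(2)}$ both exactly rank-1 plus diagonal. Their image residuals and ECV are then both perfect: ECV$=1$. Yet the induced $\widehat u\widehat v^\top$ orderings should differ, so $R^2_{\mathrm{iso}}$ from Proposition~\ref{prop:isotonic_optimality} differs. Proposition~\ref{prop:isotonic} ensures this difference cannot be undone by any monotone reparameterization. Proposition~\ref{prop:non-equivalence} supplies the general template; here I only need a single concrete instance.

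The main obstacle is constructing this last example. It needs two operators that are both exactly one-factor on the same data but whose leading vectors induce different entrywise orderings of $\widehat X$. My first attempt is a perfect Guttman scale with unequal difficulties. On such data the tetrachoric image is degenerate, with all entries equal to $1$, so it is trivially rank-1 with uniform $\widehat v$. The $\phi$ image is also close to rank-1 but has loadings that depend on the marginals. I would then compute $R^2_{\mathrm{iso}}$ for each reconstruction on a small $n$ and compare. If exact rank-1 structure of $\phi$ fails, I will relax the claim to asymptotically vanishing image residuals, in the spirit of Proposition~\ref{prop:non-equivalence}.
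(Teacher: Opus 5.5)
\paragraph{Comparison with the paper.} Your first two steps already go further than the paper, whose proof is only a sketch. It argues that each $\mathcal{A}$ is a bilinear form whose leading eigenvector fixes the rank--1 direction, that changing the form changes the projection, and that recoverability ``accordingly'' changes. It gives no example and never addresses agreement on image fit.

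Your embedding reading is fine, with one caveat: for indefinite $A_c$, Eckart--Young selects the eigenvalue of largest modulus, which can be negative. Your three-column example also works. With the Guttman pattern and $p=(0.1,0.5,0.9)$:
\begin{itemize}
\item $(\phi_{12},\phi_{13},\phi_{23})=(1/3,1/9,1/3)$, with leading eigenvector $\propto(1,1.26,1)$;
\item $(q'_{12},q'_{13},q'_{23})=(0.2,-0.6,0.2)$, with top eigenvalue $1.6$ and eigenvector $\propto(1,0,-1)$.
\end{itemize}
This $q'$ image has negative off-diagonal product, so it is not one-factor and cannot be reused for the last claim.

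\paragraph{The gap: the last claim.} This claim carries the real content, and the construction you sketch does not deliver it.

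\emph{The two images do not agree on fit.} On a perfect Guttman scale $p_{jk}=\min(p_j,p_k)$. Hence $\phi_{jk}=\sqrt{o_j/o_k}$ for $p_j<p_k$, where $o_j=p_j/(1-p_j)$. This is a simplex, $\phi_{jl}=\phi_{jk}\phi_{kl}$. With four items of distinct difficulty, the tetrad $\phi_{12}\phi_{34}=\phi_{13}\phi_{24}$ fails by the factor $o_3/o_2$. So the $\phi$ image is not rank-one-plus-diagonal (the classical difficulty-factor artifact), and its one-factor residual does not vanish. The all-ones tetrachoric image, by contrast, is fit exactly. Because this is a population property, your fallback to asymptotically vanishing residuals cannot rescue it.

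\emph{The design is degenerate on the row side.} All-$0$ and all-$1$ persons have undefined associations, and deleting them makes the extreme items constant. On the remaining persons the tetrachoric row image is again all ones. Together with the all-ones column image, this makes $\widehat X$ constant.

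\emph{Different loadings do not force different $R^2_{\mathrm{iso}}$.} Non-proportional loadings, or even different entry orderings, do not by themselves produce different $R^2_{\mathrm{iso}}$ values. An explicit computation is unavoidable.

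\paragraph{How to close it.} Pick a pair of operators for which equal image fit is automatic.
\begin{itemize}
\item Take $p=3$. Any image with positive off-diagonal product and admissible loadings is then just-identified; for instance, your $\phi$ above is exactly one-factor with loadings $(1/3,1,1/3)$.
\item Alternatively, compare covariance with $\phi$. Since $\phi=D^{-1/2}\Sigma D^{-1/2}$, it is rank-one-plus-diagonal exactly when $\Sigma$ is. One source of such data is pattern frequencies from a two-class latent-class model, for which $\mathrm{Cov}(X_j,X_k)=\pi(1-\pi)(a_j-b_j)(a_k-b_k)$. The column loadings then differ by the factor $\sqrt{p_j(1-p_j)}$, which is non-constant unless all item variances coincide.
\end{itemize}
In either case, use a small matrix with no constant rows or columns and compute both $R^2_{\mathrm{iso}}$ values explicitly.
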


% \begin{toappendix}
\begin{proof}[Proof sketch]
Each $\mathcal{A}$ defines a bilinear form (or kernel-like matrix) whose leading eigenvector defines the rank--1 direction. Changing $\mathcal{A}$ changes that bilinear form and therefore changes the induced projection of $X$ onto a one-dimensional subspace. Since $\widehat X$ depends on both row and column projections, recoverability changes accordingly.
\end{proof}
% \end{toappendix}

Empirically (Table of examples), this manifests as meaningful reversals: a measure that is strong on $\cos_F$ or likelihood may be weaker on $R^*$ or $\operatorname{dCor}_n^2$, indicating that it either (a) captures global geometry but not residual dependence, or (b) captures dependence beyond marginals but is poorly calibrated for prediction of rare events. Such differences are exactly the nuanced insights that image-based indices cannot provide.

\subsection{Irregular regimes (e.g., $n \ll p$) and modern benchmarks}
\label{app:nllp}

In $n\ll p$ regimes (common when evaluating a small number of models on many tasks, or small sample sizes with many items), correlation matrices can be noisy, high-dimensional, and unstable; fit indices can become artifacts of regularization, smoothing, or the chosen correlation estimator. Refactor/Verifactor mitigate this in two ways:

\begin{enumerate}\itemsep2pt
\item They evaluate predictions on $X$, not only on $p\times p$ images, so the criterion remains meaningful even when the image is ill-conditioned.
\item BCV reduces axis-specific overfitting: a model that fits spurious correlations among a particular set of items will not generalize to held-out items.
\end{enumerate}

\begin{corollary}[BCV regularizes factor selection by enforcing two-way stability]
\label{cor:bcv_regularizes}
When $n\ll p$ or $p\ll n$, in-sample image fit can be inflated by sampling noise in the association image. Verifactor penalizes such inflation because successful prediction requires that the inferred rank--1 structure persists across held-out rows and columns.
\end{corollary}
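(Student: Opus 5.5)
The plan is to make the informal statement of Corollary~\ref{cor:bcv_regularizes} precise and then derive it from Theorem~\ref{thm:bcv-unbiased} and Lemma~\ref{lem:self-consistency}. Write the sample item image as $\widehat A_c=\mathcal{A}_c(X)=A_c^\star+\Delta$. Here $A_c^\star$ is the population image. The term $\Delta$ is sampling noise, and its operator norm grows like $\sqrt{p/n}$ when $n\ll p$; the case $p\ll n$ is symmetric on the row side. By ``in-sample image fit is inflated'' I will mean the following. The leading eigenpair $(\widehat\lambda_1,\widehat v)$ of $\widehat A_c$ absorbs part of $\Delta$. As a result, image residuals such as $\|\widehat A_c-\widehat v\widehat v^\top\|_F$, and indices like ECV, are biased toward one factor. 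Eigenvalue interlacing and Weyl's inequality give $\widehat\lambda_1\ge\lambda_1^\star-\|\Delta\|$, and spiked-covariance theory gives upward bias of $\widehat\lambda_1$.

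The first step is to show that Verifactor cannot reward this inflation. In a BCV fold the predictor $\widehat A_{ij}^{(1)}$ is built only from $B,C,D$. Theorem~\ref{thm:bcv-unbiased} therefore splits the expected held-out loss into signal approximation error $\mathbb{E}\|S_A-\widehat A\|_F^2$ plus irreducible noise. The noise term does not depend on the fitted direction, so any component of $\widehat v$ that fits noise specific to the held-in columns can only enter the approximation term.

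The second step bounds that approximation term from below by the misalignment of the held-in loadings. Take the rank-1 completion $\widehat A=B(\widehat D^{(1)})^\dagger C$, with $S=uv^\top$ and the blocks partitioned as in Lemma~\ref{lem:self-consistency}. I will use a Davis--Kahan $\sin\theta$ bound to control the angle $\theta$ between $\widehat v_D$ and the true direction $v_2$ by $\|\Delta_D\|/\mathrm{gap}$. I will then show that $\|S_A-\widehat A\|_F\ge c\,\sin\theta\,\|S_A\|_F$ minus lower-order terms. The point is that the completion is exact exactly when the held-in direction is correct. A spurious direction, one that fits correlations among a particular item subset, has no counterpart in the held-out columns $\mathcal{J}$, which are independent draws under the crossed design of Proposition~\ref{prop:two_way_targets}. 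So it shows up as a first-order error.

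Combining the two steps gives the following contrast. The in-sample image fit improves with $\|\Delta\|$. The Verifactor risk, by contrast, increases by a term of order $\sin^2\theta\gtrsim$ (noise-fit component)$^2$. This is the sense in which Verifactor ``penalizes'' inflation. Corollary~\ref{cor:direct_diagnostic} then lets us read residual error as case (iii), instability across folds.

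I expect the main obstacle to be the lower bound in the second step. The pseudoinverse $(\widehat D^{(1)})^\dagger$ can be ill-conditioned when $\widehat\sigma_1(D)$ is small, and the expression $B\widehat D^\dagger C$ mixes noise from $B$ and $C$ with noise from $D$. My first attempt will condition on the held-in blocks, treat $B$ and $C$ as carrying independent noise, and expand to first order in $\theta$. If that becomes unwieldy, I will fall back on an explicit two-point example. In that example, $\Delta$ places a rank-one spike on the held-in columns alone. Such a spike raises the ECV of the image but strictly raises the BCV risk, which establishes the corollary in its existential form.
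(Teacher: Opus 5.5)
The paper gives no derivation of this corollary beyond the one-line rationale in its own statement. A rigorous version therefore has to fix a predictor and a loss, and for the predictor you chose, your second step is false.

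\textbf{Why the second step fails.} Split $u=(u_1;u_2)$ and $v=(v_1;v_2)$ along $(\mathcal I,\mathcal I^c)$ and $(\mathcal J,\mathcal J^c)$, and write $\widehat D^{(1)}=\widehat\sigma\,\widehat a\,\widehat b^\top$, where $\widehat b$ is your $\widehat v_D$. In $\widehat A=\widehat\sigma^{-1}(B\widehat b)(\widehat a^\top C)$ the held-in direction $\widehat b$ is never carried over to the held-out columns. It only projects $B$, whose signal is $u_1v_2^\top$. The profile over $\mathcal J$ comes from $\widehat a^\top C$.

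Take $E_B=E_C=0$ and use $\widehat\sigma=\widehat a^\top D\widehat b=(\widehat a^\top u_2)(v_2^\top\widehat b)+\widehat a^\top E_D\widehat b$. This gives the exact identity
\[
S_A-\widehat A=\frac{\widehat a^\top E_D\,\widehat b}{\widehat\sigma}\,S_A ,
\]
in which the tilt of $\widehat b$ appears only through cosines, i.e.\ at second order.

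A concrete counterexample: let $E_D=\eta\,a\,b'^\top$, with $a=u_2/\|u_2\|$ and a unit vector $b'\perp v_2$. Then $D$ stays rank one, and $\widehat b$ is tilted by $\sin\theta=\eta/\sqrt{\sigma_D^2+\eta^2}$, where $\sigma_D=\|u_2\|\|v_2\|$. The prediction is $\widehat A=\cos^2\theta\,S_A$, so $\|S_A-\widehat A\|_F=\sin^2\theta\,\|S_A\|_F$. This violates your bound $c\sin\theta\,\|S_A\|_F$ minus lower-order terms as $\eta\to0$.

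Noise in $B$ and $C$ does not restore a first-order effect. Since $\widehat b$ depends on $D$ alone, under independent homoscedastic noise the conditional covariance of $E_B\widehat b$ is the same for every unit $\widehat b$. The tilt only shrinks the signal coefficient $v_2^\top\widehat b=\|v_2\|\cos\theta$. Your planned first-order expansion in $\theta$ will therefore return a zero coefficient. The picture of a spurious direction ``with no counterpart in $\mathcal J$'' is not how the completion uses held-in loadings. Davis--Kahan, being an upper bound on $\sin\theta$, could not have produced a penalty in any case.

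\textbf{What can be salvaged.} The identity points to a correct mechanism. Still with $E_B=E_C=0$, since $(\widehat a^\top u_2)(v_2^\top\widehat b)\le\sigma_D$, it gives
\[
\|S_A-\widehat A\|_F/\|S_A\|_F\ge(\widehat\sigma-\sigma_D)/\widehat\sigma .
\]
So the held-out error is at least the relative in-sample inflation of the held-in block's top singular value. Spiked-model asymptotics, not Davis--Kahan, show that this inflation stays bounded away from zero in high-dimensional regimes. This is a precise sense in which Verifactor pays for inflation, but only under a scale-sensitive loss such as that of Theorem~\ref{thm:bcv-unbiased}. Here $\widehat A$ is a positive multiple of $S_A$, which is invisible to isotonic $R^2$, AUC, Kendall $\tau$ and cosine, the metrics the paper actually reports.

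Your angle argument does apply to the direct-loading predictor $\widetilde A=u_{B,1}v_{C,1}^\top$ of Section~\ref{sec:verifactor_def}. There $\min_c\|S_A-c\widetilde A\|_F^2=(1-\cos^2\phi\cos^2\psi)\|S_A\|_F^2$, where $\phi,\psi$ are the angles of $u_{B,1},v_{C,1}$ to $u_1,v_1$. Even then you need a lower bound on those angles (eigenvector inconsistency) to call the error a penalty for inflation. Also keep separate the full-data $\widehat v$, whose image fit is inflated, and the held-in estimates that Verifactor actually uses.

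\textbf{The fallback does not work as described.} Work in the exact linear setting of Lemma~\ref{lem:self-consistency} and add a rank-one spike $zw^\top$ with $w$ supported on $\mathcal J^c$. There are only two cases. If $z\parallel u$, then $X$ stays rank one, ECV is unchanged, and the completion stays exact. Otherwise $X$ becomes rank two, which lowers ECV instead of raising it. So the spike cannot both raise ECV and raise the BCV risk.
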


\subsection{Reconstruction Evaluation Summary}
\label{app:practical_insights_eval}

The proposed evaluation yields a compact set of interpretable outcomes:

\begin{enumerate}\itemsep2pt
\item \textbf{Unidimensionality as predictability:} high Verifactor performance across dependence-sensitive and discriminative metrics supports the rank--1 hypothesis in the sense relevant to generalization.
\item \textbf{Correlation as a testable abstraction:} if all $\mathcal{A}^{(m)}$ perform poorly out-of-sample (especially on dCor and $R^*$), the evidence suggests that the dominant signal is not well represented by any rank--1 correlational relationship (e.g., mixture, conditional dependence, nonlinear interactions).
\item \textbf{Which relationship best captures shared signal:} if some $\mathcal{A}^{(m)}$ dominate, the analyst gains a principled choice of association aligned with stable recovery of $X$, not merely with convenience or tradition.
\item \textbf{Diagnosing the kind of failure:} discrepancies among $\cos_F$, AUC/$\tau_b$, dCor/$R^*$, and likelihood identify whether failure arises from geometry, ranking, dependence structure beyond marginals, or calibration.
\end{enumerate}

These insights directly address the shortcomings motivating this work: traditional fit indices are valuable but often redundant and image-bound; Refactor/Verifactor and the accompanying metric suite provide a predictive, two-way-stable, association-comparable evaluation layer that is especially appropriate in modern random-by-random data settings.

% \section{Evaluation Methods}

\subsection{Refactoring Analyses Evaluate Signal Recovery}
This paper introduces a paradigm shift in model evaluation, termed \textit{Refactoring Analyses}, which refocuses the assessment from the model's abstract image to its direct explanatory power. The central premise, outlined in Figure 1, is to leverage the full structure of the low-rank model to generate a prediction of the original data. This is achieved through a dual analysis of the data matrix, deriving not only the column-space loadings ($\boldsymbol{A}$) from the covariance matrix ($\boldsymbol{X}^T\boldsymbol{X}$) but also the row-space loadings ($\boldsymbol{B}$) from the Gram matrix ($\boldsymbol{X}\boldsymbol{X}^T$). These two matrices, representing the principal axes of the data from both orientations, are then used to "decompress" or refactor a complete, model-based prediction of the original data matrix, $\boldsymbol{\hat{X}}$. This refactoring step transforms the dimensionality reduction model into a generative, predictive one.

The crucial final step of Refactoring Analyses is to quantify the correspondence between the observed data $\boldsymbol{X}$ and its reconstruction $\boldsymbol{\hat{X}}$. This comparison provides a direct and interpretable measure of how well the low-rank structure captures the information in the original data. Because this task requires methods for comparing matrices that may be unfamiliar to many researchers, we introduce a suite of powerful metrics. Moving beyond simple element-wise correlations, these include global measures of matrix association, such as distance correlation and the RV coefficient, as well as more granular diagnostics that assess structural fidelity, such as the preservation of vector orientation as measured by cosine similarity. Together, these tools provide a richer, more diagnostically informative framework for evaluating model adequacy than is possible by inspecting the factor loadings alone.

\paragraph{Verifactoring Analyses evaluates predictive power}
Refactoring allows for even more robust tests of fit. Inspired by bi-cross-validated block prediction \citep{owen_bi-cross-validation_2009, owen_bi-cross-validation_2015}, we also use Verifying Refactor Analyses (Verifactoring) to measure the strength of the underlying relationships captured, by predicting performance on held-out test data. We hold out two “off-diagonal” blocks of a two-way array, learn structure on the complementary diagonal blocks, and evaluate or predict the held-out diagonal. This two-way sample splitting is a form of block cross-validation for matrices where we fit theory based low-rank structures on one diagonal (e.g., A and D)to predict the other (i.e., B and C). By treating rows and columns as stochastic units for each dataset iteration, we get four opportunities to cross-validate structure, not just entries.

\begin{table*}
\centering
\caption{Refactor vs Verifactor Outcome on Human Assessments vs. AI Benchmarks }
\resizebox{\ifdim\width>\linewidth\linewidth\else\width\fi}{!}{
\begin{tabular}{lllllll}
\toprule
Dependent Var.: & $\tau$& AUC& $\cos$& $\mathcal{L}(X\mid\theta)$ & $\operatorname{dCor}_n^2$ & $R_{X,\hat{X};\operatorname{E}(X)}^*$ \\\midrule
Verifactor & -0.023*** (0.001) & -0.016*** (0.0009) & -0.008*** (0.0005) & -0.010*** (0.0006) & -0.051*** (0.001) & -0.021* (0.008) \\
Human $\times$ Verifact & -0.138*** (0.005) & -0.066*** (0.002) & -0.018*** (0.0009) & -0.033*** (0.001) & -0.192*** (0.003) & -0.092*** (0.006) \\
\midrule
idx-table-boot-submatrix & Yes & Yes & Yes & Yes & Yes & Yes \\

% idx-table & Yes & Yes & Yes & Yes & Yes & Yes \\
% table-boot-target & Yes & Yes & Yes & Yes & Yes & Yes \\
n\_items (table) & Yes & Yes & Yes & Yes & Yes & Yes \\
n\_participants (table) & Yes & Yes & Yes & Yes & Yes & Yes \\
% S.E.: Clustered & by: idx & by: idx & by: idx & by: idx & by: idx & by: idx \\
Observations & 92,895 & 92,895 & 92,895 & 92,895 & 91,154 & 89,058 \\
RMSE & 0.09216 & 0.04406 & 0.02079 & 0.02156 & 0.12660 & 0.16705 \\
% R2 & 0.72677 & 0.87259 & 0.97778 & 0.97253 & 0.75780 & 0.51775 \\
% Pseudo R2 & -1.4854 & 23.374 & -2.0494 & -1.5307 & -3.4538 & -2.8950 & 11.588 & -62.167 & -2.5346\\
% Within R2 & 0.36558 & 0.39679 & 0.23592 & 0.42472 & 0.40960 & 0.07719 \\
Adj. R2 & 0.71229 & 0.86584 & 0.97660 & 0.97107 & 0.74473 & 0.49154 \\
Within Adj. R2 & 0.36557 & 0.39677 & 0.23590 & 0.42470 & 0.40959 & 0.07717 \\
% Within Adj. Pseudo R2 & -0.14018 & -0.03985 & -0.30838 & -0.17423 & -0.05799 & -0.12906 & -0.68549 & -0.12152 & -0.30960\\
\bottomrule
\end{tabular}}\label{tab:verifeols}
\caption*{\footnotesize Controlling for each relationship, dataset, bootstrap composition, submatrix composition, item and subject counts with fixed effects, we measure how large the differences are, on average, between Verifactor and Refactor scores on the same matrices and the extent to which humans assessments change the difference in Verifactor outcomes. We see that humans are significantly harder to predict out-of-sample than LLMs on AI Benchmarks and items. Part of this may be due to the nature of historical leaderboard data, where models are representative of capacity at a certain point in technological development, and part may be due a greater diversity of human tasks represented within the datasets.}
\end{table*}

\begin{figure*}[h!]
    \centering
    \includegraphics[width=1\linewidth]{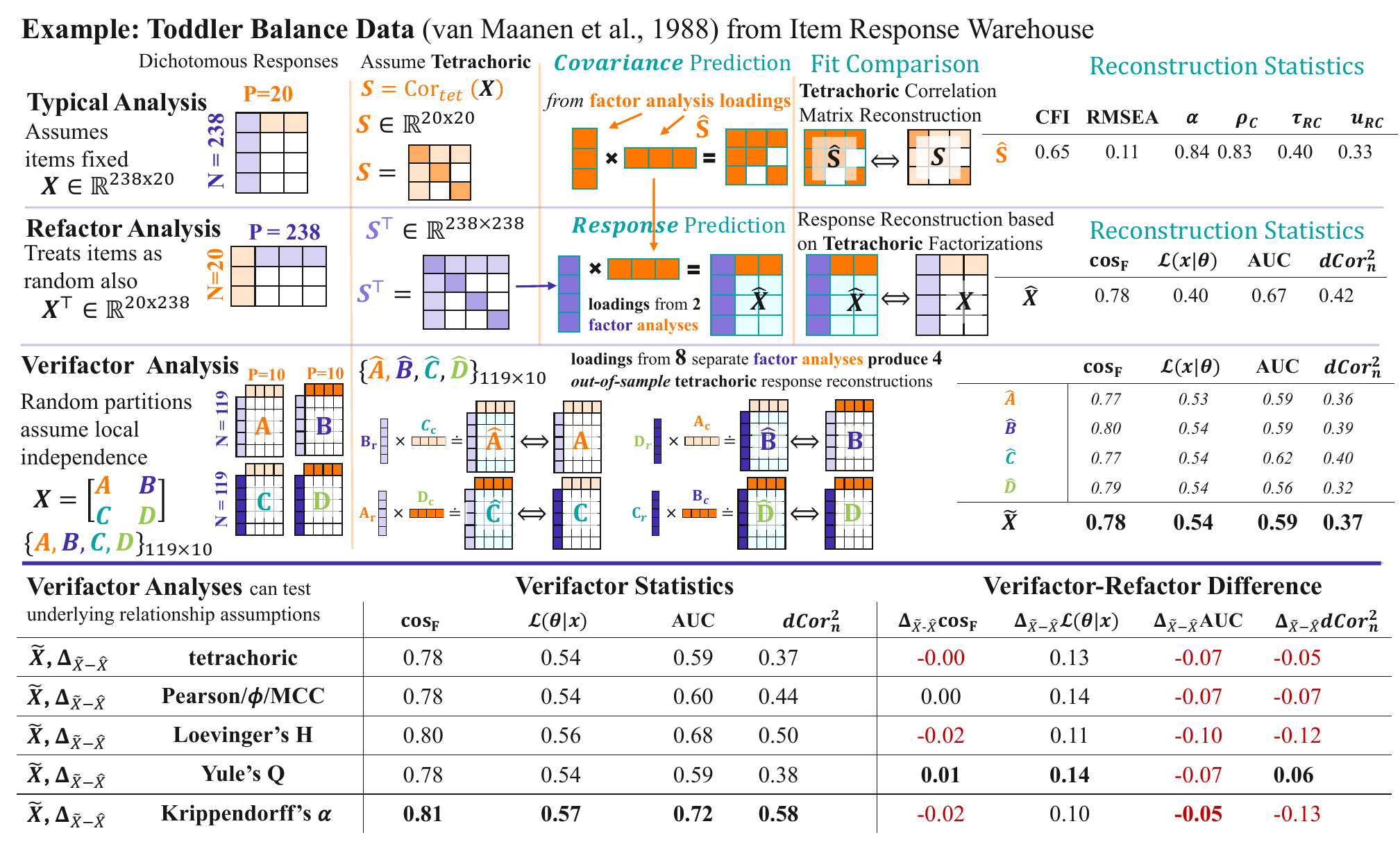}
    \caption{Example Refactor and Verifactor Analyses to test assumptions of underlying relationships as illustrated in Figure \ref{fig:refactor}.}
    \label{fig:ex_verifactor}
\end{figure*}

\subsection{Refactor Metrics}\label{app:eval_metrics}
A core contribution of this work is the intentional selection of a diagnostic suite of metrics for evaluating the reconstruction. A single summary statistic is insufficient to assess the correspondence between the observed data matrix $\boldsymbol{X}$ and its model-based reconstruction $\boldsymbol{\hat{X}}$. Therefore, we propose a multifaceted evaluation that provides a holistic profile of the model's performance. The first class of metrics assesses predictive accuracy and rank-order fidelity. These include the Area Under the ROC Curve (AUC) for evaluating rank discrimination, an isotonic $R^2$ to quantify the proportion of variance explained under a monotonic transformation, and Kendall's $\tau$ (\cite{kendall_rank_1962}) to measure the preservation of rank-order agreement between observed and predicted values. We also assess the geometric alignment of the overall response patterns via the matrix cosine similarity, and from an information-theoretic perspective, we use binary cross-entropy to quantify information loss, a measure particularly sensitive to the model's ability to predict rare events.

% Beyond these foundational measures, we employ advanced metrics to probe for deeper, nonlinear structural correspondence, which is often missed by conventional analyses. A central tool in this endeavor is the bias-corrected distance correlation ($\text{dCor}$), a powerful measure that quantifies both linear and nonlinear dependence between matrices (\cite{szekely_energy_2017}). This is supplemented by partial distance correlation (\cite{szekely_partial_2014}) to specifically isolate dependence structures that are not captured by simple marginal effects. To assess the degree of shared structure and subspace overlap, we utilize Yanai's Generalized Coefficient of Determination (\cite{yanai_proposition_1980}) (GCD) and Ramsay's matrix correlation coefficients ($r_1$ and $r_2$) (\cite{ramsay_matrix_1984}), which quantify the alignment between the principal components and spectral properties of the observed and reconstructed data. For metrics that are dependent on matrix orientation, we ensure a comprehensive and symmetric evaluation by computing each metric on both the reconstruction and its transpose and reporting the average, thereby providing a robust assessment of structural fidelity and providing recommendations and heuristics for baselines. Table \ref{tab:metrics} with brief descriptions of these and of standard factor analyses and software used in estimation.

\subsection{Multi-faceted Reconstruction Evaluation}
To comprehensively assess the quality of the rank-1 reconstruction $\mathbf{\widehat{X}}$, we employ a suite of metrics that probe different aspects of its relationship with the original binary matrix $\mathbf{X}$. This multi-faceted approach ensures a robust and holistic evaluation and avoids pitfalls associated with matrix relationships \citep{josse_measuring_2016,josse_measures_2014,ramsay_matrix_1984}. 

\subsubsection{Measures of Structural and Ordinal Similarity}
These metrics assess the overall correspondence between the continuous-valued reconstruction and the binary ground truth.
\begin{itemize}
    \item \textbf{Matrix Cosine Similarity:} We measure the geometric alignment of the two matrices, treated as vectors in $\mathbb{R}^{N \times J}$. The cosine similarity, $\cos(\mathbf{X}, \mathbf{\widehat{X}}) = \frac{\langle \mathbf{X}, \mathbf{\widehat{X}} \rangle_F}{\|\mathbf{X}\|_F \|\mathbf{\widehat{X}}\|_F}$, is insensitive to global scaling and reflects the preservation of the data's overall geometry.
    \item \textbf{Area Under the ROC Curve (AUC):} By treating the elements of $\mathbf{\widehat{X}}$ as a continuous score to predict the binary elements of $\mathbf{X}$, the AUC quantifies the model's ability to correctly rank the order of 1s and 0s. An AUC of 1.0 signifies a perfect ordinal separation.
    \item \textbf{Kendall's Rank Correlation ($\tau$):} We compute Kendall's $\tau$ between the vectorized elements, $\text{vec}(\mathbf{X})$ and $\text{vec}(\mathbf{\widehat{X}})$, to provide a scale-robust measure of the ordinal agreement between the matrices. % by estimating whether lower reconstruction values also lower in .
\end{itemize}

\subsubsection{Measures of Nonlinear Dependence}
To specifically test for the strength of nonlinear relationships captured by our model, we use bias-corrected distance correlation, which detects any functional dependence between the isotonized percentile rankings \citep{heller_consistent_2013} of the variables. The isotonized rankings that minimize residual error in transformation are found by isotonic regression \citep{busing_monotone_2022} $\widehat{X} \rightarrow \widetilde{X}\in [0,1]$, a transformation that simultaneously results in values that represent probabilities at each monotonic step.
\begin{itemize}
    \item \textbf{Mean Bias-corrected Squared Distance Correlation ($\operatorname{dCor}^2_n$):} We calculate the bias-corrected squared distance correlation \citep{szekely_energy_2013,szekely_energy_2017} between the matrices, $\operatorname{dCor}^2_n(
    \mathbf{X},\mathbf{\widetilde{X}})$. This metric is zero if and only if the matrices are independent, providing a powerful test of association that is not limited to linear relationships. The bias-corrected measure is orientation-dependent, so we report the mean of this value and of the transposed estimate, $\operatorname{dCor}^2_n(
    \mathbf{X}^\top,\mathbf{\widetilde{X}}^\top)$
    \item \textbf{Mean Bias-corrected (Squared) Partial Distance Correlation ($\operatorname{pdCor}$):} To determine if the reconstruction provides information beyond simple marginal effects, we compute the bias-corrected partial distance correlation \citep{szekely_partial_2014}. We control for a baseline rank-1 matrix $\mathbf{E}$ constructed from the outer product of the row and column marginal probabilities. We refer to this as the ``independence structure'' The resulting metric, $\operatorname{pdCor}(\mathbf{X},\mathbf{\widetilde{X}};\mathbf{E})$, isolates the explanatory power of the dependence structure captured by the factor loadings.
\end{itemize}

\subsubsection{Probabilistic Goodness-of-Fit}
Finally, we evaluate the reconstruction from a probabilistic standpoint.
\begin{itemize}
    \item \textbf{Likelihood:} The entries of $\mathbf{\widehat{X}}$ are not probabilities.  Using the same isotonic regression used to find the non-decreasing function that best maps the values of $\mathbf{\widehat{X}}$ to a matrix of probabilities $\mathbf{\widetilde{X}} \in [0, 1]^{N \times J}$. We then calculate the log-likelihood of the observed data $\mathbf{X}$ given these probabilities. To facilitate comparison across datasets of different sizes, we report the geometric mean of the likelihoods, which is equivalent to the exponentiated average per-cell log-likelihood and serves as a principled measure of model fit. This metric can help identify whether a rank-1 representation has a greater than chance probability to result in the original response matrix.
\end{itemize}

\subsubsection{Reinterpreting classical fit indices under a data-reconstruction lens}
\label{app:reinterpret-fit}

Most factor-analytic fit statistics (e.g., CFI, TLI, RMSEA) and many unidimensionality indices (e.g., $\alpha$, average interitem $r$, $\rho_c$, $u$) are functions of \emph{image residuals}---differences between an observed association matrix and its model-implied counterpart. Refactor/Verifactor show that these indices should be interpreted as measuring:

\begin{quote}
\emph{how well a rank--$k$ correlational model reproduces a particular second-order summary of the data},
\end{quote}

not necessarily as measuring:

\begin{quote}
\emph{whether a rank--$k$ correlational model captures the signal that is actually predictive of the response matrix}.
\end{quote}

This clarifies why classical indices can agree strongly with each other yet be weakly related to recoverability: they share the same sufficient statistics (the association image) and therefore cannot detect failures that are invisible at the level of pairwise summaries (mixtures, context effects, conditional dependence, nonlinearities). In our results (Figures~\ref{fig:fafit_vs_refact}--\ref{fig:fafit_vs_refact}), the near-absence of association between classical indices and Verifactor scores indicates that, across diverse public datasets, passing image-fit thresholds often does not entail out-of-sample recoverability of $X$.

\subsubsection{Correlational appropriateness for dichotomous data: correlation is a modeling choice}
\label{app:dichotomous-correlation}

For binary $X$, ``correlation'' is not uniquely defined. Pearson's $\phi$, tetrachoric correlation, Yule's $Q$, quadrant correlation, and information-theoretic measures each encode different invariances and different implied latent-variable stories. Image-based pipelines typically commit to one association matrix (often tetrachoric or $\phi$) and then evaluate fit within that choice. Refactor/Verifactor instead make the choice \emph{empirically comparable} by evaluating each association through the same out-of-sample completion task.

The consequence is practical: Table~(examples) shows that the \emph{best} association depends on both the dataset and the downstream metric. For some datasets, Loevinger's $H$ or Krippendorff's $\alpha$ yields markedly better Verifactor dependence and rank-order recovery than tetrachoric; for others, $\phi$ or quadrant-based measures dominate on AUC or likelihood. This is not a nuisance but an insight: it indicates which associational abstraction best captures the dataset’s stable, generalizable structure.

\begin{proposition}[Association matrices as competing hypotheses]
\label{prop:assoc-hypotheses}
Fix $k=1$. Each association operator $\mathcal{A}^{(m)}$ defines a hypothesis class
$\mathcal{H}^{(m)}:=\{\widehat X=\mathcal{R}_1(X;B_1^{(m)},F_1^{(m)})\}$.
Under BCV, comparing $\mathrm{VF}_m(1;X)$ across $m$ is a valid model comparison for these hypothesis classes with respect to the chosen loss/metric (Theorem~\ref{thm:bcv-unbiased}).
\end{proposition}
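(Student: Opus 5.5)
The strategy is to reduce the claim to Theorem~\ref{thm:bcv-unbiased}, applied once per association operator. The first step is to make ``valid model comparison'' precise. Comparing $\mathrm{VF}_m(1;X)$ across $m$ should estimate, up to a common additive constant and without differential bias, the ordering of the population out-of-sample risks $\mathcal{R}_{\mathcal{A}^{(m)}}(m)$ from \eqref{eq:app-assoc-risk}. The comparison is therefore valid if
\[
\mathbb{E}\bigl[\mathrm{VF}^{(m)}(1;X)-\mathrm{VF}^{(m')}(1;X)\bigr]
\]
equals the difference in signal-approximation error between hypothesis classes $\mathcal{H}^{(m)}$ and $\mathcal{H}^{(m')}$. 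I work with squared-error loss $\|A-\widehat A\|_F^2$, which is the setting where Theorem~\ref{thm:bcv-unbiased} applies.

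Next, fix a partition $\Pi$ and a fold $(i,j)$. For each $m$, the predictor $\widehat A^{(m)}_{ij}$ is built by applying $\mathcal{A}^{(m)}$, $\textsf{Z}$, and $\mathcal{R}_1$ to held-in blocks only. It is therefore measurable with respect to the held-in $\sigma$-field, and this $\sigma$-field is the same for every $m$. The decomposition \eqref{eq:app-bcv-decomp} then holds for every $m$ with the \emph{same} noise term $\mathbb{E}_\Pi\|E_{A,ij}\|_F^2$, because that term does not depend on the predictor. Taking expectations of the fold-average defining $\mathrm{VF}$ and subtracting two operators cancels the noise term. The difference in expected Verifactor error is then exactly
\[
\mathbb{E}_\Pi\bigl[\|S_{A,ij}-\widehat A^{(m)}_{ij}\|_F^2\bigr]-\mathbb{E}_\Pi\bigl[\|S_{A,ij}-\widehat A^{(m')}_{ij}\|_F^2\bigr],
\]
which is a comparison of how well each induced rank--1 structure approximates the held-out signal. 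Evaluating all operators on a common random partition $\Pi$, as Proposition~\ref{prop:assoc-selection} prescribes, gives a paired comparison, and linearity of expectation handles the average over folds.

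The main obstacle is the scope of the claim. Theorem~\ref{thm:bcv-unbiased} covers squared error only. For the other metrics (AUC, $\tau$, isotonic likelihood, $\mathrm{dCor}$) there is no additive noise decomposition. For these I would state a weaker conclusion: held-out evaluation still gives an unbiased estimate of each class's own out-of-sample risk, because the held-out block is independent of the fitted predictor given the held-in data. The comparison is then valid in the sense of comparing unbiased risk estimates, though without the noise-cancellation interpretation. I would also flag that validity concerns expectations over $\Pi$ and $E$, not a guarantee for any single dataset. Finally, it relies on the conditional mean-zero assumption of Theorem~\ref{thm:bcv-unbiased} holding identically across operators, which is automatic because that assumption concerns $X$ and not $\mathcal{A}^{(m)}$.
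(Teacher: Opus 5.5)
Your proposal is correct and is essentially the paper's own justification, which consists only of the appeal to Theorem~\ref{thm:bcv-unbiased} together with the same-partition argument in the proof of Proposition~\ref{prop:assoc-selection}. Your explicit cancellation of the predictor-independent noise term, and your restriction of that reading to squared error, spell out what the paper leaves implicit. One caveat, inherited from the theorem you rely on: the cross-term vanishes only if the held-out noise $E_{A,ij}$ is conditionally mean-zero given the held-in blocks (e.g.\ independent entries), which is slightly stronger than the stated $\mathbb{E}[E\mid S]=0$, so that stronger condition is the assumption you are actually using.
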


Thus, ``correlational appropriateness'' becomes testable: the question is no longer whether a correlation matrix can be formed, but whether the induced rank--1 correlational geometry predicts held-out responses.

\subsubsection{Random-items versus fixed-items: when classical conclusions can be biased}
\label{app:fixed-vs-random}

Many psychometric applications implicitly treat items as fixed (a specific instrument) and persons as random. AI benchmark settings often invert or complicate this: tasks/prompts/items can be treated as random draws from an evolving distribution; additionally, $n\ll p$ (few models, many tasks) or $n\gg p$ (many users, few items) are common. Image-based indices computed on $p\times p$ associations can become unstable or even ill-defined when $p$ is large relative to $n$, and may conflate item difficulty (marginals) with genuine shared structure.

Refactor/Verifactor help separate these issues because they evaluate reconstructions of $X$ itself and can explicitly control for marginals or fixed effects before measuring dependence structure (as in the partialed metrics described in the caption of Figure~\ref{fig:fafit_vs_refact}). In particular, the partial distance-correlation metric
$R^*_{X,\widehat X;\mathbb{E}(X)}$
quantifies dependence beyond what is explained by the independence model $\mathbb{E}(X)$ derived from row/column marginals, preventing spurious ``fit'' driven by prevalence alone.

\begin{corollary}[Marginal-driven fit is detectable]
\label{cor:marginal-detect-app}
If a reconstruction $\widehat X$ succeeds primarily by matching row/column marginals (e.g., predicting item means and person means) but not interaction structure, then metrics sensitive to residual dependence (e.g., bias-corrected dCor and $R^*$) will remain low even when likelihood or cosine similarity is moderately high.
\end{corollary}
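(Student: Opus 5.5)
We would first make the informal hypothesis precise. We read ``succeeds primarily by matching row/column marginals'' as this: the isotonized reconstruction $\widetilde X=g^\star(\widehat X)$ from Proposition~\ref{prop:isotonic} is, up to a small perturbation $\Delta$, an entrywise function of the independence matrix $\mathbb{E}(X)=\pi_r\pi_c^\top$, so $\widetilde X = h(\pi_r\pi_c^\top)+\Delta$ with $\|\Delta\|_F$ small. We read ``no interaction structure'' as the conditional model of Proposition~\ref{prop:marginal_mse}: $X_{ij}\sim\mathrm{Bern}((\pi_r)_i(\pi_c)_j)$ independently given the marginals. The corollary then has two halves, and we would prove them separately.

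\emph{Step 1 (cosine and likelihood can be moderately high).} Take $\widehat X=\mathbb{E}(X)$, i.e.\ $h=\mathrm{id}$ and $\Delta=0$. Proposition~\ref{prop:marginal_mse} says this predictor is Bayes optimal under squared error. The isotonic calibration of Proposition~\ref{prop:isotonic} maps it to calibrated probabilities, so the per-cell Bernoulli likelihood reaches its conditional maximum given the marginals. The matrix cosine is $\langle X,\mathbb{E}(X)\rangle_F/(\|X\|_F\|\mathbb{E}(X)\|_F)$. By a law-of-large-numbers argument over the $np$ independent cells, this concentrates at $\|\mathbb{E}X\|_F/\sqrt{\sum_{ij}\mathbb{E}X_{ij}}$, which is bounded away from zero whenever the marginals are heterogeneous. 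So the ``moderately high'' clause follows directly from results already stated.

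\emph{Step 2 ($R^*$ stays low).} Here we would work with the U-centered distance matrices used by the bias-corrected partial distance correlation of \citet{szekely_partial_2014}. By definition, $\operatorname{pdCor}(X,\widetilde X;E)$ is the Hilbert-space correlation between the residuals of $\widetilde{\mathbf a}(X)$ and $\widetilde{\mathbf a}(\widetilde X)$ after projecting each onto $\widetilde{\mathbf a}(E)$.

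First suppose $\widetilde X$ is an affine function of $E$. Then its U-centered distance matrix is a scalar multiple of that of $E$, its residual is exactly zero, and $R^*=0$.

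For a general monotone $h$ with perturbation $\Delta$, we would bound the residual norm. It is controlled by $\|\widetilde{\mathbf a}(h(E))-c\,\widetilde{\mathbf a}(E)\|$ for the optimal constant $c$, plus a term of order $\|\Delta\|_F$. This uses Lipschitz continuity of the map sending a matrix to its U-centered distance matrix. Since $|\mathrm{pdCor}|\le 1$, scaling the bound by the residual norm of $X$ shows that $R^*$ is small. That norm is bounded below because $X$ carries Bernoulli noise.

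\emph{Step 3 (plain $\operatorname{dCor}^2_n$).} We would argue that under the conditional-independence model, the population distance covariance between rows of $X$ and rows of $\widetilde X$ is generated only by the shared dependence on $\pi_r$. It is therefore bounded by the dCor between $X$ and $E$ itself. This is the baseline level that any marginal-only predictor attains, so ``remains low'' is to be read relative to that baseline.

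\emph{Main obstacle.} The hardest step is the non-affine $h$ in Step 2, because distance matrices are not linear in $h$. We would first try to show that isotonic pooling makes $h$ piecewise constant with few levels, and bound the projection residual level by level. We would also state the corollary honestly in this ``relative to the marginal baseline'' form for plain dCor, since an absolute zero holds only for $R^*$.
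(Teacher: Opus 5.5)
The paper states this corollary without proof; it serves as the design rationale for partialling $\mathbb{E}(X)$ out of $R^*$. Your argument therefore has to carry the weight on its own, and Step~2 does not.

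Write $r_Y=\widetilde{\mathbf a}(Y)-\frac{\langle\widetilde{\mathbf a}(Y),\widetilde{\mathbf a}(E)\rangle}{\|\widetilde{\mathbf a}(E)\|^{2}}\,\widetilde{\mathbf a}(E)$ for the partialled residual, with $E=\pi_r\pi_c^\top$. Then $R^*=\langle r_X,r_{\widetilde X}\rangle/(\|r_X\|\,\|r_{\widetilde X}\|)$ is a cosine, invariant to rescaling $r_{\widetilde X}$.

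\begin{itemize}
\item \textbf{The bound controls the wrong quantity.} Your bounds $\|r_{\widetilde X}\|\le\varepsilon$ and $\|r_X\|\ge\delta$ control only $\langle r_X,r_{\widetilde X}\rangle/\|r_X\|$, i.e.\ the partial distance \emph{covariance}. The small factor $\|r_{\widetilde X}\|$ cancels out of the correlation, so it says nothing about $R^*$.
\item \textbf{The affine case is a knife-edge.} There $R^*$ is $0/0$, equal to $0$ only by the Sz\'ekely--Rizzo convention, and $R^*$ is discontinuous at that point. An arbitrarily small $\Delta$, or slight curvature in $h$, yields a residual whose direction, and hence whose cosine with $r_X$, is unconstrained.
\item \textbf{Your proposed fix goes the wrong way.} Because $g^\star$ is a step function, the affine case essentially never occurs. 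Exploiting few pooled levels makes $h$ more non-affine, not less.
\end{itemize}

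What you actually need is near-\emph{orthogonality} of $r_X$ and $r_{\widetilde X}$, and the conditional-independence model does not supply it. Under that model $\mathbb{E}\|X_{i\cdot}-X_{k\cdot}\|^{2}=\sum_j\bigl(p_{ij}+p_{kj}-2p_{ij}p_{kj}\bigr)$ with $p_{ij}=(\pi_r)_i(\pi_c)_j$. U-centering annihilates exactly the additive kernels $f(i)+f(k)$ and nothing else. After the square root, $\widetilde{\mathbf a}(X)$ therefore keeps a smooth non-additive function of $\bigl((\pi_r)_i,(\pi_r)_k\bigr)$. Its size grows like $\sqrt p$, against $O(1)$ Bernoulli fluctuation per distance.

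Partialling removes only the projection of that function onto the single direction $\widetilde{\mathbf a}(E)$. Up to scale, $\widetilde{\mathbf a}(E)$ is U-centered $|(\pi_r)_i-(\pi_r)_k|$; it is kinked on the diagonal, so it is not collinear with the smooth kernel. Meanwhile $\widetilde{\mathbf a}(h(E))$ carries a different non-additive function of the same marginals. Nothing forces the two residuals to be orthogonal. This is the concrete form of the known fact that zero partial distance correlation does not characterize conditional independence \cite{szekely_partial_2014}.

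So ``$R^*$ remains low'' does not follow from conditional independence given the marginals. You need an extra hypothesis. Alternatively, you could prove a pdCov statement for nearly affine $h$, but that is neither the reported metric nor the isotonized situation.

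Step~3 has the same kind of gap. The inference ``generated only by the shared dependence on $\pi_r$, hence bounded by $\mathrm{dCor}(X,E)$'' is a data-processing inequality, and distance correlation satisfies none. For scalar $Z$, $\mathrm{dCor}(Z^3,Z^3)=1>\mathrm{dCor}(Z^3,Z)$, although both arguments are functions of $Z$. This matters here, not just in toy cases: $g^\star$ is chosen precisely to bring $\widetilde X$ close to $X$, so it can push dCor above the $E$ baseline.

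Moreover, $g^\star$ is fitted in-sample, so its pooled levels are block averages of $X$ itself. Hence $\widetilde X$ is not a function of the marginals alone. This breaks the premise, used in both Steps~2 and~3, that all dependence between $X$ and $\widetilde X$ runs through $\pi_r$.

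Your Step~1 is sound, and so is your decision to weaken the plain-dCor clause: with heterogeneous row marginals, even $\operatorname{dCor}^2_n(X,E)$ need not be small. But the baseline bound needs a genuine argument. The $R^*$ clause needs either additional assumptions or an explicitly heuristic reading.
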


This clarifies why multiple reconstruction metrics are necessary: a single scalar can be dominated by easy-to-predict marginals, while BCV dependence metrics detect whether the rank--1 model captures the \emph{interaction} signal that unidimensionality claims typically intend.

\subsubsection{Why a suite of reconstruction metrics is not redundant}
\label{app:metric-suite}

Classical fit indices are often highly correlated because they are deterministic transforms of the same image residuals. Refactor/Verifactor metrics are deliberately heterogeneous because reconstruction quality is multi-faceted:

\begin{itemize}\itemsep2pt
\item \textbf{Discrimination (AUC)} asks whether $\widehat X$ ranks positives above negatives (crucial in sparse binary matrices).
\item \textbf{Calibration/monotonic signal (isotonic $R^2$, Kendall $\tau$)} asks whether the model preserves ordinal structure even under nonlinear distortions.
\item \textbf{Geometric fidelity (matrix cosine)} emphasizes angle/orientation preservation of response patterns.
\item \textbf{Nonlinear dependence (bias-corrected dCor, partial dCor)} detects shared structure not reducible to linear correlation or marginals.
\item \textbf{Information-theoretic loss (cross-entropy / likelihood)} penalizes confident wrong predictions and is sensitive to rare events.
\end{itemize}

The metaregression table (bootstrapped Verifactor results) illustrates that associations can rank differently depending on which aspect of fit is most salient for the scientific question (e.g., discrimination vs.\ dependence vs.\ calibration). This is desirable: it forces explicit alignment between the evaluation criterion and the intended meaning of ``unidimensional signal'' in a given domain.

\section{Formal Treatment of Image-based fit, circularity, and data-level recoverability}\label{app:circularity}

\subsection{Image-first estimation as a composition of operators}
\label{app:composition}

Let $X\in\mathbb{R}^{n\times p}$ be an observed response matrix and let
$\mathcal{A}_c:\mathbb{R}^{n\times p}\to\mathbb{R}^{p\times p}$
be an association operator producing an item-image $A_c=\mathcal{A}_c(X)$.
Let $\mathcal{E}$ be an estimator mapping the image to a rank--1 representation,
e.g.,
\[
\mathcal{E}(A_c) = \widehat v \in \mathbb{R}^{p},
\qquad\text{with } A_c \approx \widehat v\,\widehat v^\top.
\]
Let $\mathcal{F}$ be an image-based fit functional (e.g., a CFI/TLI-type index, a residual norm, or a likelihood in the image space), so that classical evaluation returns
\begin{equation}
T_{\mathrm{img}}(X)
\;:=\;
\mathcal{F}\bigl(\mathcal{A}_c(X),\,\mathcal{E}(\mathcal{A}_c(X))\bigr).
\label{eq:image-test}
\end{equation}
By construction, $T_{\mathrm{img}}$ depends on $X$ \emph{only through} the image $\mathcal{A}_c(X)$.

In contrast, a data-level (Refactor) evaluation uses both row and column images
$\mathcal{A}_r(X)\in\mathbb{R}^{n\times n}$ and $\mathcal{A}_c(X)\in\mathbb{R}^{p\times p}$,
estimates $(\widehat u,\widehat v)$, reconstructs $\widehat X=\widehat u\,\widehat v^\top$ (possibly with calibration), and evaluates via a matrix metric $m$:
\begin{equation}
T_{\mathrm{data}}(X)
\;:=\;
m\!\left(X,\ \widehat X\right),
\qquad
\widehat X
=
\mathcal{R}\!\left(\widehat u,\widehat v\right),
\quad
(\widehat u,\widehat v)=\mathcal{E}_r(\mathcal{A}_r(X))\times\mathcal{E}_c(\mathcal{A}_c(X)).
\label{eq:data-test}
\end{equation}
Equation~\eqref{eq:data-test} is a test of the rank--1 hypothesis where it lives: in the original response matrix.

\subsection{A precise sense in which image-based evaluation can be self-confirming}
\label{app:selfconfirming}

Image-based evaluation is not ``wrong''; it answers a different question. The methodological vulnerability arises when the scientific claim concerns $X$ (e.g., ``a single latent trait explains the responses'') but the test statistic is effectively a property of $\mathcal{A}(X)$.

\begin{proposition}[Image-based fit cannot distinguish datasets sharing the same image]
\label{prop:image-equivalence}
Fix an association operator $\mathcal{A}_c$ and consider two matrices $X$ and $X'$ such that
$\mathcal{A}_c(X)=\mathcal{A}_c(X')$.
Then any image-based test statistic of the form \eqref{eq:image-test} satisfies
$T_{\mathrm{img}}(X)=T_{\mathrm{img}}(X')$,
even if $X$ and $X'$ differ substantially entrywise and have different recoverability by rank--1 reconstructions.
\end{proposition}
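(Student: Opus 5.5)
The main claim is a direct consequence of the compositional form \eqref{eq:image-test}. By construction, $T_{\mathrm{img}}(X)=\mathcal{F}\bigl(\mathcal{A}_c(X),\mathcal{E}(\mathcal{A}_c(X))\bigr)$ factors as $T_{\mathrm{img}}=\Psi\circ\mathcal{A}_c$, where $\Psi(A):=\mathcal{F}(A,\mathcal{E}(A))$ is a function on $p\times p$ images alone. Hence if $\mathcal{A}_c(X)=\mathcal{A}_c(X')$, then $T_{\mathrm{img}}(X)=\Psi(\mathcal{A}_c(X))=\Psi(\mathcal{A}_c(X'))=T_{\mathrm{img}}(X')$. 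I would state explicitly that $\mathcal{E}$ and $\mathcal{F}$ are assumed to be deterministic maps on the image. If they involve randomness (random starts, say), the same argument holds in distribution conditional on the image. That finishes the equality.

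The substantive part is the ``even if'' clause, which is an existence claim. I would supply a concrete witness showing that a pair with equal images can differ greatly in Refactor recoverability \eqref{eq:data-test}. The simplest route uses invariances of common association operators. For $\mathcal{A}_c=\mathrm{cor}$ (or $\phi$), the image is invariant under any permutation of rows, and, more generally, any operator that treats rows as exchangeable samples, which includes $\phi$, tetrachoric and $q^\prime$, depends only on the multiset of rows.

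The plan is to take $X$ nearly rank-1 in a monotone sense, for example a Guttman-type pattern $X_{ij}=\mathbb{I}\{\theta_i>b_j\}$. Then I would form $X'$ by resampling or pairing rows so that the column image is preserved while row-side structure changes. Permutation alone does not suffice, because a row permutation preserves rank-1 recoverability.

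To break recoverability, I would therefore use a different construction: two matrices with identical pairwise $2\times2$ tables for every item pair but different higher-order (three-way) structure. An example is XOR-type items, with $X'_{\cdot 3}=X'_{\cdot 1}\oplus X'_{\cdot 2}$ and independent fair coins, versus $X$ with three independent fair coins. All pairwise associations are zero under $\phi$, tetrachoric, and $q^\prime$, so both images equal $I$ in population, and can be made exactly equal in finite samples by choosing the rows to balance counts.

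Under $\mathcal{A}_r$, the row images will then differ, and $m(X,\widehat X)\neq m(X',\widehat X')$ can be checked directly for a small explicit matrix (e.g., $n=4$ rows enumerating the four XOR patterns versus a matched independent design). I expect the main obstacle to be this last step: making the images exactly equal in a finite sample while also certifying a strict gap in a chosen metric such as $R^2_{\mathrm{iso}}$. I would handle it by choosing a small explicit pair and computing both reconstructions by hand. If exact equality proves awkward, I would fall back on a population-level version of the statement in the spirit of Proposition~\ref{prop:non-equivalence}.
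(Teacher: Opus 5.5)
Your first paragraph matches the paper's proof. The paper's entire argument is that $T_{\mathrm{img}}$ receives $X$ only through $\mathcal{A}_c(X)$, which is your factorization $T_{\mathrm{img}}=\Psi\circ\mathcal{A}_c$ with $\Psi(A)=\mathcal{F}(A,\mathcal{E}(A))$. Stating explicitly that $\mathcal{E}$ and $\mathcal{F}$ are deterministic functions of the image is a good addition. In the same spirit, $\chi^2$-based indices such as CFI, TLI and RMSEA also depend on the number of rows $n$, so they have this form only when $X$ and $X'$ share $n$. The paper does not read the ``even if'' clause as an existence claim: it is concessive, and the proof stops at the equality without constructing a witness. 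So nothing beyond your first paragraph is required.

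If you still want a witness, the XOR construction will not supply one.

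\emph{Degenerate images.} With all pairwise associations zero, the shared column image is $I_p$, whose leading eigenvalue has full multiplicity, so $\textsf{Z}(A_c,1)$ does not determine $\hat v$. The row side is no better. For $q'$, the row image of the four-row XOR design is $\tfrac43 I-\tfrac13\mathbf{1}\mathbf{1}^\top$, which is degenerate on $\mathbf{1}^\perp$. At $n=8$ (full factorial versus doubled XOR), $S^\top S=8I_3$ for both matrices, where $S=2X-\mathbf{1}\mathbf{1}^\top$. Any $R^2_{\mathrm{iso}}$ gap would therefore be an artifact of eigensolver tie-breaking.

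\emph{No second design at $n=4$.} Requiring $\phi=0$ with non-constant columns forces two ones per column and each $2\times2$ cell exactly once. That leaves only the even-parity rows or the odd-parity rows. The latter are the entrywise complement of the former (up to row order), so the two designs have the same images on both axes, and row-$\phi$ is even undefined because each contains a constant row.

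\emph{What a usable witness needs.} You need a shared image with a simple top eigenvalue. One way is to perturb the pattern counts of a positively associated item triple along $(-1)^{x_1+x_2+x_3}$, which preserves every pairwise table. Even then, be careful with $q'$. Its two images are $\tfrac1n S^\top S$ and $\tfrac1p SS^\top$, so with an eigenvector-based $\textsf{Z}$ you get $\hat u\propto S\hat v$ up to sign. If $\hat v\propto\mathbf{1}$, the reconstruction orders entries by row sums, and $R^2_{\mathrm{iso}}$ reduces to the between-row share of variance, which is again a function of the pairwise tables alone. For example, pattern counts $(3,1,1,1,1,1,1,3)$ and $(4,0,0,2,0,2,2,2)$ over $000,\dots,111$ share every pairwise table, and both give $R^2_{\mathrm{iso}}=5/9$. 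So for $q'$ any gap must come from unequal loadings, and you would still have to exhibit and compute it.
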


\begin{proof}
Immediate from \eqref{eq:image-test}, since $\mathcal{A}_c(X)$ is the only argument passed to $\mathcal{F}$.
\end{proof}

Proposition~\ref{prop:image-equivalence} formalizes the key limitation: once the association image is fixed, \emph{all} information not preserved by $\mathcal{A}_c$ is invisible to the evaluation. For binary and ordinal data, different $\mathcal{A}_c$ preserve different information; consequently, high image-based fit may reflect the internal coherence of the chosen association geometry rather than data-level recoverability.

\subsection{Recoverability as a necessary implication of a rank--1 signal model}
\label{app:recoverability}

To connect the rank--1 hypothesis to prediction, consider a signal-plus-noise model:
\begin{equation}
X = S + E,\qquad S = u v^\top,\quad \mathrm{rank}(S)=1.
\label{eq:signal-plus-noise}
\end{equation}
A rank--1 model is useful for unidimensional measurement only insofar as it yields a reconstruction $\widehat X$ close to $X$ (or $S$).

\begin{proposition}[Recoverability is a necessary condition for unidimensional adequacy]
\label{prop:recoverability-necessary}
If $X$ admits a rank--1 signal decomposition \eqref{eq:signal-plus-noise} with small noise (in the sense that $\|E\|$ is small in an application-relevant norm), then there exists a rank--1 matrix $\widehat X$ such that $\|X-\widehat X\|$ is small in the same norm. Conversely, if \emph{no} rank--1 matrix provides small reconstruction error (in-sample or out-of-sample), then either (i) the signal is not approximately rank--1, or (ii) the chosen association operator does not capture the signal necessary to estimate $(u,v)$, or (iii) the relevant structure is non-correlational (e.g., mixture, conditional dependence, nonlinear interactions).
\end{proposition}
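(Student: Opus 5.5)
The forward direction needs no estimation theory: we exhibit a witness. Under \eqref{eq:signal-plus-noise}, take $\widehat X := S = uv^\top$. This matrix has rank at most one, and $X-\widehat X = E$, so $\|X-\widehat X\| = \|E\|$ in whatever norm is used to measure the noise. Smallness of the residual is therefore inherited verbatim from smallness of $E$.

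For the Frobenius (or spectral) norm we can say more. By the Eckart--Young--Mirsky characterization recalled in Section~\ref{app:refactor-proof}, the truncated SVD $\sigma_1 u_1 v_1^\top$ of $X$ satisfies
\[
\|X-\sigma_1u_1v_1^\top\|_F \le \|X-S\|_F = \|E\|_F .
\]
So the optimal rank--1 fit is at least as good as the witness. This also links the existence statement to the Refactor construction. When $\mathcal{A}$ is the Gram operator, Proposition~\ref{prop:rowcol-duality} identifies $(u_1,v_1)$ with the leading eigenvectors of $XX^\top$ and $X^\top X$. Theorem~\ref{thm:refactor-consistency} then shows that any estimator $\textsf{Z}$ whose loadings converge to the true row and column spaces yields $\widehat X_k\to S$ as $\|E\|_F\to0$. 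The out-of-sample version uses Lemma~\ref{lem:self-consistency}. In the noiseless rank--1 case, $A = BD^\dagger C$ exactly, and Theorem~\ref{thm:bcv-unbiased} shows that the expected held-out error equals approximation error plus $\mathbb{E}\|E_A\|_F^2$. Hence small noise gives small Verifactor error, provided the held-in block $D$ retains rank one.

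The converse I would prove as the contrapositive of the forward direction, organized by which link in the pipeline $X\mapsto\mathcal{A}(X)\mapsto(\widehat u,\widehat v)\mapsto\widehat X$ breaks. Suppose every rank--1 reconstruction has large error. If the signal were approximately rank--1, i.e.\ $X=S+E$ with $S$ rank--1 and $E$ small, the forward direction would produce a good rank--1 matrix. So either (i) no such decomposition exists, or such a decomposition exists but the reconstructions actually produced by the pipeline still fail.

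In the latter case, the hypotheses of Theorem~\ref{thm:refactor-consistency} must be violated. One possibility is that the loadings computed from $\mathcal{A}_r(X),\mathcal{A}_c(X)$ do not converge to $\mathrm{span}(u),\mathrm{span}(v)$, which is case (ii): the operator does not capture the signal. The other is that the image is matched while the data-level structure does not factor, which is case (iii). The existence of case (iii) is exactly what Proposition~\ref{prop:non-equivalence} and Proposition~\ref{prop:image-equivalence} supply: two datasets can share an image yet differ in recoverability.

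The main obstacle is that the converse is stated qualitatively. Cases (i)--(iii) are not mutually exclusive, and ``non-correlational'' is not a formally defined class. I would therefore make the trichotomy precise by defining (iii) as the residual case. That is, (iii) holds when the signal is rank--1 in some monotone or link-transformed scale, so that the isotonic evaluation of Proposition~\ref{prop:isotonic_optimality} applies, but the transformed signal is not rank--1 in the raw scale or in the $\mathcal{A}$-image. With that definition the trichotomy becomes an exhaustive case split rather than a substantive theorem, and the proof is a careful bookkeeping of which hypothesis of Theorem~\ref{thm:refactor-consistency} fails.
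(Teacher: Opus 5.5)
Your forward direction is the paper's proof verbatim: take $\widehat X=S$, so $\|X-\widehat X\|=\|E\|$. The paper's converse is also a one-line contrapositive, and your first two sentences on the converse already contain it. The hypothesis is that \emph{no} rank--1 matrix has small error, and $S$ is a rank--1 matrix. So if $X=S+E$ with small $E$, then $S$ itself would violate the hypothesis. For the out-of-sample reading, the held-out block $S_A$ of the same witness has error $\|E_A\|_F\le\|E\|_F$. Hence (i) holds outright, and the disjunction (i) or (ii) or (iii) is immediate. Your second branch, ``a decomposition exists but the reconstructions actually produced by the pipeline still fail,'' is empty under the stated hypothesis. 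The ``main obstacle'' you name is therefore not an obstacle for this proposition. You need neither to make (ii)--(iii) precise nor to show, via Propositions~\ref{prop:non-equivalence} and~\ref{prop:image-equivalence}, that they can occur.

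The case analysis you present as the real proof implicitly replaces the hypothesis with ``the pipeline's $\widehat X_k=\mathcal{R}_k(X;B_k,F_k)$ has large error.'' That is a stronger claim, and for it your argument does not close, for three reasons.

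(a) Theorem~\ref{thm:refactor-consistency} only asserts $\widehat X_k\to S$ as $\|E\|_F\to0$. Failure at a fixed small noise level therefore does not force any of its hypotheses to fail.

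(b) Those hypotheses include conditions on $\textsf{Z}$ and on $\mathcal{R}_k$ (convergence of loadings, continuity, exactness on $S$). Consider an estimator that fails on an image which does determine $\mathrm{span}(u)$ and $\mathrm{span}(v)$, or a refactoring map that is not exact on $S$. Neither is a failure of the association operator, which is (ii), nor of the scale on which the signal is rank--1, which is your (iii).

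(c) Your second sub-case, ``the image is matched while the data-level structure does not factor,'' contradicts the branch you are in, where $X=S+E$ with $S$ rank--1.

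So the split is not exhaustive unless (iii) is made literally residual, and then the trichotomy is true by definition. Separately, ``small noise gives small Verifactor error'' does not follow from Theorem~\ref{thm:bcv-unbiased}, which only splits the error into two terms. Bounding $\|S_A-B(\widehat D^{(1)})^\dagger C\|_F$ also needs continuity of the rank--1 truncated pseudoinverse at $D_S$, i.e.\ $\sigma_1(D_S)$ bounded away from $0$. None of this is required for the proposition as stated.
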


\begin{proof}
The first statement is immediate by choosing $\widehat X=S$. The contrapositive gives the second statement.
\end{proof}

This proposition motivates Refactor: it evaluates the rank--1 premise by directly measuring whether the implied low-rank structure can recover the data matrix.

\begin{proposition}[Impossibility of Continuous, Injective Dimensionality Reduction]
A one-to-one function $g: \mathbb{R}^p \to \mathbb{R}^k$ with $k < p$ cannot be continuous.
\end{proposition}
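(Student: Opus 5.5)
The plan is to reduce to the Borsuk--Ulam theorem, exactly as in the earlier appearance of this proposition in Appendix~\ref{app:refactor-proof}, taking care that the restriction to a sphere is legitimate. The argument is by contradiction. Suppose $g:\mathbb{R}^p\to\mathbb{R}^k$ is continuous and injective with $k<p$. The case $k=0$ is trivial, since $\mathbb{R}^0$ is a point and $\mathbb{R}^p$ has more than one point, so we may assume $k\ge 1$.

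Since $k+1\le p$, we use the linear inclusion $\iota:\mathbb{R}^{k+1}\hookrightarrow\mathbb{R}^p$, $x\mapsto(x,0,\dots,0)$. It is continuous and injective, so it carries the unit sphere $S^k\subset\mathbb{R}^{k+1}$ into $\mathbb{R}^p$. It also commutes with negation: $\iota(-x)=-\iota(x)$.

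Define $h=g\circ\iota|_{S^k}:S^k\to\mathbb{R}^k$. It is continuous as a composition of continuous maps, and injective as a composition of injective maps. Borsuk--Ulam, which we take as the standard topological input, gives some $x\in S^k$ with $h(x)=h(-x)$. Because $x\ne -x$ on the sphere, injectivity of $h$ fails, which is a contradiction. Hence $g$ cannot be both continuous and one-to-one.

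The only step needing care, and the only real obstacle, is the hypothesis check for Borsuk--Ulam. It requires a continuous map from $S^k$ into $\mathbb{R}^k$ with matching dimensions. This is where $k<p$ is used: it guarantees that $S^k$, and not merely a lower-dimensional sphere, embeds in the domain.

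An alternative route, if one prefers to avoid Borsuk--Ulam, is invariance of domain. Compose $g$ with the inclusion $\mathbb{R}^k\hookrightarrow\mathbb{R}^p$ to get a continuous injection $\mathbb{R}^p\to\mathbb{R}^p$. By invariance of domain this map is open. But its image lies in a proper linear subspace, which has empty interior, so it cannot be open. I would present the Borsuk--Ulam version for consistency with the earlier proof and mention this alternative in a remark.
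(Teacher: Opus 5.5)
Your proof is correct and follows the same route as the paper: use $k<p$ to embed $S^k$ in $\mathbb{R}^p$, restrict $g$ to it, and apply Borsuk--Ulam to get antipodal points with equal images, contradicting injectivity. The explicit linear inclusion, the separate $k=0$ case, and the invariance-of-domain remark are all sound and make the argument more careful, but the core argument is the paper's.
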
\label{prop:dim_reduction}
\begin{proof}
This is a direct consequence of the Borsuk-Ulam theorem, which states that for any continuous function $g: S^k \to \mathbb{R}^k$, there exists a point $\boldsymbol{x} \in S^k$ such that $g(\boldsymbol{x}) = g(-\boldsymbol{x})$. Consider the sphere $S^k \subset \mathbb{R}^{k+1}$. Since $k < p$, we can embed $S^k$ in the domain $\mathbb{R}^p$. If we assume $g$ is a continuous function from $\mathbb{R}^p \to \mathbb{R}^k$, its restriction to $S^k$ is also continuous. By the Borsuk-Ulam theorem, there must exist antipodal points $\boldsymbol{x}$ and $-\boldsymbol{x}$ on $S^k$ such that $g(\boldsymbol{x}) = g(-\boldsymbol{x})$. As $\boldsymbol{x} \neq -\boldsymbol{x}$, this contradicts the assumption that $g$ is one-to-one. Therefore, no such continuous, injective, one-to-one mapping function can exist.
\end{proof}

\begin{theorem}[Impossibility of Continuous]
For $n \ge 2$, there is no continuous injective map $f: I^n \to \mathbb{R}$.
\end{theorem}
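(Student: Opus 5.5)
The cleanest route is connectedness rather than Borsuk--Ulam, since the target is $\mathbb{R}$ (though the earlier Proposition~\ref{prop:dim_reduction} with $k=1$, $p=n$ would also do the job). I will argue by contradiction: suppose $f\colon I^n\to\mathbb{R}$ is continuous and injective with $n\ge 2$, where $I=[0,1]$.

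First, since $I^n$ is compact and connected, the image $f(I^n)$ is compact and connected in $\mathbb{R}$, so it is a closed interval $[a,b]$. Injectivity forces $a<b$, because $I^n$ has more than one point.

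Second, pick a point $x_0$ in the interior of $I^n$ whose value $c=f(x_0)$ lies strictly between $a$ and $b$. This is always possible. The interior of $I^n$ is infinite, and at most two of its points can map to the endpoints $a$ and $b$.

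Third, remove $x_0$. For $n\ge 2$ the set $I^n\setminus\{x_0\}$ is still path-connected. To see this, join any two points by a straight segment; if that segment passes through $x_0$, detour around it along a small circle in a 2-plane containing the segment. Convexity of $I^n$ together with $x_0$ being interior keeps the detour inside the cube. Consequently $f(I^n\setminus\{x_0\})$ is connected. By injectivity, however, this image equals $[a,b]\setminus\{c\}$, which is disconnected because $a<c<b$. This contradiction proves the theorem.

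The only step needing care is the path-connectedness of the punctured cube, and the place where $n\ge 2$ enters. This is exactly where the argument fails for $n=1$, as it must, since the identity map $I\to\mathbb{R}$ is a continuous injection. I would write the detour explicitly: choose a radius $r$ smaller than the distance from $x_0$ to $\partial I^n$, and replace the portion of the segment inside the ball $B(x_0,r)$ with a semicircular arc of radius $r$. The arc stays inside the cube and avoids $x_0$.
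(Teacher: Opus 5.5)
Your proof is correct, and it takes a genuinely different route from the paper. The paper restricts $f$ to the boundary $\partial I^n\cong S^{n-1}$. It then uses Borsuk--Ulam to produce two distinct boundary points, images of antipodal points of $S^{n-1}$, with the same value of $f$. You instead use a cut-point argument: removing an interior point $x_0$ with $a<f(x_0)<b$ leaves $I^n$ path-connected but, by injectivity, disconnects the image $[a,b]$.

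Your version needs only that connected subsets of $\mathbb{R}$ are intervals, and it pinpoints exactly where $n\ge 2$ is used (the 2-plane for the detour). It is, however, tied to a one-dimensional target, since a point does not disconnect $\mathbb{R}^k$ for $k\ge 2$. The paper's version reuses the mechanism of its preceding Borsuk--Ulam proposition and extends directly to $I^n\to\mathbb{R}^k$ for any $k<n$. The price is citing Borsuk--Ulam and $\partial I^n\cong S^{n-1}$. As written, the paper applies Borsuk--Ulam ``with $k=1$'' to a map on $S^{n-1}$. That is only literally right after restricting to a great circle, where it reduces to the intermediate value theorem for $x\mapsto g(x)-g(-x)$, or after composing with $\mathbb{R}\hookrightarrow\mathbb{R}^{n-1}$.

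Your parenthetical appeal to that earlier proposition also needs a small reduction: that proposition concerns maps defined on all of $\mathbb{R}^p$, so you must precompose with a homeomorphism $\mathbb{R}^n\cong(0,1)^n$.

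One detail to tighten in the detour: choose $r<\min\{|p-x_0|,\,|q-x_0|,\,\operatorname{dist}(x_0,\partial I^n)\}$, where $p,q$ are the points being joined. Then the segment genuinely enters and exits the ball, and the semicircle connects its two crossing points.
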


\begin{proof}
Suppose, for the sake of contradiction, that there exists a continuous map $f: I^n \to \mathbb{R}$ that is injective.

Consider the restriction of $f$ to the boundary of the hypercube, denoted by $\partial I^n$. It is a well-known topological result that the boundary of an $n$-dimensional hypercube is homeomorphic to the $(n-1)$-sphere: $\partial I^n \cong S^{n-1} $. Let $h: S^{n-1} \to \partial I^n$ be this homeomorphism. We can define a new continuous composition map $g$ such that: $g: S^{n-1} \to \mathbb{R}, \quad g = f \circ h $. According to the \textbf{Borsuk--Ulam Theorem}, for any continuous map $g: S^k \to \mathbb{R}^k$, there exists a point $x \in S^k$ such that: $g(x) = g(-x)$. In our case, $k=1$ (since $n \ge 2$, $n-1 \ge 1$). Thus, there exists a pair of antipodal points $x, -x \in S^{n-1}$ such that: $f(h(x)) = f(h(-x))$. 

Since $x$ and $-x$ are distinct points on the sphere $S^{n-1}$, and $h$ is a homeomorphism (and thus injective), it follows that $h(x)$ and $h(-x)$ are two distinct points in the hypercube $I^n$. Let:
\[ p_1 = h(x), \quad p_2 = h(-x) \implies p_1 \neq p_2 \]

However, the equality $f(p_1) = f(p_2)$ contradicts our initial assumption that $f$ is injective. Therefore, no such continuous injective map can exist.
\end{proof}

\section{Simulations}\label{apx:sim_plots}

\subsection{Simulation configuration details}
Here we provide a summary of the main simulation from \cite{revelle_unidim_2025}, where they build on a model with a higher level (g) factor arising from the composite of three lower level factors \citep{jensen_what_1994}. From their article:

\begin{displayquote}
% \cite{jensen_what_1994} give a nice example of a higher level (g) factor arising from the composite of three lower level factors. 
Given a three-factor model for nine items with loadings as shown in Table 6 (Part A), and factor loadings on ``g'' of 0.9, 0.8, and 0.7 result in factor correlations of 0.72, 0.63, and 0.56...
% The 9 x 9 correlation matrix may be solved using a higher-order solution or rotated using the Schmid and Leiman (1957) oblique factor solution to produce the factor solution shown in Part B of Table 6. 
% The uw statistic for this set of nine items is 0.84, with @, of 0.69 and CFI of 0.93. 
This model was extended to the case of 18, 27, and 36 items by duplicating the loadings shown in Table 6 two, three, or four times. To examine the effect of factor structure, factor intercorrelations, and sample size, the higher-order loadings were set by specifying identical higher-order loadings varying from 0 to 1. Data were generated for 100, 200, 500, and 1,000 simulated cases with minor noise factors \citep{maccallum_representing_1991,maccallum_factor_2007} with random loadings of —0.2, 0, or 0.2. The data for nine, 18, 27, and 36 items were generated. 

\end{displayquote}

% , with either continuous or categorical items (with five categories). In the categorical case, polychoric correlations were found although we did not expect much effect of the categorical versus continuous distinction (Rhemtulla et al., 2012). Results for the u, @,, and CFI statistics are shown as a function of the number of items (9, 18, 27, and 36) and the general factor loading for both the continuous and categorical cases (Figure 2). Figure 3 shows the effects of the sample size (100, 200, 500, and 1,000) on these estimates. To examine the extent these results generalize across the size of group factors, we redid these analyses with smaller group factors (loadings of 0.6, 0.5, and 0.4) for the case of categorical variables (Figure 4) and sample size (Figure 5).

\section{Additional Empirical Validation and Results}\label{sec:addl_gram_res}

To empirically ground our framework, we also conducted a comprehensive analysis across 50 diverse datasets, directly comparing a suite of established, image-based unidimensionality metrics with the reconstruction fidelity metrics from our Refactoring Analyses. These are in Figures \ref{fig:univsrefact} and \ref{fig:corrpanel}. The central finding of this investigation is striking: we find no meaningful correlation between a model's adequacy as assessed by conventional image-based metrics and its ability to reconstruct the original data matrix. In other words, a model judged to be strongly unidimensional based on its factor loadings or the eigenvalues of its covariance matrix does not necessarily yield a reconstruction that accurately captures the signal in the observed responses. Furthermore, we find that the conventional metrics are highly intercorrelated, suggesting that they provide redundant information about the model's image, while offering little insight into its capacity to represent the underlying data.

This profound disconnect suggests that conclusions drawn from analyzing a model's image may not generalize to the preimage data matrix, $\boldsymbol{X}$. The assumptions inherent in constructing the Gram matrix (e.g., that signal is best captured by linear correlations) may inadvertently focus the analysis on a limited or even misleading form of variation. The robustness of this conclusion is a key contribution; our findings hold consistently across an additional extensive battery of image-based and reconstruction-based metrics, tested on 32 different Gramian constructions.

\subsection{Why Verifactor is not ``factor scores with CV'': preventing target leakage}
\label{sec:target-leakage}

A common reaction is to ask whether Refactor/Verifactor merely restate cross-validation of factor scores. They do not. Many pipelines compute factor scores for all persons using the full item set (or vice versa) and then evaluate prediction on the same $X$, or they cross-validate only one axis (rows or columns). Both practices can induce \emph{target leakage} in two-way arrays: the held-out cell $X_{ij}$ shares its row and column with many observed cells, so using the full row or full column to compute scores can implicitly expose information about $X_{ij}$.

Verifactor prevents this by holding out \emph{entire row subsets and column subsets} simultaneously. The predicted block $A$ shares neither the held-out rows nor the held-out columns with the block $D$ used to learn the low-rank structure (Lemma~\ref{lem:self-consistency}), aligning the evaluation with two-way generalization in crossed random designs.

\begin{corollary}[Leakage control under crossed holdout]
\label{cor:leakage}
Let $(i,j)$ index a BCV fold with held-out block $A_{ij}$ defined by row subset $\mathcal{I}$ and column subset $\mathcal{J}$. Then $\widehat A_{ij}$ computed from $D_{ij}$ is conditionally independent of $E_{A_{ij}}$ given $S$ under the model assumptions of Theorem~\ref{thm:bcv-unbiased}. In particular, any dependence of $\widehat A_{ij}$ on $A_{ij}$ must be mediated through the signal $S$ rather than direct reuse of held-out rows/columns.
\end{corollary}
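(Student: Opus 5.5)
The plan is to reduce the corollary to a measurability argument under the model of Theorem~\ref{thm:bcv-unbiased}, i.e., $X=S+E$ with $\mathbb{E}[E\mid S]=0$. To get genuine conditional independence rather than only a vanishing cross-term, I will add the working assumption implicit in ``the model assumptions'': conditional on $S$ (and on the fold $\Pi=(\mathcal{I},\mathcal{J})$, drawn independently of $E$), the noise entries are independent across cells, or at least the noise blocks on disjoint index sets are conditionally independent. The first step is to state this explicitly. The second is to record that the fold is fixed once we condition on $\Pi$, so the blocks $A,B,C,D$ are well-defined deterministic selections of entries.

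The core step is that $\widehat A_{ij}=\mathcal{V}_1(B_{ij},C_{ij},D_{ij})$, or, in the stated version, a function of $D_{ij}$ alone. Write $D_{ij}=S_{D}+E_{D}$, where $E_D$ collects the noise on index set $\mathcal{I}^c\times\mathcal{J}^c$. This set is disjoint from $\mathcal{I}\times\mathcal{J}$, which is exactly the crossed-holdout property described in \S\ref{app:axis_respecting}. Hence $\widehat A_{ij}=h(S,E_D)$ for a measurable $h$, the composition of the association operator, $\textsf{Z}$, and the refactoring or completion map. Given $S$, the variable $\widehat A_{ij}$ is therefore a measurable function of $E_D$ alone. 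Since $E_D$ and $E_{A}$ live on disjoint cells and are conditionally independent given $S$, it follows that $\widehat A_{ij}\perp E_{A_{ij}}\mid S$. The same argument covers the completion rule that also uses $B$ and $C$, because $\mathcal{I}\times\mathcal{J}^c$ and $\mathcal{I}^c\times\mathcal{J}$ are also disjoint from $\mathcal{I}\times\mathcal{J}$.

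For the ``in particular'' clause, I decompose $A_{ij}=S_{A}+E_{A}$. Any statistical dependence between $\widehat A_{ij}$ and $A_{ij}$ must then pass through $S_A$, since $\widehat A_{ij}$ is independent of $E_A$ given $S$. Concretely, the conditional covariance satisfies $\mathrm{Cov}(\widehat A_{ij},A_{ij}\mid S)=\mathrm{Cov}(\widehat A_{ij},S_A\mid S)=0$, so all association arises at the level of $S$. This is consistent with the vanishing cross-term in the proof of Theorem~\ref{thm:bcv-unbiased}.

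The main obstacle is the gap between the paper's stated hypothesis, which is only mean-zero and heteroscedastic noise, and the independence actually needed. Mean-zero noise alone gives only uncorrelatedness of the cross-term, not conditional independence, and correlated noise across cells sharing a row (for example, person-specific nuisance) would break the claim. The second obstacle is that the association operator for rows could in principle be computed on full rows, including held-out columns. I will handle this by insisting that $\mathcal{V}_1$ receives only the blocks $B$, $C$, $D$, so that the images are recomputed on submatrices. I will then present the result under explicit cellwise conditional independence, and remark that under the weaker hypothesis one retains only the zero-cross-term statement.
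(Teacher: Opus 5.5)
Your proposal is correct and matches the reasoning the paper relies on. The paper gives no separate proof; it rests the corollary on the measurability of $\widehat A_{ij}$ with respect to the held-in blocks, as in the proof of Theorem~\ref{thm:bcv-unbiased}, and on the cells of $D$ (and of $B$, $C$, which the main-text predictor $u_{B,1}v_{C,1}^\top$ actually uses) being disjoint from $\mathcal{I}\times\mathcal{J}$. You are also right that mean-zero noise only makes the cross-term vanish, so your added assumption of cellwise conditional independence of $E$ given $S$ (with the fold drawn independently of $E$) is exactly what turns that step into the stated conditional independence.
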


This distinction matters in precisely the regimes emphasized by De~Boeck \citep{de_boeck_random_2008}: when persons and items are random, a convincing unidimensionality claim requires that the latent structure generalizes to new persons and new items, not merely to new entries within seen persons or items.

\subsection{Random-items versus fixed-items: when classical conclusions can be biased}
\label{sec:fixed-vs-random}

Many psychometric applications implicitly treat items as fixed (a specific instrument) and persons as random. AI benchmark settings often invert or complicate this: tasks/prompts/items can be treated as random draws from an evolving distribution; additionally, $n\ll p$ (few models, many tasks) or $n\gg p$ (many users, few items) are common. Image-based indices computed on $p\times p$ associations can become unstable or even ill-defined when $p$ is large relative to $n$, and may conflate item difficulty (marginals) with genuine shared structure.

Refactor/Verifactor help separate these issues because they evaluate reconstructions of $X$ itself and can explicitly control for marginals or fixed effects before measuring dependence structure (as in the partialed metrics described in the caption of Figure~\ref{fig:fafit_vs_refact}). In particular, the partial distance-correlation metric
$R^*_{X,\widehat X;\mathbb{E}(X)}$
quantifies dependence beyond what is explained by the independence model $\mathbb{E}(X)$ derived from row/column marginals, preventing spurious ``fit'' driven by prevalence alone.

\begin{corollary}[Marginal-driven fit is detectable]
\label{cor:marginal-detect}
If a reconstruction $\widehat X$ succeeds primarily by matching row/column marginals (e.g., predicting item means and person means) but not interaction structure, then metrics sensitive to residual dependence (e.g., bias-corrected dCor and $R^*$) will remain low even when likelihood or cosine similarity is moderately high.
\end{corollary}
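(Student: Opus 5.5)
We will prove Corollary~\ref{cor:marginal-detect} by first turning its informal hypothesis into a precise condition, and then showing that partialling out the independence structure leaves nothing for the residual-dependence metrics to pick up. We model ``succeeds primarily by matching marginals'' as the condition that the isotonized reconstruction $\widetilde X = g^\star(\widehat X)$ is, up to a small perturbation, a measurable function of the independence matrix $\mathbf{E}=\pi_r\pi_c^\top$. Concretely, we write $\widetilde X = h(\mathbf{E}) + \Delta$, where $h$ acts entrywise and $\|\Delta\|_F$ is small.

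The argument then runs in three steps.

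\textbf{Step 1 (cosine and likelihood).} Using Proposition~\ref{prop:marginal_mse}, we note that under $X_{ij}\sim\mathrm{Bern}((\pi_r)_i(\pi_c)_j)$ the predictor $\mathbf{E}$ is Bayes optimal for squared error. The isotonic likelihood and $\cos_F$ are continuous functionals of $(X,\widetilde X)$. Evaluated at $\widetilde X\approx\mathbf{E}$, they therefore equal the nontrivial values attained by the marginal model, which are bounded away from chance whenever the marginals are heterogeneous. This establishes the ``moderately high'' half of the claim.

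\textbf{Step 2 (partial distance correlation).} Here we invoke the standard properties of bias-corrected partial distance correlation \citep{szekely_partial_2014}. The metric $\operatorname{pdCor}(X,\widetilde X;\mathbf{E})$ is computed from the $U$-centered distance matrices of $X$ and $\widetilde X$ after each is projected orthogonally, in the Hilbert space of $U$-centered matrices, off the span of the $U$-centered distance matrix of $\mathbf{E}$.

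If $\widetilde X = h(\mathbf{E})$ exactly and $h$ is affine on the relevant range, then the distance matrix of $\widetilde X$ is a scalar multiple of that of $\mathbf{E}$. Its projection residual is then zero, and so $\operatorname{pdCor}=0$. For $\Delta\neq 0$, we bound the residual by a Lipschitz estimate: $U$-centering and projection are linear, and distance matrices are $1$-Lipschitz in $\|\cdot\|_F$. This gives $|\operatorname{pdCor}|\le C\|\Delta\|_F/\|\text{resid}(X)\|$.

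The same reasoning, combined with conditional independence of $X$ given the marginals, shows that the bias-corrected $\operatorname{dCor}^2_n$ between $X$ and $\widetilde X$ beyond the marginal component has expectation $O(\|\Delta\|_F)$. Unbiasedness of the $U$-statistic under independence then gives expectation zero at $\Delta=0$.

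\textbf{Step 3 (nonlinear $h$: the main obstacle).} The hard case is a nonlinear monotone $h$. Distance matrices are not preserved by nonlinear $h$, so the residual of $\widetilde X$ after projecting off $\mathbf{E}$ need not vanish, even though it carries no information beyond $\mathbf{E}$.

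To handle this, we will use isotonic invariance (Proposition~\ref{prop:isotonic}) to replace $\mathbf{E}$ by its own isotonic image under the same ranking. The partialling is thereby done on the monotone scale on which $\widetilde X$ lives, which reduces this case to the affine one. If that reduction fails, the fallback is to weaken the conclusion to the population version: under the conditional independence model, the population partial distance covariance of $X$ and any function of $\mathbf{E}$, given $\mathbf{E}$, is zero. Sample $\operatorname{pdCor}$ is then small with high probability by $U$-statistic concentration.

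This reformulation of the corollary as a population statement plus a perturbation bound is the step we regard as requiring the most care.
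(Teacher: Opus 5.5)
The decisive case in your formalization is Step~3, and neither route there works. Because $\widetilde X=g^\star(\widehat X)$ is a step-function isotonic fit, even an exactly marginal reconstruction $\widehat X=c\,\mathbf E$ ($c>0$) gives $\widetilde X=h(\mathbf E)$ with $h$ generically nonlinear. Your affine case in Step~2 is correct, and it needs no assumption on $X$: the projected residual of $\widetilde X$ vanishes, so $\operatorname{pdCor}=0$ by convention. But that case essentially never applies to $R^*$ as the paper computes it.

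Your reduction swaps the control matrix $\mathbf E$ for $g^\star(\mathbf E)$, whereas $R^*$ partials out $\mathbf E$ itself. $\operatorname{pdCor}$ is not invariant under nonlinear monotone maps of the control, because distance matrices are not. Proposition~\ref{prop:isotonic} gives invariance of $\widetilde X$ under reparametrizing $\widehat X$. It says nothing about invariance of $\operatorname{pdCor}$ under reparametrizing the conditioning matrix. Partialling out $g^\star(\mathbf E)=\widetilde X$ only yields a tautology about a different statistic.

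The fallback is false. Partial distance covariance is the inner product of residuals after projecting $U$-centered distance matrices off a single direction, and conditional independence does not make it vanish. Take $y=g(z)$ with $g$ continuous, monotone and nonlinear, and $x=g(z)+\varepsilon$. Then $x\perp y\mid z$. Yet at $\varepsilon=0$ the population value is $\operatorname{dCov}^2(g(z),g(z))-\bigl[\operatorname{dCov}^2(g(z),z)\bigr]^2/\operatorname{dCov}^2(z,z)>0$ by strict Cauchy--Schwarz, since the centered kernels of $|g(z)-g(z')|$ and $|z-z'|$ are proportional only for affine $g$. By continuity it stays positive for small noise. Your setting, with $x$ a row of $X$, $z$ a row of $\mathbf E$ and $y$ a row of $\widetilde X$, is exactly of this type. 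So $U$-statistic concentration would concentrate around a nonzero value. The fallback also assumes $X$ itself has no interaction structure, which the corollary does not.

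Step~2 has two further gaps.
\begin{itemize}
\item \textbf{The perturbation bound.} The bound $|\operatorname{pdCor}|\le C\|\Delta\|_F/\|\mathrm{resid}(X)\|$ omits the normalization by the norm of $\widetilde X$'s own projected residual. $\operatorname{pdCor}$ is a cosine, so when that residual is $O(\|\Delta\|_F)$ the ratio is $O(1)$: it is the cosine between $X$'s residual and the $\Delta$-induced direction. $\operatorname{pdCor}$ is therefore discontinuous at the point where it is set to $0$ by convention. Smallness of $\Delta$ controls partial distance \emph{covariance}, not correlation. ``No interaction structure'' must be formalized as near-orthogonality of that direction to $X$'s residual, not as $\|\Delta\|_F$ being small.
\item \textbf{The $\operatorname{dCor}^2_n$ claim.} You invoke unbiasedness under independence, but the paired rows $X_{i\cdot}$ and $\widetilde X_{i\cdot}$ are dependent. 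In the marginal-only case, $\widetilde X_{i\cdot}$ is a nonconstant function of $(\pi_r)_i$, which is computed from, and governs the law of, $X_{i\cdot}$. So bias-corrected $\operatorname{dCor}^2_n(X,\widetilde X)$ targets a strictly positive population value whenever the marginals are heterogeneous; that is exactly why $R^*$ partials out $\mathbf E$. Your phrase ``beyond the marginal component'' silently replaces $\operatorname{dCor}$ by $\operatorname{pdCor}$. The paper's own Verifactor example table shows this pattern: \texttt{project\_kids\_topel} (acc, qcr) has high $\cos_F$ and likelihood and $R^*\approx 0$, but $\operatorname{dCor}^2_n=0.39$.
\end{itemize}

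For comparison, the paper gives no proof beyond the design rationale that $R^*$ removes the independence structure. What is actually provable is the narrow statement that $R^*=0$ whenever the $U$-centered distance matrix of $\widetilde X$ is proportional to that of $\mathbf E$. The general ``remains low'' claim for isotonized reconstructions, and the plain-$\operatorname{dCor}$ part, should be presented as empirical or heuristic. Finally, Step~1 is unnecessary: ``even when \ldots\ moderately high'' is a concession, not a conclusion.
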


This clarifies why multiple reconstruction metrics are necessary: a single scalar can be dominated by easy-to-predict marginals, while BCV dependence metrics detect whether the rank--1 model captures the \emph{interaction} signal that unidimensionality claims typically intend.

% \subsubsection{Why a suite of reconstruction metrics is not redundant}
% \label{sec:metric-suite}

% Classical fit indices are often highly correlated because they are deterministic transforms of the same image residuals. Refactor/Verifactor metrics are deliberately heterogeneous because reconstruction quality is multi-faceted:

% \begin{itemize}\itemsep2pt
% \item \textbf{Discrimination (AUC)} asks whether $\widehat X$ ranks positives above negatives (crucial in sparse binary matrices).
% \item \textbf{Calibration/monotonic signal (isotonic $R^2$, Kendall $\tau$)} asks whether the model preserves ordinal structure even under nonlinear distortions.
% \item \textbf{Geometric fidelity (matrix cosine)} emphasizes angle/orientation preservation of response patterns.
% \item \textbf{Nonlinear dependence (bias-corrected dCor, partial dCor)} detects shared structure not reducible to linear correlation or marginals.
% \item \textbf{Information-theoretic loss (cross-entropy / likelihood)} penalizes confident wrong predictions and is sensitive to rare events.
% \end{itemize}

The metaregression table (bootstrapped Verifactor results) illustrates that associations can rank differently depending on which aspect of fit is most salient for the scientific question (e.g., discrimination vs.\ dependence vs.\ calibration). This is desirable: it forces explicit alignment between the evaluation criterion and the intended meaning of ``unidimensional signal'' in a given domain.

\begin{table}%{llrrrrrr}
\centering
\caption{Examples of differences of Verifactor reconstruction fit estimates across 6 datasets, estimated under 8 different assumed underlying relationships.  }
\footnotesize
\begin{tabular}{llrrrrrr}
\toprule
table & idx & $\cos_F$ & $\operatorname{dCor^2_n}$ & $\mathcal{L}(X|\theta)$ & $R^*_{X,\hat{X};\operatorname{E}(X)}$ & AUC & $\tau_b$ \\
\midrule
PMT\_Trzcinska\_2023\_PMT & acc & 0.74 & 0.15 & 0.51 & 0.07 & 0.59 & 0.13\\
PMT\_Trzcinska\_2023\_PMT & bsym\_U & 0.74 & 0.07 & 0.51 & 0.06 & 0.53 & 0.04\\
PMT\_Trzcinska\_2023\_PMT & hi & \textbf{0.76} & 0.45 & 0.52 & 0.47 & 0.63 & \textbf{0.19}\\
PMT\_Trzcinska\_2023\_PMT & krip & \textbf{0.76} & \textbf{0.46} & \textbf{0.53} & \textbf{0.48} & \textbf{0.64} & \textbf{0.19}\\
PMT\_Trzcinska\_2023\_PMT & phi & \textbf{0.76} & 0.45 & 0.52 & 0.47 & 0.63 & 0.18\\
PMT\_Trzcinska\_2023\_PMT & qcr & 0.75 & 0.44 & 0.52 & 0.46 & 0.62 & 0.17\\
PMT\_Trzcinska\_2023\_PMT & tet & \textbf{0.76} & 0.45 & 0.52 & \textbf{0.48} & 0.63 & \textbf{0.19}\\
PMT\_Trzcinska\_2023\_PMT & yule\_q & \textbf{0.76} & 0.44 & 0.52 & 0.46 & 0.63 & 0.18\\
\addlinespace
brand\_raffaelli\_2024\_recognition\_20 & acc & 0.74 & 0.38 & 0.52 & 0.36 & 0.58 & 0.11\\
brand\_raffaelli\_2024\_recognition\_20 & bsym\_U & 0.72 & 0.07 & 0.50 & 0.05 & 0.51 & 0.02\\
brand\_raffaelli\_2024\_recognition\_20 & hi & 0.88 & 0.97 & 0.70 & 0.97 & 0.91 & \textbf{0.58}\\
brand\_raffaelli\_2024\_recognition\_20 & krip & 0.88 & 0.97 & 0.70 & 0.97 & 0.91 & \textbf{0.58}\\
brand\_raffaelli\_2024\_recognition\_20 & phi & 0.88 & 0.97 & 0.70 & 0.97 & 0.91 & \textbf{0.58}\\
brand\_raffaelli\_2024\_recognition\_20 & qcr & 0.88 & 0.97 & 0.70 & 0.97 & 0.91 & \textbf{0.58}\\
brand\_raffaelli\_2024\_recognition\_20 & tet & 0.88 & 0.97 & 0.70 & 0.97 & 0.91 & \textbf{0.58}\\
brand\_raffaelli\_2024\_recognition\_20 & yule\_q & 0.88 & 0.97 & 0.69 & 0.97 & 0.91 & \textbf{0.58}\\
\addlinespace
ccapsvtskhpacr\_mercedes\_2023\_physical & acc & 0.62 & 0.43 & 0.68 & 0.06 & 0.78 & \textbf{0.32}\\
ccapsvtskhpacr\_mercedes\_2023\_physical & bsym\_U & 0.49 & 0.04 & 0.65 & -0.01 & 0.59 & -0.02\\
ccapsvtskhpacr\_mercedes\_2023\_physical & hi & 0.40 & 0.06 & 0.67 & 0.04 & 0.53 & -0.01\\
ccapsvtskhpacr\_mercedes\_2023\_physical & krip & 0.48 & 0.12 & 0.61 & 0.09 & 0.52 & 0.00\\
ccapsvtskhpacr\_mercedes\_2023\_physical & phi & 0.54 & 0.17 & 0.64 & \textbf{0.10} & 0.60 & -0.03\\
ccapsvtskhpacr\_mercedes\_2023\_physical & qcr & \textbf{0.67} & \textbf{0.47} & 0.71 & -0.01 & \textbf{0.84} & -0.39\\
ccapsvtskhpacr\_mercedes\_2023\_physical & tet & 0.56 & 0.42 & \textbf{0.76} & 0.09 & 0.83 & -0.31\\
ccapsvtskhpacr\_mercedes\_2023\_physical & yule\_q & 0.51 & 0.08 & 0.63 & 0.06 & 0.56 & -0.01\\
\addlinespace
legalbench & acc & \textbf{0.87} & 0.66 & \textbf{0.63} & 0.12 & \textbf{0.84} & \textbf{0.46}\\
legalbench & hi & 0.83 & 0.55 & 0.57 & \textbf{0.70} & 0.69 & 0.26\\
legalbench & krip & 0.82 & 0.61 & 0.56 & 0.68 & 0.68 & 0.25\\
legalbench & phi & 0.84 & 0.58 & 0.58 & 0.42 & 0.75 & 0.34\\
legalbench & qcr &\textbf{ 0.87} & \textbf{0.67} & 0.62 & 0.20 & 0.83 & 0.44\\
legalbench & tet & 0.84 & 0.57 & 0.58 & 0.39 & 0.75 & 0.34\\
legalbench & yule\_q & 0.84 & 0.52 & 0.57 & 0.35 & 0.73 & 0.32\\
\addlinespace
project\_kids\_topel & acc & \textbf{0.96} & \textbf{0.39} & \textbf{0.77} & -0.01 & \textbf{0.78} & \textbf{0.23}\\
project\_kids\_topel & bsym\_U & \textbf{0.96} & 0.18 & 0.75 & 0.16 & 0.56 & 0.05\\
project\_kids\_topel & krip & 0.95 & 0.19 & 0.74 & \textbf{0.20} & 0.53 & 0.02\\
project\_kids\_topel & phi & 0.95 & 0.16 & 0.75 & 0.15 & 0.60 & 0.09\\
project\_kids\_topel & qcr & \textbf{0.96} & \textbf{0.39} & \textbf{0.77} & 0.00 & \textbf{0.78} & \textbf{0.23}\\
project\_kids\_topel & yule\_q & 0.95 & 0.14 & 0.74 & 0.13 & 0.57 & 0.06\\
\addlinespace
project\_kids\_wj\_lwid\_grade & acc & 1.00 & 0.16 & \textbf{0.98} & \textbf{0.08} & 0.74 & 0.04\\
project\_kids\_wj\_lwid\_grade & bsym\_U & 1.00 & \textbf{0.39} & 0.97 & -0.23 & \textbf{0.87} & \textbf{0.09}\\
project\_kids\_wj\_lwid\_grade & krip & 1.00 & 0.11 & \textbf{0.98} & -0.01 & 0.65 & 0.02\\
project\_kids\_wj\_lwid\_grade & phi & 1.00 & 0.16 & \textbf{0.98} & 0.04 & 0.73 & 0.04\\
project\_kids\_wj\_lwid\_grade & qcr & 1.00 & 0.17 & \textbf{0.98} & 0.05 & 0.73 & 0.04\\
project\_kids\_wj\_lwid\_grade & yule\_q & 1.00 & 0.13 & \textbf{0.98} & 0.04 & 0.72 & 0.04\\
\bottomrule
\end{tabular}\caption*{\footnotesize
table column displays the dichotomous dataset. 
"idx" column captures the types of association measures assumed in each rank-1 model. acc = interitem/interobservation acc (which is a dot product in this case), bsym\_U = symmetrized information theoretic uncertainty, hi = Loevinger's H, krip = Krippendorff's alpha, phi = Pearson correlation on binary data, qcr = Mosteller's quadrant correlation/quadrant count ratio, tet = tetrachoric correlation, and yule\_q = Yule's Q. Metrics used for reconstruction include: $\cos_F$: the matrix cosine using Frobinius norm; $\operatorname{dCor^2_n}$: mean bias-corrected squared distance correlation (mean from measuring a reconstruction and its transposed dual);  $\mathcal{L}(X|\theta)$: Geometric mean of likelihood; $R^*_{X,\hat{X};\operatorname{E}(X)}$: the mean unbiased partial distance correlation, where the independence matrix (from the scaled out product of the marginals) is partialled out to measure the dependence structure; AUC: the area under the ROC curve ;  $\tau_b$: the vectorized Kendall's tau correlation between the two matrices
}
\end{table}

\begin{table}[ht]
\centering
\resizebox{\textwidth}{!}{%
\begin{tabular}{p{2.3cm} p{3.7cm} p{3.4cm} p{3.0cm} }
\hline
\textbf{Metric} & \textbf{Formula / Definition} & \textbf{Intended Measure} & \textbf{Assumptions} \\
\hline
% Image residual & $\|A-\widehat A\|_F$ & Entrywise error in image fit & Quadratic loss; scale-sensitive & Simple, direct & Sensitive to scale; dominated by large entries & Null: $\|A\|_F$ \\
% Image $R^2$ & $1 - \|A-\widehat A\|_F^2/\|A-\bar A\mathbf{11}^\top\|_F^2$ & Variance explained in image & Linear variance decomposition & Easy to interpret & Depends on centering choice & Global mean image \\
$u_{rc}$ & $\rho_c\tau_{RC} $  & Holistic unidimensionality& Image-dependent\\
$\rho_c$&$(F_o - F_m) / F_o$  & Image to factor model fit.& Congeneric fit on image represents dimensional fit\\
$\tau_{RC}$ & $1 - \sum(A- \bar A)^2/\sum(A-\bar a)^2$  & Correlational residuals& Approx. multivariate normality \\
% CFI & $1 - \frac{\max(\chi^2_m-\text{df}_m,0)}{\max(\chi^2_b-\text{df}_b,0)}$ & Incremental fit vs. baseline model & Approx. multivariate normality \\
% TLI & Adjusted CFI penalized by df & Parsimonious comparative fit & Same as CFI \\
ECV & $\sum \lambda^{\rm common}_k/\sum \lambda^{\rm total}_i$ & Proportion variance due to common factors & Factor model specified \\
Cronbach’s $\alpha$ & $\tfrac{p}{p-1}(1-\sum\sigma_i^2/\sigma^2_{\rm total})$ & Internal consistency reliability & Tau-equivalence; continuous approx. \\
\hline
AUC & $\Pr(\hat p^{(1)}>\hat p^{(0)})$ & Rank discrimination ability & Rank-based \\

$r_2$ (Ramsay) & $1 - \sum(X-\widehat X)^2/\sum(X-\bar x)^2$ & Global variance explained & Quadratic error \\
Yanai GDC & $1-d(X,\widehat X)/d(X,\widehat X^{\rm base})$ & Normalized similarity of matrices & Choice of dissimilarity metric \\
Cross-entropy & $-\tfrac{1}{nm}\sum[X\log\hat p+(1-X)\log(1-\hat p)]$ & Probabilistic prediction loss & Valid prob. outputs \\
$\text{dCor}^2_n$ & $\mathrm{dCor}(X,\widehat X)$ & Nonlinear dependence & None; metric-based \\
Partial dCor & $\mathrm{dCor}(X,\widehat X|M)$ & Dependence beyond marginals & Valid $M$ specified \\
Kendall (similar to ) & Rank correlation & Monotone assoc. of entries & Monotonicity \\
$r_1$ (Ramsay) & $\operatorname{corr}(\text{vec}(X),\text{vec}(\widehat X))$ & Linear assoc. of entries & Linear relation \\
Frobenius cosine & $\cos\theta=\frac{\langle X-\bar X,\widehat X-\overline{\widehat X}\rangle_F}{\|X-\bar X\|_F\|\widehat X-\overline{\widehat X}\|_F}$ & Geometric similarity of matrices & Centering choice \\
KL divergence & $\text{KL}(P_X||P_{\hat X})$ & Distributional fidelity & Valid prob. est. \\
Rare-event recall & $\sum_{(i,j)\in \mathcal R}\mathbf{1}\{X_{ij}=1,\hat{X}_{ij}=1\}/\sum_{(i,j)\in \mathcal R}\mathbf{1}\{X_{ij}=1\}$ & Recovery of rare signals & Definition of rarity \\
\hline
\end{tabular}%
}
\caption{Examples of image-fit and preimage-fit metrics for evaluating reconstructions of binary data matrices. Each metric’s intent, assumptions, and formulae are included. All image-dependent measures of unidimensionality were estimated using \texttt{psych} \cite{revelle_psych_2024}, } AUC used the \texttt{pROC} package \cite{robin_proc_2011}, distance measures used \texttt{energy} \cite{szekely_energy_2013} , Ramsay and Yanai's measures used \texttt{MatrixCorrelation} \cite{indahl_similarity_2018}, and Kendall's tau used \texttt{pcaPP} \cite{filzmoser_pcapp_2024}. The RFI metrics are supplemented by partial distance correlation (\cite{szekely_partial_2014}) to specifically isolate dependence structures that are not captured by simple marginal effects. To assess the degree of shared structure and subspace overlap, we utilize Yanai's Generalized Coefficient of Determination (\cite{yanai_proposition_1980}) (GCD) and Ramsay's matrix correlation coefficients ($r_1$ and $r_2$) (\cite{ramsay_matrix_1984}), which quantify the alignment between the principal components and spectral properties of the observed and reconstructed data. For metrics that are dependent on matrix orientation, we ensure a comprehensive and symmetric evaluation by computing each metric on both the reconstruction and its transpose and reporting the average, thereby providing a robust assessment of structural fidelity and providing recommendations and heuristics for baselines. Table \ref{tab:metrics} with brief descriptions of these and of standard factor analyses and software used in estimation..\label{tab:metrics}
\end{table}\label{tab:metric}

\begin{table}
\centering
\caption{Metaregression of Bootstrapped Verifactor Tables for all Datasets}
\resizebox{\ifdim\width>\linewidth\linewidth\else\width\fi}{!}{
\begin{tabular}{lllllll}
\toprule
idx & $\cos_F$ & $\operatorname{dCor^2_n}$ & $\mathcal{L}(X|\theta)$ & $R^*_{X,\hat{X};\operatorname{E}(X)}$ & AUC & $\tau_b$\\
\midrule
% AC1 & 0.82*** (0.01) & 0.5*** (0.01) & 0.65*** (0.01) & 0.04*** (0.01) & 0.77*** (0.01) & 0.21*** (0.01)\\
% acc & 0.81*** (0.01) & 0.47*** (0.01) & 0.65*** (0.01) & 0.07*** (0.01) & 0.77*** (0.01) & 0.32*** (0.01)\\
$U_{sym}$ & 0.77*** (0.01) & 0.2*** (0.01) & 0.6*** (0.01) & 0.11*** (0.01) & 0.62*** (0.01) & 0.08*** (0.01)\\
% dcorU & 0.76*** (0.01) & 0.17*** (0.01) & 0.59*** (0.01) & 0.08*** (0.01) & 0.59*** (0.01) & 0.08*** (0.01)\\
% E & 0.83*** (0.01) & 0.53*** (0.01) & 0.66*** (0.01) & 0.05*** (0.01) & 0.79*** (0.01) & 0.35*** (0.01)\\
\addlinespace
$\tau_{gk}$ & \textit{0.78}*** (0.01) & \textit{0.35}*** (0.01) & \textit{0.61}*** (0.01) & \textit{0.17}*** (0.01) & 0.63*** (0.01) & 0.09*** (0.01)\\
$H_i$ & 0.77*** (0.01) & 0.21*** (0.01) & 0.6*** (0.01) & \textbf{0.23}*** (0.01) & 0.58*** (0.01) & 0.03*** (0.01)\\
$\alpha_{krip}$ & 0.77*** (0.01) & 0.33*** (0.01) & 0.59*** (0.01) & 0.2*** (0.01) & 0.58*** (0.01) & 0.06*** (0.01)\\
$\phi$ & \textit{0.78}*** (0.01) & 0.25*** (0.01) & \textit{0.61}*** (0.01) & 0.15*** (0.01) & \textit{0.64}*** (0.01) & \textit{0.11}*** (0.01)\\
$q^\prime$ & \textbf{0.82***} (0.01) & \textbf{0.49}*** (0.01) & \textbf{0.65}*** (0.01) & 0.06*** (0.01) & \textbf{0.77}*** (0.01) & \textbf{0.21}*** (0.01)\\
% \addlinespace
% star\_q & 0.82*** (0.01) & 0.49*** (0.01) & 0.65*** (0.01) & 0.09*** (0.01) & 0.77*** (0.01) & 0.23*** (0.01)\\
$\rho_{tet}$ & \textit{0.78}*** (0.01) & 0.25*** (0.01) & \textit{0.61}*** (0.01) & 0.16*** (0.01) & 0.63*** (0.01) & \textit{0.11}*** (0.01)\\
$Q_{yule}$ & 0.77*** (0.01) & 0.21*** (0.01) & 0.6*** (0.01) & 0.12*** (0.01) & 0.61*** (0.01) & 0.09*** (0.01)\\
\bottomrule
\end{tabular}}
\caption*{Multivariate metaregression is performed across each of 6 Verifactor unidimensionality metrics as dependent variables, with Verifactor various . Bolded numbers represent the best performance for that metric}
\end{table}

\section{Supplemental Figures}\label{apx:supp_figures}

\begin{figure*}[th]
    \centering
    \includegraphics[width=0.8\linewidth]{figs/example_verifactor_5.pdf}
    \caption{Example Refactor and Verifactor Analyses to test assumptions of underlying relationships as illustrated in Figure \ref{fig:refactor}.}
    \label{fig:ex_verifactor_apx}
\end{figure*}

\begin{figure*}[h!]
    \centering
    \includegraphics[width=0.8\linewidth]{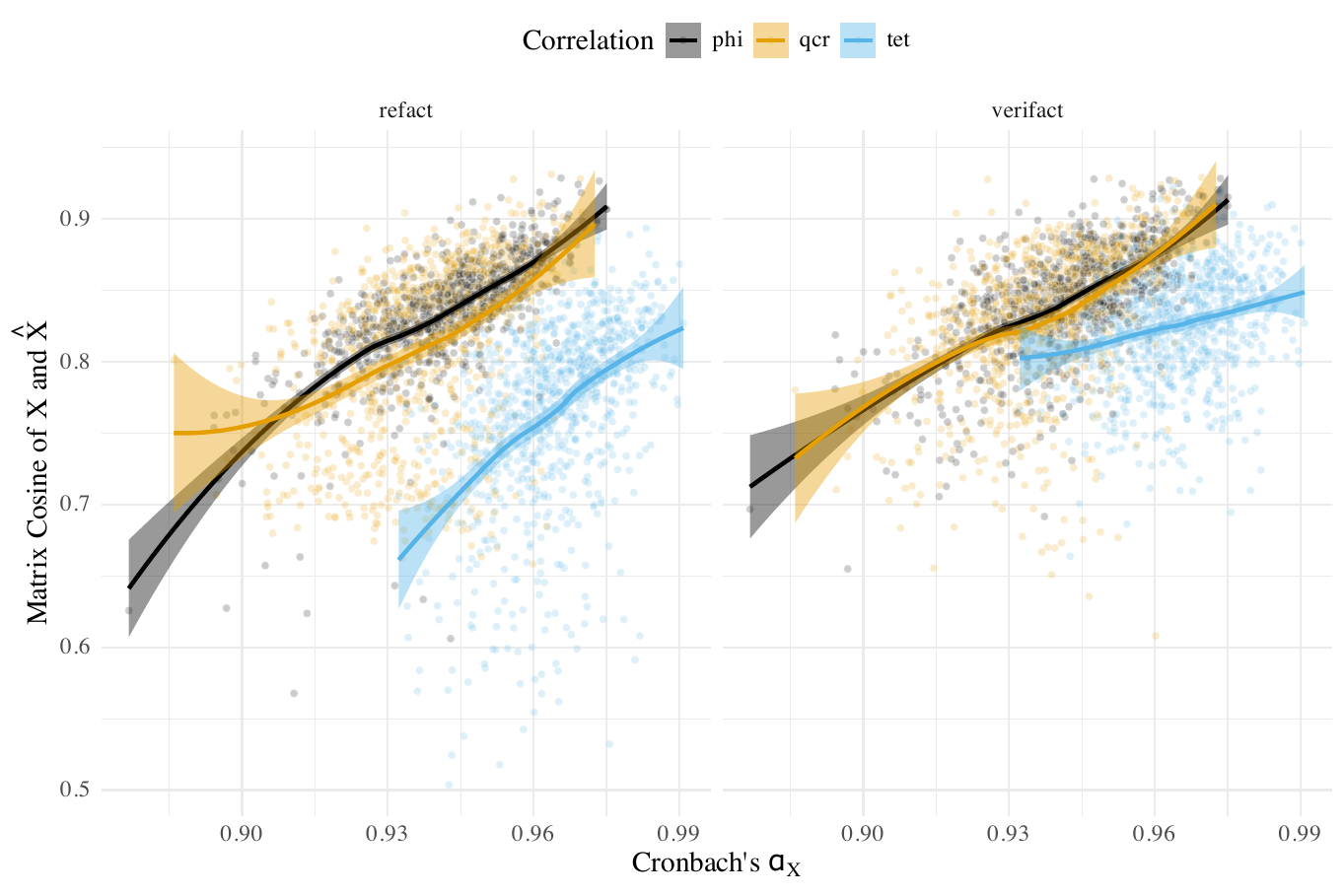}
        \caption{\textbf{Classical Unidimensionality (Cronbach's $\alpha_\mathbf{X}$) vs Refactoring}: matrix cosine between $\mathbf{X}$ and $\widehat{\mathbf{X}}$: $\cos(\mathbf{X}, \mathbf{\widehat{X}}) = \langle \mathbf{X}, \mathbf{\widehat{X}} \rangle_F / \|\mathbf{X}\|_F \|\mathbf{\widehat{X}}\|_F$ across three correlations. Loess fit and confidence intervals are shown for 1000 replications. Fully crossed linear relationships between classical measures of unidimensionality and refactoring measures are in Figure \ref{fig:basesimpanels} }
    \label{fig:phi_dgm_vs_auc_apx}
\end{figure*}

\begin{figure*}
    \centering
    \includegraphics[width=1\linewidth]{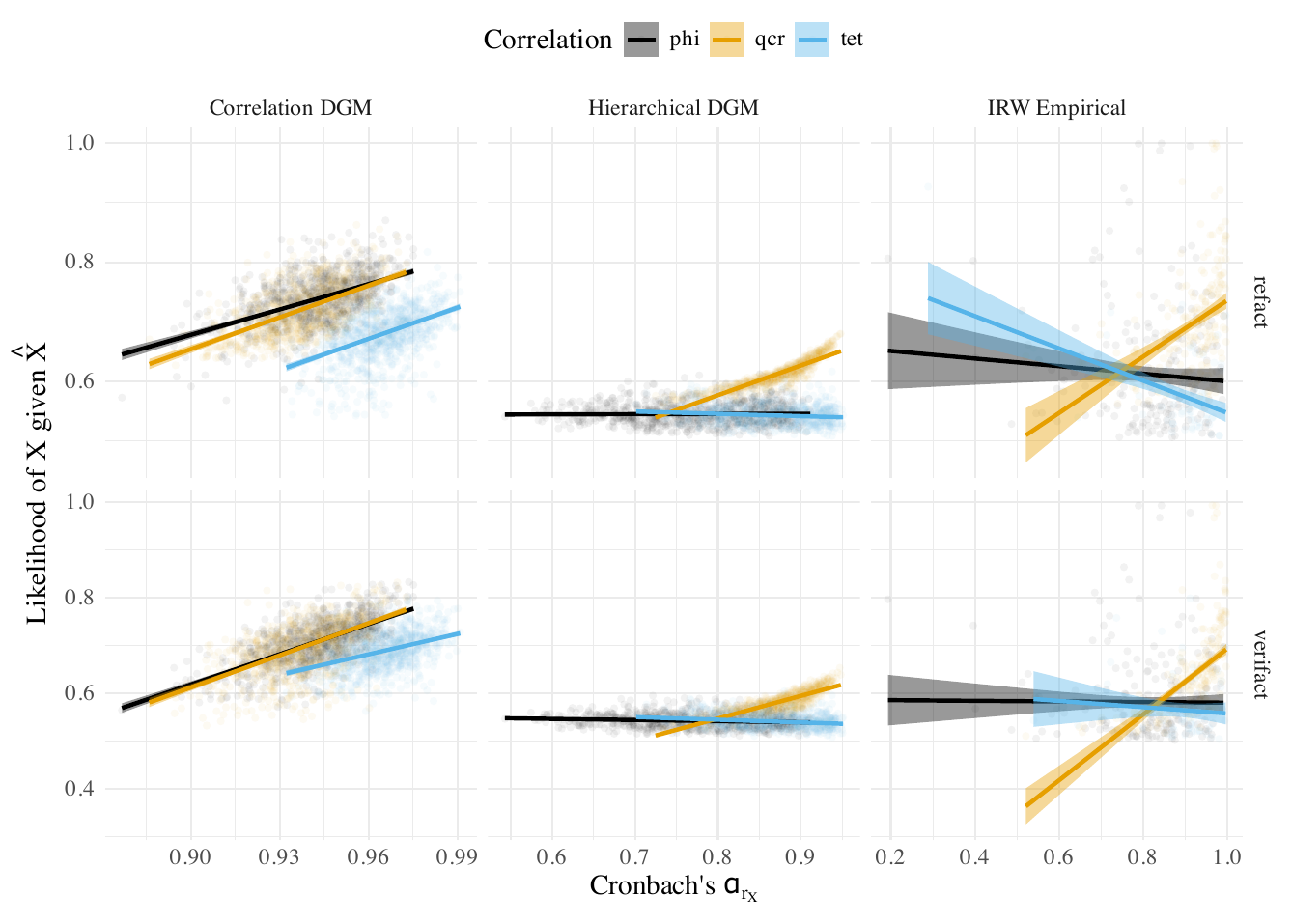}
    \caption{\textbf{Cronbach's $\alpha$ vs Likelihood of the data} given the reconstruction, $\widehat{X}$ across three different conditions. \textbf{(left)} 1000 simulations where the underlying data generating models (DGM) are unidimensional tetrachoric correlations (see Section \ref{sec:simple_sim}. \textbf{(middle)} a hierarchical DGM with minor noise factors, following \cite{revelle_unidim_2025} (see Section \ref{sec:unidim_reprod}. \textbf{(right)} 200 publicly available empirical datasets using the Item Response Warehouse. \textbf{(top)} Refactor Analysis and \textbf{(bottom)} Verifactor out-of-sample bi-cross validated prediction. \textbf{(color)}represents different correlational relationships.}
    \label{fig:alpha_lik}
\end{figure*}

\begin{figure*}[h!]
    \centering
    \includegraphics[width=1\linewidth]{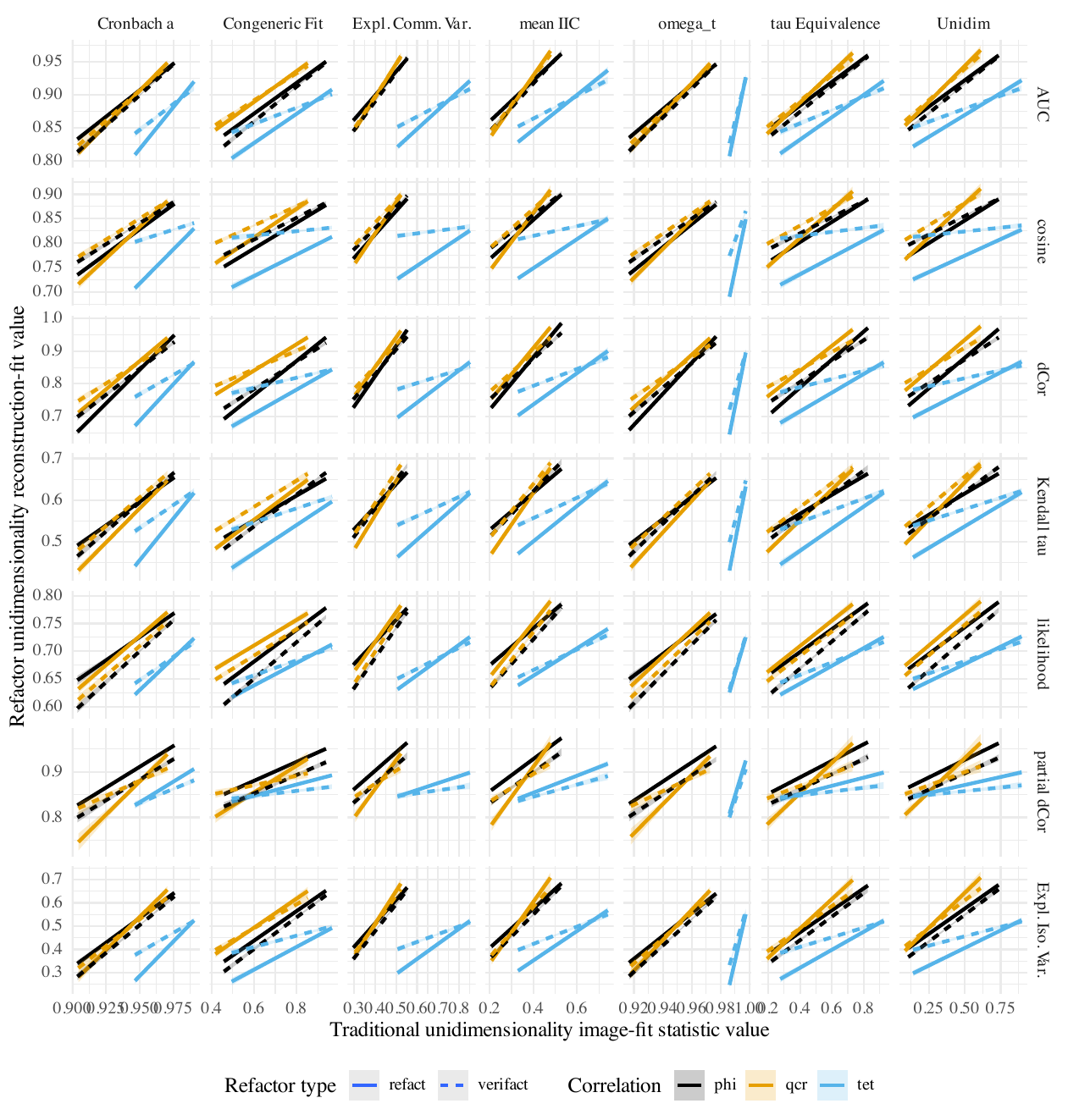}
    \caption{\textbf{Alignment between Classical and Refactor Measures of Unidimensionality: Simulation 1} Classical Measures of Unidimensionality (columns) and their estimated values (x-axis) in the base simulation shown in Figure \ref{fig:corr_refactor_base_sim} vs Refactor evaluation metrics (Section \ref{sec:evalmeths}) of unidimensional fit (rows) and their estimates (y-axes) across three correlations. Robust M-estimator linear fits and confidence intervals are shown for 1000 replications.}
    \label{fig:basesimpanels}
\end{figure*}

\begin{figure*}[h!]
    \centering
    \includegraphics[width=1\linewidth]{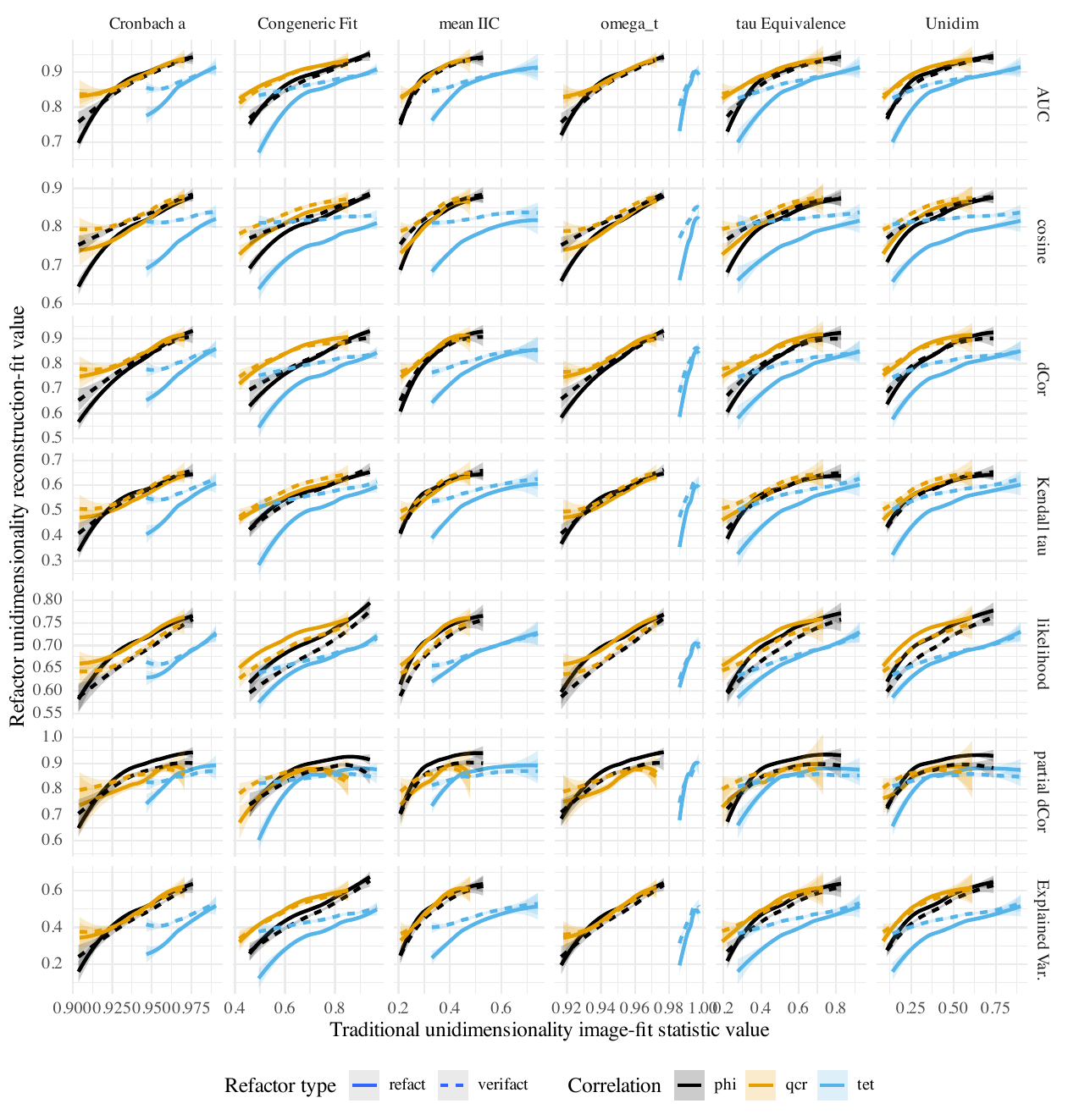}
    \caption{\textbf{Alignment between Classical and Refactor Measures of Unidimensionality: Simulation 1} Classical Measures of Unidimensionality (columns) and their estimated values (x-axis) in the base simulation shown in Figure \ref{fig:corr_refactor_base_sim} vs Refactor evaluation metrics (Section \ref{sec:evalmeths}) of unidimensional fit (rows) and their estimates (y-axes) across three correlations. Loess fits and confidence intervals are shown for 1000 replications.}
    \label{fig:basesimpanelsloess}
\end{figure*}

\begin{figure}[h]
    \centering
    \includegraphics[width=1\linewidth]{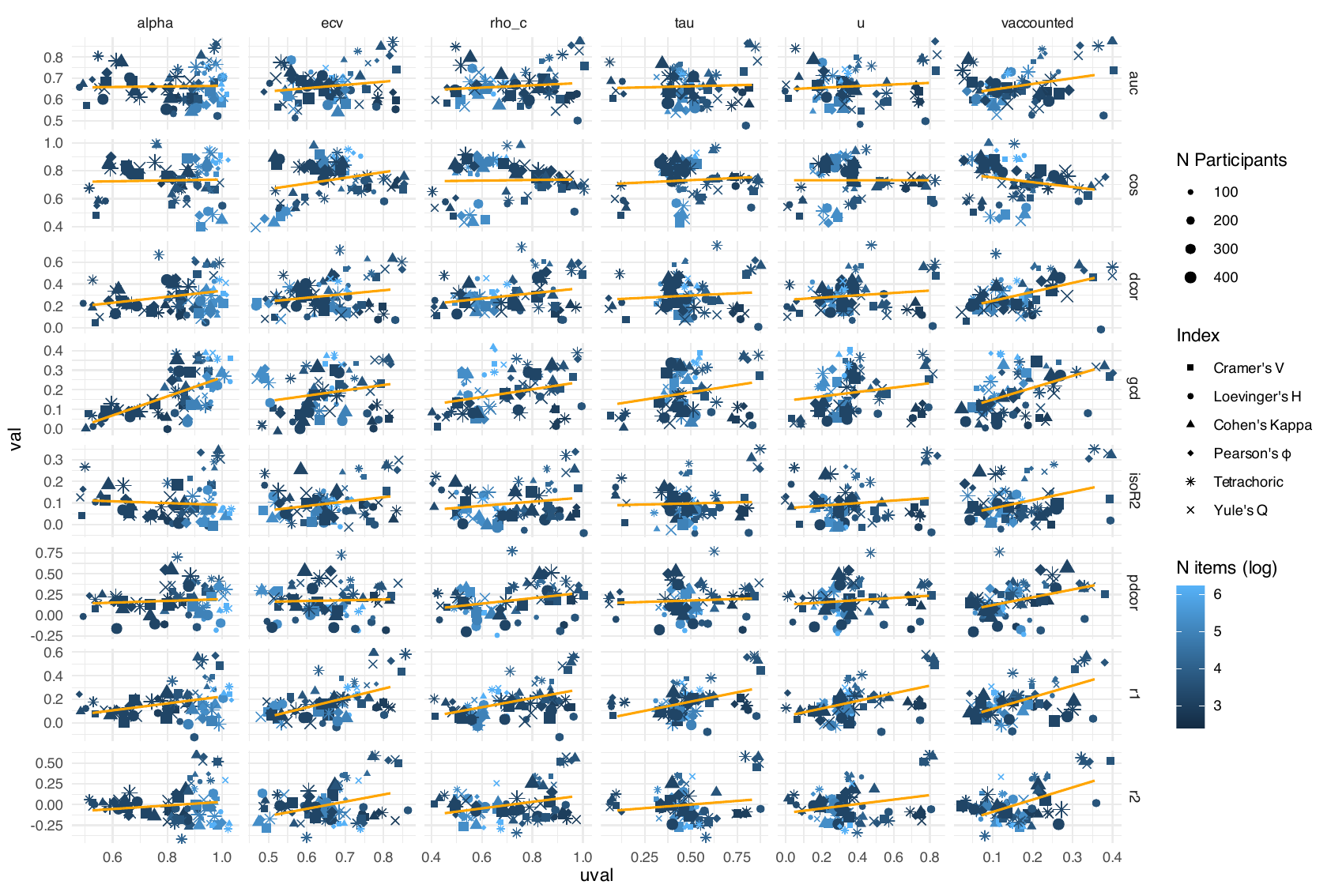}
    \caption{Lack of Relationships between Unidimensional models' abilities to reconstruct the source variation and the image-based measures of dimensionality.}
    \label{fig:univsrefact}
\end{figure}

\begin{figure}[h]
    \centering
    \includegraphics[width=1\linewidth]{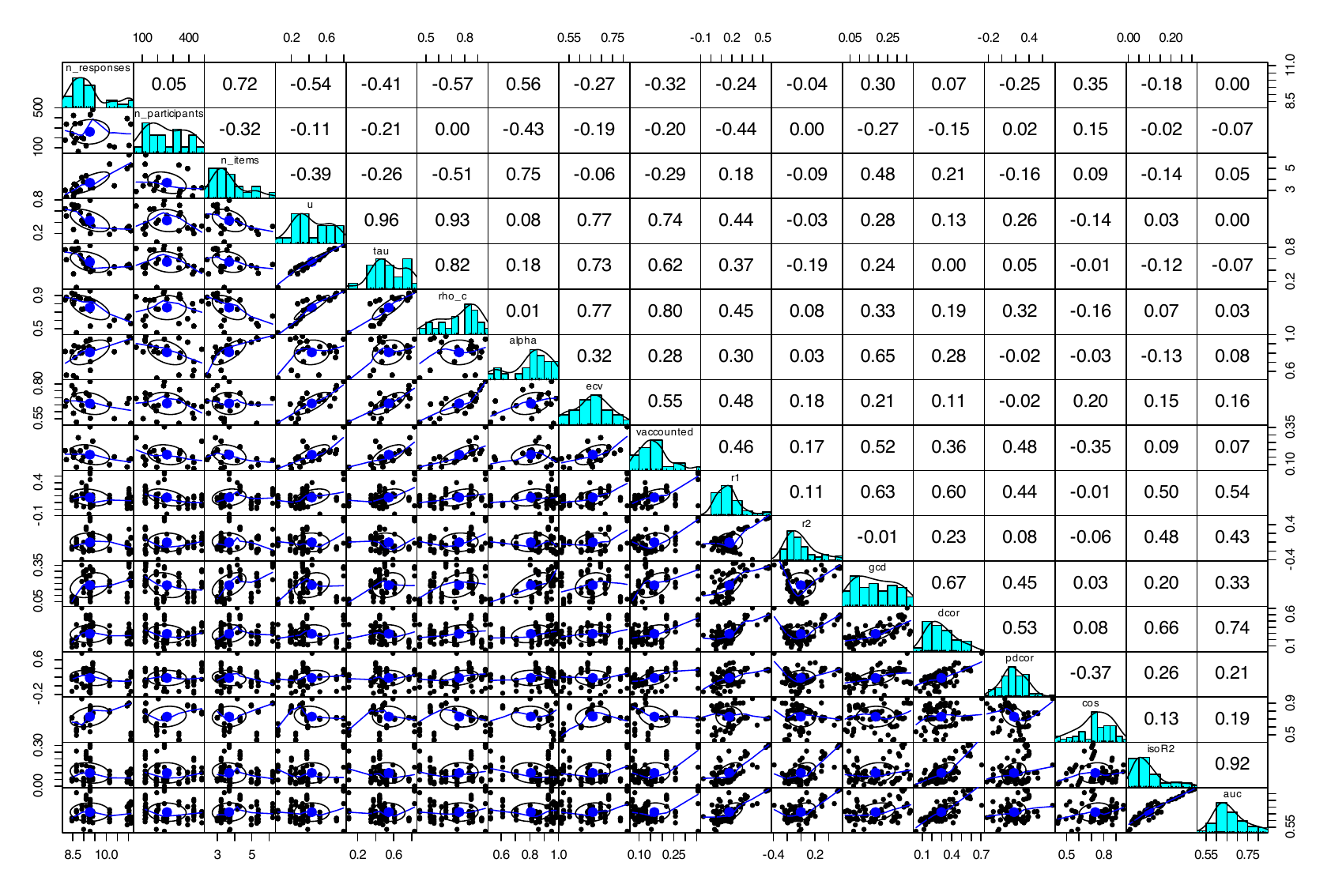}
    \caption{Correlations of the reported image-based metrics, refactor metrics, and dataset properties.}
    \label{fig:corrpanel}
\end{figure}

\begin{figure*}[h]
    \centering
    \includegraphics[width=1\linewidth]{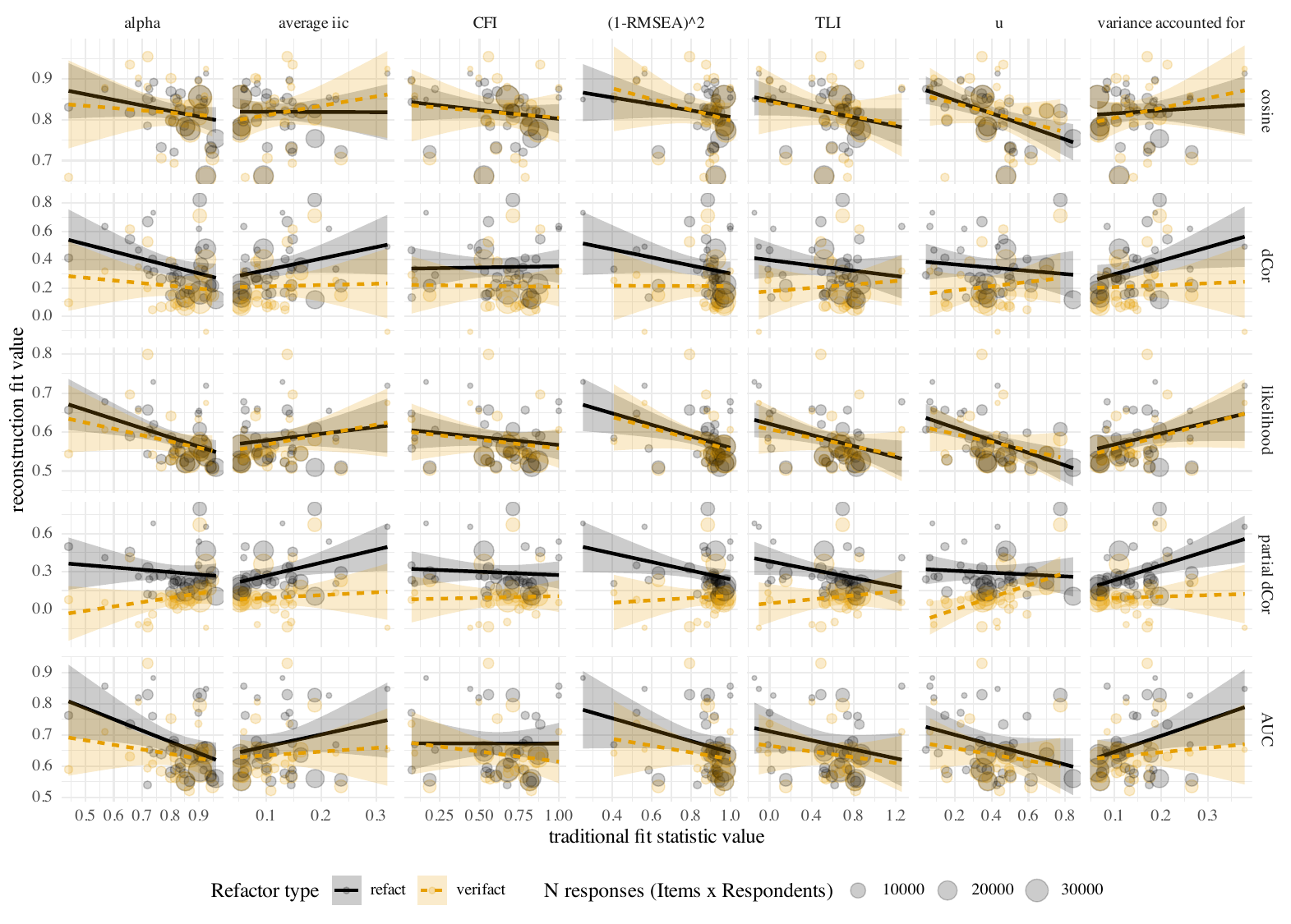}
    \caption{Refactor and Verifactor across traditional and reconstruction fit statistics. Size represents the total number of datapoints. This plot highlights the sensitivity of traditional metrics (such as alpha and RMSEA) to sample size, which would result in overly optimistic interpretations of reliability for an assumed noise structure.}
    \label{fig:refact_vs_verifact}
\end{figure*}

\begin{figure*}[h!]
    \centering
    \includegraphics[width=1\linewidth]{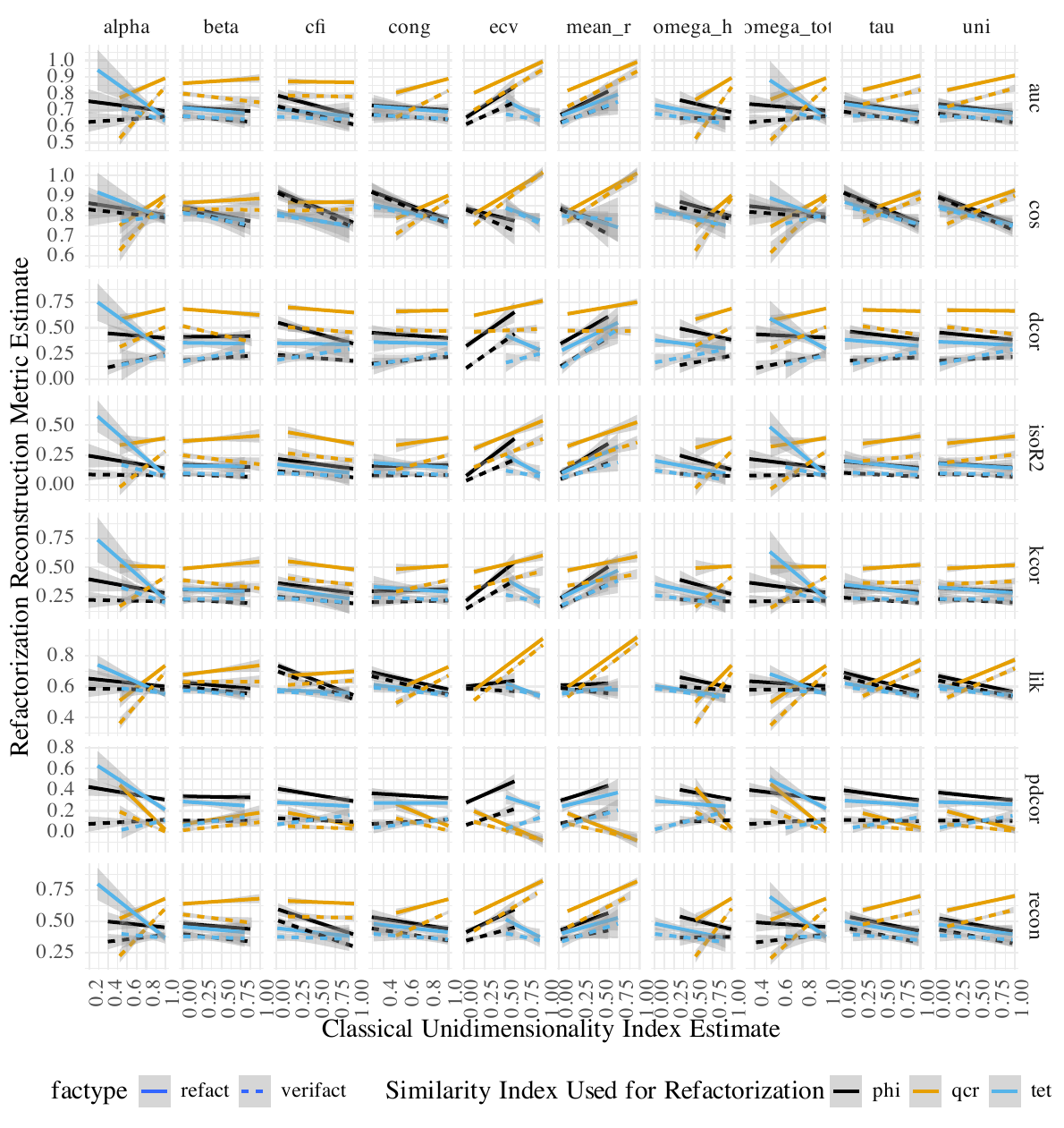}
    \caption{\textbf{Traditional Measures of Unidimensionality vs. Refactor and Verifactor Reconstruction Metrics using Tetrachoric, Pearson, and Quadrant Correlations}. \footnotesize{Aside from average interitem correlation and proportion of variance accounted for by the factor model, there is no meaningful relationship between the traditional measures of unidimensionality explored in this paper and the ability of the relationship to reconstruct the imaged signal across over 200 publicly available datasets. %Each point represents a separate publicly available dataset. 
    Each column represents a traditional metric of unidimensionality based on factor analyses: alpha = Cronbach's $\alpha$, av\_r = average interitem correlation $\bar{r}_{ij}$; CFI = Comparative Fit Index; rho\_c = a measure of the fit of a congeneric model to the observed correlations $\rho_c$; $\tau_{RC}$ = compares the observed correlations $r_{ij}$ to the mean correlation $\bar{r}_{ij}$ and considers 1 - the ratio of the sum of the squared residuals to the sum of the squared correlations; TLI = Tucker Lewis Index (tli); and u = Revelle and Condon's measure of congeneric unidimensionality, $u_{RC}$ (see \cite{revelle_unidim_2025}. Each row represents Refactor Reconstruction Metrics. X-axes represent the strength of the traditional metric for the dataset. Y-axes are the fit of the out-of-sample reconstructions via Refactor Analyses. Slopes are estimated using robust regression. Given the item-subject random effects used, weak relationships may also be capturing the sensitivity of the scale of the instrument being used to presumed fixed effects of items. 
    %The bottommost row controls for these effects via partialing out fixed effects for each item and each subject. After doing so, we see slightly stronger relationships with within-sample Refactor analysis in this row; these relationships, however, mostly vanish under Verifactor analysis, as seen in Figure \ref{fig:fafit_vs_verifact}, and 
    They do not meet what might be expected based on the strength of correlations seen across traditional metrics, as seen in the top left correlations in Figure \ref{fig:pairs_panels_verifact}.}}
    \label{fig:fafit_vs_refact}
\end{figure*}

% \input{apx/drafting/math_proof_notes}
% \subsection{Experimental Design}

% Short summary: the behaviour rests on a single central fact — are A and B (the chosen images) sufficient statistics for the aspects of X you want to predict? If yes (typical for low-rank, bilinear models) the two-image → factor analysis → reconstruction pipeline can recover and predict well, with advantages for privacy and denoising. If no, you will suffer identifiability and bias problems unless you add regularization, extra moments, or partial raw data.

 % Workflow underlying Refactoring Analyses. 
 
 % Any standard low-rank method (e.g., FA, PCA, LDA, MDS) is then applied to these images to obtain rank-r loadings for rows ($B ∈ ℝ^{n×r}$) and columns ($A ∈ ℝ^{p×r}$). Using the dual loadings, we “decompress” to reconstruct predictions of the original matrix, yielding $X̂$ that is consistent with both K and G. Classical practice infers dimensionality from properties of the images and loadings; Refactoring Analyses instead test whether the low-rank model actually captures the signal in X by directly comparing $X$ and $X̂$ with general-purpose metrics (e.g., AUC, $R^2$, distance correlation, cosine similarity, $r, ρ, τ, H, Q$).

\begin{figure*}[ht]
    \centering
    \includegraphics[width=1\linewidth]{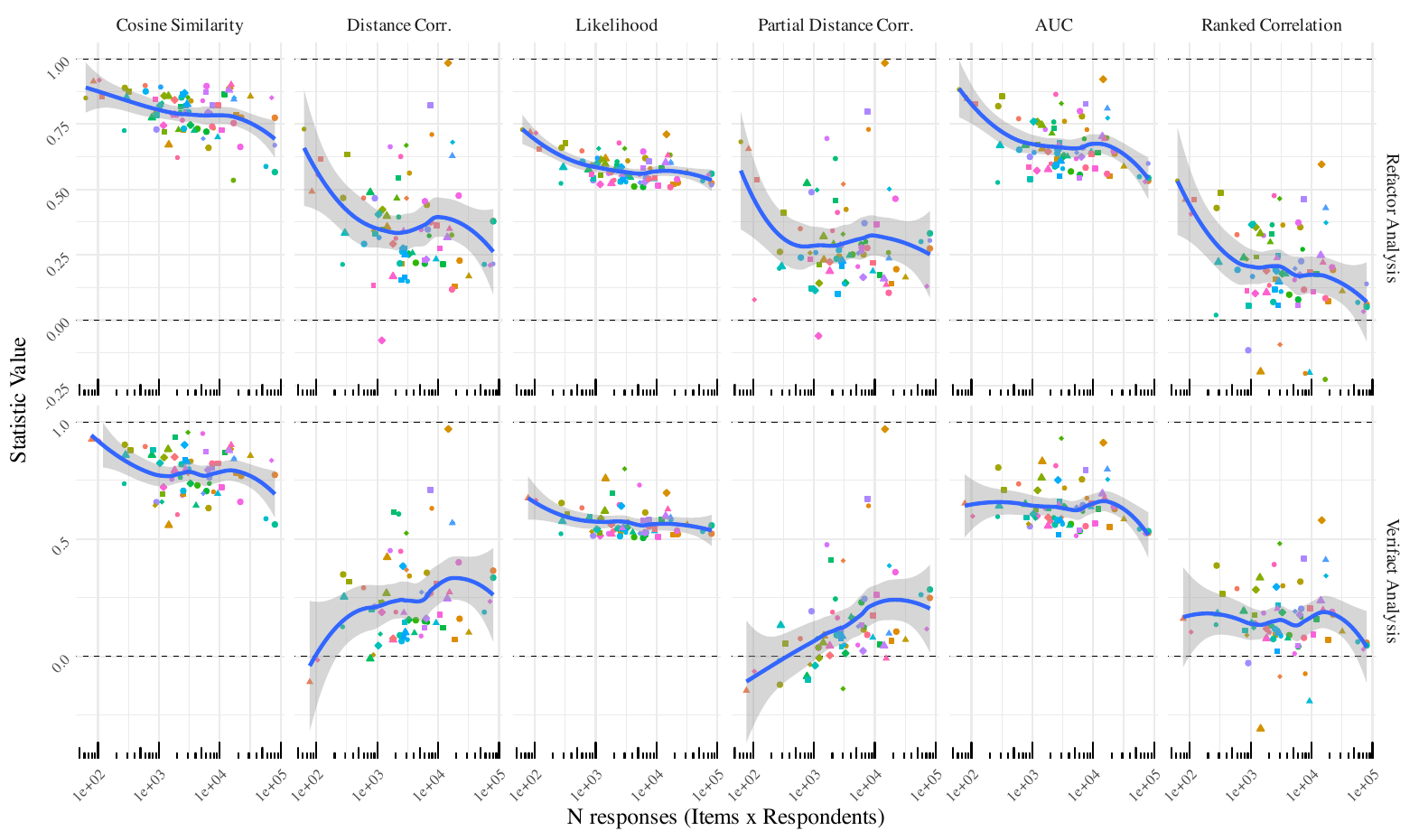}
    \caption{Tetrachoric Correlation with increasing sample size. We see that, for Refactor analysis, the model displays a potentially counterintuitive result: as sample size increases so changes the statistical power with which trivial discrepancies between the data and the model can be identified \citep{steiger_structural_1990}.}
    \label{fig:tet_all_metrics}
\end{figure*}

\begin{figure*}[h!]
    \centering
    \includegraphics[width=1\linewidth]{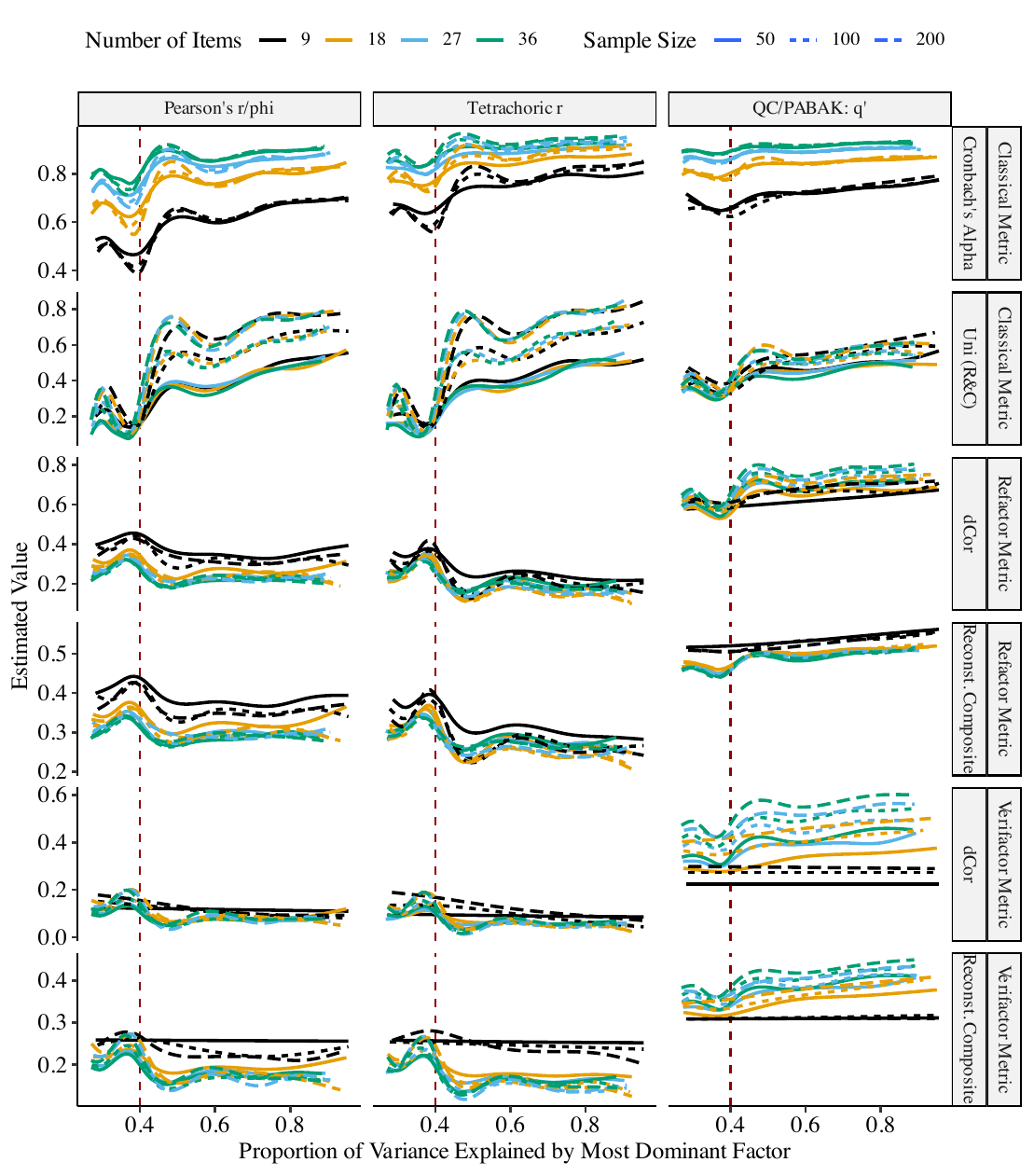}
    \caption{Comparison of Classical and Refactor Metrics of Unidimensionality on Simulated Data (Replication of \cite{revelle_unidim_2025}). X-axis represents the proportion of simulated signal represented by the most dominant dimension: sum of the squared standardized Schmid Leiman loadings $\boldsymbol{\hat{\lambda}}$ used in the hierarchical data generating modelhigher values indicate stronger dominance of a single latent factor. Y-axis represents the value of each respective metric. Additional metrics and comparisons are in Appendix \ref{apx:supp_figures}}.
    \label{fig:sim_phitetq_apx}
\end{figure*}

\begin{figure}[h]
    \centering
    \includegraphics[width=1\linewidth]{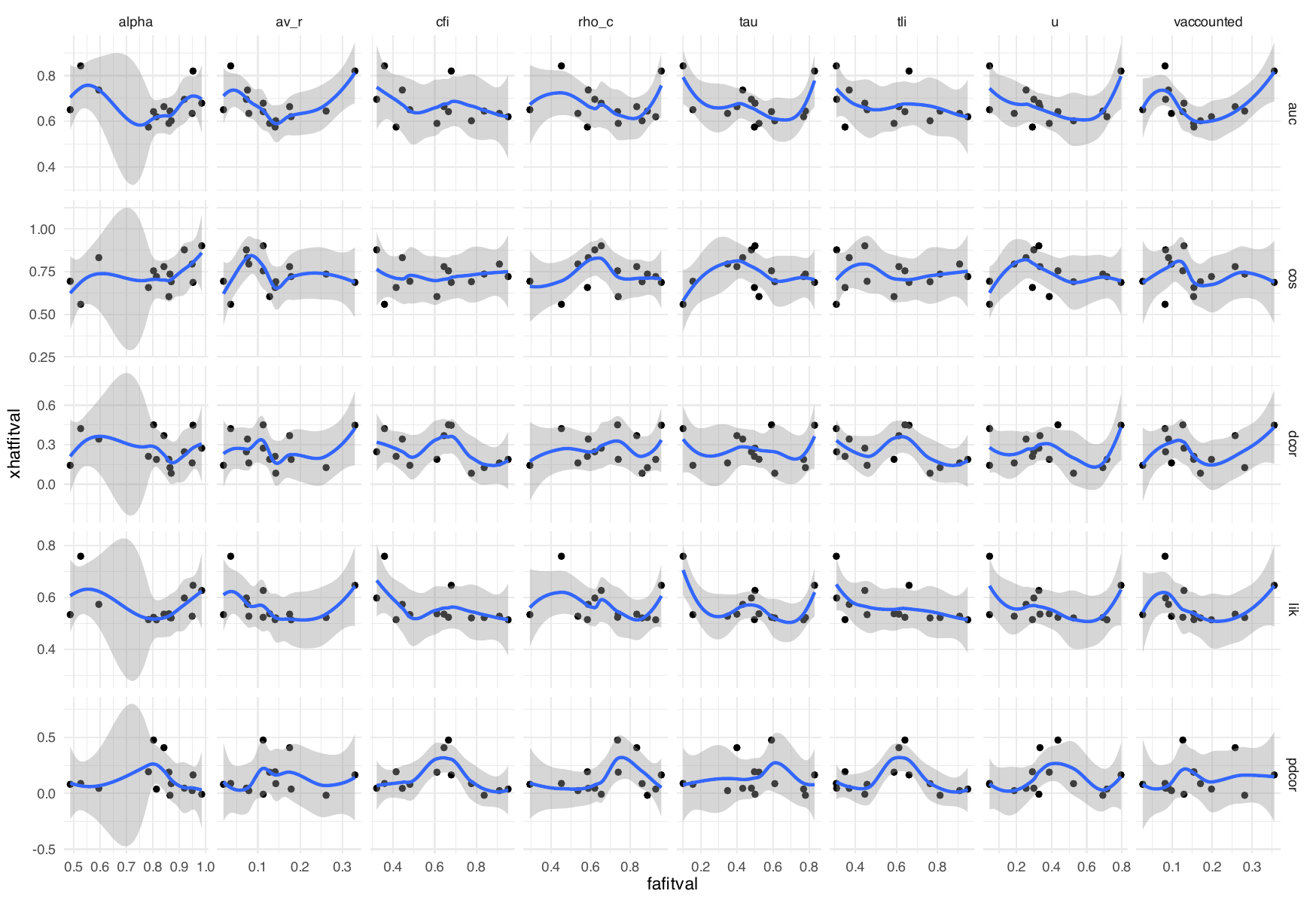}
    \caption{LOESS estimations for Tetrachoric}
    \label{fig:loess_uni_vs_verifact}
\end{figure}

\begin{figure} %{R}{0.38\textwidth}
    \centering
    \includegraphics[width=0.35\linewidth]{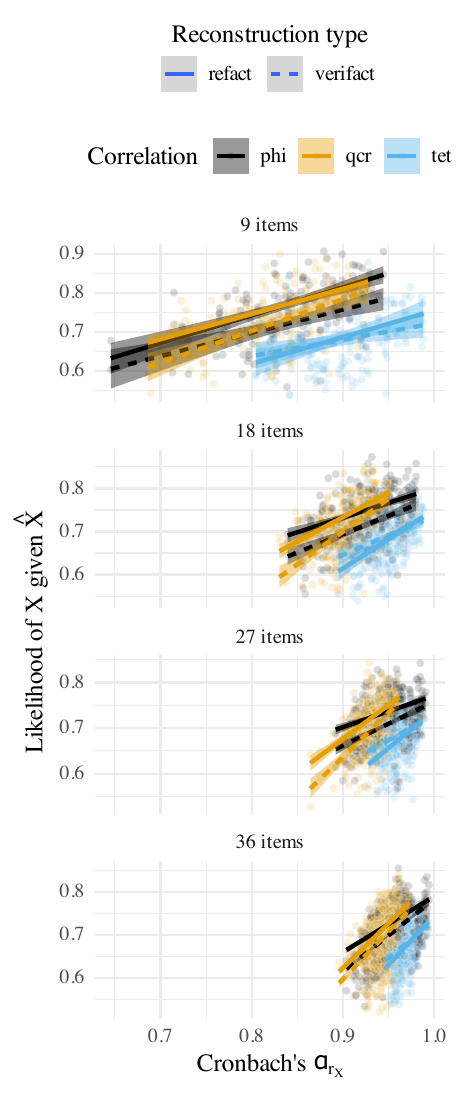}
    \caption{\textbf{Simulation I}: Correlation between Traditional Metrics and rank-1 recoverability} 
        \caption*{\footnotesize \textbf{Classical Unidimensionality (Cronbach's $\alpha_{\mathcal{A}_{X}}$) vs $\boldsymbol{\mathcal{L}(X|\widehat{X})}$ Likelihood of the data given isotonized Refactor reconstruction}: %matrix cosine between $\mathbf{X}$ and $\widehat{\mathbf{X}}$: $\cos(\mathbf{X}, \mathbf{\widehat{X}}) = \langle \mathbf{X}, \mathbf{\widehat{X}} \rangle_F / \|\mathbf{X}\|_F \|\mathbf{\widehat{X}}\|_F$ 
    across three correlations. Robust $M$-estimator linear fit and confidence intervals are shown for 200 replications at each item set size. Cronbach's $\alpha$ displays its sensitivity to number of items. Likelihood is the geometric mean of pointwise likelihood. Response matrix sizes match those of Simulation II. Fully crossed linear relationships between classical measures of unidimensionality and refactoring measures are in Figure \ref{fig:basesimpanels}} 
    \label{fig:sim1_alpha_lik}
\end{figure}

\begin{figure}[h]
    \centering
    \includegraphics[width=0.8\linewidth]{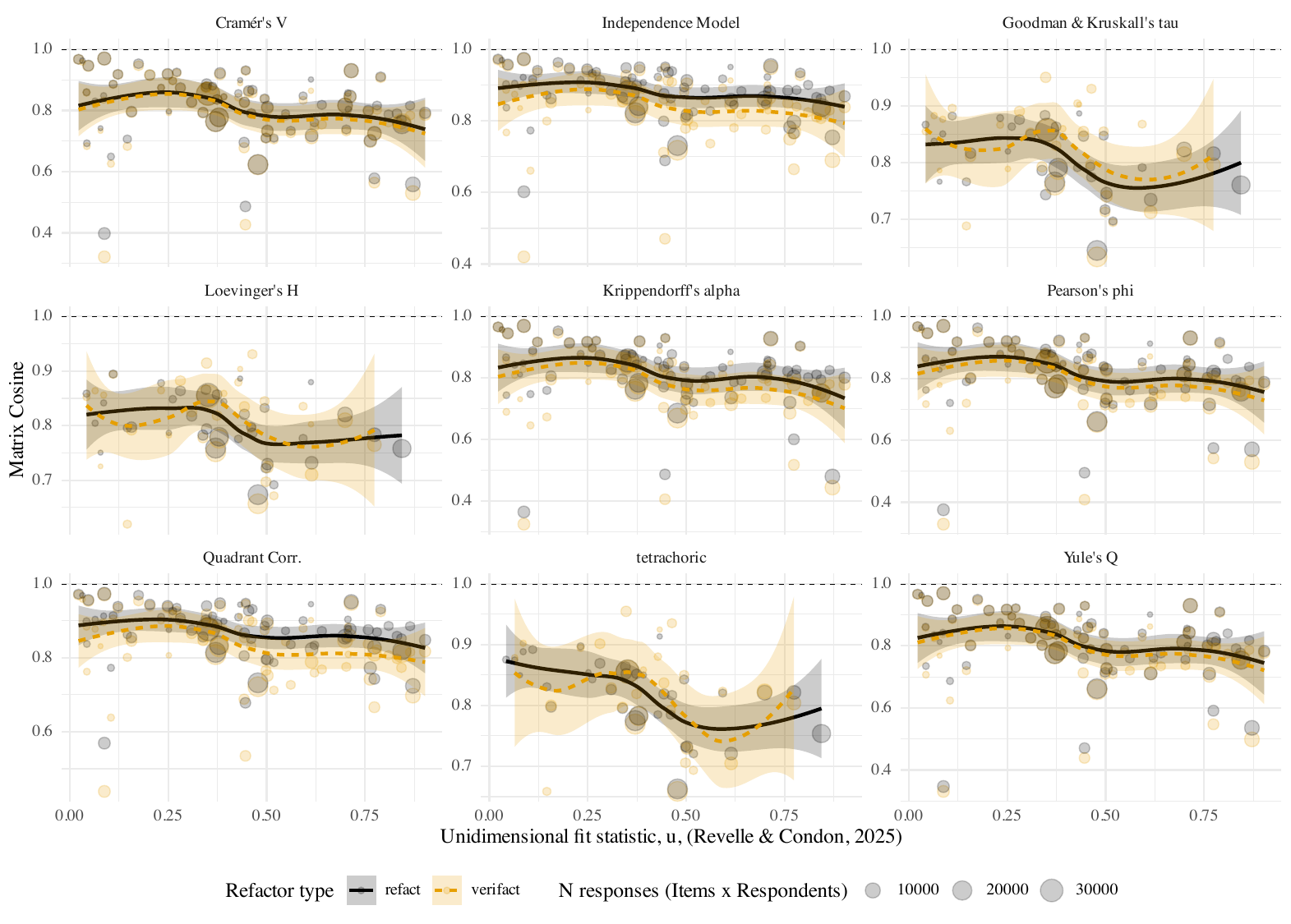}
    \caption{Comparison of 9 different indices of association $u_{RC}$ \citep{revelle_unidim_2025} vs Matrix Cosine}
    \label{fig:placeholder1}
\end{figure}

\begin{figure}[h]
    \centering
    \includegraphics[width=0.8\linewidth]{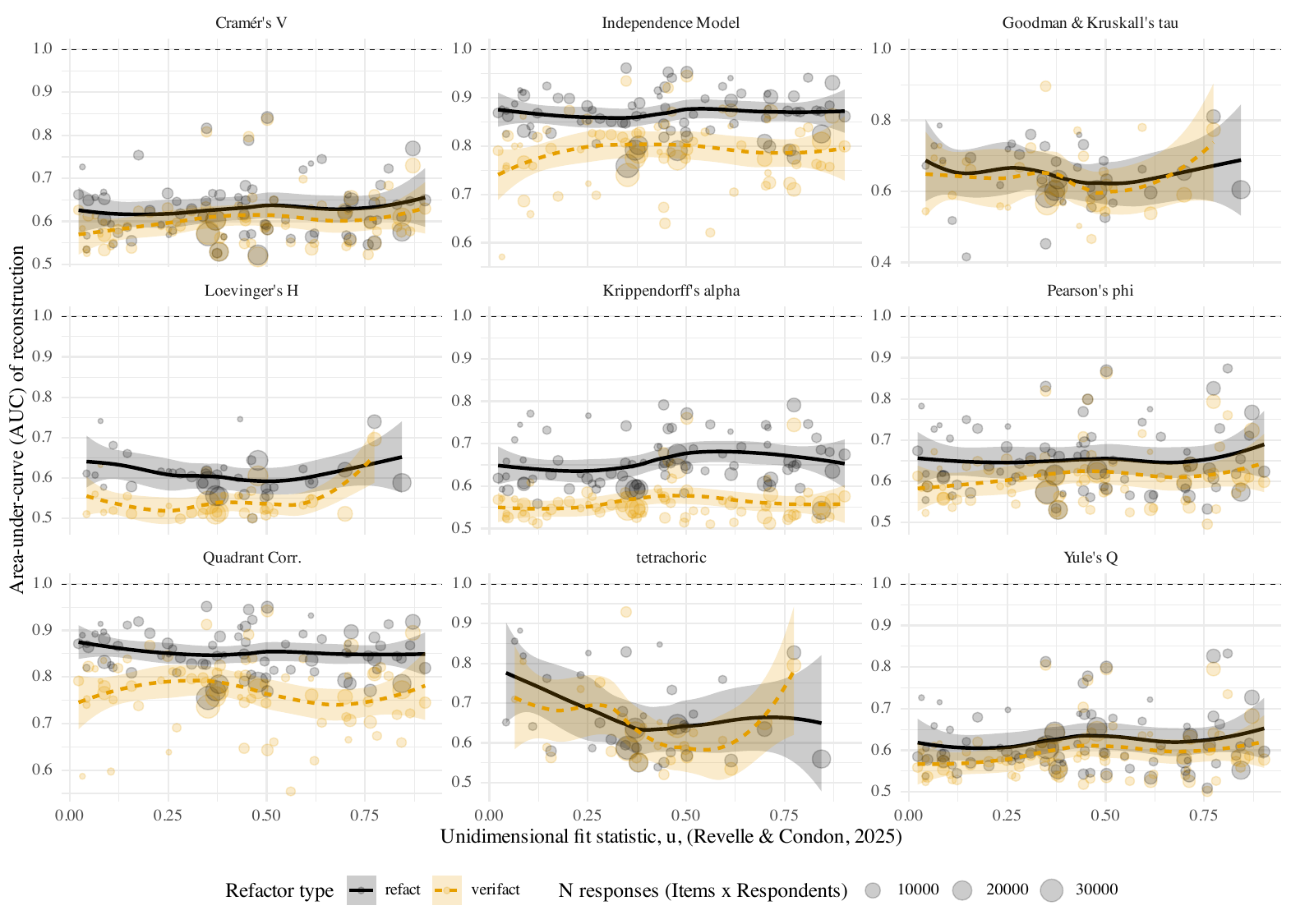}
    \caption{Comparison of 9 different indices of association $u_{RC}$ \citep{revelle_unidim_2025} vs AUC}
    \label{fig:placeholder2}
\end{figure}

\begin{figure}[h]
    \centering
    \includegraphics[width=1\linewidth]{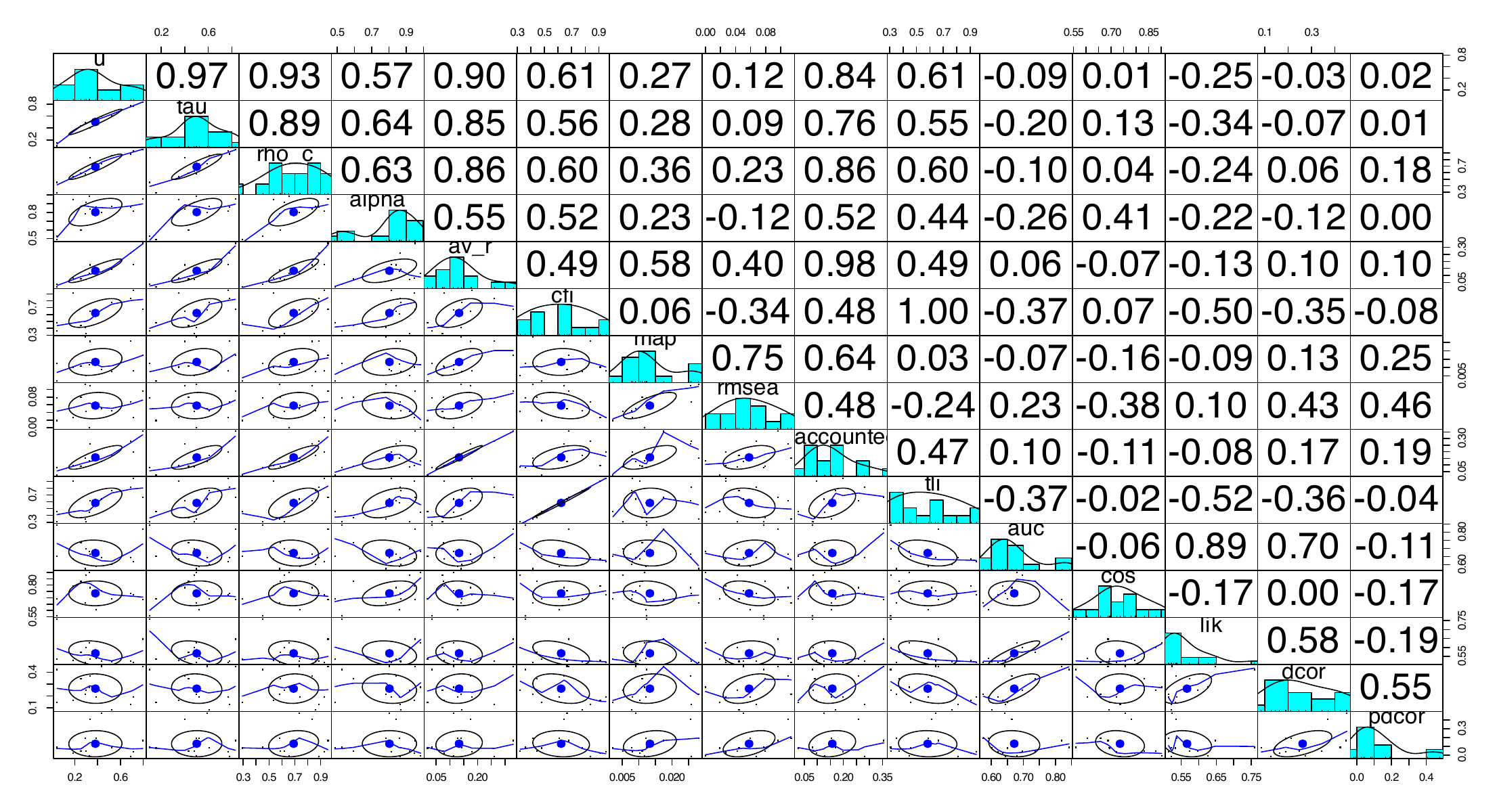}
    \caption{Enter Caption}
    \label{fig:pairs_panels_verifact}
\end{figure}

\begin{figure}[h!]
    % \centering
\begin{subfigure}{0.49\textwidth}%[h!]
    \includegraphics[width=0.999\linewidth]{figs/example_refactor_sim.pdf}
    \caption{Example refactor simulation where the data generating model reflects tetrachoric correlations}
    \label{fig:phi_corr_sim}
\end{subfigure}
\begin{subfigure}{0.49\textwidth} %[h!]
    \includegraphics[width=0.999\linewidth]{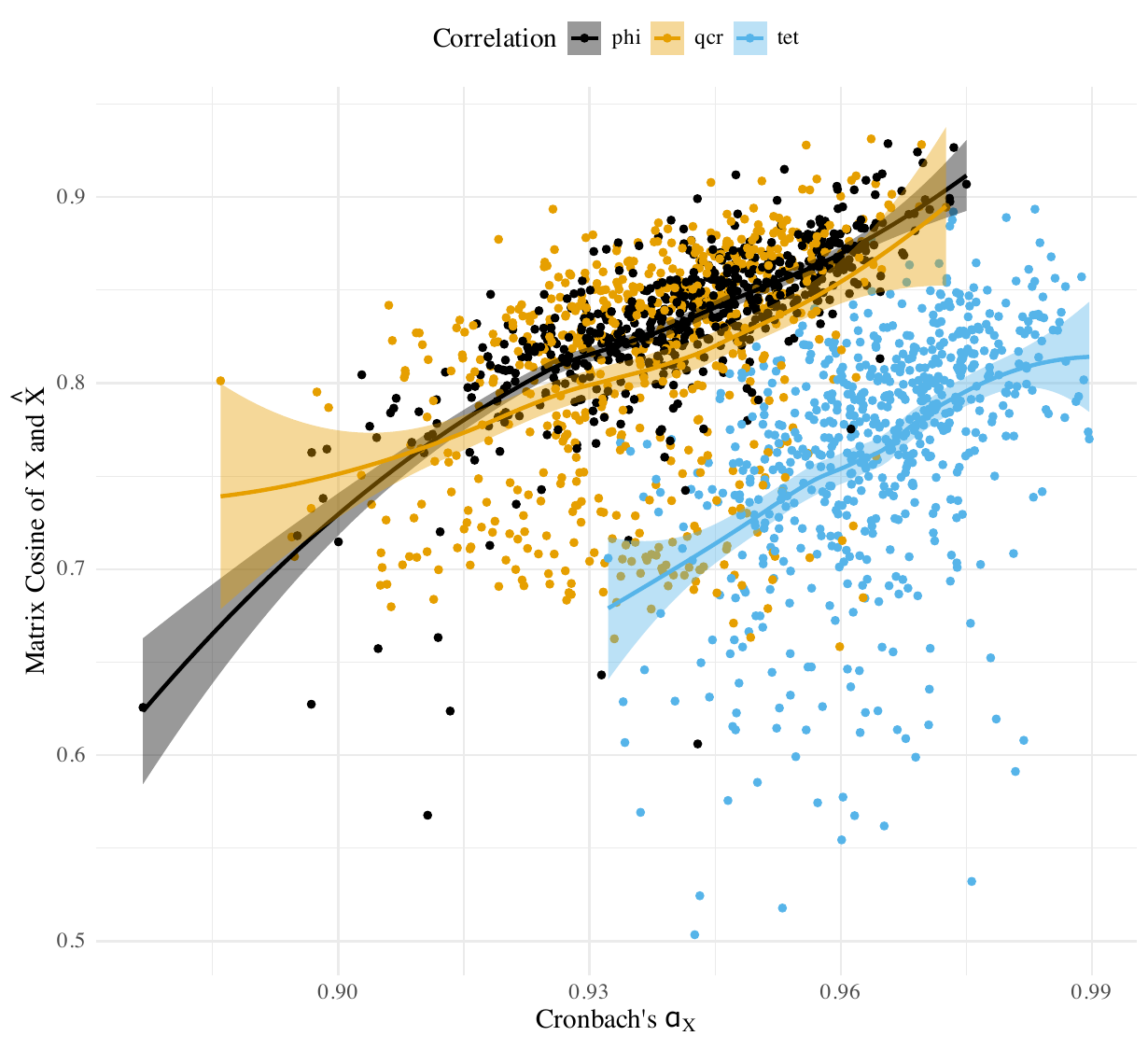}
    \caption{Classical Unidimensionality (Cronbach's $\alpha_\mathbf{X}$) vs Refactoring matrix cosine between $\mathbf{X}$ and $\widehat{\mathbf{X}}$: $\cos(\mathbf{X}, \mathbf{\widehat{X}}) = \langle \mathbf{X}, \mathbf{\widehat{X}} \rangle_F / \|\mathbf{X}\|_F \|\mathbf{\widehat{X}}\|_F$  based on across three correlations. Loess fit and confidence intervals are shown for 700 replications. Fully crossed linear relationships between classical measures of unidimensionality and refactoring measures are in Figure \ref{fig:basesimpanels} }
    \label{fig:phi_dgm_vs_auc}
\end{subfigure}
    \caption{Simulations demonstrating the Refactoring metrics and the }
    \label{fig:corr_refactor_base_sim}
\end{figure}

%While a minor relationship was noted between Cronbach's alpha and Yanai's GCD, it proved to be an artifact of the number of items (see \cite{sijtsma_use_2009}) rather than a substantive link. Given the foundational role of the unidimensionality assumption, these results underscore a critical need to shift the focus of model evaluation from the properties of an abstract data \textit{image} to the model's direct, demonstrable capacity to account for the signal in the original observations.

\end{Backmatter}

\end{document}